\newcommand{\greek}[1]{{\selectlanguage{polutonikogreek}#1}}
\newtheorem{theorem}{Theorem}[section]
\newtheorem{axiom}[theorem]{Axiom}
\newtheorem{condition}[theorem]{Condition}
\newtheorem{conjecture}[theorem]{Conjecture}
\newtheorem{corollary}[theorem]{Corollary}
\newtheorem{definition}[theorem]{Definition}
\newtheorem{example}[theorem]{Example}
\newtheorem{exercise}[theorem]{Exercise}
\newtheorem{lemma}[theorem]{Lemma}
\newtheorem{notation}[theorem]{Notation}
\newtheorem{proposition}[theorem]{Proposition}
\newtheorem{remark}[theorem]{Remark}
\newenvironment{proof}[1][Proof]{\noindent\textbf{#1.} }{\ \rule{0.5em}{0.5em}}
\chardef\@x10\chardef\@xv60
\def\tcitime{
\def\@time{%
  \@minute\time\@hour\@minute\divide\@hour\@xv
  \ifnum\@hour<\@x 0\fi\the\@hour:%
  \multiply\@hour\@xv\advance\@minute-\@hour
  \ifnum\@minute<\@x 0\fi\the\@minute
  }}%
\def\QCTOpt[#1]#2{%
  \def\QCTOptB{#1}
  \def\QCTOptA{#2}
}
\def\QCTNOpt#1{%
  \def\QCTOptA{#1}
  \let\QCTOptB\empty
}
\def\Qct{%
  \@ifnextchar[{%
    \QCTOpt}{\QCTNOpt}
}
\def\QCBOpt[#1]#2{%
  \def\QCBOptB{#1}
  \def\QCBOptA{#2}
}
\def\QCBNOpt#1{%
  \def\QCBOptA{#1}
  \let\QCBOptB\empty
}
\def\Qcb{%
  \@ifnextchar[{%
    \QCBOpt}{\QCBNOpt}
}
\def\PrepCapArgs{%
  \ifx\QCBOptA\empty
    \ifx\QCTOptA\empty
      {}%
    \else
      \ifx\QCTOptB\empty
        {\QCTOptA}%
      \else
        [\QCTOptB]{\QCTOptA}%
      \fi
    \fi
  \else
    \ifx\QCBOptA\empty
      {}%
    \else
      \ifx\QCBOptB\empty
        {\QCBOptA}%
      \else
        [\QCBOptB]{\QCBOptA}%
      \fi
    \fi
  \fi
}
\def\GRAPHICSPS#1{%
 \ifcase\GRAPHICSTYPE
   \special{ps: #1}%
 \or
   \special{language "PS", include "#1"}%
 \fi
}%
\def\graffile#1#2#3#4{%
    \leavevmode
    \raise -#4 \BOXTHEFRAME{%
        \hbox to #2{\raise #3\hbox to #2{\null #1\hfil}}}%
}%
\def\draftbox#1#2#3#4{%
 \leavevmode\raise -#4 \hbox{%
  \frame{\rlap{\protect\tiny #1}\hbox to #2%
   {\vrule height#3 width\z@ depth\z@\hfil}%
  }%
 }%
}%
\newif\ifwasdraft
\def\GRAPHIC#1#2#3#4#5{%
 \ifnum\draft=\@ne\draftbox{#2}{#3}{#4}{#5}%
  \else\graffile{#1}{#3}{#4}{#5}%
  \fi
 }%
\def\addtoLaTeXparams#1{%
    \edef\LaTeXparams{\LaTeXparams #1}}%
\newif\ifBoxFrame \BoxFramefalse
\newif\ifOverFrame \OverFramefalse
\newif\ifUnderFrame \UnderFramefalse
\def\BOXTHEFRAME#1{%
   \hbox{%
      \ifBoxFrame
         \frame{#1}%
      \else
         {#1}%
      \fi
   }%
}
\def\doFRAMEparams#1{\BoxFramefalse\OverFramefalse\UnderFramefalse\readFRAMEparams#1\end}%
\def\readFRAMEparams#1{%
 \ifx#1\end%
  \let\next=\relax
  \else
  \ifx#1i\dispkind=\z@\fi
  \ifx#1d\dispkind=\@ne\fi
  \ifx#1f\dispkind=\tw@\fi
  \ifx#1t\addtoLaTeXparams{t}\fi
  \ifx#1b\addtoLaTeXparams{b}\fi
  \ifx#1p\addtoLaTeXparams{p}\fi
  \ifx#1h\addtoLaTeXparams{h}\fi
  \ifx#1X\BoxFrametrue\fi
  \ifx#1O\OverFrametrue\fi
  \ifx#1U\UnderFrametrue\fi
  \ifx#1w
    \ifnum\draft=1\wasdrafttrue\else\wasdraftfalse\fi
    \draft=\@ne
  \fi
  \let\next=\readFRAMEparams
  \fi
 \next
 }%
\def\IFRAME#1#2#3#4#5#6{%
      \bgroup
      \let\QCTOptA\empty
      \let\QCTOptB\empty
      \let\QCBOptA\empty
      \let\QCBOptB\empty
      #6%
      \parindent=0pt%
      \leftskip=0pt
      \rightskip=0pt
      \setbox0 = \hbox{\QCBOptA}%
      \@tempdima = #1\relax
      \ifOverFrame
          \typeout{This is not implemented yet}%
          \show\HELP
      \else
         \ifdim\wd0>\@tempdima
            \advance\@tempdima by \@tempdima
            \ifdim\wd0 >\@tempdima
               \textwidth=\@tempdima
               \setbox1 =\vbox{%
                  \noindent\hbox to \@tempdima{\hfill\GRAPHIC{#5}{#4}{#1}{#2}{#3}\hfill}\\%
                  \noindent\hbox to \@tempdima{\parbox[b]{\@tempdima}{\QCBOptA}}%
               }%
               \wd1=\@tempdima
            \else
               \textwidth=\wd0
               \setbox1 =\vbox{%
                 \noindent\hbox to \wd0{\hfill\GRAPHIC{#5}{#4}{#1}{#2}{#3}\hfill}\\%
                 \noindent\hbox{\QCBOptA}%
               }%
               \wd1=\wd0
            \fi
         \else
            \ifdim\wd0>0pt
              \hsize=\@tempdima
              \setbox1 =\vbox{%
                \unskip\GRAPHIC{#5}{#4}{#1}{#2}{0pt}%
                \break
                \unskip\hbox to \@tempdima{\hfill \QCBOptA\hfill}%
              }%
              \wd1=\@tempdima
           \else
              \hsize=\@tempdima
              \setbox1 =\vbox{%
                \unskip\GRAPHIC{#5}{#4}{#1}{#2}{0pt}%
              }%
              \wd1=\@tempdima
           \fi
         \fi
         \@tempdimb=\ht1
         \advance\@tempdimb by \dp1
         \advance\@tempdimb by -#2%
         \advance\@tempdimb by #3%
         \leavevmode
         \raise -\@tempdimb \hbox{\box1}%
      \fi
      \egroup%
}%
\def\DFRAME#1#2#3#4#5{%
 \begin{center}
     \let\QCTOptA\empty
     \let\QCTOptB\empty
     \let\QCBOptA\empty
     \let\QCBOptB\empty
     \ifOverFrame 
        #5\QCTOptA\par
     \fi
     \GRAPHIC{#4}{#3}{#1}{#2}{\z@}
     \ifUnderFrame 
        \nobreak\par #5\QCBOptA
     \fi
 \end{center}%
 }%
\def\FFRAME#1#2#3#4#5#6#7{%
 \begin{figure}[#1]%
  \let\QCTOptA\empty
  \let\QCTOptB\empty
  \let\QCBOptA\empty
  \let\QCBOptB\empty
  \ifOverFrame
    #4
    \ifx\QCTOptA\empty
    \else
      \ifx\QCTOptB\empty
        \caption{\QCTOptA}%
      \else
        \caption[\QCTOptB]{\QCTOptA}%
      \fi
    \fi
    \ifUnderFrame\else
      \label{#5}%
    \fi
  \else
    \UnderFrametrue%
  \fi
  \begin{center}\GRAPHIC{#7}{#6}{#2}{#3}{\z@}\end{center}%
  \ifUnderFrame
    #4
    \ifx\QCBOptA\empty
      \caption{}%
    \else
      \ifx\QCBOptB\empty
        \caption{\QCBOptA}%
      \else
        \caption[\QCBOptB]{\QCBOptA}%
      \fi
    \fi
    \label{#5}%
  \fi
  \end{figure}%
 }%
\def\makeactives{
  \catcode`\"=\active
  \catcode`\;=\active
  \catcode`\:=\active
  \catcode`\'=\active
  \catcode`\~=\active
}
   \gdef\activesoff{%
      \def"{\string"}
      \def;{\string;}
      \def:{\string:}
      \def'{\string'}
      \def~{\string~}
    }
\def\FRAME#1#2#3#4#5#6#7#8{%
 \bgroup
 \@ifundefined{bbl@deactivate}{}{\activesoff}
 \ifnum\draft=\@ne
   \wasdrafttrue
 \else
   \wasdraftfalse%
 \fi
 \def\LaTeXparams{}%
 \dispkind=\z@
 \def\LaTeXparams{}%
 \doFRAMEparams{#1}%
 \ifnum\dispkind=\z@\IFRAME{#2}{#3}{#4}{#7}{#8}{#5}\else
  \ifnum\dispkind=\@ne\DFRAME{#2}{#3}{#7}{#8}{#5}\else
   \ifnum\dispkind=\tw@
    \edef\@tempa{\noexpand\FFRAME{\LaTeXparams}}%
    \@tempa{#2}{#3}{#5}{#6}{#7}{#8}%
    \fi
   \fi
  \fi
  \ifwasdraft\draft=1\else\draft=0\fi{}%
  \egroup
 }%
\def\TEXUX#1{"texux"}
\long\def\QQQ#1#2{%
     \long\expandafter\def\csname#1\endcsname{#2}}%
\long\def\QQA#1#2{}%
\def\QTR#1#2{{\csname#1\endcsname #2}}
\def\EXPAND#1[#2]#3{}%
\def\NOEXPAND#1[#2]#3{}%
\def\LaTeXparent#1{}%
\def\ChildStyles#1{}%
\def\ChildDefaults#1{}%
\def\QTagDef#1#2#3{}%
\def\QQfnmark#1{\footnotemark}
\def\makeatletter\input gnuindex.sty\makeatother\makeindex{\makeatletter\input gnuindex.sty\makeatother\makeindex}%
\def\initial#1{\bigbreak{\raggedright\large\bf #1}\kern 2\p@\penalty3000}}%
 \def\abstract{%
  \if@twocolumn
   \section*{Abstract (Not appropriate in this style!)}%
   \else \small 
   \begin{center}{\bf Abstract\vspace{-.5em}\vspace{\z@}}\end{center}%
   \quotation 
   \fi
  }%
   \def\registered{\relax\ifmmode{}\r@gistered
                    \else$\m@th\r@gistered$\fi}%
 \def\r@gistered{^{\ooalign
  {\hfil\raise.07ex\hbox{$\scriptstyle\rm\text{R}$}\hfil\crcr
  \mathhexbox20D}}}}{}%
\newdimen\theight
\def\Column{%
 \vadjust{\setbox\z@=\hbox{\scriptsize\quad\quad tcol}%
  \theight=\ht\z@\advance\theight by \dp\z@\advance\theight by \lineskip
  \kern -\theight \vbox to \theight{%
   \rightline{\rlap{\box\z@}}%
   \vss
   }%
  }%
 }%
\def\qed{%
 \ifhmode\unskip\nobreak\fi\ifmmode\ifinner\else\hskip5\p@\fi\fi
 \hbox{\hskip5\p@\vrule width4\p@ height6\p@ depth1.5\p@\hskip\p@}%
 }%
\def\miss{\hbox{\vrule height2\p@ width 2\p@ depth\z@}}%
\def\tcol#1{{\baselineskip=6\p@ \vcenter{#1}} \Column}  %
\def\newfmtname{LaTeX2e}
\def\chkcompat{%
   \if@compatibility
   \else
     \usepackage{latexsym}
   \fi
}
  \DeclareOldFontCommand{\rm}{\normalfont\rmfamily}{\mathrm}
  \DeclareOldFontCommand{\sf}{\normalfont\sffamily}{\mathsf}
  \DeclareOldFontCommand{\tt}{\normalfont\ttfamily}{\mathtt}
  \DeclareOldFontCommand{\bf}{\normalfont\bfseries}{\mathbf}
  \DeclareOldFontCommand{\it}{\normalfont\itshape}{\mathit}
  \DeclareOldFontCommand{\sl}{\normalfont\slshape}{\@nomath\sl}
  \DeclareOldFontCommand{\sc}{\normalfont\scshape}{\@nomath\sc}
\def\alpha{{\Greekmath 010B}}%
\def\beta{{\Greekmath 010C}}%
\def\gamma{{\Greekmath 010D}}%
\def\delta{{\Greekmath 010E}}%
\def\epsilon{{\Greekmath 010F}}%
\def\zeta{{\Greekmath 0110}}%
\def\eta{{\Greekmath 0111}}%
\def\theta{{\Greekmath 0112}}%
\def\iota{{\Greekmath 0113}}%
\def\kappa{{\Greekmath 0114}}%
\def\lambda{{\Greekmath 0115}}%
\def\mu{{\Greekmath 0116}}%
\def\nu{{\Greekmath 0117}}%
\def\xi{{\Greekmath 0118}}%
\def\pi{{\Greekmath 0119}}%
\def\rho{{\Greekmath 011A}}%
\def\sigma{{\Greekmath 011B}}%
\def\tau{{\Greekmath 011C}}%
\def\upsilon{{\Greekmath 011D}}%
\def\phi{{\Greekmath 011E}}%
\def\chi{{\Greekmath 011F}}%
\def\psi{{\Greekmath 0120}}%
\def\omega{{\Greekmath 0121}}%
\def\varepsilon{{\Greekmath 0122}}%
\def\vartheta{{\Greekmath 0123}}%
\def\varpi{{\Greekmath 0124}}%
\def\varrho{{\Greekmath 0125}}%
\def\varsigma{{\Greekmath 0126}}%
\def\varphi{{\Greekmath 0127}}%
\def\nabla{{\Greekmath 0272}}
\def\FindBoldGroup{%
   {\setbox0=\hbox{$\mathbf{x\global\edef\theboldgroup{\the\mathgroup}}$}}%
}
\def\Greekmath#1#2#3#4{%
    \if@compatibility
        \ifnum\mathgroup=\symbold
           \mathchoice{\mbox{\boldmath$\displaystyle\mathchar"#1#2#3#4$}}%
                      {\mbox{\boldmath$\textstyle\mathchar"#1#2#3#4$}}%
                      {\mbox{\boldmath$\scriptstyle\mathchar"#1#2#3#4$}}%
                      {\mbox{\boldmath$\scriptscriptstyle\mathchar"#1#2#3#4$}}%
        \else
           \mathchar"#1#2#3#4%
        \fi 
    \else 
        \FindBoldGroup
        \ifnum\mathgroup=\theboldgroup 
           \mathchoice{\mbox{\boldmath$\displaystyle\mathchar"#1#2#3#4$}}%
                      {\mbox{\boldmath$\textstyle\mathchar"#1#2#3#4$}}%
                      {\mbox{\boldmath$\scriptstyle\mathchar"#1#2#3#4$}}%
                      {\mbox{\boldmath$\scriptscriptstyle\mathchar"#1#2#3#4$}}%
        \else
           \mathchar"#1#2#3#4%
        \fi     	    
	  \fi}
\newif\ifGreekBold  \GreekBoldfalse
\let\SAVEPBF=\pbf
\def\pbf{\GreekBoldtrue\SAVEPBF}%
  \newcounter{equationnumber}  
  \def\mathletters{%
     \addtocounter{equation}{1}
     \edef\@currentlabel{\theequation}%
     \setcounter{equationnumber}{\c@equation}
     \setcounter{equation}{0}%
     \edef\theequation{\@currentlabel\noexpand\alph{equation}}%
  }
    \def\BibTeX{{\rm B\kern-.05em{\sc i\kern-.025em b}\kern-.08em
                 T\kern-.1667em\lower.7ex\hbox{E}\kern-.125emX}}}{}%
\def\AmS{{\protect\usefont{OMS}{cmsy}{m}{n}%
                A\kern-.1667em\lower.5ex\hbox{M}\kern-.125emS}}}{}%
\let\DOTSI\relax
\def\RIfM@{\relax\ifmmode}%
\def\FN@{\futurelet\next}%
\def\iint{\DOTSI\intno@\tw@\FN@\ints@}%
\def\iiint{\DOTSI\intno@\thr@@\FN@\ints@}%
\def\iiiint{\DOTSI\intno@4 \FN@\ints@}%
\def\idotsint{\DOTSI\intno@\z@\FN@\ints@}%
\def\ints@{\findlimits@\ints@@}%
\newif\iflimtoken@
\newif\iflimits@
\def\findlimits@{\limtoken@true\ifx\next\limits\limits@true
 \else\ifx\next\nolimits\limits@false\else
 \limtoken@false\ifx\ilimits@\nolimits\limits@false\else
 \ifinner\limits@false\else\limits@true\fi\fi\fi\fi}%
\def\multint@{\int\ifnum\intno@=\z@\intdots@                          
 \else\intkern@\fi                                                    
 \ifnum\intno@>\tw@\int\intkern@\fi                                   
 \ifnum\intno@>\thr@@\int\intkern@\fi                                 
 \int}
\def\multintlimits@{\intop\ifnum\intno@=\z@\intdots@\else\intkern@\fi
 \ifnum\intno@>\tw@\intop\intkern@\fi
 \ifnum\intno@>\thr@@\intop\intkern@\fi\intop}%
\def\intic@{%
    \mathchoice{\hskip.5em}{\hskip.4em}{\hskip.4em}{\hskip.4em}}%
\def\negintic@{\mathchoice
 {\hskip-.5em}{\hskip-.4em}{\hskip-.4em}{\hskip-.4em}}%
\def\ints@@{\iflimtoken@                                              
 \def\ints@@@{\iflimits@\negintic@
   \mathop{\intic@\multintlimits@}\limits                             
  \else\multint@\nolimits\fi                                          
  \eat@}
 \else                                                                
 \def\ints@@@{\iflimits@\negintic@
  \mathop{\intic@\multintlimits@}\limits\else
  \multint@\nolimits\fi}\fi\ints@@@}%
\def\intkern@{\mathchoice{\!\!\!}{\!\!}{\!\!}{\!\!}}%
\def\plaincdots@{\mathinner{\cdotp\cdotp\cdotp}}%
\def\intdots@{\mathchoice{\plaincdots@}%
 {{\cdotp}\mkern1.5mu{\cdotp}\mkern1.5mu{\cdotp}}%
 {{\cdotp}\mkern1mu{\cdotp}\mkern1mu{\cdotp}}%
 {{\cdotp}\mkern1mu{\cdotp}\mkern1mu{\cdotp}}}%
\def\RIfM@{\relax\protect\ifmmode}
\def\text{\RIfM@\expandafter\text@\else\expandafter\mbox\fi}
\let\nfss@text\text
\def\text@#1{\mathchoice
   {\textdef@\displaystyle\f@size{#1}}%
   {\textdef@\textstyle\tf@size{\firstchoice@false #1}}%
   {\textdef@\textstyle\sf@size{\firstchoice@false #1}}%
   {\textdef@\textstyle \ssf@size{\firstchoice@false #1}}%
   \glb@settings}
\def\textdef@#1#2#3{\hbox{{%
                    \everymath{#1}%
                    \let\f@size#2\selectfont
                    #3}}}
\newif\iffirstchoice@
\def\Let@{\relax\iffalse{\fi\let\\=\cr\iffalse}\fi}%
\def\vspace@{\def\vspace##1{\crcr\noalign{\vskip##1\relax}}}%
\def\multilimits@{\bgroup\vspace@\Let@
 \baselineskip\fontdimen10 \scriptfont\tw@
 \advance\baselineskip\fontdimen12 \scriptfont\tw@
 \lineskip\thr@@\fontdimen8 \scriptfont\thr@@
 \lineskiplimit\lineskip
 \vbox\bgroup\ialign\bgroup\hfil$\m@th\scriptstyle{##}$\hfil\crcr}%
\def\Sb{_\multilimits@}%
\def\endSb{\crcr\egroup\egroup\egroup}%
\def\Sp{^\multilimits@}%
\newdimen\ex@
\def\rightarrowfill@#1{$#1\m@th\mathord-\mkern-6mu\cleaders
 \hbox{$#1\mkern-2mu\mathord-\mkern-2mu$}\hfill
 \mkern-6mu\mathord\rightarrow$}%
\def\leftarrowfill@#1{$#1\m@th\mathord\leftarrow\mkern-6mu\cleaders
 \hbox{$#1\mkern-2mu\mathord-\mkern-2mu$}\hfill\mkern-6mu\mathord-$}%
\def\leftrightarrowfill@#1{$#1\m@th\mathord\leftarrow
\mkern-6mu\cleaders
 \hbox{$#1\mkern-2mu\mathord-\mkern-2mu$}\hfill
 \mkern-6mu\mathord\rightarrow$}%
\def\overrightarrow{\mathpalette\overrightarrow@}%
\def\overrightarrow@#1#2{\vbox{\ialign{##\crcr\rightarrowfill@#1\crcr
 \noalign{\kern-\ex@\nointerlineskip}$\m@th\hfil#1#2\hfil$\crcr}}}%
\def\overleftarrow{\mathpalette\overleftarrow@}%
\def\overleftarrow@#1#2{\vbox{\ialign{##\crcr\leftarrowfill@#1\crcr
 \noalign{\kern-\ex@\nointerlineskip}$\m@th\hfil#1#2\hfil$\crcr}}}%
\def\overleftrightarrow{\mathpalette\overleftrightarrow@}%
\def\overleftrightarrow@#1#2{\vbox{\ialign{##\crcr
   \leftrightarrowfill@#1\crcr
 \noalign{\kern-\ex@\nointerlineskip}$\m@th\hfil#1#2\hfil$\crcr}}}%
\def\underrightarrow{\mathpalette\underrightarrow@}%
\def\underrightarrow@#1#2{\vtop{\ialign{##\crcr$\m@th\hfil#1#2\hfil
  $\crcr\noalign{\nointerlineskip}\rightarrowfill@#1\crcr}}}%
\def\underleftarrow{\mathpalette\underleftarrow@}%
\def\underleftarrow@#1#2{\vtop{\ialign{##\crcr$\m@th\hfil#1#2\hfil
  $\crcr\noalign{\nointerlineskip}\leftarrowfill@#1\crcr}}}%
\def\underleftrightarrow{\mathpalette\underleftrightarrow@}%
\def\underleftrightarrow@#1#2{\vtop{\ialign{##\crcr$\m@th
  \hfil#1#2\hfil$\crcr
 \noalign{\nointerlineskip}\leftrightarrowfill@#1\crcr}}}%
\def\qopnamewl@#1{\mathop{\operator@font#1}\nlimits@}
\let\nlimits@\displaylimits
\def\setboxz@h{\setbox\z@\hbox}
\def\varlim@#1#2{\mathop{\vtop{\ialign{##\crcr
 \hfil$#1\m@th\operator@font lim$\hfil\crcr
 \noalign{\nointerlineskip}#2#1\crcr
 \noalign{\nointerlineskip\kern-\ex@}\crcr}}}}
 \def\rightarrowfill@#1{\m@th\setboxz@h{$#1-$}\ht\z@\z@
  $#1\copy\z@\mkern-6mu\cleaders
  \hbox{$#1\mkern-2mu\box\z@\mkern-2mu$}\hfill
  \mkern-6mu\mathord\rightarrow$}
\def\leftarrowfill@#1{\m@th\setboxz@h{$#1-$}\ht\z@\z@
  $#1\mathord\leftarrow\mkern-6mu\cleaders
  \hbox{$#1\mkern-2mu\copy\z@\mkern-2mu$}\hfill
  \mkern-6mu\box\z@$}
\def\projlim{\qopnamewl@{proj\,lim}}
\def\injlim{\qopnamewl@{inj\,lim}}
\def\varinjlim{\mathpalette\varlim@\rightarrowfill@}
\def\varprojlim{\mathpalette\varlim@\leftarrowfill@}
\def\varliminf{\mathpalette\varliminf@{}}
\def\varliminf@#1{\mathop{\underline{\vrule\@depth.2\ex@\@width\z@
   \hbox{$#1\m@th\operator@font lim$}}}}
\def\varlimsup{\mathpalette\varlimsup@{}}
\def\varlimsup@#1{\mathop{\overline
  {\hbox{$#1\m@th\operator@font lim$}}}}
\def\align{\@verbatim \frenchspacing\@vobeyspaces \@alignverbatim
You are using the "align" environment in a style in which it is not defined.}
\let\csname endalign*\endcsname =\endtrivlist
\def\alignat{\@verbatim \frenchspacing\@vobeyspaces \@alignatverbatim
You are using the "alignat" environment in a style in which it is not defined.}
\let\csname endalignat*\endcsname =\endtrivlist
\def\xalignat{\@verbatim \frenchspacing\@vobeyspaces \@xalignatverbatim
You are using the "xalignat" environment in a style in which it is not defined.}
\let\csname endxalignat*\endcsname =\endtrivlist
\def\gather{\@verbatim \frenchspacing\@vobeyspaces \@gatherverbatim
You are using the "gather" environment in a style in which it is not defined.}
\let\csname endgather*\endcsname =\endtrivlist
\def\multiline{\@verbatim \frenchspacing\@vobeyspaces \@multilineverbatim
You are using the "multiline" environment in a style in which it is not defined.}
\let\csname endmultiline*\endcsname =\endtrivlist
\def\arrax{\@verbatim \frenchspacing\@vobeyspaces \@arraxverbatim
You are using a type of "array" construct that is only allowed in AmS-LaTeX.}
\def\tabulax{\@verbatim \frenchspacing\@vobeyspaces \@tabulaxverbatim
You are using a type of "tabular" construct that is only allowed in AmS-LaTeX.}
\let\csname endarrax*\endcsname =\endtrivlist
\let\csname endtabulax*\endcsname =\endtrivlist
\def\@@eqncr{\let\@tempa\relax
    \ifcase\@eqcnt \def\@tempa{& & &}\or \def\@tempa{& &}%
      \else \def\@tempa{&}\fi
     \@tempa
     \if@eqnsw
        \iftag@
           \@taggnum
        \else
           \@eqnnum\stepcounter{equation}%
        \fi
     \fi
     \global\tag@false
     \global\@eqnswtrue
     \global\@eqcnt\z@\cr}
 \def\endequation{%
     \ifmmode\ifinner 
      \iftag@
        \addtocounter{equation}{-1} 
        $\hfil
           \displaywidth\linewidth\@taggnum\egroup \endtrivlist
        \global\tag@false
        \global\@ignoretrue   
      \else
        $\hfil
           \displaywidth\linewidth\@eqnnum\egroup \endtrivlist
        \global\tag@false
        \global\@ignoretrue 
      \fi
     \else   
      \iftag@
        \addtocounter{equation}{-1} 
        \eqno \hbox{\@taggnum}
        \global\tag@false%
        $$\global\@ignoretrue
      \else
        \eqno \hbox{\@eqnnum}
        $$\global\@ignoretrue
      \fi
     \fi\fi
 } 
 \newif\iftag@ \tag@false
 \def\tag{\@ifnextchar*{\@tagstar}{\@tag}}
 \def\@tag#1{%
     \global\tag@true
     \global\def\@taggnum{(#1)}}
 \def\@tagstar*#1{%
     \global\tag@true
     \global\def\@taggnum{#1}%
}
\begin{document}

\title{Classical Dynamics from Self-Consistency Equations in Quantum
Mechanics -- Extended Version}
\author{J.-B. Bru \and W. de Siqueira Pedra}
\date{\today }
\maketitle

\begin{abstract}
\bigskip During the last three decades, P. B\'{o}na has developed a
non-linear generalization of quantum mechanics, which is based on symplectic
structures for normal states. One important application of such a
generalization is a general setting which is very convenient to study the
emergence of macroscopic classical dynamics from microscopic quantum
processes. We propose here a new mathematical approach to Bona's non-linear
quantum mechanics. It is based on $C_{0}$-semigroup theory and has a domain
of applicability which is much brother than Bona's original one. It
highlights the central role of self-consistency. This leads to a
mathematical framework in which the classical and quantum worlds are
naturally entangled. In this new mathematical approach, we build a Poisson
bracket for the polynomial functions on the hermitian weak$^{\ast }$
continuous functionals on any $C^{\ast }$-algebra. This is reminiscent of a
well-known construction for finite-dimensional Lie algebras. We then
restrict this Poisson bracket to states of this $C^{\ast }$-algebra, by
taking quotients with respect to Poisson ideals. This leads to densely
defined symmetric derivations on the commutative $C^{\ast }$-algebras of
real-valued functions on the set of states. Up to a closure, these are
proven to generate $C_{0}$-groups of contractions. As a matter of fact, in
general commutative $C^{\ast }$-algebras, even the closableness of unbounded
symmetric derivations is a non-trivial issue. Some new mathematical concepts
are introduced, which are possibly interesting by themselves: the convex weak%
$^{\ast }$ G\^{a}teaux derivative, state-dependent $C^{\ast }$-dynamical
systems and the weak$^{\ast }$-Hausdorff hypertopology, a new hypertopology
used to prove, among other things, that convex weak$^{\ast }$-compact sets
generically have weak$^{\ast }$-dense extreme boundary in infinite
dimension.\ Our recent results on macroscopic dynamical properties of
lattice-fermion and quantum-spin systems with long-range, or mean-field,
interactions corroborate the relevance of the general approach we present
here.\bigskip

\noindent \textbf{Keywords:} $C_{0}$-semigroups, Poisson algebras, quantum
mechanics, classical mechanics, self-consistency, hypertopology. \bigskip

\noindent \textbf{AMS Subject Classification:} 81Q65, 47D06, 17B63, 81R15
\end{abstract}

\tableofcontents%

\section{Introduction}

An indian, son of a dangerous witch,\textit{... said to his wife:
\textquotedblleft It is my wish that you return with me to my mother's
lodge\ -- my home.\textquotedblright\ His wife, knowing well who he was and
who his mother was, readily consented to accompany him; by so doing she was
faithfully carrying out the policy which her blind brother had advised her
to pursue toward him. On their way homeward, while the husband was leading
the trail, they came to a point where the path divided into two divergent
ways which, however, after forming an oblong loop, reunited, forming once
more only a single path. Here the woman was surprised to see her husband's
body divide into two forms, one following the one path and the other the
other trail. She was indeed greatly puzzled by this phenomenon, for she was
at a loss to know which of the figures to follow as her husband.
Fortunately, she finally resolved to follow the one leading to the right.
After following this path for some distance, the wife saw that the two
trails reunited and also that the two figures of her husband coalesced into
one. It is said that this circumstance gave rise to the name of this strange
man, which was Degiyan\={e}'g\v{e}\~{n}`; that is to say, \textquotedblleft
They are two trails running parallel.\textquotedblright }\smallskip

\hfill Part of a Seneca\footnote{%
The Seneca was an important tribe of the Iroquois, the so-called Five
Nations of New York. There is still a Seneca nation nowadays in the United
States.} legend \cite{Seneca}\textit{\bigskip }

Recently, it was proven \cite{Ammari2018} that the Gross-Pitaevskii and
Hartree hierarchies, which are infinite systems of coupled PDEs
mathematically describing Bose gases with mean-field interactions, are
equivalent to Liouville's equations for functions on a suitable phase space.
This result is reminiscent of Hepp and Lieb's seminal paper \cite%
{Hepp-Lieb73} from year 1973, making explicit, for the first time, the
existence of Poisson brackets in some space of functions, related to the
classical effective dynamics for a permutation-invariant quantum-spin system
with mean-field interactions. This research line was further developed by
many other authors, at least until the nineties. For more details, see \cite[%
Section 1]{BruPedra-MFII}. We focus here on B\'{o}na's impressive series of
papers on the subject, starting in 1975 with \cite{Bona75}. In the middle of
the eighties, his works \cite{Bona83,Bona86} lead him to consider a
non-linear generalization of quantum mechanics. Based on his decisive
progresses \cite{Bona87,Bona88,Bona89,Bona90} on permutation-invariant
quantum-spin systems with mean-field interactions, B\'{o}na presents a
full-fledged abstract theory in 1991 \cite{Bona91}, which is improved later
in a mature textbook published in 2000 (and revised in 2012) \cite{Bono2000}%
. This theory does not seem to be incorporated by the physics and
mathematics communities, yet.

Following \cite[Section 1.1-a]{Bono2000}, B\'{o}na's original motivation was
to \textquotedblleft \textit{understand connections between quantum and
classical mechanics more satisfactorily than via the limit }$\hbar
\rightarrow 0$.\textquotedblright\ This last limit refers to the
semi-classical analysis, a well-developed research field in mathematics. In
physics, it refers to Weyl quantization or, more generally, the quantization
of classical systems with $\hbar $ as a deformation parameter. See, e.g., 
\cite[Chapter 13]{quantum theory}. This is the common understanding\footnote{%
At least in many textbooks on quantum mechanics. See for instance \cite[%
Section 12.4.2, end of the 4th paragraph of page 178]{quantum theory}.} of
the relation between quantum and classical mechanics, which is seen as a
limiting case of quantum mechanics, even if there exist physical features
(such as the spin of quantum particles) which do not have a clear classical
counterpart. Nonetheless, classical mechanics does not only appear in the
limit $\hbar \rightarrow 0$, as explained for instance in \cite%
{landsmann07,extra-ref0000}. B\'{o}na's major conceptual contribution is to
highlight the possible emergence of classical mechanics without the
disappearance of the quantum world, offering a general mathematical
framework which is appropriate to study macroscopic coherence in large
quantum systems.

Note that B\'{o}na's view point is different from recent approaches of
theoretical physics like \cite%
{extra-ref1988,extra-ref,extra-ref2,extra-ref3,extra-ref00,extra-ref1} (see
also references therein) which propose a general formalism to get a
consistent description of interactions between classical and quantum
systems, having in mind chemical reactions, decoherence or the quantum
measurement theory. In these approaches \cite%
{extra-ref1988,extra-ref0000,extra-ref,extra-ref2,extra-ref3,extra-ref00,extra-ref1}%
, neither B\'{o}na's papers nor Hepp and Lieb's results are mentioned, even
if theoretical physicists are of course aware of the emergence of classical
dynamics in presence of mean-field interactions. See, e.g., \cite%
{extra-ref0000} where the mean-field (classical) theory corresponds to the
leading term of a \textquotedblleft large $N$\textquotedblright\ expansion
while the quantum part of the theory (quantum fluctuations) is related to
the next-to-leading order term. The approaches \cite%
{extra-ref1988,extra-ref,extra-ref2,extra-ref3,extra-ref00,extra-ref1} (see
also references therein) refer to quantum-classical hybrid theories for
which the classical space exists by definition, in a \emph{ad hoc }way,
because of measuring instruments for instance. By contrast, the classical
dynamics in B\'{o}na's view point emerges as an \emph{intrinsic} property of
macroscopic quantum systems, like in \cite{extra-ref0bis}. This is also
similar to \cite{extra-ref0}, which is however a much more elementary example%
\footnote{%
It corresponds to a quantum systems with two species of particles in an
extreme mass ratio limit: the particles of one species become infinitely
more massive than the particles in the other one. In this limit, the species
of massive particles, like nuclei, becomes classical while the other one,
like electrons, stays quantum mechanical.} referring to the Ehrenfest
dynamics.

In the present paper we revisit B\'{o}na's conceptual lines, but propose a
new method to mathematically implement them, with a broader domain of
applicability than B\'{o}na's original version \cite{Bono2000} (see also 
\cite{landsmann07,Odzijewicz} and references therein). In contrast with all
previous approaches, including those of theoretical physics (see, e.g., \cite%
{extra-ref1988,extra-ref,extra-ref2,extra-ref3,extra-ref00,extra-ref1,extra-ref0000,extra-ref0bis}%
), in ours the classical and quantum worlds are \emph{entangled}, with \emph{%
backreaction}\footnote{%
We do not mean here the so-called \emph{quantum backreaction}, commonly used
in physics, which refers to the backreaction effect of quantum fluctuations
on the classical degrees of freedom. Note further that the phase spaces we
consider are, generally, much more complex than those related to the
position and momentum of simple classical particles.
\par
{}} (that is, feedbacks), as expected. Differently from B\'{o}na's technical
setting, ours has advantage of highlighting inherent \emph{self-consistency}
aspects, which are absolutely not exploited in \cite{Bono2000}, as well as
in quantum-classical hybrid theories of physics (e.g., \cite%
{extra-ref1988,extra-ref,extra-ref2,extra-ref3,extra-ref00,extra-ref1,extra-ref0,extra-ref0bis}%
).

The relevance of the abstract setting we propose here is corroborated by
recent results \cite{BruPedra-MFII,BruPedra-MFIII} on the \emph{macroscopic}
(i.e., infinite-volume) dynamical properties of lattice-fermion and
quantum-spin systems with long-range, or mean-field, interactions. Note that
a simple illustration of them is available in \cite%
{Bru-pedra-proceeding,Bru-pedra-MF-IV}. In fact, the outcomes of \cite%
{BruPedra-MFII,BruPedra-MFIII} refer to objects that are far more general
than quasi-free states or permutation invariant models and required the
development of an appropriate mathematical framework to accommodate the
macroscopic long-range dynamics, which turns out to be generally equivalent
to an intricate combination of classical and short-range quantum dynamics. 
\cite{BruPedra-MFII,BruPedra-MFIII} are therefore a strong motivation for a
change of perspective, which is thus presented in a \emph{systematic} way in
the present paper. Several key ingredients of \cite%
{BruPedra-MFII,BruPedra-MFIII} refer to abstract constructions discussed
this paper, like the Poisson structures elaborated here. In other words, 
\cite{BruPedra-MFII,BruPedra-MFIII} represent important applications, to the
quantum many-body problem, of the general setting presented here.

To set up our approach, we use the algebraic formalism for quantum and
classical mechanics \cite[Chapter 12]{quantum theory}. The most basic
element of our mathematical framework is a generic non-commutative unital $%
C^{\ast }$-algebra $\mathcal{X}$, which will be called here the
\textquotedblleft primordial\textquotedblright\ algebra. For instance, $%
\mathcal{X}$ is the so-called CAR $C^{\ast }$-algebra for fermion systems or
the spin $C^{\ast }$-algebras in the case of quantum spins. Then, the
classical objects associated with $\mathcal{X}$ are defined as follows:

\begin{itemize}
\item \emph{State and phase spaces} (Sections \ref{Phase Space}-\ref{generic
convex set}). The state space is the convex weak$^{\ast }$-compact set $E$
of \emph{all} states on $\mathcal{X}$. We define the phase space as being
the (weak$^{\ast }$) closure\footnote{%
More properly, the phase space should be taken as being the set $\mathcal{E}%
(E)$ of extreme states on $\mathcal{X}$ itself. Note, however, that what is
relevant in the algebraic approach is the algebra of continuous functions on
the given topological space, and not the space itself. The algebras of
continuous functions on the closure of $\mathcal{E}(E)$ is, of course, $\ast 
$-isomorphic to a $C^{\ast }$-subalgebra of continuous functions on $%
\mathcal{E}(E)$ and the closure of $\mathcal{E}(E)$ is taken to get a
compact phase space, only.}\ of the subset $\mathcal{E}(E)\subseteq E$ of
all extreme points of $E$. Interestingly, in the case that the $C^{\ast }$%
-algebra $\mathcal{X}$ is antiliminal and simple (e.g., the CAR algebra
associated with any separable infinite-dimensional one-particle Hilbert
space, the spin algebra of any infinite countable lattice, etc.), the phase
and state spaces coincide. More generally, by using a new (weak$^{\ast }$)
hypertopology, we show that this surprising property of the state space is 
\emph{not} accidental, but \emph{generic} in infinite-dimensional separable
Banach spaces. Note that our definitions of phase and state spaces differ
from B\'{o}na's ones: he does not really distinguish both spaces and
considers instead the set of all density matrices associated with a \emph{%
fixed} Hilbert space \cite[Section 2.1, see also 2.1-c]{Bono2000}. In
particular, B\'{o}na's definition of the phase/state space is
representation-dependent, in contrast with our approach. In fact, in \cite[%
Sections 2.1c, footnote]{Bono2000}, B\'{o}na proposes as a mathematically
and physically interesting problem to \textquotedblleft \textit{formulate
analogies of }[his]\textit{\ constructions on the space of all positive
normalized functionals on }$\mathcal{B}(\mathcal{H})$. \textit{This leads to
technical complications}.\textquotedblright\ In Section \ref{new section
jundiai copy(1)} we propose a solution to this problem for any $C^{\ast }$%
-algebra $\mathcal{X}$.

\item \emph{Classical algebra }(Section \ref{Classical algebra}). The
classical (i.e., commutative) unital $C^{\ast }$-algebra\footnote{%
Analogously to the above distinction between phase and state spaces, more
properly, the algebra related to the \textquotedblleft classical
world\textquotedblright\ should rather be the one of continuous functions on
the phase space, but we expose in Section \ref{Classical algebra copy(1)}
the conceptual limitations of the use of this algebra in quantum physics.
Moreover, in the case the $C^{\ast }$-algebra $\mathcal{X}$ is antiliminal
and simple, both classical algebras are $\ast $-isomorphic to each other. In
fact, the phase space turns out to be always conserved by the classical
flows (in the state space) and we show that the classical dynamics studied
in the present paper can always be pushed forward, by restriction of
functions, from $\mathfrak{C}$ to the algebra of weak$^{\ast }$ continuous
functions on the phase space.} in our approach is the algebra $\mathfrak{C}%
\doteq C\left( E;\mathbb{C}\right) $ of continuous and complex-valued
functions on the state space $E$.

\item \emph{Poisson structures }(Sections \ref{Convex Frechet Derivative}-%
\ref{Poisson Structure}). By generalizing the well-known construction of a
Poisson bracket for the polynomial functions on the dual space of finite
dimensional Lie algebras \cite[Section 7.1]{Poission}, we define a Poisson
bracket for the polynomial functions on the hermitian continuous functionals
(like the states) on any $C^{\ast }$-algebra $\mathcal{X}$. Then, the
Poisson bracket is localized on the state or phase space associated with
this algebra by taking quotients with respect to conveniently chosen Poisson
ideals. This leads to a Poisson bracket for polynomial functions of the
classical $C^{\ast }$-algebra $\mathfrak{C}$.
\end{itemize}

\noindent In our setting, we introduce state-dependent $C^{\ast }$-dynamical
systems associated with the primordial algebra $\mathcal{X}$, as follows:

\begin{itemize}
\item \emph{Secondary quantum algebra} (Section \ref{Quantum Algebras}).
Similar to quantum-classical hybrid theories of theoretical physics like in 
\cite{extra-ref1988,extra-ref,extra-ref2,extra-ref3,extra-ref00,extra-ref1}
we introduce an extended quantum algebra as being the\footnote{%
Because commutative $C^{\ast }$-algebras are nuclear, the norm making the
completion of the algebraic tensor product $\mathfrak{C}\otimes \mathcal{X}$
into a $C^{\ast }$-algebra is unique.} tensor product $\mathfrak{C}\otimes 
\mathcal{X}$ of the commutative $C^{\ast }$-algebra $\mathfrak{C}$ with the
primordial one $\mathcal{X}$. This tensor product is nothing else (up to
some $\ast $-isomorphism) than the unital $C^{\ast }$-algebra $\mathfrak{X}%
\doteq C(E;\mathcal{X})$, named here the \textit{secondary} algebra
associated with the \textit{primordial} one, $\mathcal{X}$. There are
natural inclusions $\mathcal{X}\subseteq \mathfrak{X}\ $and$\ \mathfrak{C}%
\subseteq \mathfrak{X}$ by identifying elements of $\mathcal{X}$ with
constant functions and elements of $\mathfrak{C}$ with functions whose
values are scalar multiples of the unit of the primordial algebra $\mathcal{X%
}$. Note that in B\'{o}na's approach, self-adjoint elements of $\mathfrak{X}$
refer to what he calls \textquotedblleft \textit{non-linear
observables\textquotedblright } \cite[Section 1.2.3]{Bono2000}.

\item \emph{State-dependent quantum dynamics} (Section \ref{Quantum Algebras}%
). As in $\mathcal{X}$, a (possibly non-autonomous) quantum dynamics on $%
\mathfrak{X}$ is, by definition, a strongly continuous two-parameter family $%
\mathfrak{T}\equiv (\mathfrak{T}_{t,s})_{s,t\in \mathbb{R}}$ of $\ast $%
-automorphisms of $\mathfrak{X}$\ satisfying the reverse cocycle property:%
\begin{equation*}
\forall s,r,t\in \mathbb{R}:\qquad \mathfrak{T}_{t,s}=\mathfrak{T}%
_{r,s}\circ \mathfrak{T}_{t,r}\ .
\end{equation*}%
If $\mathfrak{T}$ preserves the classical algebra $\mathfrak{C}\subseteq 
\mathfrak{X}$, then we name the pair $(\mathfrak{X},\mathfrak{T})$
state-dependent, or secondary, $C^{\ast }$-dynamical system associated with
the primordial algebra $\mathcal{X}$.
\end{itemize}

\noindent In this setting, the classical (i.e., commutative) and quantum
(i.e., non-commutative) objects are strongly related to each other as
follows:

\begin{itemize}
\item Any state-dependent $C^{\ast }$-dynamical system $(\mathfrak{X},%
\mathfrak{T})$ associated with $\mathcal{X}$, in the above sense, yields a
classical dynamics on $\mathfrak{C}$, as explained in Section \ref{Extended
Quantum Dynamics}. This classical dynamics then induces a \emph{Feller
evolution system} \cite{Feller}, which in turn implies the existence of
corresponding Markov transition kernels on $E$ (which can be canonically
identified with the Gelfand spectrum of the commutative unital $C^{\ast }$%
-algebra $\mathfrak{C}$). The full dynamics for (quantum) states on the
primordial algebra $\mathcal{X}$\ can then be recovered from the Markov
transition kernels. A Feller evolution with similar properties also exists
for the phase space (i.e., the closure of $\mathcal{E}(E)$).

\item More interestingly, we remark in Section \ref{Section QM} that any 
\textit{classical} differentiable Hamiltonian from $\mathfrak{C}$ is
associated with a \emph{state-dependent} quantum dynamics on the primordial $%
C^{\ast }$-algebra $\mathcal{X}$, in a natural way. This observation is then
used to derive, mathematically, in Sections \ref{Self-Consistency Equations}-%
\ref{Section CM}, classical dynamics associated with the Poisson structure
of the (polynomial subalgebra of the) classical algebra $\mathfrak{C}$.
These define again Feller evolution systems which turn out to be related to
a self-consistency problem (Theorem \ref{theorem sdfkjsdklfjsdklfj copy(3)}%
). By Lemma \ref{lemma supercigare bis copy(1)}, this yields, in turn,
state-dependent quantum dynamics on the secondary (quantum) $C^{\ast }$%
-algebra $\mathfrak{X}$ of continuous ($\mathcal{X}$-valued) functions on
states, associated with the primordial (quantum) algebra $\mathcal{X}$.
\end{itemize}

On the one hand, the classical world is embedded in the quantum world, as
mathematically expressed by the above defined inclusion $\mathfrak{C}%
\subseteq \mathfrak{X}$. On the other hand, our approach entangles the
quantum and classical worlds through self-consistency. As a result, \emph{%
non-autonomous} and \emph{non-linear} dynamics can emerge. Seeing both
entangled worlds, quantum and classical, as \textquotedblleft two sides of
the same coin\textquotedblright\ looks like an oxymoron, but there is no
contradiction there since everything refers to a \emph{primordial} quantum
world mathematically encoded in the structure of the non-commutative
(unital) $C^{\ast }$-algebra $\mathcal{X}$. In fact, the quantum algebra $%
\mathcal{X}$ is the \emph{arche}\textit{\footnote{%
Following Aristotle's use of the presocratic philosophical term
\textquotedblleft arche\textquotedblright\ (%
\greek{>arq\`h}%
), here it means \textquotedblleft the element or principle of a thing
which, although undemonstrable and intangible in itself, provides the
conditions of the possibility of that thing\textquotedblright .\ See \cite[%
p. 143]{arche}.}} of the theory. For instance, the state space $E$ is the
imprint left by $\mathcal{X}$ in the classical world, whose observables are
the self-adjoint elements of the \textit{commutative} $C^{\ast }$-algebra $%
\mathfrak{C}\doteq C\left( E;\mathbb{C}\right) $, i.e., the continuous
complex-valued functions on $E$. If $\mathcal{X}$ were a commutative
algebra, note that the corresponding Poisson bracket and, hence, the
associated classical dynamics would be trivial.

Note that the abstract setting proposed in this paper is not really useful
to portray quantum dynamics of finite systems. In fact, in this case, the
time evolution is \emph{not} state-dependent. Nevertheless, as discussed
above, such a mathematical framework turn out to be natural for the study of
macroscopic dynamics of lattice-fermion or quantum-spin systems with
long-range, or mean-field, interactions, because, in this case, the
Heisenberg dynamics turns out to be effectively state dependent, in the
thermodynamic limit. See again \cite{BruPedra-MFII,BruPedra-MFIII}, which
uses self-consistency equations in a essential way, similar to Theorem \ref%
{theorem sdfkjsdklfjsdklfj copy(3)}. Moreover, since quantum many-body
systems in the continuum are also expected to have, in general, a \emph{%
state-dependent} Heisenberg dynamics in the thermodynamic limit (see, e.g., 
\cite[Section 6.3]{BratteliRobinson}), the approach presented here is very
likely relevant for future studies in this context. We thus consider
important to have a systematic approach that can be used beyond specific
applications, like \cite{BruPedra-MFII,BruPedra-MFIII}.

Our approach is not too far, in its spirit, to the one developed in \cite%
{Bono2000}, although it differs in its mathematical formulation. In
comparison with \cite{Bono2000}, our formulation is more general in the case
of an infinite-dimensional underlying $C^{\ast }$-algebra, which generally
has several inequivalent irreducible representations, as a consequence of
the Rosenberg theorem \cite{Ros}: Whereas \cite{Bono2000} has to use a
representation of the underlying $C^{\ast }$-algebra\ to be able to define
Poisson brackets in the associated classical algebra, we provide a
definition for such brackets with no reference to representations. This is
explained in more detail in Section \ref{new section jundiai}. Notice at
this point that, in condensed matter physics, the non-uniqueness of
irreducible representations is intimately related to the existence of
various inequivalent thermodynamically stable phases of the same material.

Last but not least, we observe that a large set of symmetric derivations can
be defined on all polynomial elements of $\mathfrak{C}$ by using the Poisson
bracket. See Section \ref{Bounded Derivations}. These (unbounded)
derivations are not a priori \emph{closed} operators, but this property is
necessary to generate (classical) dynamics, in its Hamiltonian formulation,
via \emph{strongly continuous semigroups}. In contrast with our approach, B%
\'{o}na avoids this problem by using Hamiltonian flows in symplectic leaves
of the corresponding Poisson manifold and by \textquotedblleft
gluing\textquotedblright\ together the flows within the leaves by showing
continuity properties \cite[Section 2.1-d]{Bono2000}.

The closabledness of a symmetric derivation is usually proven from its
dissipativity \cite[Definition 1.4.6, Proposition 1.4.7]{Bratelli-derivation}%
, which results from \cite[Theorem 1.4.9]{Bratelli-derivation} and the
assumption that the square root of each positive element of the domain of
the derivation also belongs to the same domain. We cannot expect this
property to be satisfied for symmetric derivations acting on a dense domain
of $\mathfrak{C}$. As a matter of fact, the closableness of unbounded
symmetric derivations in commutative $C^{\ast }$-algebras like $\mathfrak{C}$
is, in general, a non-trivial issue. This property might not even be true
since there exists norm-densely defined derivations of $C^{\ast }$-algebras
that are \emph{not} closable \cite{bra-robin-1975}. For instance, in \cite[%
p. 306]{BrattelliRobinsonI}, it is even claimed that \textquotedblleft
Herman has constructed an extension of the usual differentiation on $C(0,1)$
which is a non-closable derivation of $C(0,1)$.\textquotedblright\ A
complete classification of all closed symmetric derivations of functions on
a compact subset of a \emph{one-dimensional} space was obtained around 1990.
However, quoting \cite[Section 1.6.4, p. 27]{Bratelli-derivation},
\textquotedblleft for more than 2 dimensions only \textit{sporadic} results
in this direction are known.\textquotedblright\ See, e.g., \cite[Section
1.6.4]{Bratelli-derivation}, \cite{Tomiyama}, \cite{Kuroselastpaper,Kurose},
and later \cite[p. 306]{BrattelliRobinsonI}. Since then, no progress has
been made on this classification problem, at least to our knowledge.

In Section \ref{Closed derivation} (Theorem \ref{Closed Poissonian symmetric
derivations}), via the analysis of certain self-consistency problems
together with the one-parameter semigroups theory \cite{EngelNagel}, we
naturally obtain infinitely many closed symmetric derivations with dense
domain in $\mathfrak{C}$. As it turns out, this method is very natural and
efficient for the state space $E$, that is, a weak$^{\ast }$-compact convex
subset of the dual $\mathcal{X}^{\ast }$ of the unital (not necessarily
separable) $C^{\ast }$-algebra $\mathcal{X}$, which is, in general, infinite%
\emph{-}dimensional. In particular, $E$ is generally \emph{not} a subset of
a finite-dimensional space. This construction of closed derivations of a
commutative $C^{\ast }$-algebra via self-consistency problems is
non-conventional and may motivate further studies. For more information, see
Section \ref{Closed derivation}. \bigskip

\noindent \textbf{Main results and structure of the paper.} Recall that $E$
is the state space of a non-commutative unital $C^{\ast }$-algebra $\mathcal{%
X}$. Our main results are the following:

\begin{itemize}
\item The \emph{weak}$^{\ast }$-\emph{Hausdorff hypertopology} (Definitions %
\ref{hypertopology} and \ref{hypertopology0}) is a new notion, proposed here
in order to characterize generic convex weak$^{\ast }$-compact sets, by
extending \cite{Klee,FonfLindenstrauss} to weak$^{\ast }$ topological
structures. We show in particular that convex weak$^{\ast }$-compact sets of
the dual space $\mathcal{X}^{\ast }$ of a (real or complex) Banach space $%
\mathcal{X}$\ have generically weak$^{\ast }$-dense set of extreme points in
infinite dimension, in the sense of this new (hyper)topology. This refers to
Theorems \ref{theorem dense cool1} and \ref{theorem density2}. These results
has been extended in \cite{article-hypertopology} for the dual space $%
\mathcal{X}^{\ast }$, endowed with its weak$^{\ast }$-topology, of any
infinite-dimensional, separable topological vector space $\mathcal{X}$.

\item Corollary \ref{proposition sympatoch copy(1)} defines, in a natural
way, a Poisson bracket $\{\cdot ,\cdot \}$ on polynomial functions of $C(E;%
\mathbb{C})$, while Corollary \ref{proposition sympatoch copy(2)} shows that
the restriction of $\{\cdot ,\cdot \}$ to the phase space $\overline{%
\mathcal{E}(E)}$ also lead to a Poisson bracket on polynomial functions of $%
C(\overline{\mathcal{E}(E)};\mathbb{C})$. These Poisson brackets were
previously used, for instance, in \cite{BruPedra-MFII,BruPedra-MFIII}.

\item The \emph{convex weak}$^{\ast }$ \emph{Gateaux derivative} (Definition %
\ref{convex Frechet derivative}) is used to give an explicit expression for
the Poisson bracket for functions on the state space $E$. This refers to
Proposition \ref{Poisson algebra prop}, which is is an important result
because it allows us to perform more explicit computations, both in this
paper and in \cite{BruPedra-MFII,BruPedra-MFIII}.

\item Theorem \ref{theorem sdfkjsdklfjsdklfj copy(3)} shows the
well-posedness of self-consistency equations, allowing us to define, for an
appropriate continuous family $h\equiv (h(t))_{t\in \mathbb{R}}\subseteq
C^{1}\left( E;\mathbb{R}\right) $, a classical flow%
\begin{equation*}
\rho \mapsto \mathbf{\varpi }^{h}\left( s,t\right) \left( \rho \right)
,\qquad s,t\in \mathbb{R},
\end{equation*}%
in the state space $E$ and thus, a (generally non-autonomous classical)
dynamics $(V_{t,s}^{h})_{s,t\in \mathbb{R}}$ on $\mathfrak{C}\doteq C\left(
E;\mathbb{C}\right) $: 
\begin{equation*}
V_{t,s}^{h}\left( f\right) \doteq f\circ \mathbf{\varpi }^{h}\left(
s,t\right) \ ,\qquad f\in \mathfrak{C},\ s,t\in \mathbb{R}\ .
\end{equation*}%
Physically, the functions $h(t)$, $t\in \mathbb{R}$, are time-dependent
classical energies. In Corollary \ref{corollary conservation}, we show that
the classical flow conserves both the set $\mathcal{E}(E)$ of extreme states
and its weak$^{\ast }$ closure $\overline{\mathcal{E}(E)}$, which is the
phase space.

\item Proposition \ref{lemma poisson copy(1)} proves that $%
(V_{t,s}^{h})_{s,t\in \mathbb{R}}$ is a strongly continuous two-parameter
family of $\ast $-automorphisms of $\mathfrak{C}$ satisfying the reverse
cocycle property, i.e., the classical dynamics is a Feller evolution system 
\cite{Feller}.

\item Theorem \ref{Closed Poissonian symmetric derivations} shows that,
given an appropriate function $h\in C^{1}\left( E;\mathbb{R}\right) $, the
Poissonian symmetric derivations$\ $%
\begin{equation*}
f\mapsto \left\{ h,f\right\} \in \mathfrak{C}
\end{equation*}%
defined for any polynomial functions $f$ in $\mathfrak{C}$ is closable and
is directly related to the generator of the $C_{0}$-group $%
(V_{t,0}^{h})_{t\in \mathbb{R}}$ for the constant energy function $h$.

\item Theorem \ref{proposition dynamic classique II} shows the
non-autonomous evolution equations%
\begin{equation*}
\partial _{t}V_{t,s}^{h}\left( f\right) =V_{t,s}^{h}\left( \left\{ h\left(
t\right) ,f\right\} \right) \qquad \text{and}\qquad \partial
_{s}V_{t,s}^{h}\left( f\right) =-\left\{ h\left( s\right)
,V_{t,s}^{h}(f)\right\}
\end{equation*}%
for any appropriate $h\equiv (h(t))_{t\in \mathbb{R}}\subseteq C^{1}\left( E;%
\mathbb{R}\right) $, times $s,t\in \mathbb{R}$ and polynomial function $f$
of $\mathfrak{C}$. In the autonomous case, i.e., when $h\in C^{1}\left( E;%
\mathbb{R}\right) $, one gets Liouville's equation (Corollary \ref%
{proposition dynamic classique IIbis}), i.e., 
\begin{equation*}
\partial _{t}V_{t,0}^{h}\left( f\right) =V_{t,0}^{h}\left( \left\{
h,f\right\} \right) =\left\{ h,V_{t,0}^{h}(f)\right\} \ .
\end{equation*}

\item In Section \ref{Extended Quantum Dynamics}, we show how the above
classical dynamics defines a state-dependent quantum dynamics with \emph{%
fixed-point} algebra including\footnote{%
This is a very important property, excluding the definition given by
Equation (\ref{def fausse}).} the classical algebra $\mathfrak{C}$. This
lead us to define a \emph{state-dependent} $C^{\ast }$-dynamical system
(Definition \ref{Extended dynamical systems}). See Lemma \ref{lemma
supercigare bis copy(1)} and discussions afterwards. Such a quantum dynamics
is relevant in the study of macroscopic dynamics of lattice-fermion systems
or quantum-spin systems with long-range, or mean-field, interactions
performed in \cite{BruPedra-MFII,BruPedra-MFIII}.
\end{itemize}

\noindent The paper is organized as follows: We first introduce, in Section %
\ref{Phase Space copy(2)}, classical systems associated with arbitrary
unital $C^{\ast }$-algebras. The Poisson structures for these systems are
built in Section \ref{Poisson Structures in Quantum Mechanics}. Section \ref%
{Poisson Structures in Quantum Mechanics} also gathers all the necessary
definitions to describe, in Section \ref{Closed derivation}, classical
dynamics generated by a Poisson bracket, as is usual classical mechanics.
Section \ref{section extended C* dynamical systeme} then explains the final
setting of the theory. In Section \ref{Symmetry Group} we discuss the role
of symmetries as well as the notion of \textquotedblleft
reduction\textquotedblright\ of the classical dynamics. This is important in
applications to simplify the self-consistency equations. Section \ref%
{Hausdorff Hypertopology} gives all arguments to deduce Theorems \ref%
{theorem dense cool1}-\ref{theorem density2} by defining and studying the
weak$^{\ast }$-Hausdorff hypertopology. The proof of the most important
result, that is, Theorem \ref{theorem sdfkjsdklfjsdklfj copy(3)}, is
performed in Section \ref{Well-posedness sect copy(1)}, which also collects
additional results used in Section \ref{Section CM}. Finally, Section \ref%
{Liminal appendix} is an appendix on liminal, postliminal and antiliminal $%
C^{\ast }$-algebras. Though these are standard notions in $C^{\ast }$%
-algebra theory, they may not be known by non-experts, but have major
consequences on the structure of the set of states, which can be highly
non-trivial and are relevant in our discussions.

\begin{notation}
\label{remark constant}\mbox{
}\newline
\emph{(i)} A norm on a generic vector space $\mathcal{X}$ is denoted by $%
\Vert \cdot \Vert _{\mathcal{X}}$ and the identity map of $\mathcal{X}$ by $%
\mathbf{1}_{\mathcal{X}}$. The space of all bounded linear operators on $(%
\mathcal{X},\Vert \cdot \Vert _{\mathcal{X}}\mathcal{)}$ is denoted by $%
\mathcal{B}(\mathcal{X})$. The unit element of any algebra $\mathcal{X}$ is
denoted by $\mathfrak{1}$, provided it exists. The scalar product of any
Hilbert space $\mathcal{H}$ is denoted by $\langle \cdot ,\cdot \rangle _{%
\mathcal{H}}$. \newline
\emph{(ii)} For all topological spaces $\mathcal{X}$ and $\mathcal{Y}$, $%
C\left( \mathcal{X};\mathcal{Y}\right) $ denotes, as usual, the space of
continuous maps from $\mathcal{X}$ to $\mathcal{Y}$. If $\mathcal{X}$ is a
locally compact topological space and $\mathcal{Y}$ is a Banach space, then $%
C_{b}\left( \mathcal{X};\mathcal{Y}\right) $ denotes the Banach space of
bounded continuous maps from $\mathcal{X}$ to $\mathcal{Y}$ along with the
topology of uniform convergence. For any $p,n\in \mathbb{N}$, in the special
case $\mathcal{X}=\mathbb{R}^{n}$ and $\mathcal{Y}=\mathbb{R}$, $C_{b}^{p}(%
\mathbb{R}^{n};\mathbb{R})$ denotes the Banach space of bounded continuous,
real-valued, functions on $\mathbb{R}^{n}$ along with the topology of
uniform convergence for the functions and all its $m$-th derivatives, where $%
m\in \{1,\ldots ,p\}$. \newline
\emph{(iii)} We adopt the term \textquotedblleft
automorphism\textquotedblright\ in the sense of category theory and its
precise meaning thus depends on the structure of the corresponding domain:
An automorphism of a $\ast $-algebra is a bijective $\ast $-homomorphism
from this algebra to itself, whereas an automorphism of a topological space
is a self-homeomorphism, that is, a homeomorphism of the space to itself. In
fact, in the category of topological spaces the morphisms are precisely the
continuous maps and the morphisms of the category of $\ast $-algebras are
the $\ast $-homomorphisms. Recall that in category theory invertible
morphisms are called isomorphisms and isomorphisms whose domain and codomain
coincide are called automorphisms.\newline
\emph{(iv)} In the sequel, a primordial $C^{\ast }$-algebra $\mathcal{X}$ is
fixed and its state space is denoted by $E$. Then, various sets of functions
on $E$ are defined. The most important are $\mathfrak{C}\doteq C(E;\mathbb{C}%
)$, $\mathfrak{X}\doteq C(E;\mathcal{X})$ and $\mathfrak{Y}\equiv \mathfrak{Y%
}\left( \mathcal{Y}\right) \doteq C^{1}\left( E;\mathcal{Y}\right) $, $%
\mathcal{Y}$ being a Banach space, like $\mathbb{R}$ or $\mathbb{C}$. These
spaces appear many times and we always use a shorter notation than the usual
ones, like $C(E;\mathbb{C})$, the letter of the codomain within the Fraktur
alphabet. More generally, any capital letter in Fraktur alphabet always
refers to a space of functions on $E$. To denote real subspaces, we add the
superscript $\mathbb{R}$ like in $\mathfrak{C}^{\mathbb{R}}\doteq C(E;%
\mathbb{R})$ or in $\mathcal{X}^{\mathbb{R}}$, which is the real Banach
space of all self-adjoint elements of $\mathcal{X}$.
\end{notation}

\section{Classical View on Quantum Systems\label{Phase Space copy(2)}}

\subsection{State Space of $C^{\ast }$-Algebras\label{Phase Space}}

\noindent \textit{Perhaps the philosophically most relevant feature of
modern science is the emergence of abstract symbolic structures as the hard
core of objectivity behind -- as Eddington puts it -- the colorful tale of
the subjective storyteller mind.}\smallskip

\hfill Weyl, 1949 \cite[Appendix B, p. 237]{weyl}\bigskip

Fix once and for all a $C^{\ast }$-algebra 
\begin{equation*}
\mathcal{X}\equiv \left( \mathcal{X},+,\cdot _{{\mathbb{C}}},\times ,^{\ast
},\left\Vert \cdot \right\Vert _{\mathcal{X}}\right) \ ,
\end{equation*}%
that is, a (complex) Banach algebra endowed with an antilinear involution $%
A\mapsto A^{\ast }$ such that%
\begin{equation*}
(AB)^{\ast }=B^{\ast }A^{\ast }\text{\qquad and\qquad }\left\Vert A^{\ast
}A\right\Vert _{\mathcal{X}}=\left\Vert A\right\Vert _{\mathcal{X}}^{2},%
\text{\qquad }A,B\in \mathcal{X}\ .
\end{equation*}%
Here, $AB\equiv A\times B$. We always assume that $\mathcal{X}$ is unital,
i.e., the product of $\mathcal{X}$ has a unit $\mathfrak{1}\in \mathcal{X}$.
The (real) Banach subspace of all self-adjoint elements of $\mathcal{X}$ is
denoted by 
\begin{equation}
\mathcal{X}^{\mathbb{R}}\doteq \left\{ A\in \mathcal{X}:A=A^{\ast }\right\}
\equiv \left( \mathcal{X}^{\mathbb{R}},+,\cdot _{{\mathbb{R}}},\left\Vert
\cdot \right\Vert _{\mathcal{X}}\right) \ .  \label{self-adjoint elements}
\end{equation}%
The $C^{\ast }$-algebra $\mathcal{X}$ is named the \emph{primordial} $%
C^{\ast }$-algebra. Note that it is not necessarily separable.

By \cite[Theorem 3.10]{Rudin}, the dual space $\mathcal{X}^{\ast }$ of $%
\mathcal{X}$ endowed with its weak$^{\ast }$ topology (i.e., the $\sigma (%
\mathcal{X}^{\ast },\mathcal{X})$-topology of $\mathcal{X}^{\ast }$) is a
locally convex space (in the sense of \cite[Section 1.6]{Rudin}) whose dual
is $\mathcal{X}$. Recall that $\mathcal{X}^{\ast }$ is a Banach space when
it is endowed with the usual norm for linear functionals on a normed space.
A subset of $\mathcal{X}^{\ast }$ which is pivotal in the algebraic
formulation of quantum mechanics is the \emph{state space }of $\mathcal{X}$,
defined as follows: 

\begin{definition}[State space]
\label{state space}\mbox{ }\newline
Let $\mathcal{X}$ be a unital $C^{\ast }$-algebra. The state space is the
convex and weak$^{\ast }$-closed set 
\begin{equation*}
E\doteq \bigcap_{A\in \mathcal{X}}\left\{ \rho \in \mathcal{X}^{\ast }:\rho
\left( A^{\ast }A\right) \geq 0,\ \rho \left( \mathfrak{1}\right) =1\right\}
\end{equation*}%
of all positive normalized linear functionals $\rho \in \mathcal{X}^{\ast }$.
\end{definition}

Equivalently, $\rho \in \mathcal{X}^{\ast }$ is a state iff $\rho (\mathfrak{%
1})=1$ and $\Vert \rho \Vert _{\mathcal{X}^{\ast }}=1$. Note that any state
is hermitian: for all $\rho \in E$ and $A\in \mathcal{X}$, $\rho (A^{\ast })=%
\overline{\rho (A)}$. From the Banach-Alaoglu theorem \cite[Theorem 3.15]%
{Rudin}, $E$ is a weak$^{\ast }$-compact subset of the unit ball of $%
\mathcal{X}^{\ast }$. Therefore, the Krein-Milman theorem \cite[Theorem 3.23]%
{Rudin} tells us that $E$ is the weak$^{\ast }$ closure of the convex hull
of the (nonempty) set $\mathcal{E}(E)$ of its extreme points\footnote{%
I.e., the points which cannot be written as -- non--trivial -- convex
combinations of other elements of $E$.}: 
\begin{equation}
E=\overline{\mathrm{co}\mathcal{E}\left( E\right) }\ .
\label{closure of the convex hull}
\end{equation}%
The set $\mathcal{E}(E)$ is also called the extreme boundary of $E$. If $%
\mathcal{X}$ is separable then the weak$^{\ast }$ topology is metrizable on
any weak$^{\ast }$-compact subset of $\mathcal{X}^{\ast }$, by \cite[Theorem
3.16]{Rudin}. In particular, the state space $E$ of Definition \ref{state
space} is metrizable, in this case, and by the Choquet theorem \cite[p. 14]%
{Phe}, for any $\rho \in E$, there is a probability measure $\mu _{\rho }$
with support in $\mathcal{E}(E)$ such that, for any affine weak$^{\ast }$%
-continuous complex-valued function $g$ on $E$, 
\begin{equation}
g\left( \rho \right) =\int_{\mathcal{E}(E)}g\left( \hat{\rho}\right) \mathrm{%
d}\mu _{\rho }\left( \hat{\rho}\right) \ .  \label{affine decomposition}
\end{equation}%
The measure $\mu _{\rho }$ is unique for all $\rho \in E$, i.e., $E$ is a
Choquet simplex \cite[Theorem 4.1.15]{BrattelliRobinsonI}, iff the $C^{\ast
} $-algebra $\mathcal{X}$ is commutative, by \cite[Example 4.2.6]%
{BrattelliRobinsonI}.

If $E$ is not metrizable, meaning that $\mathcal{X}$ is not separable, note
that such a probability measure $\mu _{\rho }$ is only pseudo--supported by $%
\mathcal{E}(E)$, i.e., $\mu _{\rho }(\mathcal{B})=1$ for all Baire sets $%
\mathcal{B}\supseteq \mathcal{E}(E)$. This refers to the Choquet-Bishop-de
Leeuw theorem \cite[p. 17]{Phe}. Recall that the Baire sets are the elements
of the $\sigma $-algebra generated by the compact $G_{\delta }$ sets. If $%
\mathcal{E}(E)$ is a Baire set then $E$ must be metrizable \cite{MacGibbon}.
The weak$^{\ast }$ closure $\overline{\mathcal{E}(E)}$ may even not be a $%
G_{\delta }$ set, or more generally a Baire set, when $E$ is not metrizable.
In fact, in the non-metrizable case, $\mathcal{E}(E)$ can have very
surprising properties like being a \emph{zero-measure} Borel set for $\mu
_{\rho }$ (cf. \cite{Mokobodski}).

We use the state space $E$ in the next section to define a classical
algebra, the space $C(E;\mathbb{C})$ of complex-valued weak$^{\ast }$%
-continuous functions on $E$. Note that our (quantum) state space $E$ is
different from the one considered in \cite[Section 2.1, see also 2.1-c]%
{Bono2000}. In B\'{o}na's paper, the state space is defined to be the set of
density matrices associated with a fixed Hilbert space. In relation to our
approach, it corresponds to take, instead of \emph{all} states of $\mathcal{X%
}$, only those which are $\pi $-normal, for some \emph{fixed} represetation $%
\pi $\ of the $C^{\ast }$-algebra $\mathcal{X}$. Recall that the state $\rho
\in E$ is called \textquotedblleft $\pi $-normal\textquotedblright\ if the
state $\rho \circ \pi $ on $\pi (\mathcal{X})$ has a (unique) normal
extention to the von Neumann algebra $\pi (\mathcal{X})^{\prime \prime
}\supseteq \pi (\mathcal{X})$. By contrast, our definition of the (quantum)
state space is \emph{not} representation-dependent.

\subsection{Phase Space of $C^{\ast }$-Algebras\label{Phase Space copy(1)}}

Before the pioneer works of Jacobi and Boltzmann, then of Gibbs and Poincar%
\'{e}, the motion of a point-like particle was seen as a trajectory within
the three-dimensional space. However, in classical mechanics, fixing only
the position at a fixed time does not completely determine the trajectory,
which only becomes unique after fixing the momentum. This leads to the term 
\emph{phase}:\bigskip

\noindent \textit{If we regard a phase as represented by a point in space of 
}$2n$\textit{\ dimensions, the changes which take place in the course of
time in our ensemble of systems will be represented by a current in such
space.}\smallskip

\hfill Gibbs, 1902 \cite[p. 11, footnote]{Gibbs}\bigskip

\noindent This view point required the idea of \textquotedblleft high
dimensional\textquotedblright\ spaces, which widespread only in the first
decade of the 20th century. This space refers to the illustrious concept of 
\emph{phase space}, which seems to first appear in print in 1911 \cite%
{erenferst}.

The historical origins of the notion of phase space can be found in \cite%
{phase}, which makes explicit the \textquotedblleft \textit{tangle of
independent discovery and misattributions that persist today}%
\textquotedblright , even if this concept is seen as \textquotedblleft 
\textit{one of the most powerful inventions of modern science}%
\textquotedblright . For instance, the terminology of phase space is widely
used in classical mechanics, and also in \cite[Section 2.1]{Bono2000}, but
its use is regularly confusing in many textbooks, which often view the state
and phase spaces as the same thing.

The precise definition of phase space is an important, albeit non-trivial,
issue in the understanding of a physical system because it is usually
supposed to describe all its observable properties together with a
deterministic motion, once the initial coordinates of the system is fixed in
this phase space. In particular, it has to be sufficiently large to support
a deterministic, or causal, motion.

In classical physics, the phase space is a locally compact Hausdorff space%
\footnote{%
I.e., a topological space whose open sets separate points ($\rightarrow $%
Hausdorff) and whose points always have a compact neighborhood ($\rightarrow 
$locally compact).} $K$, like $\mathbb{R}^{6}$. In the algebraic formulation
of classical mechanics \cite[Chapter 3]{Landsman-livre}, one starts with a
commutative $C^{\ast }$-algebra. By the Gelfand theorem (see, for instance, 
\cite[Theorem 2.1.11A]{BrattelliRobinsonI} or \cite[Theorem 3.1]%
{Landsman-livre}), such an algebra is $\ast $-isomorphic to the algebra $%
C_{0}(K;\mathbb{C})$ of all continuous functions $f:K\rightarrow \mathbb{C}$
vanishing at infinity, where $K$ is a unique (up to a homeomorphism) locally
compact Hausdorff space. In this case, $K$ is, by definition, the phase
space of the physical system. The phase space $K$ is compact iff the
commutative $C^{\ast }$-algebra\ is unital.

For \emph{non-}commutative unital $C^{\ast }$-algebras, the definition of
the associated phase space is less straightforward. To motivate the
definition adopted here (Definition \ref{phase space}) for this space, we
exhibit the relation between the phase space $K$ and the state space $E$ of
Definition \ref{state space} for a commutative unital $C^{\ast }$-algebra
seen as an algebra\footnote{$C\left( K;\mathbb{C}\right) $ is separable iff $%
K$ is metrizable. See \cite[Problem (d) p. 245]{topology}.} 
\begin{equation*}
C\left( K;\mathbb{C}\right) \equiv \left( C\left( K;\mathbb{C}\right)
,+,\cdot _{{\mathbb{C}}},\times ,\overline{\left( \cdot \right) },\left\Vert
\cdot \right\Vert _{C(K;\mathbb{C})}\right)
\end{equation*}%
of continuous complex-valued functions on the compact Hausdorff space $K$.
Extreme points of $E$ are the so-called characters of this $C^{\ast }$%
-algebra: 
\begin{equation*}
\mathcal{E}\left( E\right) =\left\{ \mathfrak{c}\left( x\right) \in E:x\in
K\right\} \ ,
\end{equation*}%
where $\mathfrak{c}$ is the continuous and injective map from $K$ to $E$
defined by%
\begin{equation}
\left[ \mathfrak{c}\left( x\right) \right] \left( f\right) \doteq f\left(
x\right) \ ,\qquad f\in C\left( K;\mathbb{C}\right) ,\ x\in K\ .
\label{homeomorphisms}
\end{equation}%
Recall that the characters of a given $C^{\ast }$-algebra are, by
definition, the unital $\ast $-homomorphisms from this algebras to $\mathbb{C%
}$ (i.e., the multiplicative hermitian functionals on the algebra). See \cite%
[Proposition 2.3.27]{BrattelliRobinsonI}. In this special case, $\mathcal{E}%
(E)$ is weak$^{\ast }$-compact, like $K$, and the map $\mathfrak{c}$ is a
homeomorphism. In particular, the map $f\mapsto \hat{f}$ from $C\left( K;%
\mathbb{C}\right) $ to $C(\mathcal{E}(E);\mathbb{C})$ defined by%
\begin{equation}
\hat{f}\left( \mathfrak{c}\left( x\right) \right) =\left[ \mathfrak{c}\left(
x\right) \right] \left( f\right) \ ,\qquad f\in C\left( K;\mathbb{C}\right)
,\ x\in K\ ,  \label{f chapeau}
\end{equation}%
is a $\ast $-isomorphism of the commutative unital $C^{\ast }$-algebras $%
C\left( K;\mathbb{C}\right) $ and $C(\mathcal{E}(E);\mathbb{C})$. (See again 
\cite[Theorem 2.1.11A]{BrattelliRobinsonI} or \cite[Theorem 3.1]%
{Landsman-livre}.) Therefore, as is usual, the phase space of any
commutative unital $C^{\ast }$-algebra $\mathcal{X}$ can be identified with
the weak$^{\ast }$-compact set $\mathcal{E}(E)$ of extreme states of this
algebra. The set of all characters of the commutative $C^{\ast }$-algebra $%
\mathcal{X}$ is called its (Gelfand) \emph{spectrum} and its generalization
to arbitrary $C^{\ast }$-algebras is not straightforward: Remark, for
instance, that the algebra of $N\times N$ complex matrices, $N\geq 3$, has
no characters, in the above sense, at all, by the celebrated
Bell-Kochen-Specker theorem \cite[Theorem 6.5]{Landsman-livre}. The problem
of properly defining a notion of spectrum for a general $C^{\ast }$-algebra
is adressed, for instance, in \cite[Chapters 3 \& 4]{Dixmier} in the context
of decompositions of general representations of such an algebra in terms of
its \emph{irreducible} representations.

Now, with regard to the definition of the phase space as the set $\mathcal{E}%
(E)\neq E$ of extreme states, we want to emphasize that, for a \emph{%
non-commutative} unital $C^{\ast }$-algebra $\mathcal{X}$, this set does 
\emph{not} have to be weak$^{\ast }$-closed (in $E$), and so weak$^{\ast }$%
-compact. See, e.g., Lemma \ref{liminal copy(3)}. As explained above, a
classical physical system refers to the algebra of (complex-valued)
continuous functions decaying at infinity on a locally compact Hausdorff
space. Such an algebra is canonically $\ast $-isomorphic, via the
restriction of functions, to a $C^{\ast }$-algebra of functions defined on
any dense set of this Hausdorff space. Therefore, a natural definition of
the (classical) phase space associated with a general quantum system,
ensuring its compactness, is the weak$^{\ast }$ closure$\mathcal{\ }%
\overline{\mathcal{E}(E)}$, instead of the set $\mathcal{E}(E)$ of extreme
states itself:

\begin{definition}[Phase space]
\label{phase space}\mbox{ }\newline
Let $\mathcal{X}$ be a unital $C^{\ast }$-algebra. The associated phase
space is the weak$^{\ast }$ closure $\overline{\mathcal{E}(E)}$ of the
extreme boundary of the state space $E$ of Definition \ref{state space}.
\end{definition}

The phase space is, by definition, only a weak$^{\ast }$-closed \emph{subset}
of the state space. However, in mathematical physics, the unital $C^{\ast }$%
-algebra associated with an infinitely extended (quantum) system is usually
an approximately finite-dimensional (AF) $C^{\ast }$-algebra, i.e., it is
generated by an increasing family of \emph{finite-dimensional} $C^{\ast }$%
-subalgebras. They are all \emph{antiliminal} (Definition \ref{liminal
copy(2)}) and \emph{simple} (Definition \ref{simple def}). See Section \ref%
{Liminal appendix} for more details. In this case, by Lemma \ref{liminal
copy(3)}, $\mathcal{E}(E)$\ is weak$^{\ast }$-dense in $E$, i.e., 
\begin{equation}
E=\overline{\mathcal{E}(E)}\ .  \label{density extreme states}
\end{equation}%
In other words, in general, the phase space of Definition \ref{phase space}
is the same as the state space of Definition \ref{state space} for
infinitely extended quantum systems. The set $E$ of states has therefore a 
\emph{fairly complicated }geometrical structure. Compare, indeed, Equation (%
\ref{density extreme states}) with (\ref{closure of the convex hull}).
Provided the $C^{\ast }$-algebra\ $\mathcal{X}$ is separable, note that,
surprisingly, (\ref{closure of the convex hull}) and (\ref{density extreme
states}) do not prevent\ $E$ from having a unique center\footnote{%
I.e., a sort of maximally mixed point.} \cite{Lim}.

\subsection{Generic Weak$^{\ast }$-Compact Convex Sets in Infinite Dimension 
\label{generic convex set}}

\noindent \textit{Accidens vero est quod adest et abest praeter subiecti
corruptionem}.\textit{\footnote{%
Fr.: \textit{L'accident est ce qui arrive et s'en va sans provoquer la perte
du sujet.} See \cite[V. L'accident]{Isagoge}. It means that an accident is
what is present or absent in a subject without affecting its essence. This
comes from the \textit{Isagoge }(%
\greek{EISAGWGH}%
, originally in greek)\textit{\ }\cite{Isagoge} written in the IIIe century
by the Syrian Porphyry (of Tyr) as an introduction to \textit{Aristotle's
Categories}. The \textit{Isagoge} was a pivotal textbook in medieval
philosophy and more generaly on early logic during more than a millennium.
Its reception by medieval (scholastic) philosophers has, in particular,
initiated and fueled the celebrated \textit{problem of universals} \cite%
{querelle universaux}\ from the XIIe to the XIVe century.}}\smallskip

\hfill An accident in the Middle Ages\textit{\bigskip }

The existence of convex sets with dense extreme boundary is well-known in
infinite-dimensional vector spaces. For instance, the unit ball of any
infinite-dimensional Hilbert space has a dense extreme boundary in the weak
topology. In fact, a convex compact set with dense extreme boundary \emph{is
not an accident} in infinite-dimensional spaces, like Hilbert spaces or in
the dual space of an antiliminal unital $C^{\ast }$-algebras (cf. (\ref%
{density extreme states}) and Lemma \ref{liminal copy(3)}).

In 1959, Klee shows \cite{Klee} that, for convex norm-compact sets within a
Banach space, the property of having a dense set of extreme points is \emph{%
generic} in infinite dimension. More precisely, by \cite[Proposition 2.1,
Theorem 2.2]{Klee}, the set of all such convex compact subsets of an
infinite-dimensional separable\footnote{\cite[Proposition 2.1, Theorem 2.2]%
{Klee} seem to lead to the asserted property for all (possibly
non-separable) Banach spaces, as claimed in \cite%
{Klee,FonfLindenstrauss,infinite dim convexity}. However, \cite[Theorem 1.5]%
{Klee}, which assumes the separability of the Banach space, is clearly
invoked to prove the corresponding density stated in \cite[Theorem 2.2]{Klee}%
. We do not know how to remove the separability condition.} Banach space $%
\mathcal{Y}$ is generic\footnote{%
That is, the complement of a meagre set, i.e., a nowhere dense set.} in the
complete metric space of compact convex subsets of $\mathcal{Y}$, endowed
with the well-known Hausdorff metric topology \cite[Definition 3.2.1]{Beer}.
Klee's result is refined in 1998 by Fonf and Lindenstraus \cite[Section 4]%
{FonfLindenstrauss} for bounded norm-closed (but not necessarily
norm-compact) convex subsets of $\mathcal{Y}$ having so-called empty
quasi-interior (as a necessary condition). In this case, \cite[Theorem 4.3]%
{FonfLindenstrauss} shows that such sets can be approximated in the
Hausdorff metric topology by closed convex sets with a norm-dense set of
strongly exposed points\footnote{$x\in K$ is a strongly exposed point of a
convex set $K\subseteq \mathcal{Y}$ when there is $f\in \mathcal{Y}^{\ast }$
satisfying $f(x)=1$ and such that the diameter of $\{y\in K:f(y)\geq
1-\varepsilon \}$ tends to $0$ as $\varepsilon \rightarrow 0^{+}$.
(Strongly) exposed points are extreme elements of $K$.}. See, e.g., \cite[%
Section 7]{infinite dim convexity} for a recent review on this subject.

In this section we demonstrate the same genericity in the dual space $%
\mathcal{X}^{\ast }$ of an infinite-dimensional, separable unital $C^{\ast }$%
-algebra $\mathcal{X}$, endowed with its weak$^{\ast }$-topology. Of course,
if one uses the usual norm topology on $\mathcal{X}^{\ast }$ for continuous
linear functionals, then one can directly apply previous results \cite%
{Klee,FonfLindenstrauss} to the separable Banach space $\mathcal{X}^{\ast }$%
. This is not anymore possible if one considers the weak$^{\ast }$-topology.
In particular, \cite[Theorem 4.3]{FonfLindenstrauss} cannot be used because,
in general, weak$^{\ast }$-compact sets do not have an empty interior, in
the sense of the norm topology. However, generic properties of convex weak$%
^{\ast }$-compact sets, like the state space $E$ of Definition \ref{state
space}, are relevent in the present paper. We thus prove, in this situation,
results similar to \cite{Klee,FonfLindenstrauss} in order to better
understand the disconcerting structure of the state and phase spaces,
respectively $E$ and $\overline{\mathcal{E}(E)}$ defined above.

In order to talk about generic properties of convex weak$^{\ast }$-compact
sets, we first need to define an appropriate topological space of subsets of 
$\mathcal{X}^{\ast }$. It is naturally based on the set 
\begin{equation}
\mathbf{CK}\left( \mathcal{X}^{\ast }\right) \doteq \left\{ K\subseteq 
\mathcal{X}^{\ast }:K\neq \emptyset \text{ is convex and weak}^{\ast }\text{%
-compact}\right\} \ .  \label{ZDZD}
\end{equation}%
By Equation (\ref{hyperspace}) and Lemma \ref{dddddddddddddddddd}, note that 
\begin{equation}
\mathbf{CK}\left( \mathcal{X}^{\ast }\right) =\left\{ K\subseteq \mathcal{X}%
^{\ast }:K\neq \emptyset \text{ is convex, weak}^{\ast }\text{-closed and }%
\sup_{\sigma \in K}\left\Vert \sigma \right\Vert _{\mathcal{X}^{\ast
}}<\infty \right\} \ .  \label{sdfklsdjfklsdjf}
\end{equation}%
This is a set of weak$^{\ast }$-closed sets in a locally convex Hausdorff
space $\mathcal{X}^{\ast }$. See, e.g., \cite[Theorem 10.8]{BruPedra2}.

We now make $\mathbf{CK}(\mathcal{X}^{\ast })$ into a topological
(hyper)space by defining a hypertopology on it. Recall that topologies for
sets of closed subsets of topological spaces have been studied since the
beginning of the last century and when such topologies, restricted to
singletons, coincide with the original topology of the underlying space, we
talk about hypertopologies and hyperspaces of closed sets. There exist
several standard hypertopologies on the set of nonempty closed convex
subsets of a topological space like, for instance, the slice topology \cite[%
Section 2.4]{Beer}, the scalar and the linear topologies \cite[Section 4.3]%
{Beer}. Because of \cite[Theorem 2.4.5]{Beer}, note that the slice topology
is unappropriate here since it is not related to the weak$^{\ast }$-topology
of $\mathcal{X}^{\ast }$, but rather to its norm topology. In fact, we do
not use any of those standard hypertopologies, but another natural topology
on $\mathbf{CK}(\mathcal{X}^{\ast })$ given by a family of pseudometrics%
\footnote{%
Recall that a pseudometric $d$ satisfies all properties of a metric but the
identity of indiscernibles. In fact, $d(x,x)=0$ but possibly $d(x,y)=0$ for $%
x\neq y$.} inspired by the Hausdorff metric topology for closed subsets of $%
\mathbb{C}$:

\begin{definition}[Weak$^{\ast }$-Hausdorff hypertopology for convex sets]
\label{hypertopology}\mbox{ }\newline
The weak$^{\ast }$-Hausdorff hypertopology on $\mathbf{CK}(\mathcal{X}^{\ast
})$ is the topology induced by the family of Hausdorff pseudometrics $%
d_{H}^{(A)}$ defined, for all $A\in \mathcal{X}$, by%
\begin{equation}
d_{H}^{(A)}(K,\tilde{K})\doteq \max \left\{ \max_{\sigma \in K}\min_{\tilde{%
\sigma}\in \tilde{K}}\left\vert \left( \sigma -\tilde{\sigma}\right) \left(
A\right) \right\vert ,\max_{\tilde{\sigma}\in \tilde{K}}\min_{\sigma \in
K}\left\vert \left( \sigma -\tilde{\sigma}\right) \left( A\right)
\right\vert \right\} ,\qquad K,\tilde{K}\in \mathbf{CK}\left( \mathcal{X}%
^{\ast }\right) \ .  \label{pseudometrics}
\end{equation}
\end{definition}

\noindent Compare (\ref{pseudometrics}) with the definition of the Hausdorff
distance, given by (\ref{Hausdorf}). Definition \ref{hypertopology} is a
restriction of the weak$^{\ast }$-Hausdorff hypertopology of Definition \ref%
{hypertopology0}. In this topology, an arbitrary net $(K_{j})_{j\in J}$
converges to $K_{\infty }$ iff, for all $A\in \mathcal{X}$, 
\begin{equation}
\lim_{J}d_{H}^{(A)}(K_{j},K_{\infty })=0\ .  \label{induced topology}
\end{equation}%
This condition defines a unique topology in $\mathbf{CK}(\mathcal{X}^{\ast
}) $, by \cite[Chapter 2, Theorem 9]{topology}. In fact, because this
topology is generated by a family of pseudometrics, it is a uniform
topology, see, e.g., \cite[Chapter 6]{topology}.

It is completely obvious from the definition that any net $(\sigma
_{j})_{j\in J}$ in $\mathcal{X}^{\ast }$ converges to $\sigma \in \mathcal{X}%
^{\ast }$ in the weak$^{\ast }$ topology iff the net $(\{\sigma
_{j}\})_{j\in J}$ converges in $\mathbf{CK}(\mathcal{X}^{\ast })$ to $%
\{\sigma \}$ in the weak$^{\ast }$-Hausdorff (hyper)topology. In other
words, the embedding of $\mathcal{X}^{\ast }$ into $\mathbf{CK}(\mathcal{X}%
^{\ast })$ is a bicontinuous bijection on its image. This justifies the use
of the name weak$^{\ast }$-Hausdorff \emph{hyper}topology. We are not aware
whether this particular hypertopology has already been considered in the
past. We thus give in Section \ref{Hausdorff Hypertopology} its complete
study along with interesting connections to other fields of mathematics and
results that are more general than those stated in Section \ref{generic
convex set}.

Endowed with the weak$^{\ast }$-Hausdorff hypertopology, $\mathbf{CK}(%
\mathcal{X}^{\ast })$ is a \emph{Hausdorff} hyperspace. See Corollary \ref%
{convexity corrolary}. Observe also that the limit of weak$^{\ast }$%
-Hausdorff convergent nets within $\mathbf{CK}(\mathcal{X}^{\ast })$ is
directly related to lower and upper limits \`{a} la Painlev\'{e} \cite[\S\ 29%
]{topology-painleve}, as explained in Section \ref{Hyperconvergences}. See,
in particular, Equations (\ref{Li}) and (\ref{Ls}). When $\mathcal{X}$ is a
separable Banach space, Corollary \ref{Solution selfbaby copy(4)} tells us
that any weak$^{\ast }$-Hausdorff convergent net $(K_{j})_{j\in J}\subseteq 
\mathbf{CK}(\mathcal{X}^{\ast })$ converges to its Kuratowski-Painlev\'{e}
limit $K_{\infty }$, which is thus the set of all weak$^{\ast }$
accumulation points of nets $(\sigma _{j})_{j\in J}$ with $\sigma _{j}\in
K_{j}$.

Recall that, by the Krein-Milman theorem \cite[Theorem 3.23]{Rudin}, any
nonempty convex weak$^{\ast }$-compact set $K\in \mathbf{CK}(\mathcal{X}%
^{\ast })$ is the weak$^{\ast }$-closure of the convex hull of the
(nonempty) set $\mathcal{E}(K)$ of its extreme points: 
\begin{equation*}
K=\overline{\mathrm{co}\mathcal{E}\left( K\right) }\ .
\end{equation*}%
The property $K=\overline{\mathcal{E}\left( K\right) }$ (with respect to the
weak$^{\ast }$ topology) looks very peculiar. Nonetheless, as a matter of
fact, typical elements of $\mathbf{CK}(\mathcal{X}^{\ast })$ have this
property:

\begin{theorem}[Generic convex weak$^{\ast }$-compact sets]
\label{theorem dense cool1}\mbox{ }\newline
Let $\mathcal{X}$ be an infinite-dimensional separable Banach space. Then,
the set $\mathcal{D}$ of all nonempty convex weak$^{\ast }$-compact sets $K$
with a weak$^{\ast }$-dense set $\mathcal{E}(K)$ of extreme points is a weak$%
^{\ast }$-Hausdorff-dense $G_{\delta }$ subset of $\mathbf{CK}(\mathcal{X}%
^{\ast })$.
\end{theorem}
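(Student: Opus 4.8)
The plan is to establish the two defining features separately: that $\mathcal{D}$ is a $G_{\delta}$ subset of $\mathbf{CK}(\mathcal{X}^{\ast})$, and that it is weak$^{\ast}$-Hausdorff dense. Both halves parallel \cite{Klee} (Proposition 2.1 for the $G_{\delta}$ part, Theorem 2.2 for density) and \cite{FonfLindenstrauss}; the genuinely new point is the passage from the norm topology to the weak$^{\ast}$ topology, for which Definition \ref{hypertopology} and the Kuratowski--Painlev\'e description of hyperlimits are the right tools. Two reductions make everything tractable. First, since $\mathcal{X}$ is separable, the weak$^{\ast}$ topology is metrizable on each norm-bounded subset of $\mathcal{X}^{\ast}$; I fix a countable dense set $\{A_{k}\}_{k\in \mathbb{N}}$ of the unit ball of $\mathcal{X}$, let $\mathbf{d}$ be the associated weak$^{\ast}$ metric on bounded sets, and let $\{U_{j}\}_{j\in \mathbb{N}}$ be the countable family of rational boxes built from the $A_{k}$, which is a base for the relative weak$^{\ast}$ topology on every bounded set. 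Second, any weak$^{\ast}$-Hausdorff convergent sequence $(K_{n})_{n}$ is \emph{uniformly} norm-bounded: for each $A\in \mathcal{X}$, $\max_{\sigma \in K_{n}}\lvert \sigma (A)\rvert$ converges and hence is bounded in $n$, so $\bigcup_{n}K_{n}$ is weak$^{\ast}$-bounded and the uniform boundedness principle applies. Thus every limiting argument takes place inside one fixed weak$^{\ast}$-compact, weak$^{\ast}$-metrizable ball, where I may use Corollary \ref{Solution selfbaby copy(4)} and Equations (\ref{Li})--(\ref{Ls}).

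For the $G_{\delta}$ part I would work with the excess functional $\Phi (K)\doteq \sup_{\sigma \in K}\mathrm{dist}_{\mathbf{d}}(\sigma ,\overline{\mathcal{E}(K)})$, so that $\mathcal{D}=\{\Phi =0\}=\bigcap_{m}\{\Phi <1/m\}$. The $G_{\delta}$ property then reduces to upper semicontinuity of $\Phi$ along weak$^{\ast}$-Hausdorff convergent sequences, which in turn amounts to \emph{lower} semicontinuity of the extreme-boundary map $K\mapsto \overline{\mathcal{E}(K)}$, i.e.\ to the inclusion $\overline{\mathcal{E}(K)}\subseteq \mathrm{Li}\,\overline{\mathcal{E}(K_{n})}$ whenever $K_{n}\to K$. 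I would obtain this from two ingredients: (i) if $e$ is weak$^{\ast}$-\emph{exposed} in $K$ by some $A\in \mathcal{X}$, then the $A$-maximizers of $K_{n}$, which are extreme in $K_{n}$, converge to $e$ (because $\max_{K_{n}}A\to \max_{K}A$ and the maximizing face shrinks to $\{e\}$, using weak$^{\ast}$-compactness in the fixed ball); and (ii) a Straszewicz-type statement that the weak$^{\ast}$-exposed points are weak$^{\ast}$-dense in $\overline{\mathcal{E}(K)}$. Combining (i) and (ii) gives the required semicontinuity, hence $\mathcal{D}$ is $G_{\delta}$. The delicate point of this half is precisely that the extreme boundary is \emph{not} stable under hyperconvergence, so the argument must route through exposed points and their density, statement (ii) being the technical linchpin.

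For density I first pass to a finite-dimensional shadow. Since $d_{H}^{(A)}$ in (\ref{pseudometrics}) depends only on the numbers $\sigma (A)$, a basic weak$^{\ast}$-Hausdorff neighbourhood of a given $K_{0}$ is controlled by finitely many $A_{1},\dots ,A_{r}\in \mathcal{X}$, and any two sets in $\mathbf{CK}(\mathcal{X}^{\ast})$ with the same image under $\sigma \mapsto (\sigma (A_{1}),\dots ,\sigma (A_{r}))$ are at $d_{H}^{(A_{i})}$-distance $0$ for every $i$. Hence, writing $C$ for the compact convex shadow $\{(\sigma (A_{i}))_{i}:\sigma \in K_{0}\}$, it suffices to produce \emph{some} $K\in \mathcal{D}$ whose shadow lies within $\varepsilon$ (Hausdorff distance in $\mathbb{C}^{r}$) of $C$. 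I would build $K$ by perturbing $K_{0}$ along directions nearly invisible to $A_{1},\dots ,A_{r}$, using the infinite-dimensionality of $\mathcal{X}$ to supply a weak$^{\ast}$-null sequence of unit functionals (Josefson--Nissenzweig), or better a biorthogonal system almost annihilating $A_{1},\dots ,A_{r}$; bending the flat faces of $K_{0}$ along such directions should create a weak$^{\ast}$-dense set of strongly exposed --- hence extreme --- points while displacing the shadow by less than $\varepsilon$. This is the exact weak$^{\ast}$ analogue of \cite[Theorem 2.2]{Klee} and \cite[Theorem 4.3]{FonfLindenstrauss}.

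I expect this density step to be the main obstacle, and it reflects why the statement is interesting. On the one hand, explicit weak$^{\ast}$-compact convex sets with weak$^{\ast}$-dense extreme boundary are hard to exhibit --- naive candidates such as generalized octahedra or Hilbert cubes \emph{fail} to have dense extreme points --- so one genuinely needs the perturbation mechanism rather than a ready-made gadget. On the other hand, one must certify simultaneously that the newly created points are extreme and weak$^{\ast}$-dense for the metric $\mathbf{d}$ on the relevant ball, while the finitely many pseudometrics $d_{H}^{(A_{i})}$ move by less than $\varepsilon$: the weak$^{\ast}$-to-zero behaviour of the perturbing functionals is exactly what keeps the shadow almost fixed, yet it is also what makes the exposition (hence extremality and density) of the new points delicate. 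Once both halves are in place, together with Corollary \ref{convexity corrolary}, $\mathcal{D}$ is a weak$^{\ast}$-Hausdorff-dense $G_{\delta}$ as claimed, and the same scheme yields Theorem \ref{theorem density2}.
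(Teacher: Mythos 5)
Your $G_{\delta}$ half is essentially the paper's own argument (Proposition \ref{Solution selfbaby copy(5)+00}): the complement of $\mathcal{D}$ is written as a countable union over $D,m\in \mathbb{N}$ of sets $\mathcal{F}_{D,m}\subseteq \mathbf{CK}_{D}(\mathcal{X}^{\ast })$, and the closedness of each $\mathcal{F}_{D,m}$ is proved from exactly your two ingredients --- the weak$^{\ast }$-density of exposed points in $\mathcal{E}(K_{\infty })$ (Mazur, Straszewicz for weak Asplund spaces, Milman) and the convergence of extreme maximizers of $\mathrm{Re}\,\hat{A}$ on $K_{n}$ to the unique exposed maximizer of $K_{\infty }$. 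The only point to make explicit is the stratification over $D$, since your metric $\mathbf{d}$ and the excess functional $\Phi $ only live on a fixed norm-bounded set; this is harmless, a countable union of closed sets being $F_{\sigma }$.

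The density half has a genuine gap: the sentence \textquotedblleft bending the flat faces of $K_{0}$ along such directions should create a weak$^{\ast }$-dense set of strongly exposed points\textquotedblright\ is precisely the statement that requires proof, and the two obstacles it hides are not addressed. First, extremality must survive the \emph{entire} construction: a point that is exposed after one perturbation can be absorbed into the convex hull once later perturbations are added, so one needs a quantitative compatibility between all stages. In the paper's construction (Theorem \ref{Solution selfbaby copy(5)+0}) this is the norm constraint (\ref{toto}), $\Vert \sigma _{n}\Vert _{\mathcal{X}^{\ast }}\leq 2^{-1}\Vert A_{j}\Vert _{\mathcal{X}}^{-1}\lambda _{j}$ for every $j<n$, which forces $\omega _{n_{\varepsilon }+j}$ to remain the strict maximizer of $\mathrm{Re}\,\hat{A}_{j}$ over every later $K_{n}$, together with a separate limiting argument (Step $n=\infty $, using the geometric decay of the $\lambda _{k}$ and dominated convergence) showing that any maximizer of $\mathrm{Re}\,\hat{A}_{j}$ in $K_{\infty }$ already lies in $K_{j}$. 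Second, weak$^{\ast }$-density of the created extreme points in the limit set does not follow from perturbing faces; the paper obtains it by letting the base points $\varpi _{n}$ enumerate countable dense subsets of \emph{all} intermediate polytopes, each element recurring infinitely often. Neither Josefson--Nissenzweig nor near-annihilation of $A_{1},\ldots ,A_{r}$ is needed: the displacement is controlled in norm, uniformly in $A$, by taking convex combinations with coefficients $\lambda _{n}\leq 2^{-(n+1)}D^{-1}\varepsilon $ as in (\ref{definition omegan}); accordingly your finite-shadow reduction is a detour, since membership in $\mathcal{D}$ cannot be certified on a shadow, and Theorem \ref{Solution selfbaby copy(4)+1} shows that the hypertopology on $\mathbf{CK}_{D}(\mathcal{X}^{\ast })$ is metrized by the full Hausdorff metric (\ref{metric1}), so one may as well approximate in all directions at once.
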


\begin{proof}
Combine Proposition \ref{Solution selfbaby copy(5)+00} with Theorem \ref%
{Solution selfbaby copy(5)+0}. Note that the proof of Theorem \ref{Solution
selfbaby copy(5)+0} is crafted by following original Poulsen's intuitive
construction \cite{Poulsen}, like in the proof of \cite[Theorem 4.3]%
{FonfLindenstrauss}. The Hahn-Banach separation theorem \cite[Theorem 3.4 (b)%
]{Rudin} plays a crucial role in this context.
\end{proof}

\noindent As a consequence, $\mathcal{D}$ is generic in the hyperspace $%
\mathbf{CK}(\mathcal{X}^{\ast })$, that is, the complement of a meagre set,
i.e., a nowhere dense set. In other words, $\mathcal{D}$ is of second
category in $\mathbf{CK}(\mathcal{X}^{\ast })$.

The weak$^{\ast }$-Hausdorff hypertopology on $\mathbf{CK}(\mathcal{X}^{\ast
})$ is finner than the scalar topology \cite[Section 4.3]{Beer} restricted
to weak$^{\ast }$-closed sets. The linear topology on the set of nonempty
closed convex subsets is the supremum of the scalar and Wijsman topologies.
Since the Wijsman topology \cite[Definition 2.1.1]{Beer} requires a metric
space, one has to use the norm on $\mathcal{X}^{\ast }$ and the linear
topology is not comparable with the weak$^{\ast }$-Hausdorff hypertopology.
If one uses the metric (\ref{metrics0}) generated the weak$^{\ast }$
topology on balls of $\mathcal{X}^{\ast }$ for a separable Banach space $%
\mathcal{X}$, then the Wijsman and linear topologies for norm-closed balls
of $\mathcal{X}^{\ast }$ are coarser than the weak$^{\ast }$-Hausdorff
hypertopology, by Theorem \ref{Solution selfbaby copy(4)+1}. As a matter of
fact, the Hausdorff metric topology is very fine, as compared to various
standard hypertopologies (apart from the Vietoris\footnote{%
Vietoris and Hausdorff metric topologies are not comparable.}
hypertopology). Consequently, the weak$^{\ast }$-Hausdorff hypertopology can
be seen as a very fine, weak$^{\ast }$-type, topology on $\mathbf{CK}(%
\mathcal{X}^{\ast })$. It shows that the density of the subset of all convex
weak$^{\ast }$ compact sets with weak$^{\ast }$-dense set of extreme points
stated in Theorem \ref{theorem dense cool1} is a very \emph{strong}
property. Moreover, the genericity of such sets even holds true \emph{inside}
the state space $E$ of any separable unital $C^{\ast }$-algebra:

\begin{theorem}[Generic weak$^{\ast }$-compact convex subset of the state
space]
\label{theorem density2}\mbox{ }\newline
Let $\mathcal{X}$ be a infinite-dimensional, separable and unital $C^{\ast }$%
-algebra and $E$ the state space (Definition \ref{state space}). Denote by $%
\mathbf{CK}(E)$ the set of all nonempty convex weak$^{\ast }$-compact
subsets of $E$ and by $\mathcal{D}(E)$ the set of all $K\in \mathbf{CK}(E)$
with a weak$^{\ast }$-dense set $\mathcal{E}(K)$ of extreme points. Then,
endowed with the weak$^{\ast }$-Hausdorff hypertopology, $\mathbf{CK}(E)$ is
a compact and completely metrizable hyperspace with $\mathcal{D}(E)$ being a
dense $G_{\delta }$ subset.
\end{theorem}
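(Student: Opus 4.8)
The plan is to reduce the whole statement to classical hyperspace theory for \emph{compact metric} spaces, using the separability of $\mathcal{X}$ to gain metrizability, and then to transplant the Poulsen-type density argument underlying Theorem \ref{theorem dense cool1} into $E$ itself. First I would establish metrizability of the weak$^{\ast }$-Hausdorff hypertopology on $\mathbf{CK}(E)$. Since $\mathcal{X}$ is separable, fix a countable dense subset $\{A_{n}\}_{n\in \mathbb{N}}\subseteq \mathcal{X}$ and recall that the weak$^{\ast }$ topology is metrizable on the bounded set $E$. Every element of $\mathbf{CK}(E)$ lies in the unit ball of $\mathcal{X}^{\ast }$, so for all $A\in \mathcal{X}$ and all $m$ one has the uniform estimate
\[
d_{H}^{(A)}(K,\tilde{K})\leq d_{H}^{(A_{m})}(K,\tilde{K})+2\Vert A-A_{m}\Vert _{\mathcal{X}},\qquad K,\tilde{K}\in \mathbf{CK}(E),
\]
because $\vert (\sigma -\tilde{\sigma})(A-A_{m})\vert \leq 2\Vert A-A_{m}\Vert _{\mathcal{X}}$ for all $\sigma ,\tilde{\sigma}\in E$. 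Hence the weak$^{\ast }$-Hausdorff hypertopology of Definition \ref{hypertopology}, restricted to $\mathbf{CK}(E)$, is already generated by the \emph{countable} subfamily $\{d_{H}^{(A_{n})}\}_{n\in \mathbb{N}}$; being Hausdorff (Corollary \ref{convexity corrolary}), it is metrizable, e.g. by $D\doteq \sum_{n}2^{-n}\min \{1,d_{H}^{(A_{n})}\}$.

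Next I would identify this topology with a genuine Hausdorff-metric topology and deduce compactness and completeness. Fix a metric $d$ inducing the weak$^{\ast }$ topology on $E$ and let $d_{H}$ be the induced Hausdorff metric on the hyperspace of nonempty weak$^{\ast }$-closed subsets of $E$. Since $(E,d)$ is compact metric, this hyperspace is compact, hence completely metrizable \cite{Beer}. On a compact metric space, Hausdorff-metric convergence of closed sets coincides with Kuratowski-Painlev\'{e} convergence; on the other hand, Corollary \ref{Solution selfbaby copy(4)} identifies weak$^{\ast }$-Hausdorff convergence with convergence to the Kuratowski-Painlev\'{e} limit. As both the weak$^{\ast }$-Hausdorff hypertopology and the $d_{H}$-topology on $\mathbf{CK}(E)$ are metrizable and share the same convergent sequences, they coincide. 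Finally $\mathbf{CK}(E)$ is precisely the set of convex members of this hyperspace; because $E$ is convex and convex combinations are weak$^{\ast }$-continuous, Kuratowski-Painlev\'{e} limits of convex sets are convex, so $\mathbf{CK}(E)$ is a \emph{closed} subset of a compact completely metrizable hyperspace, and is therefore itself compact and completely metrizable.

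It remains to show that $\mathcal{D}(E)$ is a dense $G_{\delta }$. Writing $\mathcal{D}$ for the generic family of Theorem \ref{theorem dense cool1}, one has $\mathcal{D}(E)=\mathcal{D}\cap \mathbf{CK}(E)$, and since $\mathbf{CK}(E)$ carries exactly the subspace topology inherited from $\mathbf{CK}(\mathcal{X}^{\ast })$, the explicit $G_{\delta }$ description of $\mathcal{D}$ provided by Proposition \ref{Solution selfbaby copy(5)+00} restricts to a $G_{\delta }$ description of $\mathcal{D}(E)$ in $\mathbf{CK}(E)$. Density is the genuinely new point, and I expect it to be the main obstacle: the density assertion of Theorem \ref{theorem dense cool1} (proved through Theorem \ref{Solution selfbaby copy(5)+0}) produces approximants in $\mathbf{CK}(\mathcal{X}^{\ast })$ under \emph{no} containment constraint, whereas here the approximants must remain inside $E$. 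I would therefore re-run Poulsen's construction intrinsically in $E$: given $K\in \mathbf{CK}(E)$ and $\varepsilon >0$, build a $d_{H}$-Cauchy sequence $(K_{n})_{n}$ in $\mathbf{CK}(E)$ with $\sum_{n}d_{H}(K_{n},K_{n+1})<\varepsilon $ that successively manufactures, near the points of a fixed countable weak$^{\ast }$-dense subset of $K$, new extreme points of the evolving set, and then pass to the limit $\tilde{K}=\lim_{n}K_{n}$, which lies in $\mathbf{CK}(E)$ by the completeness established above, satisfies $d_{H}(\tilde{K},K)<\varepsilon $, and has weak$^{\ast }$-dense extreme boundary. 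The crux is the elementary step of creating a new extreme point near a prescribed $\rho \in E$ by a small, $E$-preserving convex perturbation of the current set; this is exactly where the infinite-dimensionality and separability of $\mathcal{X}$ must be used, supplying, via the Hahn-Banach separation theorem (as in the proof of Theorem \ref{theorem dense cool1}), enough independent weak$^{\ast }$-continuous directions to expose a point of $E$ arbitrarily close to $\rho $ while moving $K$ only slightly. Once density is secured, $\mathcal{D}(E)$ is a dense $G_{\delta }$ subset of the Baire (completely metrizable) space $\mathbf{CK}(E)$, which is the claim.
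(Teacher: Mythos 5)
Your treatment of the topological part of the statement follows the paper's own route: the compactness and complete metrizability of $\mathbf{CK}(E)$ come from the fact that $E$ lies in the unit ball of $\mathcal{X}^{\ast }$ (so that Theorem \ref{Solution selfbaby copy(4)+1} applies) together with the weak$^{\ast }$-Hausdorff closedness of $\mathbf{CK}(E)$, and the $G_{\delta }$ description of $\mathcal{D}(E)$ is obtained by restricting the sets $\mathcal{F}_{D,m}$ of Proposition \ref{Solution selfbaby copy(5)+00} to $\mathbf{CK}(E)$. You also correctly diagnose that the density of $\mathcal{D}(E)$ is the genuinely new point -- the paper itself remarks that Theorem \ref{theorem dense cool1} cannot simply be invoked because $\mathbf{CK}(E)$ is nowhere dense in $\mathbf{CK}(\mathcal{X}^{\ast })$ -- and that the Poulsen-type construction of Theorem \ref{Solution selfbaby copy(5)+0} must be re-run inside $E$.

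However, at exactly the step you call ``the crux'' the proposal stops short of a proof. The difficulty is not to find the exposing functionals $A_{n}\in \mathcal{X}$ (the Hahn--Banach separation theorem does that, just as in Theorem \ref{Solution selfbaby copy(5)+0}); it is to find, at each stage, a new direction $\sigma _{n}\in \mathcal{X}^{\ast }\backslash \mathrm{span}\{\omega _{1},\ldots ,\omega _{n_{\varepsilon }+n-1}\}$ for which the perturbed point $\omega _{n_{\varepsilon }+n}$ is again a \emph{state}: an arbitrary $\sigma _{n}$ supplied by infinite-dimensionality will generically destroy positivity or normalization, so the evolving sets leave $E$. The paper resolves this with one specific ingredient your plan does not contain: the decomposition of every continuous linear functional on a real Banach space into non-negative components (\cite{decomposition}, applied to $\mathcal{X}^{\mathbb{R}}$ after splitting $\sigma =\mathrm{Re}\{\sigma \}+i\mathrm{Im}\{\sigma \}$ as in (\ref{decoposition})). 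This guarantees that the positive cone of $\mathcal{X}^{\ast }$ is not contained in any finite-dimensional subspace, so one may always take $\sigma _{n}$ to be a non-zero \emph{positive} functional with $\Vert \sigma _{n}\Vert _{\mathcal{X}^{\ast }}\leq 1$, and then replace the combination (\ref{definition omegan}) by $\omega _{n_{\varepsilon }+n}\doteq \left( 1-\lambda _{n}\Vert \sigma _{n}\Vert _{\mathcal{X}^{\ast }}\right) \varpi _{n}+\lambda _{n}\sigma _{n}$, which is positive and satisfies $\omega _{n_{\varepsilon }+n}(\mathfrak{1})=1$ because $\sigma _{n}(\mathfrak{1})=\Vert \sigma _{n}\Vert _{\mathcal{X}^{\ast }}$ for positive functionals; hence it belongs to $E$. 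Without this device (or an equivalent one) the intrinsic Poulsen construction you outline cannot be carried out inside $E$, and the density claim remains unproved.
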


\begin{proof}
Since any state $\rho \in E$ has norm equal to $\Vert \rho \Vert _{\mathcal{X%
}^{\ast }}=1$, we deduce from Theorem \ref{Solution selfbaby copy(4)+1} that 
$\mathbf{CK}(E)$ belongs to the weak$^{\ast }$-Hausdorff-compact and
completely metrizable hyperspace $\mathbf{CK}_{1}(\mathcal{X}^{\ast })$,
defined by (\ref{ZD}). By Corollary \ref{Solution selfbaby copy(4)} and
because $E$ is a weak$^{\ast }$-closed set, $\mathbf{CK}(E)$ is weak$^{\ast
} $-Hausdorff-closed, and thus a compact and completely metrizable
hyperspace. It remains to prove that $\mathcal{D}(E)$ is a dense $G_{\delta
} $ subset of $\mathbf{CK}(E)$.

The fact that $\mathcal{D}(E)$ is a $G_{\delta }$ subset of $\mathbf{CK}(E)$
can directly be deduced from the proof of Proposition \ref{Solution selfbaby
copy(5)+00} by repacing $\mathcal{F}_{D,m}$ with 
\begin{equation*}
\mathcal{F}_{m}\left( E\right) \doteq \left\{ K\in \mathbf{CK}(E):\exists
\omega \in K,\ B\left( \omega ,1/m\right) \cap \mathcal{E}\left( K\right)
=\emptyset \right\} \subseteq \mathbf{CK}(E)\ .
\end{equation*}%
To prove the weak$^{\ast }$-Hausdorff-density of $\mathcal{D}(E)\subseteq 
\mathbf{CK}(E)$, it suffices to reproduce the proof of Theorem \ref{Solution
selfbaby copy(5)+0}, by adding one essential ingredient: the decomposition
of any continuous linear functional into non-negative components proven in 
\cite{decomposition} for real Banach spaces. By noting that (i) $\mathcal{X}%
^{\mathbb{R}}$ (\ref{self-adjoint elements}) is a real Banach space, (ii)
all states are hermitian functionals over $\mathcal{X}$, (iii) $(\mathcal{X}%
^{\mathbb{R}})^{\ast }$ is canonically identify with the real space of
hermitian elements of $\mathcal{X}^{\ast }$, and (iv) any $\sigma \in 
\mathcal{X}^{\ast }$ is decomposed as $\sigma =\mathrm{Re}\{\sigma \}+i%
\mathrm{Im}\{\sigma \}$ with $\mathrm{Re}\{\sigma \},\mathrm{Im}\{\sigma
\}\in (\mathcal{X}^{\mathbb{R}})^{\ast }$, we deduce from \cite%
{decomposition} that any $\sigma \in \mathcal{X}^{\ast }$ can be decomposed
as 
\begin{equation}
\sigma =c_{1}\rho _{1}-c_{2}\rho _{2}+i\left( c_{3}\rho _{3}-c_{4}\rho
_{4}\right) \ ,\qquad c_{1},c_{2},c_{3},c_{4}\in \mathbb{R}_{0}^{+},\ \rho
_{1},\rho _{2},\rho _{3},\rho _{4}\in E\ .  \label{decoposition}
\end{equation}%
At \emph{Step 1} of the proof of Theorem \ref{Solution selfbaby copy(5)+0},
because of (\ref{decoposition}), we observe that there is a\ non-zero
positive functional 
\begin{equation*}
\sigma _{1}\in (\mathcal{X}^{\ast }\backslash \mathrm{span}\{\omega
_{1},\ldots ,\omega _{n_{\varepsilon }}\})\ .
\end{equation*}%
So, we proceed by using $\sigma _{1}$ as a (non-zero) positive functional
with norm $\Vert \sigma _{1}\Vert _{\mathcal{X}^{\ast }}\leq 1$ and the
state 
\begin{equation*}
\omega _{n_{\varepsilon }+1}\doteq \left( 1-\lambda _{1}\Vert \sigma
_{1}\Vert _{\mathcal{X}^{\ast }}\right) \varpi _{1}+\lambda _{1}\sigma
_{1}\in E\ ,
\end{equation*}%
instead of (\ref{omega1}). One then iterates the arguments, as explained in
the proof of Theorem \ref{Solution selfbaby copy(5)+0}, using always a
(non-zero) positive functional $\sigma _{n}$ with norm $\Vert \sigma
_{n}\Vert _{\mathcal{X}^{\ast }}\leq 1$ and 
\begin{equation*}
\omega _{n_{\varepsilon }+n}\doteq \left( 1-\lambda _{n}\Vert \sigma
_{n}\Vert _{\mathcal{X}^{\ast }}\right) \varpi _{n}+\lambda _{n}\sigma
_{n}\in E\ ,
\end{equation*}%
instead of (\ref{definition omegan}), as already explained. In doing so, we
ensure that the convex weak$^{\ast }$-compact set $K_{\infty }$ of Equation (%
\ref{equaion}) belongs to $\mathcal{D}(E)\subseteq \mathbf{CK}(E)$.
\end{proof}

\noindent Note that Theorem \ref{theorem density2} does \emph{not} directly
follow from Theorem \ref{theorem dense cool1} because the complement of $%
\mathbf{CK}(E)$ is open and dense in $\mathbf{CK}(\mathcal{X}^{\ast })$.

Important examples of (antiliminal\emph{\ }and simple) $C^{\ast }$-algebras
with state space $E\in \mathcal{D}(E)\subseteq \mathcal{D}\subseteq \mathbf{%
CK}(\mathcal{X}^{\ast })$, i.e., satisfying (\ref{density extreme states}),
are the (even subalgebra of the) CAR $C^{\ast }$-algebras for
(non-relativistic) fermions on the lattice. Quantum-spin systems, i.e.,
infinite tensor products of copies of some elementary finite dimensional
matrix algebra, referring to a spin variable, are also important examples.\
They are, for instance, widely used in quantum information theory as well as
in condensed matter physics. In all these physical situations, the
corresponding (non-commutative) $C^{\ast }$-algebra $\mathcal{X}$ is
separable and $E$ is thus a metrizable weak$^{\ast }$-compact convex set. It
is \emph{not} a simplex \cite[Example 4.2.6]{BrattelliRobinsonI}, but 
\begin{equation}
E=\overline{\bigcup\limits_{n\in \mathbb{N}}\mathfrak{P}_{n}}
\label{property}
\end{equation}%
is the weak$^{\ast }$-closure of the union of a strictly increasing sequence 
$(\mathfrak{P}_{n})_{n\in \mathbb{N}}\subseteq \mathcal{D}(E)$ of Poulsen
simplices\footnote{%
It is the (unique up to a homeomorphism) metrizable simplex with dense
extreme boundary.} \cite{Poulsen}. Equation (\ref{property}) is a
consequence of well-known results (see, e.g., \cite{Israel,BruPedra2}) and
we give its complete proof in \cite{BruPedra-MFII}. In other words, by
Proposition \ref{Solution selfbaby copy(5)+1}, $E$ is the \emph{weak}$^{\ast
}$\emph{-Hausdorff limit} of the increasing sequence $(\mathfrak{P}%
_{n})_{n\in \mathbb{N}}$ within the set $\mathcal{D}(E)$ of all $K\in 
\mathbf{CK}(E)$ with weak$^{\ast }$-dense set of extreme points.

Note that the Poulsen simplex $\mathfrak{P}$ is \emph{not only} a metrizable
simplex with dense extreme boundary $\mathcal{E}(\mathfrak{P})$. It has also
the following remarkable properties:

\begin{itemize}
\item It is \emph{unique}, up to an affine homeomorphism. Indeed, any two
compact metrizable simplexes with dense extreme boundary are mapped into
each other by an affine homeomorphism, by \cite[Theorem 2.3]%
{Lindenstrauss-etal}.

\item It is \emph{universal} in the sense that every compact metrizable
simplex is affinely homeomorphic to a (closed) face\footnote{%
A face $F$ of a convex set $K$ is defined to be a subset of $K$ with the
property that, if $\rho =\lambda _{1}\rho _{1}+\cdots +\lambda _{n}\rho
_{n}\in F$ with $\rho _{1},\ldots ,\rho _{n}\in K$, $\lambda _{1},\ldots
,\lambda _{n}\in (0,1)$ and $\lambda _{1}+\cdots +\lambda _{n}=1$, then $%
\rho _{1},\ldots ,\rho _{n}\in F$.} of $\mathfrak{P}$, by \cite[Theorem 2.5]%
{Lindenstrauss-etal}. As a consequence, by \cite[Example 4.2.6]%
{BrattelliRobinsonI}, the state space of \emph{any} classical system with
separable phase space can be seen as a face of $\mathfrak{P}$. Moreover, by 
\cite{Haydon}, \emph{every }Polish space\footnote{%
I.e., a separable topological space that is homeomorphic to a complete
metric space.} is homeomorphic to the extreme boundary of a face of $%
\mathfrak{P}$.

\item It is \emph{homogeneous} in the sense that any two proper closed
isomorphic\footnote{%
I.e, there is an affine homeomorphism between both faces.} faces of $%
\mathfrak{P}$ are mapped into each other by an affine automorphism of $%
\mathfrak{P}$. See \cite[Theorem 2.3]{Lindenstrauss-etal}.
\end{itemize}

\noindent Together with Equation (\ref{property}) this demonstrates, for
infinite-dimensional quantum systems, the amazing structural richness of the
state space $E$, while making mathematically clear the possible
identification of the phase space $\overline{\mathcal{E}(E)}$ as the state
space $E$.

In fact, because of Theorems \ref{theorem dense cool1}-\ref{theorem density2}%
, if the \textquotedblleft primordial\textquotedblright\ (non-commutative)\
algebra $\mathcal{X}$ has \emph{infinite dimension}, then, as is done
without much attention in many textbooks, one should expect that the state
and phase spaces, as we define them in the present paper, are identical,
even if this feature has to be mathematically proven in each case (like for
antiliminal\emph{\ }and simple $\mathcal{X}$). For instance, if $\mathcal{X}$
is an infinite-dimensional, commutative and unital $C^{\ast }$-algebra, then
the state and phase spaces, respectively $E$ and $\overline{\mathcal{E}(E)}$%
, are cleary different from each other, even if $E$ can always be
approximated in the weak$^{\ast }$-Hausdorff hypertopology by a convex weak$%
^{\ast }$-compact set $K\subseteq E$ with weak$^{\ast }$-dense extreme
boundary, by Theorem \ref{theorem density2}.

\subsection{Classical $C^{\ast }$-Algebra of Continuous Functions on the
State Space\label{Classical algebra}}

The space $C(E;\mathbb{C})$ of complex-valued weak$^{\ast }$-continuous
functions on the state space $E$ of Definition \ref{state space}, endowed
with the point-wise operations and complex conjugation, is a unital \emph{%
commutative} $C^{\ast }$-algebra denoted by%
\begin{equation}
\mathfrak{C}\doteq \left( C\left( E;\mathbb{C}\right) ,+,\cdot _{{\mathbb{C}}%
},\times ,\overline{\left( \cdot \right) },\left\Vert \cdot \right\Vert _{%
\mathfrak{C}}\right) \ ,  \label{metaciagre set 2}
\end{equation}%
where%
\begin{equation}
\left\Vert f\right\Vert _{\mathfrak{C}}\doteq \max_{\rho \in E}\left\vert
f\left( \rho \right) \right\vert \ ,\qquad f\in \mathfrak{C}\ .
\label{metaciagre set 2bis}
\end{equation}%
The (real) Banach subspace of all real-valued functions from $\mathfrak{C}$
is denoted by $\mathfrak{C}^{\mathbb{R}}\varsubsetneq \mathfrak{C}$. If $%
\mathcal{X}$ is separable then $\mathfrak{C}$ is also separable, $E$ being
in this case metrizable. See, e.g., \cite[Problem (d) p. 245]{topology}.

Similar to the mapping defined by Equation (\ref{f chapeau}) for commutative 
$C^{\ast }$-algebras, elements of the unital $C^{\ast }$-algebra $\mathcal{X}
$ canonically define continuous affine functions $\hat{A}\in \mathfrak{C}$
by 
\begin{equation}
\hat{A}\left( \rho \right) \doteq \rho \left( A\right) \ ,\qquad \rho \in
E,\ A\in \mathcal{X}\ .  \label{fA}
\end{equation}%
This is the well-known \emph{Gelfand transform}. Note that $A\neq B$ yields $%
\hat{A}\neq \hat{B}$, as states separates elements of $\mathcal{X}$. Since $%
\mathcal{X}$ is a (unital) $C^{\ast }$-algebra, 
\begin{equation}
\left\Vert A\right\Vert _{\mathcal{X}}=\max_{\rho \in E}\left\vert \rho
\left( A\right) \right\vert \ ,\qquad A\in \mathcal{X}^{\mathbb{R}}\ ,
\label{norm properties}
\end{equation}%
and hence, the map $A\mapsto \hat{A}$ defines a linear isometry from the
Banach space $\mathcal{X}^{\mathbb{R}}$ of all self-adjoint elements (cf.
Equation (\ref{self-adjoint elements})) to the space $\mathfrak{C}^{\mathbb{R%
}}$ of all real-valued functions on $E$.

For any self-adjoint\footnote{%
This means that $A\in \mathcal{B}$ implies $A^{\ast }\in \mathcal{B}$.}
subspace $\mathcal{B}\subseteq \mathcal{X}$, we define the $\ast $%
-subalgebras 
\begin{equation}
\mathfrak{C}_{\mathcal{B}}\equiv \mathfrak{C}_{\mathcal{B}}\left( E\right)
\doteq \mathbb{C}[\{\hat{A}:A\in \mathcal{B}\}]\subseteq \mathfrak{C}\quad 
\text{and}\quad \mathfrak{C}_{\mathcal{B}}^{\mathbb{R}}\equiv \mathfrak{C}_{%
\mathcal{B}}^{\mathbb{R}}\left( E\right) \doteq \mathbb{R}[\{\hat{A}:A\in 
\mathcal{B\cap X}^{\mathbb{R}}\}]\subseteq \mathfrak{C}^{\mathbb{R}}\ ,
\label{def frac Cb}
\end{equation}%
where $\mathbb{K}[\mathcal{Y}]\subseteq \mathfrak{C}$ denotes the $\mathbb{K}
$-algebra generated by $\mathcal{Y}$, i.e., the subspace of polynomials in
the elements of $\mathcal{Y}$, with coefficients in the field $\mathbb{K}$ ($%
=\mathbb{R},\mathbb{C}$). The unit $\mathfrak{\hat{1}}\in \mathfrak{C}$,
being the constant map $\mathfrak{\hat{1}}(\rho )=1$ for $\rho \in E$ (cf.
Definition \ref{state space}), belongs, by definition, to $\mathfrak{C}_{%
\mathcal{B}}$ and $\mathfrak{C}_{\mathcal{B}}^{\mathbb{R}}\subseteq 
\mathfrak{C}_{\mathcal{B}}$. If $\mathcal{B}$ is dense in $\mathcal{X}$ then 
$\mathfrak{C}_{\mathcal{B}}$ separates states. Therefore, by the
Stone-Weierstrass theorem \cite[Chap. V, \S 8]{Conway}, for any dense
self-adjoint subset $\mathcal{B}\subseteq \mathcal{X}$, $\mathfrak{C}_{%
\mathcal{B}}$ is dense in $\mathfrak{C}$, i.e., $\mathfrak{C}=\overline{%
\mathfrak{C}_{\mathcal{B}}}$.

\subsection{Classical $C^{\ast }$-Algebra of Continuous Functions on the
Phase Space\label{Classical algebra copy(1)}}

If the weak$^{\ast }$-compact set $\overline{\mathcal{E}(E)}$ is supposed to
play the role of a phase space (cf. Definition \ref{phase space}), then a
classical dynamics should be defined on the space $C(\overline{\mathcal{E}(E)%
};\mathbb{C})$ of complex-valued weak$^{\ast }$-continuous functions on $%
\overline{\mathcal{E}(E)}$. Endowed with the usual point-wise operations and
complex conjugation, it is again a unital \emph{commutative} $C^{\ast }$%
-algebra. Of course, there is a natural $\ast $-homomorphism $\mathfrak{C}%
\rightarrow C(\overline{\mathcal{E}(E)};\mathbb{C})$, by restriction on $%
\overline{\mathcal{E}(E)}$ of functions from $\mathfrak{C}$. Recall that $C(%
\overline{\mathcal{E}(E)};\mathbb{C})$ is canonically $\ast $-isomorphic,
via the restriction on $\mathcal{E}(E)$ of functions, to a $C^{\ast }$%
-subalgebra of $C(\mathcal{E}(E);\mathbb{C})$. In Corollary \ref{corollary
conservation} and Equation (\ref{corollary conservation0}), we show that the
classical dynamics constructed in the present paper \emph{can be pushed
forward}, through the restriction map, from $\mathfrak{C}$ to either $C(%
\overline{\mathcal{E}(E)};\mathbb{C})$ or $C(\mathcal{E}(E);\mathbb{C})$.
The generator of the dynamics on $C(\overline{\mathcal{E}(E)};\mathbb{C})$
can be expressed on polymonials via the Poisson bracket of Corollary \ref%
{proposition sympatoch copy(2)}, by Proposition \ref{Poisson algebra prop}.

In standard classical mechanics, in the case of compact phase spaces, even
if the $C^{\ast }$-algebra $\mathfrak{C}$ is always well-defined, note that $%
\mathfrak{C}$ is usually never used, but rather $C(\overline{\mathcal{E}(E)};%
\mathbb{C})$, and a classical system is always supposed to be in some
extreme state. In fact, the same physical object cannot be at the same time
on two distinct points of the phase space, according to the spatio-temporal
identity of classical mechanics \cite{FK}. This refers to Leibniz's
Principle of Identity of Indiscernibles\footnote{%
Leibniz's Principle of Identity of Indiscernibles \cite[p. 1]{FK}:
\textquotedblleft \textit{Two objects which are indistinguishable, in the
sense of possessing all properties in common, cannot, in fact, be two
objects at all. In effect, the Principle provides a guarantee that
individual objects will always be distinguishable.}\textquotedblright}. This
is related to the fact that any extreme classical state is dispersion-free,
see \cite[Eq. (6.3), $V$ being the state]{Landsman-livre}. In the classical
situation, the space $\mathfrak{C}$ is therefore \emph{not} fundamental: In
this case, by the Riesz--Markov theorem, the state space is the same as the
set of probability measures on the phase space $\overline{\mathcal{E}(E)}$
and a mixed, or non-extreme, state $\rho \in E\backslash \mathcal{E}(E)$ of
a classical system is only used to reflect the lack of knowledge on the
physical object along with a probabilistic interpretation. Compare with (\ref%
{affine decomposition}).

For quantum systems, this property is not as evident as it is for classical
ones, as conceptually discussed for instance in \cite{FK}. The
spatio-temporal identity of classical mechanics is questionable in quantum
mechanics. This is correlated with the celebrated EPR paradox of Einstein,
Podolsky and Rosen. See also Einstein's conceptual opposition to quantum
mechanics: \bigskip

\noindent \textit{If one asks what, irrespective of quantum mechanics, is
characteristic of the world of ideas of physics, one is first of all struck
by the following: the concepts of physics relate to a real outside world...
it is further characteristic of these physical objects that they are thought
of as a range in a space-time continuum. An essential aspect of this
arrangement of things in physics is that they lay clamed, at a certain time,
to an existence independent of one another, provided these objects
\textquotedblleft are situated in different parts of space\textquotedblright
.}\smallskip

\hfill Einstein, 1948 \cite{einstein}\bigskip

\noindent The non-locality of quantum mechanics was in fact Einstein's main
criticism on this theory \cite{Bell}, more than its weakly deterministic
features.

The non-locality of quantum mechanics has been experimentally verified, for
instance via Bell's inequalities, and it is not the subject of the present
paper to discuss further related topics, like the existence of hidden
variables in quantum physics. The point in this brief discussion is that
there is no clear reason to restrict ourselves to the phase space $\overline{%
\mathcal{E}(E)}$ and not also consider the whole state space $E$, as, in
contrast to classical physics, extreme states are not anymore\
dispersion-free for quantum systems. See, e.g., \cite[Proposition 2.10]%
{Landsman-livre}; cf. also the Bell-Kochen-Specker theorem \cite[Theorem 6.5]%
{Landsman-livre}. As a matter of fact, important phenomena, like the
breakdown of the $U(1)$-gauge symmetry in the BCS theory of
superconductivity, are related with non-extreme states. See, as an example, 
\cite[Theorem 6.5]{BruPedra1}. What's more, the phase space and the state
space turn out to be \emph{identical} for important classes of (infinitely
extended) quantum systems in condensed matter physics, as already explained.
See Equation (\ref{density extreme states}).

\section{Poisson Structures in Quantum Mechanics\label{Poisson Structures in
Quantum Mechanics}}

If $\mathfrak{g}$ is a finite dimensional Lie algebra, there is a standard
contruction of a Poisson bracket for the polynomial functions on its dual
space $\mathfrak{g}^{\ast }$. See, for instance, \cite[Section 7.1]{Poission}%
. Observe that the (real) space $\mathcal{X}^{\mathbb{R}}$ of all
self-adjoint elements of an arbitrary $C^{\ast }$-algebra $\mathcal{X}$
forms a Lie algebra by endowing it with the Lie bracket $i[\cdot ,\cdot ]$,
i.e., the skew-symmetric biderivation on $\mathcal{X}^{\mathbb{R}}$ defined
by the commutator 
\begin{equation}
i\left[ A,B\right] \doteq i\left( AB-BA\right) \in \mathcal{X}^{\mathbb{R}}\
,\qquad A,B\in \mathcal{X}^{\mathbb{R}}\ .  \label{commutator}
\end{equation}%
One of the aims of our paper is to extend such a construction of a Poisson
bracket to polynomial functions on the dual space of $\mathcal{X}^{\mathbb{R}%
}$, which is possibly infinite-dimensional. Before doing that, we first
briefly present B\'{o}na's setting \cite[Sections 2.1b, 2.1c]{Bono2000},
which motivated the present work.

\subsection{B\'{o}na's Poisson Structures\label{new section jundiai}}

B\'{o}na \cite[Sections 2.1b, 2.1c]{Bono2000} proposes a Poisson structure
for polynomial functions on the \emph{predual} (instead of the dual) of a $%
C^{\ast }$-algebra. Recall that, if $\mathcal{X}$ is the $C^{\ast }$-algebra 
$\mathcal{B}(\mathcal{H})$\ of all bounded operators on a Hilbert space $%
\mathcal{H}$, then its predual $\mathcal{X}_{\ast }$\ can be identified with
the Banach space $\mathcal{L}^{1}(\mathcal{H})$ of trace-class operators on $%
\mathcal{H}$, with the (trace) norm 
\begin{equation*}
\left\Vert A\right\Vert _{1}\doteq \mathrm{Tr}_{\mathcal{H}}\sqrt{A^{\ast }A}%
\ ,\qquad A\in \mathcal{L}^{1}(\mathcal{H})\ .
\end{equation*}%
More precisely, for all $A\in \mathcal{B}(\mathcal{H})$ ($=\mathcal{X}$),
the linear map $\hat{A}$ defined by 
\begin{equation*}
\sigma \mapsto \mathrm{Tr}_{\mathcal{H}}(\sigma A)
\end{equation*}%
from $\mathcal{L}^{1}(\mathcal{H})$ to $\mathbb{C}$ is continuous and,
conversely, any linear continuous functional $\hat{A}:\mathcal{L}^{1}(%
\mathcal{H})\rightarrow \mathbb{C}$ is of this form for a unique $A\in 
\mathcal{B}(\mathcal{H})$. From this, one concludes that the dual of the
real Banach space $\mathcal{L}_{\mathbb{R}}^{1}(\mathcal{H})$ of
self-adjoint trace-class operators on $\mathcal{H}$ is the real Banach space 
$\mathcal{B}(\mathcal{H})^{\mathbb{R}}$ of self-adjoint bounded operators on
the Hilbert space $\mathcal{H}$. Thus, 
\begin{equation}
\mathcal{B}(\mathcal{H})^{\mathbb{R}}\equiv (\mathcal{L}_{\mathbb{R}}^{1}(%
\mathcal{H}))^{\ast }\subseteq C(\mathcal{L}_{\mathbb{R}}^{1}(\mathcal{H});%
\mathbb{R}).  \label{dual}
\end{equation}%
Let 
\begin{equation*}
\mathfrak{C}_{\mathcal{B}(\mathcal{H})^{\mathbb{R}}}^{\mathbb{R}}\doteq 
\mathbb{R}[\mathcal{B}(\mathcal{H})^{\mathbb{R}}]\subseteq C(\mathcal{L}_{%
\mathbb{R}}^{1}(\mathcal{H});\mathbb{R})
\end{equation*}%
be the subalgebra of polynomials in the elements of $\mathcal{B}(\mathcal{H}%
)^{\mathbb{R}}$ with real coefficients. The elements of this subalgebra are
called \textquotedblleft polynomial\textquotedblright\ functions on $%
\mathcal{L}_{\mathbb{R}}^{1}(\mathcal{H})$, the \emph{predual} of the Lie
algebra $(\mathcal{B}(\mathcal{H})^{\mathbb{R}},i[\cdot ,\cdot ])$. In \cite[%
Sections 2.1c]{Bono2000}, B\'{o}na proves the existence of a unique Poisson
bracket $\{\cdot ,\cdot \}$ on $\mathfrak{C}_{\mathcal{B}(\mathcal{H})^{%
\mathbb{R}}}^{\mathbb{R}}$, i.e., of a skew-symmetric biderivation
satisfying the Jacobi identity on polynomial functions, such that 
\begin{equation*}
\{\hat{A},\hat{B}\}\left( \sigma \right) =\mathrm{Tr}_{\mathcal{H}%
}(i[A,B]\sigma )=\widehat{i[A,B]}(\sigma ),\qquad A,B\in \mathcal{B}(%
\mathcal{H})^{\mathbb{R}},\ \sigma \in \mathcal{L}_{\mathbb{R}}^{1}(\mathcal{%
H})\ .
\end{equation*}%
It turns out that the Poisson manifold $(\mathcal{L}_{\mathbb{R}}^{1}(%
\mathcal{H}),\{\cdot ,\cdot \})$ has a non-trivial symplectic foliation: For
any $\sigma \in \mathcal{L}_{\mathbb{R}}^{1}(\mathcal{H})$, we define its
unitary orbit by%
\begin{equation}
\mathrm{O}(\sigma )\doteq \left\{ \mathrm{U}\sigma \mathrm{U}^{\ast }:%
\mathrm{U}\text{ a unitary operator on }\mathcal{H}\right\} \subseteq 
\mathcal{L}_{\mathbb{R}}^{1}(\mathcal{H})\ .  \label{follu0}
\end{equation}%
If $\sigma \in \mathcal{L}_{\mathbb{R}}^{1}(\mathcal{H})$ has
finite-dimensional range (i.e., $\dim \mathrm{ran}(\sigma )<\infty $), then $%
\mathrm{O}(\sigma )$ is a symplectic leaf of the Poisson manifold $(\mathcal{%
L}_{\mathbb{R}}^{1}(\mathcal{H}),\{\cdot ,\cdot \})$. In particular, the
restriction on such a leaf of the Poisson bracket of two functions $f,g$
only depends on the restriction of $f,g$ on the same leaf. Meanwhile, B\'{o}%
na observes in \cite[Lemma 2.1.7]{Bono2000} that the union%
\begin{equation*}
\bigcup \left\{ \mathrm{O}(\sigma ):\sigma \in \mathcal{L}_{\mathbb{R}}^{1}(%
\mathcal{H}),\ \sigma \geq 0,\ \mathrm{Tr}_{\mathcal{H}}(\sigma )=1,\ \dim 
\mathrm{ran}(\sigma )<\infty \right\}
\end{equation*}%
is dense in the set $\mathcal{S}_{\ast }$ of all normalized positive
elements (i.e., density matrices) of $\mathcal{L}_{\mathbb{R}}^{1}(\mathcal{H%
})$. Using this observation, B\'{o}na defines the Poisson bracket for
polynomial functions defined on $\mathcal{S}_{\ast }\subseteq \mathcal{L}_{%
\mathbb{R}}^{1}(\mathcal{H})$, but he proposes \cite[Sections 2.1c, footnote]%
{Bono2000} as a mathematically and physically interesting problem to
\textquotedblleft \textit{formulate analogies of }[his]\textit{\
constructions on the space of all positive normalized functionals on }$%
\mathcal{B}(\mathcal{H})$. \textit{This leads to technical complications}%
.\textquotedblright\ In Sections \ref{new section jundiai copy(1)} and \ref%
{new section jundiai copy(2)} we give such a construction for the dual space
of any $C^{\ast }$-algebra $\mathcal{X}$ (and not only for the special case $%
\mathcal{X=B}(\mathcal{H})$). Sections \ref{Convex Frechet Derivative}-\ref%
{Poisson Structure} contribute an alternative, more explicit, construction
of the same Poisson structure.

\begin{remark}
\mbox{
}\newline
The construction given in the recent paper \cite{Kryukov} for a Hamiltonian
flow associated with Schr\"{o}dinger's dynamics of one quantum particle
corresponds to B\'{o}na's symplectic leaf $\mathrm{O}(\sigma )$ of density
matrices $\sigma $ of dimension one, i.e., $\dim \mathrm{ran}(\sigma )=1$.
However, the author of \cite{Kryukov} does not seem to be aware of B\'{o}%
na's works.
\end{remark}

\subsection{Poisson Algebra of Polynomial Functions on the Continuous
Self-Adjoint Functionals on a $C^{\ast }$-Algebra\label{new section jundiai
copy(1)}}

Recall that $(\mathcal{X}^{\mathbb{R}},i[\cdot ,\cdot ])$ is a (possibly
infinite-dimensional) Lie algebra. See (\ref{commutator}). It is easy to
check that the continuous (real) linear functionals $\mathcal{X}^{\mathbb{R}%
}\rightarrow \mathbb{R}$ are in one-to-one correspondance to the hermitian
continuous (complex) linear functionals $\mathcal{X}\rightarrow \mathbb{C}$,
simply by restriction to $\mathcal{X}^{\mathbb{R}}\subseteq \mathcal{X}$.
Recall that a (complex) linear functional $\sigma :\mathcal{X}\rightarrow 
\mathbb{C}$ is, by definition, hermitian when 
\begin{equation*}
\sigma (A^{\ast })=\overline{\sigma (A)}\text{ },\text{\qquad }A\in \mathcal{%
X}\text{ }.
\end{equation*}%
We denote by $\mathcal{X}_{\mathbb{R}}^{\ast }$ the (real) space of all
hermitian elements of the (topological) dual space $\mathcal{X}^{\ast }$ and
use the identification 
\begin{equation*}
\mathcal{X}_{\mathbb{R}}^{\ast }\equiv (\mathcal{X}^{\mathbb{R}})^{\ast }\ ,
\end{equation*}%
as already done in the proof of Theorem \ref{theorem density2}. The space $%
\mathcal{X}_{\mathbb{R}}^{\ast }$ with $\mathcal{X=B}(\mathcal{H})$ plays in
our setting an analogous role as $\mathcal{L}_{\mathbb{R}}^{1}(\mathcal{H})$
in B\'{o}na's approach \cite[Sections 2.1b, 2.1c]{Bono2000}. See Section \ref%
{new section jundiai}.

Similar to (\ref{fA}), for any $A\in \mathcal{X}$, we define the weak$^{\ast
}$-continuous (complex) linear functional $\hat{A}:\mathcal{X}^{\ast
}\rightarrow \mathbb{C}$ by%
\begin{equation}
\hat{A}(\sigma )\doteq \sigma (A)\text{ },\text{\qquad }\sigma \in \mathcal{X%
}^{\ast }\ .  \label{sdfsdfkljsdlfkj}
\end{equation}%
(Note that we use the same notation as in (\ref{fA}), for the canonical
identification of $A\in \mathcal{X}$ with a linear functional on $\mathcal{X}%
^{\ast }$.) Any element of $\mathcal{X}^{\ast \ast }$ is of this form,
keeping in mind that the dual space $\mathcal{X}^{\ast }$ of $\mathcal{X}$
is here endowed with its weak$^{\ast }$ topology, see discussions before
Definition \ref{state space}. Note also that any weak$^{\ast }$-continuous
(real) linear functional on $\mathcal{X}_{\mathbb{R}}^{\ast }$ uniquely
extends to a weak$^{\ast }$-continuous (complex) linear hermitian functional
on $\mathcal{X}^{\ast }$. In this case, by hermiticity, the corresponding $%
A\in \mathcal{X}$ belongs to $\mathcal{X}^{\mathbb{R}}$. Conversely, any $%
A\in \mathcal{X}^{\mathbb{R}}$ defines a weak$^{\ast }$-continuous (real)
linear functional $\hat{A}:\mathcal{X}_{\mathbb{R}}^{\ast }\rightarrow 
\mathbb{C}$, by restriction of (\ref{sdfsdfkljsdlfkj}) to $\mathcal{X}_{%
\mathbb{R}}^{\ast }$. Therefore, we identify the real Banach space $\mathcal{%
X}^{\mathbb{R}}$ of self-adjoint elements of the $C^{\ast }$-algebra $%
\mathcal{X}$ with the space of all weak$^{\ast }$-continuous (real) linear
functionals $\mathcal{X}_{\mathbb{R}}^{\ast }\rightarrow \mathbb{R}$, i.e.,%
\begin{equation}
\mathcal{X}^{\mathbb{R}}\equiv (\mathcal{X}_{\mathbb{R}}^{\ast })^{\ast }\ .
\label{soigjosdfj}
\end{equation}%
In this view point, $\mathcal{X}^{\mathbb{R}}\subseteq C(\mathcal{X}_{%
\mathbb{R}}^{\ast };\mathbb{R})$. Let%
\begin{equation*}
\mathfrak{C}_{\mathcal{X}^{\mathbb{R}}}^{\mathbb{R}}\equiv \mathfrak{C}_{%
\mathcal{X}^{\mathbb{R}}}^{\mathbb{R}}\left( \mathcal{X}_{\mathbb{R}}^{\ast
}\right) \doteq \mathbb{R}[\mathcal{X}^{\mathbb{R}}]\subseteq C(\mathcal{X}_{%
\mathbb{R}}^{\ast };\mathbb{R})
\end{equation*}%
be the subalgebra of polynomials in the elements of $\mathcal{X}^{\mathbb{R}}
$, with real coefficients. (Compare with (\ref{def frac Cb}) for $\mathcal{B}%
=\mathcal{X}^{\mathbb{R}}$.) The elements of this subalgebra are again
called \textquotedblleft polynomial\textquotedblright\ functions on $%
\mathcal{X}_{\mathbb{R}}^{\ast }$, the \emph{dual} of the Lie algebra $(%
\mathcal{X}_{\mathbb{R}},i[\cdot ,\cdot ])$.

Note that such\ polynomials are Gateaux differentiable and, for any $f\in 
\mathfrak{C}_{\mathcal{X}^{\mathbb{R}}}^{\mathbb{R}}$ and any $\sigma \in 
\mathcal{X}_{\mathbb{R}}^{\ast }$, the Gateaux derivative $\mathrm{d}%
^{G}f\left( \sigma \right) $ is linear and weak$^{\ast }$ continuous, i.e., $%
\mathrm{d}^{G}f\left( \sigma \right) \in \mathcal{X}^{\mathbb{R}}$ (see (\ref%
{soigjosdfj})). In particular, for any $A\in \mathcal{X}$, by (\ref%
{sdfsdfkljsdlfkj}), 
\begin{equation}
\mathrm{d}^{G}\hat{A}\left( \sigma \right) =A\text{ },\text{\qquad }\sigma
\in \mathcal{X}_{\mathbb{R}}^{\ast }\ .  \label{equation super trivial}
\end{equation}%
Thus, we can define a skew-symmetric biderivation $\{\cdot ,\cdot \}_{0}$ on 
$\mathfrak{C}_{\mathcal{X}^{\mathbb{R}}}^{\mathbb{R}}$ as follows:

\begin{definition}[Poisson bracket]
\label{convex Frechet derivative copy(2)}\mbox{ }\newline
The skew-symmetric biderivation $\{\cdot ,\cdot \}_{0}$ on $\mathfrak{C}_{%
\mathcal{X}^{\mathbb{R}}}^{\mathbb{R}}$ is defined by%
\begin{equation*}
\left\{ f,g\right\} _{0}\left( \sigma \right) \doteq \sigma \left( i\left[ 
\mathrm{d}^{G}f\left( \sigma \right) ,\mathrm{d}^{G}g\left( \sigma \right) %
\right] \right) \ ,\qquad f,g\in \mathfrak{C}_{\mathcal{X}^{\mathbb{R}}}^{%
\mathbb{R}}\ .
\end{equation*}
\end{definition}

\noindent This skew-symmetric biderivation satisfies the Jacobi identity:

\begin{proposition}[Usual properties of Poisson brackets]
\label{lemma poisson copy(2)}\mbox{
}\newline
$\{\cdot ,\cdot \}_{0}$ is a Poisson bracket, i.e., it is a skew-symmetric
biderivation satisfying the Jacobi identity 
\begin{equation*}
\left\{ f,\left\{ g,h\right\} _{0}\right\} _{0}+\left\{ h,\left\{
f,g\right\} _{0}\right\} _{0}+\left\{ g,\left\{ h,f\right\} _{0}\right\}
_{0}=0\ ,\qquad f,g,h\in \mathfrak{C}_{\mathcal{X}^{\mathbb{R}}}^{\mathbb{R}%
}\doteq \mathbb{R}[\mathcal{X}^{\mathbb{R}}]\ .
\end{equation*}
\end{proposition}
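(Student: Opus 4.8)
The plan is to verify the three defining properties of a Poisson bracket in turn --- skew-symmetry, the Leibniz (biderivation) rule, and the Jacobi identity --- reducing everything to the behaviour on the generators $\hat{A}$, $A\in\mathcal{X}^{\mathbb{R}}$, of the polynomial algebra $\mathfrak{C}_{\mathcal{X}^{\mathbb{R}}}^{\mathbb{R}}=\mathbb{R}[\mathcal{X}^{\mathbb{R}}]$. Skew-symmetry is immediate from Definition \ref{convex Frechet derivative copy(2)}: since $i[A,B]=-i[B,A]$ for all $A,B\in\mathcal{X}^{\mathbb{R}}$, one gets $\{f,g\}_{0}=-\{g,f\}_{0}$. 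For the Leibniz rule I would use the product rule for the Gateaux derivative, $\mathrm{d}^{G}(fg)(\sigma)=f(\sigma)\,\mathrm{d}^{G}g(\sigma)+g(\sigma)\,\mathrm{d}^{G}f(\sigma)$, together with the bilinearity of the commutator and the linearity of $\sigma$; substituting into the definition and factoring out the scalars $f(\sigma),g(\sigma)$ yields $\{fg,h\}_{0}=f\{g,h\}_{0}+g\{f,h\}_{0}$ directly. One should also record that the bracket maps $\mathfrak{C}_{\mathcal{X}^{\mathbb{R}}}^{\mathbb{R}}\times\mathfrak{C}_{\mathcal{X}^{\mathbb{R}}}^{\mathbb{R}}$ back into $\mathfrak{C}_{\mathcal{X}^{\mathbb{R}}}^{\mathbb{R}}$: for polynomial $f,g$ the derivatives $\mathrm{d}^{G}f(\sigma),\mathrm{d}^{G}g(\sigma)$ depend polynomially on $\sigma$, so $\sigma\mapsto\sigma(i[\mathrm{d}^{G}f(\sigma),\mathrm{d}^{G}g(\sigma)])$ is again a polynomial; in particular the nested brackets appearing in the Jacobi identity make sense.

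The key simplification occurs on the generators. By (\ref{equation super trivial}), $\mathrm{d}^{G}\hat{A}(\sigma)=A$ is constant in $\sigma$, so the bracket closes on the linear functions: $\{\hat{A},\hat{B}\}_{0}(\sigma)=\sigma(i[A,B])=\widehat{i[A,B]}(\sigma)$, that is $\{\hat{A},\hat{B}\}_{0}=\widehat{i[A,B]}$, which is itself a generator since $i[A,B]\in\mathcal{X}^{\mathbb{R}}$ (cf. (\ref{commutator})). Iterating, for $A,B,C\in\mathcal{X}^{\mathbb{R}}$ one obtains $\{\hat{A},\{\hat{B},\hat{C}\}_{0}\}_{0}=\widehat{i[A,i[B,C]]}=-\widehat{[A,[B,C]]}$, and summing over cyclic permutations reduces the Jacobi identity for the generators to $[A,[B,C]]+[B,[C,A]]+[C,[A,B]]=0$, the ordinary Jacobi identity for commutators in the associative algebra $\mathcal{X}$. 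Thus the identity holds whenever $f,g,h$ are all generators.

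The main work, and the step I expect to be the real obstacle, is passing from generators to arbitrary polynomials. The clean way to do this is to introduce the Jacobiator $J(f,g,h)\doteq\{f,\{g,h\}_{0}\}_{0}+\{h,\{f,g\}_{0}\}_{0}+\{g,\{h,f\}_{0}\}_{0}$ and to prove the classical lemma that, for \emph{any} skew-symmetric biderivation on a commutative algebra, $J$ is itself a derivation in each of its three arguments. Concretely, expanding $J(f_{1}f_{2},g,h)$ by repeated use of the Leibniz rule in both slots produces the expected terms $f_{1}J(f_{2},g,h)+f_{2}J(f_{1},g,h)$ together with four ``cross terms'' of the form $\{h,f_{i}\}_{0}\{g,f_{j}\}_{0}$; these cancel pairwise precisely because of skew-symmetry. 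Hence $J$ is a triderivation, and since a derivation annihilates the unit and obeys the Leibniz rule, a triderivation vanishing on the generating set $\{\hat{A}:A\in\mathcal{X}^{\mathbb{R}}\}$ vanishes on every monomial and therefore on all of $\mathfrak{C}_{\mathcal{X}^{\mathbb{R}}}^{\mathbb{R}}=\mathbb{R}[\mathcal{X}^{\mathbb{R}}]$. Combined with the computation on generators above, this gives $J\equiv 0$, which is exactly the Jacobi identity. The only delicate bookkeeping is the cross-term cancellation in the triderivation lemma; everything else is a direct substitution into Definition \ref{convex Frechet derivative copy(2)}.
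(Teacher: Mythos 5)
Your proposal is correct and follows essentially the same route as the paper: skew-symmetry from $i[A,B]=-i[B,A]$, the biderivation property from the product rule for the Gateaux derivative, reduction of the Jacobi identity on generators to the commutator Jacobi identity via $\{\hat{A},\hat{B}\}_{0}=\widehat{i[A,B]}$, and extension to all polynomials through the identity $J(f_{1}f_{2},g,h)=f_{1}J(f_{2},g,h)+f_{2}J(f_{1},g,h)$ with the cross terms cancelling by skew-symmetry. The paper packages this last step as an induction on the total degree of the three monomials rather than as your ``Jacobiator is a triderivation'' lemma, but the underlying computation is identical.
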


\begin{proof}
$\{\cdot ,\cdot \}_{0}$ is clearly skew-symmetric, by (\ref{commutator}) and
Definition \ref{convex Frechet derivative copy(2)}. Note additionally that,
for any $f,g\in \mathfrak{C}_{\mathcal{X}^{\mathbb{R}}}^{\mathbb{R}}$,%
\begin{equation}
\mathrm{d}^{G}\left( f+g\right) =\mathrm{d}^{G}f+\mathrm{d}^{G}g\quad \text{%
and}\quad \mathrm{d}^{G}\left( fg\right) =f\mathrm{d}^{G}g+g\mathrm{d}^{G}f\
,  \label{Leipniz0}
\end{equation}%
where the products in the last equality are meant point-wise. As a
consequence, $\{\cdot ,\cdot \}_{0}$ is bilinear and satisfies Leibniz's
rule with respect to both arguments, by (\ref{commutator}). In other words, $%
\{\cdot ,\cdot \}_{0}$ is a skew-symmetric biderivation. Finally, by
bilinearity, it suffices to prove the Jacobi identity for $f,g,h$ being
monomials in the elements of $\mathcal{X}^{\mathbb{R}}$. If the\ sum of the
degree of the three monomials is $0$, $1$, or $2$, then the Jacobi identity
follows trivially. If the sum is exactly $3$ then the Jacobi identity
follows from the corresponding one for the commutators (\ref{commutator}).
(If one of the three monomials has zero degree then all terms in the Jacobi
identity trivially vanish.) If the sum is bigger than $3$ then at least one
of the monomial has degree bigger than $1$. Assume, without loss of
generality, that this monomial is $f$. Then $f=f_{1}f_{2}$ where\ the
monomials $f_{1}$ and $f_{2}$ have degree at least $1$, and, explicit
computations using Leibniz's rule and the skew-symmetry yield 
\begin{multline*}
\left\{ f,\left\{ g,h\right\} _{0}\right\} _{0}+\left\{ h,\left\{
f,g\right\} _{0}\right\} _{0}+\left\{ g,\left\{ h,f\right\} _{0}\right\}
_{0}=f_{1}\left( \left\{ f_{2},\left\{ g,h\right\} _{0}\right\} _{0}+\left\{
h,\left\{ f_{2},g\right\} _{0}\right\} _{0}+\left\{ g,\left\{
h,f_{2}\right\} _{0}\right\} _{0}\right) \\
+f_{2}\left( \left\{ f_{1},\left\{ g,h\right\} _{0}\right\} _{0}+\left\{
h,\left\{ f_{1},g\right\} _{0}\right\} _{0}+\left\{ g,\left\{
h,f_{1}\right\} _{0}\right\} _{0}\right) \ .
\end{multline*}%
Since $f_{1}$ and $f_{2}$ have in this case degree stricly smaller than the
degree of $f$, the Jacobi identity follows by induction.
\end{proof}

\begin{corollary}[Poisson algebra]
\mbox{ }\newline
The subspace $\mathfrak{C}_{\mathcal{X}^{\mathbb{R}}}^{\mathbb{R}}$ of
polynomials in the elements of $\mathcal{X}^{\mathbb{R}}\subseteq C(\mathcal{%
X}_{\mathbb{R}}^{\ast };\mathbb{R})$ with real coefficients, endowed with $%
\{\cdot ,\cdot \}_{0}$ and the pointwise multiplications of $\mathfrak{C}_{%
\mathcal{X}^{\mathbb{R}}}^{\mathbb{R}}$, is a Poisson algebra, in the sense
of \cite[Definition 1.1]{Poission}.
\end{corollary}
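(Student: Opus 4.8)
The plan is to verify the three defining axioms of a Poisson algebra in the sense of \cite[Definition 1.1]{Poission} for the pair consisting of the pointwise multiplication and the bracket $\{\cdot ,\cdot \}_{0}$ on $\mathfrak{C}_{\mathcal{X}^{\mathbb{R}}}^{\mathbb{R}}$: that the multiplication is commutative and associative, that $\{\cdot ,\cdot \}_{0}$ is a Lie bracket, and that $\{\cdot ,\cdot \}_{0}$ acts as a derivation in each argument with respect to this multiplication. Since essentially all of the substantive work has already been carried out, the argument amounts to assembling the relevant pieces.

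First I would observe that $\mathfrak{C}_{\mathcal{X}^{\mathbb{R}}}^{\mathbb{R}}\doteq \mathbb{R}[\mathcal{X}^{\mathbb{R}}]$ is, by construction, a unital subalgebra of the commutative associative algebra $C(\mathcal{X}_{\mathbb{R}}^{\ast };\mathbb{R})$ under the pointwise operations; hence its multiplication is automatically commutative and associative and contains the unit, so the first axiom requires nothing to check.

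Next I would invoke Proposition \ref{lemma poisson copy(2)}, which already establishes that $\{\cdot ,\cdot \}_{0}$ is a skew-symmetric biderivation satisfying the Jacobi identity. Bilinearity, skew-symmetry and the Jacobi identity are precisely the statement that $(\mathfrak{C}_{\mathcal{X}^{\mathbb{R}}}^{\mathbb{R}},\{\cdot ,\cdot \}_{0})$ is a real Lie algebra, giving the second axiom. The biderivation property, namely the Leibniz rule $\{f,gh\}_{0}=\{f,g\}_{0}\,h+g\,\{f,h\}_{0}$ in each slot (which follows in that proof from the pointwise product rule \eqref{Leipniz0} for the Gateaux derivative together with skew-symmetry), is exactly the compatibility axiom relating the bracket to the multiplication.

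Combining these three observations shows that $\mathfrak{C}_{\mathcal{X}^{\mathbb{R}}}^{\mathbb{R}}$, endowed with pointwise multiplication and $\{\cdot ,\cdot \}_{0}$, satisfies Definition 1.1 of \cite{Poission}, and is therefore a Poisson algebra. I do not anticipate any genuine obstacle here: the only nontrivial ingredient, the Jacobi identity, was already disposed of by the degree-induction argument in Proposition \ref{lemma poisson copy(2)}, so this corollary is merely a repackaging of that proposition together with the evident underlying commutative associative algebra structure on $\mathfrak{C}_{\mathcal{X}^{\mathbb{R}}}^{\mathbb{R}}$.
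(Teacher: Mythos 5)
Your proposal is correct and matches the paper's intent exactly: the corollary is stated without proof precisely because it is an immediate repackaging of Proposition \ref{lemma poisson copy(2)} (skew-symmetric biderivation plus Jacobi identity) together with the evident commutative associative structure of the pointwise product on $\mathbb{R}[\mathcal{X}^{\mathbb{R}}]$. Your explicit verification of the three axioms of \cite[Definition 1.1]{Poission} is the same argument, merely written out.
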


\subsection{Poisson Ideals Associated with State and Phase Spaces\label{new
section jundiai copy(2)}}

Let $F\subseteq \mathcal{X}_{\mathbb{R}}^{\ast }$ be any nonempty subset of $%
\mathcal{X}_{\mathbb{R}}^{\ast }$ and define the algebra 
\begin{equation}
\mathfrak{C}_{\mathcal{X}^{\mathbb{R}}}^{\mathbb{R}}\left( F\right) \doteq
\left\{ f|_{F}:f\in \mathfrak{C}_{\mathcal{X}^{\mathbb{R}}}^{\mathbb{R}%
}\left( \mathcal{X}_{\mathbb{R}}^{\ast }\right) \right\}  \label{affines0}
\end{equation}%
of polynomials on $F$. \ If the restriction to $F$ of the Poisson bracket $%
\{f,g\}_{0}$ of two polynomials $f,g\in \mathfrak{C}_{\mathcal{X}^{\mathbb{R}%
}}^{\mathbb{R}}\left( \mathcal{X}_{\mathbb{R}}^{\ast }\right) $ (Definition %
\ref{convex Frechet derivative copy(2)}) only depends on the corresponding
restrictions of $f,g$, then%
\begin{equation*}
\{f|_{F},g|_{F}\}\doteq \{f,g\}_{0}|_{F},\qquad f,g\in \mathfrak{C}_{%
\mathcal{X}^{\mathbb{R}}}^{\mathbb{R}}\left( \mathcal{X}_{\mathbb{R}}^{\ast
}\right) \ ,
\end{equation*}%
is a well-defined Poisson bracket on $\mathfrak{C}_{\mathcal{X}^{\mathbb{R}%
}}^{\mathbb{R}}\left( F\right) $. Equivalently, this means that the
subalgebra 
\begin{equation*}
\mathfrak{I}_{F}\doteq \left\{ f\in \mathfrak{C}_{\mathcal{X}^{\mathbb{R}}}^{%
\mathbb{R}}:f\left( F\right) =\left\{ 0\right\} \right\}
\end{equation*}%
of polynomials that vanish on $F\subseteq \mathcal{X}_{\mathbb{R}}^{\ast }$
is a Poisson ideal of the Poisson algebra $(\mathfrak{C}_{\mathcal{X}^{%
\mathbb{R}}}^{\mathbb{R}},\{\cdot ,\cdot \}_{0})$. Recall that a subalgebra $%
\mathfrak{I}$ of a Poisson algebra $(\mathcal{P},\{\cdot ,\cdot \})$ is
called a Poisson ideal whenever, for all $f\in \mathfrak{I}$ and $g\in 
\mathcal{P}$, $fg\in \mathfrak{I}$ and $\{f,g\}\in \mathfrak{I}$. See, e.g., 
\cite[Section 2.2.1]{Poission}. As a consequence of this fact, the Poisson
algebras 
\begin{equation*}
(\mathfrak{C}_{\mathcal{X}^{\mathbb{R}}}^{\mathbb{R}}\left( F\right)
,\{\cdot ,\cdot \})\qquad \text{and}\qquad (\mathfrak{C}_{\mathcal{X}^{%
\mathbb{R}}}^{\mathbb{R}}\left( \mathcal{X}_{\mathbb{R}}^{\ast }\right)
,\{\cdot ,\cdot \}_{0})/\mathfrak{I}_{F}
\end{equation*}%
are isomorphic. See \cite[Section 2.2.1]{Poission} for the definition of the
quotient of a Poisson algebra by one of its Poisson ideals. See also \cite[%
Proposition 2.8]{Poission}.

For any state $\rho \in E$, we apply these observations to the follium $%
E_{\rho }$ of states, defined by 
\begin{equation*}
E_{\rho }\doteq \left\{ \left\langle \varphi ,\pi _{\rho }\left( \cdot
\right) \varphi \right\rangle _{\mathcal{H}_{\rho }}:\varphi \in \mathcal{H}%
_{\rho },\ \left\Vert \varphi \right\Vert _{\mathcal{H}_{\rho }}=1\right\}
\subseteq E\subseteq \mathcal{X}_{\mathbb{R}}^{\ast }\ ,
\end{equation*}%
where the triplet $(\mathcal{H}_{\rho },\pi _{\rho },\Omega _{\rho })$ is
the GNS representation \cite[Section 2.3.3]{BrattelliRobinsonI} of $\rho $.

\begin{proposition}[Folia of states and Poisson ideals]
\label{proposition sympatoch}\mbox{ }\newline
For any $\rho \in E$ and any $f,g\in \mathfrak{C}_{\mathcal{X}^{\mathbb{R}%
}}^{\mathbb{R}}\left( \mathcal{X}_{\mathbb{R}}^{\ast }\right) $, the
restriction $\{f,g\}_{0}|_{E_{\rho }}$ only depends on the corresponding
restriction of $f,g$. In particular, 
\begin{equation*}
\{f|_{E_{\rho }},g|_{E_{\rho }}\}\doteq \{f,g\}_{0}|_{E_{\rho }},\qquad
f,g\in \mathfrak{C}_{\mathcal{X}^{\mathbb{R}}}^{\mathbb{R}}\left( \mathcal{X}%
_{\mathbb{R}}^{\ast }\right) \ ,
\end{equation*}%
is a well-defined Poisson bracket on $\mathfrak{C}_{\mathcal{X}^{\mathbb{R}%
}}^{\mathbb{R}}\left( E_{\rho }\right) $.
\end{proposition}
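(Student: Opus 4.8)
The plan is to prove the equivalent statement emphasized just before the proposition, namely that the vanishing ideal $\mathfrak{I}_{E_{\rho }}$ is a Poisson ideal of $(\mathfrak{C}_{\mathcal{X}^{\mathbb{R}}}^{\mathbb{R}},\{\cdot ,\cdot \}_{0})$; the asserted well-posedness of $\{\cdot ,\cdot \}$ on $\mathfrak{C}_{\mathcal{X}^{\mathbb{R}}}^{\mathbb{R}}(E_{\rho })$ then follows by bilinearity. The heart of the matter is to recognise that, at each point $\sigma \in E_{\rho }$, the number $\{f,g\}_{0}(\sigma )=\sigma (i[\mathrm{d}^{G}f(\sigma ),\mathrm{d}^{G}g(\sigma )])$ is a \emph{tangential} derivative of $f$ along a curve that never leaves $E_{\rho }$, and is therefore insensitive to the values of $f$ off $E_{\rho }$.

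First I would compute the relevant tangent vectors. Fix $\sigma =\langle \varphi ,\pi _{\rho }(\cdot )\varphi \rangle _{\mathcal{H}_{\rho }}\in E_{\rho }$ with $\Vert \varphi \Vert _{\mathcal{H}_{\rho }}=1$, and let $C\in \mathcal{X}^{\mathbb{R}}$. Since $C=C^{\ast }$, the operator $\pi _{\rho }(C)$ is self-adjoint, so $(e^{it\pi _{\rho }(C)})_{t\in \mathbb{R}}$ is a one-parameter unitary group on $\mathcal{H}_{\rho }$; hence the curve $t\mapsto \sigma _{t}\doteq \langle e^{it\pi _{\rho }(C)}\varphi ,\pi _{\rho }(\cdot )e^{it\pi _{\rho }(C)}\varphi \rangle _{\mathcal{H}_{\rho }}$ consists of unit-vector states and stays inside $E_{\rho }$. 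A routine differentiation at $t=0$, using $[A,C]=AC-CA$ and $\pi _{\rho }(C)^{\ast }=\pi _{\rho }(C)$, yields the weak$^{\ast }$-derivative $\dot{\sigma}_{0}\in \mathcal{X}_{\mathbb{R}}^{\ast }$ characterised by $\dot{\sigma}_{0}(A)=\sigma (i[A,C])$ for all $A\in \mathcal{X}$.

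Next I would identify the bracket with such a derivative. Because the elements of $\mathfrak{C}_{\mathcal{X}^{\mathbb{R}}}^{\mathbb{R}}=\mathbb{R}[\mathcal{X}^{\mathbb{R}}]$ are polynomials in the weak$^{\ast }$-continuous linear functionals $\hat{A}$, they are (Fr\'{e}chet) smooth, so the chain rule applies along the smooth curve $t\mapsto \sigma _{t}$. Choosing $C=\mathrm{d}^{G}g(\sigma )\in \mathcal{X}^{\mathbb{R}}$ (cf. (\ref{soigjosdfj})) and using the identification $\hat{A}(\eta )=\eta (A)$, the previous step gives
\begin{equation*}
\frac{\mathrm{d}}{\mathrm{d}t}\Big|_{t=0}f(\sigma _{t})=[\mathrm{d}^{G}f(\sigma )](\dot{\sigma}_{0})=\dot{\sigma}_{0}(\mathrm{d}^{G}f(\sigma ))=\sigma \big(i[\mathrm{d}^{G}f(\sigma ),\mathrm{d}^{G}g(\sigma )]\big)=\{f,g\}_{0}(\sigma )\ .
\end{equation*}
Thus $\{f,g\}_{0}(\sigma )$ equals the derivative at $0$ of $t\mapsto f(\sigma _{t})$, with the curve $t\mapsto \sigma _{t}$ lying entirely in $E_{\rho }$; consequently its value depends on $f$ only through $f|_{E_{\rho }}$. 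In particular, if $f|_{E_{\rho }}=0$ then $t\mapsto f(\sigma _{t})$ is identically zero and $\{f,g\}_{0}(\sigma )=0$, so $\{f,g\}_{0}\in \mathfrak{I}_{E_{\rho }}$.

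Finally I would symmetrise. By the skew-symmetry of $\{\cdot ,\cdot \}_{0}$ (Proposition \ref{lemma poisson copy(2)}), the same argument applied to $\{g,f\}_{0}$ shows that $g|_{E_{\rho }}=0$ forces $\{f,g\}_{0}|_{E_{\rho }}=-\{g,f\}_{0}|_{E_{\rho }}=0$ as well; hence $\mathfrak{I}_{E_{\rho }}$ is a Poisson ideal. Bilinearity of the bracket then shows that $\{f,g\}_{0}|_{E_{\rho }}$ depends only on $f|_{E_{\rho }}$ and $g|_{E_{\rho }}$, so $\{f|_{E_{\rho }},g|_{E_{\rho }}\}\doteq \{f,g\}_{0}|_{E_{\rho }}$ is well defined. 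The one point requiring genuine care — the main obstacle — is the passage from the straight-line definition of the G\^{a}teaux derivative to differentiation along the \emph{curved} path $t\mapsto \sigma _{t}$; this is legitimate precisely because polynomials in the $\hat{A}$ are smooth and not merely G\^{a}teaux differentiable, and I would make this chain-rule step explicit to avoid any gap.
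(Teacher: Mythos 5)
Your argument is correct, and it reaches the conclusion by a route that is recognizably the same geometric idea as the paper's but packaged differently. The paper also works in the GNS representation $(\mathcal{H}_{\rho },\pi _{\rho },\Omega _{\rho })$, but instead of differentiating along a curve inside $E_{\rho }$ it \emph{lifts} every polynomial $f$ to a function $f_{\rho }(\varphi )\doteq f(\langle \varphi ,\pi _{\rho }(\cdot )\varphi \rangle _{\mathcal{H}_{\rho }})$ on the unit sphere $\mathrm{S}_{\rho }$, constructs an auxiliary skew-symmetric biderivation $\{\cdot ,\cdot \}^{(\rho )}$ there (via the Gateaux derivative of $A\mapsto f(\mathrm{e}^{iA}\varphi )$, whose value on generators is $\mathrm{d}_{\rho }^{G}\hat{A}_{\rho }(\varphi )=i[P_{\varphi },\pi _{\rho }(A)]$), verifies $\{\hat{A}_{\rho },\hat{B}_{\rho }\}^{(\rho )}=(\{\hat{A},\hat{B}\}_{0})_{\rho }$ on generators, and extends by linearity and Leibniz's rule to get $\{f_{\rho },g_{\rho }\}^{(\rho )}=(\{f,g\}_{0})_{\rho }$; since $f_{\rho },g_{\rho }$ depend only on $f|_{E_{\rho }},g|_{E_{\rho }}$, both arguments are handled symmetrically at once and no appeal to skew-symmetry is needed at the end. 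Your version trades that construction for a direct tangency argument: the curve $t\mapsto \langle \mathrm{e}^{it\pi _{\rho }(C)}\varphi ,\pi _{\rho }(\cdot )\mathrm{e}^{it\pi _{\rho }(C)}\varphi \rangle _{\mathcal{H}_{\rho }}$ with $C=\mathrm{d}^{G}g(\sigma )$ stays in $E_{\rho }$, its derivative at $0$ is $A\mapsto \sigma (i[A,C])$, and the chain rule for polynomials (legitimate by (\ref{equation super trivial}) and (\ref{Leipniz0}), since $f(\sigma _{t})$ is a polynomial in the finitely many differentiable scalar functions $t\mapsto \sigma _{t}(A_{j})$) identifies $\{f,g\}_{0}(\sigma )$ with $\partial _{t}f(\sigma _{t})|_{t=0}$; the skew-symmetry of Proposition \ref{lemma poisson copy(2)} then handles the $g$-slot. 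This is slightly more economical — you never need the auxiliary algebra $\mathfrak{C}^{(\rho )}$ or the verification that $\{\cdot ,\cdot \}^{(\rho )}$ is itself a biderivation — at the price of an asymmetric treatment of the two arguments. The one step you flag as delicate, the chain rule along a curved path, is indeed the only point requiring explicit justification, and your justification is sound.
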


\begin{proof}
For any state $\rho \in E$ with GNS representation $(\mathcal{H}_{\rho },\pi
_{\rho },\Omega _{\rho })$, we define the unit sphere 
\begin{equation*}
\mathrm{S}_{\rho }\doteq \left\{ \varphi \in \mathcal{H}_{\rho }:\left\Vert
\varphi \right\Vert _{\mathcal{H}_{\rho }}=1\right\} \ .
\end{equation*}%
For any $f\in \mathfrak{C}_{\mathcal{X}^{\mathbb{R}}}^{\mathbb{R}}\left( 
\mathcal{X}_{\mathbb{R}}^{\ast }\right) $, we define the continuous function 
$f_{\rho }\in C(\mathrm{S}_{\rho };\mathbb{R})$ by%
\begin{equation*}
f_{\rho }\left( \varphi \right) \doteq f(\left\langle \varphi ,\pi _{\rho
}\left( \cdot \right) \varphi \right\rangle _{\mathcal{H}_{\rho }})\ ,\qquad
\varphi \in \mathrm{S}_{\rho }\ .
\end{equation*}%
Let%
\begin{equation*}
\mathfrak{C}^{(\rho )}\doteq \left\{ f_{\rho }:f\in \mathfrak{C}_{\mathcal{X}%
^{\mathbb{R}}}^{\mathbb{R}}\left( \mathcal{X}_{\mathbb{R}}^{\ast }\right)
\right\} \subseteq C(\mathrm{S}_{\rho };\mathbb{R})\ .
\end{equation*}%
Then, we prove the existence of a skew-symmetric biderivation $\{\cdot
,\cdot \}^{(\rho )}$ on $\mathfrak{C}^{(\rho )}$ satisfying 
\begin{equation}
\{\hat{A}_{\rho },\hat{B}_{\rho }\}^{(\rho )}=(\{\hat{A},\hat{B}%
\}_{0})_{\rho }\ ,\qquad A,B\in \mathcal{X}^{\mathbb{R}}\ .
\label{holds true0}
\end{equation}%
This last equality yields 
\begin{equation*}
\{f_{\rho },g_{\rho }\}^{(\rho )}=\left( \{f,g\}_{0}\right) _{\rho }\
,\qquad f,g\in \mathfrak{C}_{\mathcal{X}^{\mathbb{R}}}^{\mathbb{R}}\left( 
\mathcal{X}_{\mathbb{R}}^{\ast }\right) \ ,
\end{equation*}%
by linearity and Leibniz's rule. In particular, for any $\varphi \in \mathrm{%
S}_{\rho }$, 
\begin{equation*}
\{f,g\}_{0}(\left\langle \varphi ,\pi _{\rho }\left( \cdot \right) \varphi
\right\rangle _{\mathcal{H}_{\rho }})=\{f_{\rho },g_{\rho }\}^{(\rho )}\text{
}.
\end{equation*}%
As $f_{\rho },g_{\rho }$ only depend on the restrictions $f|_{E_{\rho }}$
and $g|_{E_{\rho }}$, respectively, the assertion follows.

Now, in order to prove the existence of a skew-symmetric biderivation $%
\{\cdot ,\cdot \}^{(\rho )}$ satisfying (\ref{holds true0}), let $\mathcal{L}%
_{\mathbb{R}}^{1}(\mathcal{H}_{\rho })$ be the real Banach space of all
self-adjoint trace-class operators on $\mathcal{H}_{\rho }$. For any $f\in 
\mathfrak{C}^{(\rho )}$ and $\varphi \in \mathrm{S}_{\rho }$, we denote by $%
\mathrm{d}_{\rho }^{G}f\left( \varphi \right) $ the Gateaux derivative at $%
A=0$ of the map 
\begin{equation*}
A\mapsto f\left( \mathrm{e}^{iA}\varphi \right)
\end{equation*}%
from $\mathcal{L}_{\mathbb{R}}^{1}(\mathcal{H}_{\rho })$ to $\mathbb{R}$.
For any $f\in \mathfrak{C}^{(\rho )}$, this Gateaux derivative is linear and
continuous, i.e., $\mathrm{d}_{\rho }^{G}f\left( \varphi \right) \in 
\mathcal{B}(\mathcal{H}_{\rho })^{\mathbb{R}}$. See, e.g., (\ref{dual}).
Therefore, we can define a skew-symmetric biderivation $\{\cdot ,\cdot
\}^{(\rho )}$ on $\mathfrak{C}_{\rho }$ by 
\begin{equation*}
\{f,g\}^{(\rho )}\left( \varphi \right) =\left\langle \varphi ,i\left[ 
\mathrm{d}_{\rho }^{G}f\left( \varphi \right) ,\mathrm{d}_{\rho }^{G}g\left(
\varphi \right) \right] \varphi \right\rangle _{\mathcal{H}_{\rho }}\
,\qquad f,g\in \mathfrak{C}^{(\rho )}\ .
\end{equation*}%
For any $A\in \mathcal{X}^{\mathbb{R}}$ and $\varphi \in \mathrm{S}_{\rho }$%
, observe that 
\begin{equation*}
\mathrm{d}_{\rho }^{G}\hat{A}_{\rho }\left( \varphi \right) \left( B\right)
=i\left\langle \varphi ,\left[ \pi _{\rho }\left( A\right) ,B\right] \varphi
\right\rangle _{\mathcal{H}_{\rho }}=i\mathrm{Tr}_{\mathcal{H}_{\rho }}(%
\left[ P_{\varphi },\pi _{\rho }\left( A\right) \right] B)\ ,
\end{equation*}%
where $P_{\varphi }$ is the orthogonal projection whose range is $\mathbb{C}%
\varphi $. In other words, 
\begin{equation*}
\mathrm{d}_{\rho }^{G}\hat{A}_{\rho }\left( \varphi \right) =i\left[
P_{\varphi },\pi _{\rho }\left( A\right) \right] \in \mathcal{B}(\mathcal{H}%
_{\rho })\ ,\qquad \varphi \in \mathrm{S}_{\rho },\ A\in \mathcal{X}^{%
\mathbb{R}}\ .
\end{equation*}%
Since $\pi _{\rho }:\mathcal{X\rightarrow B}(\mathcal{H}_{\rho })$ is a $%
\ast $-homomorphism, by Equation (\ref{equation super trivial}) and
Definition \ref{convex Frechet derivative copy(2)}, it follows that, for any 
$A,B\in \mathcal{X}^{\mathbb{R}}$,%
\begin{equation*}
\{\hat{A}_{\rho },\hat{B}_{\rho }\}^{(\rho )}\left( \varphi \right)
=i\left\langle \varphi ,\pi _{\rho }\left( \left[ A,B\right] \right) \varphi
\right\rangle _{\mathcal{H}_{\rho }}=(\{\hat{A},\hat{B}\}_{0})_{\rho }\left(
\varphi \right) \ ,\qquad \varphi \in \mathrm{S}_{\rho }\ ,
\end{equation*}%
i.e., Equation (\ref{holds true0}) holds true.
\end{proof}

The folia $E_{\rho }$, $\rho \in E$, play here an analogous role as the
symplectic leaves $\mathrm{O}(\sigma )$ (\ref{follu0}) of the Poisson
manifold $(\mathcal{L}_{\mathbb{R}}^{1}(\mathcal{H}),\{\cdot ,\cdot \})$ in B%
\'{o}na's approach \cite[Sections 2.1b, 2.1c]{Bono2000}. See Section \ref%
{new section jundiai}.

\begin{corollary}[State space and Poisson ideals]
\label{proposition sympatoch copy(1)}\mbox{ }\newline
For any $f,g\in \mathfrak{C}_{\mathcal{X}^{\mathbb{R}}}^{\mathbb{R}}\left( 
\mathcal{X}_{\mathbb{R}}^{\ast }\right) $, the restriction $\{f,g\}_{0}|_{E}$
only depends on the corresponding restriction of $f,g$. In particular, 
\begin{equation*}
\{f|_{E},g|_{E}\}\doteq \{f,g\}_{0}|_{E},\qquad f,g\in \mathfrak{C}_{%
\mathcal{X}^{\mathbb{R}}}^{\mathbb{R}}\left( \mathcal{X}_{\mathbb{R}}^{\ast
}\right) \ ,
\end{equation*}%
is a well-defined Poisson bracket on $\mathfrak{C}_{\mathcal{X}^{\mathbb{R}%
}}^{\mathbb{R}}\left( E\right) \subseteq \mathfrak{C}^{\mathbb{R}}\doteq C(E;%
\mathbb{R})$.
\end{corollary}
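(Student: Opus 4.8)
The plan is to read the corollary off from Proposition \ref{proposition sympatoch}, the only extra ingredient being that every state of $\mathcal{X}$ lies in its own folium. First I would reformulate the assertion. By the bilinearity of $\{\cdot,\cdot\}_{0}$, the statement that $\{f,g\}_{0}|_{E}$ depends only on $f|_{E}$ and $g|_{E}$ is equivalent to the implication
\begin{equation*}
f|_{E}=0\quad\Longrightarrow\quad \{f,g\}_{0}|_{E}=0\qquad\text{for all }g\in\mathfrak{C}_{\mathcal{X}^{\mathbb{R}}}^{\mathbb{R}}\left(\mathcal{X}_{\mathbb{R}}^{\ast}\right),
\end{equation*}
where I have also used skew-symmetry to transfer the hypothesis to either argument. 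Together with the obvious fact that $fg$ vanishes on $E$ whenever $f$ does, this is exactly the statement that $\mathfrak{I}_{E}$ is a Poisson ideal of $(\mathfrak{C}_{\mathcal{X}^{\mathbb{R}}}^{\mathbb{R}},\{\cdot,\cdot\}_{0})$. Granting this, the fact that the induced bracket on $\mathfrak{C}_{\mathcal{X}^{\mathbb{R}}}^{\mathbb{R}}(E)$ is well defined and is genuinely a Poisson bracket (skew-symmetry, Leibniz rule, Jacobi identity) is automatic from the quotient construction recalled before Proposition \ref{proposition sympatoch} (see \cite[Section 2.2.1, Proposition 2.8]{Poission}) and from Proposition \ref{lemma poisson copy(2)}.

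The key observation is the elementary inclusion that makes the gluing work, namely $E=\bigcup_{\rho\in E}E_{\rho}$. Indeed, for each $\rho\in E$ the cyclic vector $\Omega_{\rho}$ of the GNS representation $(\mathcal{H}_{\rho},\pi_{\rho},\Omega_{\rho})$ satisfies $\Vert\Omega_{\rho}\Vert_{\mathcal{H}_{\rho}}=1$ and $\rho(A)=\langle\Omega_{\rho},\pi_{\rho}(A)\Omega_{\rho}\rangle_{\mathcal{H}_{\rho}}$ for all $A\in\mathcal{X}$, so that $\rho\in E_{\rho}\subseteq E$. Consequently, to evaluate the bracket at an arbitrary point $\sigma\in E$ I would work inside the single folium $E_{\sigma}$: by Proposition \ref{proposition sympatoch} the restriction $\{f,g\}_{0}|_{E_{\sigma}}$ depends only on $f|_{E_{\sigma}}$ and $g|_{E_{\sigma}}$, and in particular the value $\{f,g\}_{0}(\sigma)$ is determined by $f|_{E_{\sigma}}$ and $g|_{E_{\sigma}}$. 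Since $E_{\sigma}\subseteq E$, these two restrictions are themselves determined by $f|_{E}$ and $g|_{E}$.

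Putting the pieces together then yields the corollary. If $f|_{E}=0$, then $f|_{E_{\sigma}}=0$ for every $\sigma\in E$; fixing such a $\sigma$ and applying Proposition \ref{proposition sympatoch} to the folium $E_{\sigma}$ (equivalently, using that $\mathfrak{I}_{E_{\sigma}}$ is a Poisson ideal) gives $\{f,g\}_{0}|_{E_{\sigma}}=0$, whence in particular $\{f,g\}_{0}(\sigma)=0$. As $\sigma\in E$ is arbitrary, $\{f,g\}_{0}|_{E}=0$, which establishes that $\mathfrak{I}_{E}$ is a Poisson ideal and hence that $\{\cdot,\cdot\}$ is well defined on $\mathfrak{C}_{\mathcal{X}^{\mathbb{R}}}^{\mathbb{R}}(E)\subseteq\mathfrak{C}^{\mathbb{R}}$. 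I do not expect any serious obstacle: all the genuine analytic work—the construction of the fibrewise bracket $\{\cdot,\cdot\}^{(\rho)}$ on the unit sphere $\mathrm{S}_{\rho}$ and the identity (\ref{holds true0})—has already been carried out in Proposition \ref{proposition sympatoch}, and the present statement is merely the amalgamation of these folium-wise results, made possible by the inclusion $\rho\in E_{\rho}$. The only point meriting a line of care is the reduction to the vanishing implication via bilinearity and skew-symmetry, which is routine.
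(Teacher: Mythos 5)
Your proposal is correct and follows exactly the paper's route: the paper's proof of this corollary is precisely the combination of Proposition \ref{proposition sympatoch} with the equality $E=\bigcup_{\rho\in E}E_{\rho}$ (equivalently, $\mathfrak{I}_{E}=\bigcap_{\rho\in E}\mathfrak{I}_{E_{\rho}}$ being a Poisson ideal), which you have merely spelled out in more detail, including the observation $\rho\in E_{\rho}$ via the GNS cyclic vector.
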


\begin{proof}
The assertion is a direct consequence of Proposition \ref{proposition
sympatoch} together with the obvious equality%
\begin{equation*}
E=\bigcup \left\{ E_{\rho }:\rho \in E\right\} \ .
\end{equation*}%
Equivalently, use that 
\begin{equation*}
\mathfrak{I}_{E}=\bigcap \left\{ \mathfrak{I}_{E_{\rho }}:\rho \in E\right\} 
\text{ },
\end{equation*}%
is a Poisson ideal of the Poisson algebra $(\mathfrak{C}_{\mathcal{X}^{%
\mathbb{R}}}^{\mathbb{R}},\{\cdot ,\cdot \}_{0})$.
\end{proof}

\noindent In Sections \ref{Convex Frechet Derivative}-\ref{Poisson Structure}%
, we also present an explicit construction of the Poisson bracket of
Corollary \ref{proposition sympatoch copy(1)}, because it is\ technically
more convenient for the subsequent sections.

Finally, recall that the phase space is the weak$^{\ast }$ closure $%
\overline{\mathcal{E}(E)}$ of the set $\mathcal{E}(E)$ of extreme points of
the state space $E$, see Definition \ref{phase space}. Similar to Corollary %
\ref{proposition sympatoch copy(1)}, we prove from Proposition \ref%
{proposition sympatoch} the existence of a Poisson bracket for polynomials
acting on the phase space:

\begin{corollary}[Phase space and Poisson ideals]
\label{proposition sympatoch copy(2)}\mbox{ }\newline
For any $f,g\in \mathfrak{C}_{\mathcal{X}^{\mathbb{R}}}^{\mathbb{R}}\left( 
\mathcal{X}_{\mathbb{R}}^{\ast }\right) $, the restriction $\{f,g\}_{0}|_{%
\overline{\mathcal{E}(E)}}$ only depends on the corresponding restriction of 
$f,g$. In particular, 
\begin{equation*}
\{f|_{\overline{\mathcal{E}(E)}},g|_{\overline{\mathcal{E}(E)}}\}\doteq
\{f,g\}_{0}|_{\overline{\mathcal{E}(E)}},\qquad f,g\in \mathfrak{C}_{%
\mathcal{X}^{\mathbb{R}}}^{\mathbb{R}}\left( \mathcal{X}_{\mathbb{R}}^{\ast
}\right) \ ,
\end{equation*}%
is a well-defined Poisson bracket on $\mathfrak{C}_{\mathcal{X}^{\mathbb{R}%
}}^{\mathbb{R}}(\overline{\mathcal{E}(E)})\subseteq C(\overline{\mathcal{E}%
(E)};\mathbb{R})$.
\end{corollary}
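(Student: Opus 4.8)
The plan is to reduce this statement, exactly as Corollary \ref{proposition sympatoch copy(1)} reduces the state-space case, to the assertion that the vanishing ideal $\mathfrak{I}_{\overline{\mathcal{E}(E)}}$ is a Poisson ideal of $(\mathfrak{C}_{\mathcal{X}^{\mathbb{R}}}^{\mathbb{R}},\{\cdot ,\cdot \}_{0})$; by the discussion preceding Proposition \ref{proposition sympatoch}, this is equivalent to the claimed restriction property. The route to it passes through the \emph{non-closed} extreme boundary $\mathcal{E}(E)$, the crucial observation being that for an \emph{extreme} (hence pure) state $\rho \in \mathcal{E}(E)$ the GNS representation $\pi _{\rho }$ is irreducible, so that every vector state in $E_{\rho }$ is again pure, i.e. $E_{\rho }\subseteq \mathcal{E}(E)$.

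First I would record that, since $\Omega _{\rho }\in \mathrm{S}_{\rho }$ realizes $\rho $ itself as a vector state, one has $\rho \in E_{\rho }$, and combined with $E_{\rho }\subseteq \mathcal{E}(E)$ for pure $\rho $ this yields the exact covering $\mathcal{E}(E)=\bigcup \{E_{\rho }:\rho \in \mathcal{E}(E)\}$. Taking vanishing ideals turns unions into intersections, so $\mathfrak{I}_{\mathcal{E}(E)}=\bigcap \{\mathfrak{I}_{E_{\rho }}:\rho \in \mathcal{E}(E)\}$. By Proposition \ref{proposition sympatoch} each $\mathfrak{I}_{E_{\rho }}$ is a Poisson ideal, and an intersection of Poisson ideals is again a Poisson ideal (the defining closure conditions $fg\in \mathfrak{I}$ and $\{f,g\}_{0}\in \mathfrak{I}$ are preserved under intersection). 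Hence $\mathfrak{I}_{\mathcal{E}(E)}$ is a Poisson ideal, which is precisely the statement that $\{f,g\}_{0}|_{\mathcal{E}(E)}$ depends only on $f|_{\mathcal{E}(E)}$ and $g|_{\mathcal{E}(E)}$. This mirrors the proof of Corollary \ref{proposition sympatoch copy(1)}, with $\mathcal{E}(E)$ in place of $E$.

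The second, and genuinely new, step is to lift from $\mathcal{E}(E)$ to its weak$^{\ast }$ closure $\overline{\mathcal{E}(E)}$. Here I would use that every element of $\mathfrak{C}_{\mathcal{X}^{\mathbb{R}}}^{\mathbb{R}}(\mathcal{X}_{\mathbb{R}}^{\ast })$ is a polynomial in the weak$^{\ast }$-continuous functionals $\hat{A}$ (see (\ref{sdfsdfkljsdlfkj})) and is therefore itself weak$^{\ast }$ continuous on $E$. Consequently a polynomial vanishing on the dense subset $\mathcal{E}(E)$ vanishes on all of $\overline{\mathcal{E}(E)}$, giving $\mathfrak{I}_{\overline{\mathcal{E}(E)}}=\mathfrak{I}_{\mathcal{E}(E)}$ (the reverse inclusion being trivial). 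Thus $\mathfrak{I}_{\overline{\mathcal{E}(E)}}$ is the Poisson ideal just obtained, and the induced bracket on the quotient Poisson algebra $\mathfrak{C}_{\mathcal{X}^{\mathbb{R}}}^{\mathbb{R}}(\overline{\mathcal{E}(E)})\cong \mathfrak{C}_{\mathcal{X}^{\mathbb{R}}}^{\mathbb{R}}/\mathfrak{I}_{\overline{\mathcal{E}(E)}}$ is well defined and inherits skew-symmetry, the Leibniz rule and the Jacobi identity from $\{\cdot ,\cdot \}_{0}$, exactly as in the quotient construction cited after Proposition \ref{proposition sympatoch}.

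The step I expect to be delicate is the inclusion $E_{\rho }\subseteq \mathcal{E}(E)$ for pure $\rho $: it rests on the standard but nontrivial fact that in an irreducible representation every unit vector induces a pure state (equivalently, that $\pi _{\rho }(\mathcal{X})''=\mathcal{B}(\mathcal{H}_{\rho })$ for pure $\rho $). Everything else is a formal manipulation of vanishing ideals together with the weak$^{\ast }$-continuity/density argument; the only other point requiring care is that this density argument uses weak$^{\ast }$ continuity of \emph{polynomials}, which holds precisely because the generators $\hat{A}$ are weak$^{\ast }$ continuous, so no closure beyond the polynomial algebra is needed.
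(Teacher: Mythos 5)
Your proposal is correct and follows essentially the same route as the paper: the paper likewise covers $\mathcal{E}(E)$ by the folia $E_{\rho }$ of pure states (citing the standard fact that vector states of an irreducible GNS representation are pure), identifies $\mathfrak{I}_{\overline{\mathcal{E}(E)}}=\mathfrak{I}_{\mathcal{E}(E)}=\bigcap \{\mathfrak{I}_{E_{\rho }}:\rho \in \mathcal{E}(E)\}$ using the weak$^{\ast }$-continuity of polynomials, and concludes via Proposition \ref{proposition sympatoch} that this is a Poisson ideal. You merely spell out in more detail the steps the paper leaves implicit.
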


\begin{proof}
For any extreme (or pure) state $\rho \in \mathcal{E}(E)$, we infer from 
\cite[Proposition 2.2.4]{Landsmanlivre0} that the folium $E_{\rho }\subseteq 
\mathcal{E}(E)$ is a subset of extreme states and, hence, 
\begin{equation*}
\mathcal{E}(E)=\bigcup \left\{ E_{\rho }:\rho \in \mathcal{E}(E)\right\} \ .
\end{equation*}%
By Proposition \ref{proposition sympatoch} and continuity of polynomials, it
follows that 
\begin{equation*}
\mathfrak{I}_{\overline{\mathcal{E}(E)}}=\mathfrak{I}_{\mathcal{E}%
(E)}=\bigcap \left\{ \mathfrak{I}_{E_{\rho }}:\rho \in \mathcal{E}(E)\right\}
\end{equation*}%
is again a Poisson ideal of the Poisson algebra $(\mathfrak{C}_{\mathcal{X}^{%
\mathbb{R}}}^{\mathbb{R}},\{\cdot ,\cdot \}_{0})$.
\end{proof}

\subsection{Convex Weak$^{\ast }$ Gateaux Derivative\label{Convex Frechet
Derivative}}

In order to construct the Poisson bracket $\{\cdot ,\cdot \}$ of Corollary %
\ref{proposition sympatoch copy(1)} more explicitly, as well as to analyze
its properties as generator of (generally non-autonomous) classical
dynamics, we introduce the notion of \emph{convex} Gateaux derivative on the
space $C(E;\mathcal{Y})$ of weak$^{\ast }$-continuous functions on the
convex and weak$^{\ast }$-compact set $E$ of states with values in an
arbitrary Banach space 
\begin{equation*}
\mathcal{Y}\equiv \left( \mathcal{Y},+,\cdot _{{\mathbb{K}}},\left\Vert
\cdot \right\Vert _{\mathcal{Y}}\right) \ ,\qquad {\mathbb{K}}=\mathbb{R},%
\mathbb{C}\ .
\end{equation*}%
As far as only the construction of the Poisson bracket $\{\cdot ,\cdot \}$
of Corollary \ref{proposition sympatoch copy(1)} is concerned, the relevant
example is $\mathcal{Y}=\mathbb{R}={\mathbb{K}}$.

We first define the Banach space 
\begin{equation*}
\mathcal{A}\left( E;\mathcal{Y}\right) \doteq \left\{ f\in C\left( E;%
\mathcal{Y}\right) :\forall \lambda \in \left( 0,1\right) ,\ \rho ,\upsilon
\in E,\quad f\left( \left( 1-\lambda \right) \rho +\lambda \upsilon \right)
=\left( 1-\lambda \right) f\left( \rho \right) +\lambda f\left( \upsilon
\right) \right\}
\end{equation*}%
of all affine weak$^{\ast }$-continuous $\mathcal{Y}$-valued functions on $E$%
, endowed with the supremum norm%
\begin{equation}
\left\Vert f\right\Vert _{\mathcal{A}(E;\mathcal{Y})}\doteq \max_{\rho \in
E}\left\Vert f\left( \rho \right) \right\Vert _{\mathcal{Y}}\ ,\qquad f\in 
\mathcal{A}\left( E;\mathcal{Y}\right) \ .  \label{norm affine}
\end{equation}%
Again, the norm is \emph{not} used in the contruction of the Poisson bracket 
$\{\cdot ,\cdot \}$ of Corollary \ref{proposition sympatoch copy(1)}, but
only in Section \ref{Well-posedness sect copy(1)}.

The convex Gateaux derivative of a weak$^{\ast }$-continuous $\mathcal{Y}$%
-valued function on $E$ at a fixed state is an affine weak$^{\ast }$%
-continuous $\mathcal{Y}$-valued function on $E$ defined as follows:

\begin{definition}[Convex weak$^{\ast }$-continuous Gateaux derivative]
\label{convex Frechet derivative}\mbox{ }\newline
For any continuous function $f\in C(E;\mathcal{Y})$ and any state $\rho \in
E $, we say that $\mathrm{d}f\left( \rho \right) :E\rightarrow \mathcal{Y}$
is the (unique) convex weak$^{\ast }$-continuous Gateaux derivative of $f$
at $\rho \in E$ if $\mathrm{d}f\left( \rho \right) \in \mathcal{A}(E;%
\mathcal{Y}) $ and%
\begin{equation*}
\lim_{\lambda \rightarrow 0^{+}}\lambda ^{-1}\left( f\left( \left( 1-\lambda
\right) \rho +\lambda \upsilon \right) -f\left( \rho \right) \right) =\left[ 
\mathrm{d}f\left( \rho \right) \right] \left( \upsilon \right) \ ,\qquad
\rho ,\upsilon \in E\ .
\end{equation*}
\end{definition}

\noindent To our knowledge, the concept of convex weak$^{\ast }$-continuous
Gateaux derivative defined above is new.

A function $f\in C(E;\mathcal{Y})$ such that $\mathrm{d}f\left( \rho \right) 
$ exists for all $\rho \in E$ is called \emph{differentiable} and we use the
notation 
\begin{equation*}
\mathrm{d}f\equiv (\mathrm{d}f\left( \rho \right) )_{\rho \in
E}:E\rightarrow \mathcal{A}(E;\mathcal{Y})\ .
\end{equation*}%
Explicit examples of spaces of such differentiable functions are given, for
any $n\in \mathbb{N}$, by%
\begin{eqnarray}
\mathfrak{Y}_{n} &\equiv &\mathfrak{Y}\left( \mathcal{Y}\right) _{n}\doteq 
\Big\{%
f\in C\left( E,\mathcal{Y}\right) :\exists \left\{ B_{j}\right\}
_{j=1}^{n}\subseteq \mathcal{X}^{\mathbb{R}},\ g\in C^{1}\left( \mathbb{R}%
^{n},\mathcal{Y}\right)  \notag \\
&&\left. \qquad \qquad \qquad \qquad \qquad \qquad \text{such that }f\left(
\rho \right) =g\left( \rho \left( B_{1}\right) ,\ldots ,\rho \left(
B_{n}\right) \right) \right. 
\Big\}%
\ .  \label{Yn}
\end{eqnarray}%
Functions of this kind are said to be cylindrical. In fact, for any $n\in 
\mathbb{N}$ and $f\in \mathfrak{Y}_{n}$, 
\begin{equation}
\left[ \mathrm{d}f\left( \rho \right) \right] \left( \upsilon \right)
=\sum_{j=1}^{n}\left( \upsilon \left( B_{j}\right) -\rho \left( B_{j}\right)
\right) \partial _{x_{j}}g\left( \rho \left( B_{1}\right) ,\ldots ,\rho
\left( B_{n}\right) \right) ,\qquad \rho ,\upsilon \in E\ .  \label{df}
\end{equation}

We define the subspace of continuously differentiable $\mathcal{Y}$-valued
functions on the convex and weak$^{\ast }$-compact set $E$ by 
\begin{equation}
\mathfrak{Y}\equiv \mathfrak{Y}\left( \mathcal{Y}\right) \doteq C^{1}\left(
E;\mathcal{Y}\right) \doteq \left\{ f\in C\left( E;\mathcal{Y}\right) :%
\mathrm{d}f\in C\left( E;\mathcal{A}\left( E;\mathcal{Y}\right) \right)
\right\} \ .  \label{C1}
\end{equation}%
We endow this vector space with the norm%
\begin{equation}
\left\Vert f\right\Vert _{\mathfrak{Y}}\doteq \max_{\rho \in E}\left\Vert
f\left( \rho \right) \right\Vert _{\mathcal{Y}}+\max_{\rho \in E}\left\Vert 
\mathrm{d}f\left( \rho \right) \right\Vert _{\mathcal{A}(E;\mathcal{Y})}\
,\qquad f\in \mathfrak{Y}\ ,  \label{C1bis}
\end{equation}%
in order to obtain a Banach space, also denoted by $\mathfrak{Y}$. Note
again that we use \textquotedblleft $\max $\textquotedblright\ instead
\textquotedblleft $\sup $\textquotedblright\ in the definition of the norm,
because of the continuity of $f$ and $\mathrm{d}f$ together with the weak$%
^{\ast }$ compactness of $E$. Observe that%
\begin{equation*}
\partial _{\lambda }^{+}f_{\rho ,\upsilon }\left( \lambda \right) \doteq
\lim_{\varepsilon \rightarrow 0^{+}}\varepsilon ^{-1}\left( f_{\rho
,\upsilon }\left( \lambda +\varepsilon \right) -f_{\rho ,\upsilon }\left(
\lambda \right) \right) =\frac{1}{\left( 1-\lambda \right) }\left[ \mathrm{d}%
f\left( \left( 1-\lambda \right) \rho +\lambda \upsilon \right) \right]
\left( \upsilon \right)
\end{equation*}%
with $f_{\rho ,\upsilon }$ being the $\mathcal{Y}$-valued function on the
interval $[0,1)$ defined, at fixed states $\rho ,\upsilon \in E$, by%
\begin{equation*}
f_{\rho ,\upsilon }\left( \lambda \right) \doteq f\left( \left( 1-\lambda
\right) \rho +\lambda \upsilon \right) \ ,\qquad \lambda \in \lbrack 0,1)\ .
\end{equation*}%
Moreover, the operator $\partial _{\lambda }^{+}$ is closed on $C([a,b);%
\mathcal{Y})$ for any real parameters $a<b$, in the sense of the supremum
norm. Thus, by well-known properties of the uniform convergence of
continuous functions, the normed vector space $\mathfrak{Y}$ is complete.

Remark that the family $\{\mathfrak{Y}_{n}\}_{n\in \mathbb{N}}$ is
increasing with respect to inclusion and 
\begin{equation*}
\mathfrak{Y}_{\infty }\doteq \bigcup\limits_{n\in \mathbb{N}}\mathfrak{Y}%
_{n}\subseteq \mathfrak{Y}
\end{equation*}%
is the space of all cylindrical functions of $\mathfrak{Y}$. Additionally,
if $f\in \mathcal{A}(E;\mathcal{Y})$ then 
\begin{equation}
\mathrm{d}f\left( \rho \right) =f-f\left( \rho \right) \ ,\qquad \rho \in E\
,  \label{clear}
\end{equation}%
which means in particular that affine weak$^{\ast }$-continuous $\mathcal{Y}$%
-valued functions on $E$ are continuously differentiable, i.e., $\mathcal{A}%
(E;\mathcal{Y})\subseteq \mathfrak{Y}$.

\subsection{Explicit Construction of Poisson Brackets for Functions on the
State Space\label{Poisson Structure}}

We use the convex weak$^{\ast }$ Gateaux derivative in order to give an
explicit expression for the Poisson bracket $\{\cdot ,\cdot \}$ of Corollary %
\ref{proposition sympatoch copy(1)}. To this end, we only need the special
case $\mathcal{Y}=\mathbb{R}$ in Definition \ref{convex Frechet derivative}.
We also exploit the following result:

\begin{proposition}[Affine weak$^{\ast }$--continuous real-valued functions
over $E$]
\label{affine real valued functions}\mbox{
}\newline
For any unital $C^{\ast }$-algebra $\mathcal{X}$, $\mathcal{A}(E;\mathbb{R}%
)=\{\hat{A}:A\in \mathcal{X}^{\mathbb{R}}\}$, where $A\mapsto \hat{A}$ is
the linear isometry from $\mathcal{X}^{\mathbb{R}}$ to $\mathfrak{C}^{%
\mathbb{R}}$ defined by (\ref{fA}). In particular, by (\ref{affines0}), $%
\mathfrak{C}_{\mathcal{X}^{\mathbb{R}}}^{\mathbb{R}}\left( E\right) =\mathbb{%
R}[\mathcal{A}(E;\mathbb{R})]\subseteq \mathfrak{C}^{\mathbb{R}}\doteq C(E;%
\mathbb{R})$.
\end{proposition}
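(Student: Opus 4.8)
The plan is to prove the set equality $\mathcal{A}(E;\mathbb{R})=\{\hat{A}:A\in\mathcal{X}^{\mathbb{R}}\}$ by two inclusions, the nontrivial content being that \emph{every} affine weak$^{\ast}$-continuous real function on $E$ has the form $\hat A$ (this is essentially Kadison's function representation theorem). The inclusion $\supseteq$ is immediate: for $A\in\mathcal{X}^{\mathbb{R}}$ the map $\hat{A}:\rho\mapsto\rho(A)$ defined by (\ref{fA}) is weak$^{\ast}$-continuous by the very definition of the weak$^{\ast}$ topology, affine because $\rho\mapsto\rho(A)$ is linear, and real-valued because every state is hermitian, so $\rho(A)=\overline{\rho(A)}$. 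Hence $\hat A|_{E}\in\mathcal{A}(E;\mathbb{R})$.

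For the converse I would fix $f\in\mathcal{A}(E;\mathbb{R})$ and extend it off $E$. First define $\tilde f$ on the positive cone $\{\sigma\in\mathcal{X}_{\mathbb{R}}^{\ast}:\sigma\geq0\}$ by positive homogeneity, $\tilde f(\lambda\rho)\doteq\lambda f(\rho)$ for $\lambda\geq 0$ and $\rho\in E$ (and $\tilde f(0)\doteq 0$), using that every nonzero positive functional is uniquely $\|\sigma\|\,\omega$ with $\omega=\sigma/\|\sigma\|\in E$. Affineness of $f$ makes $\tilde f$ additive on the cone: for $\sigma=s\omega_1$ and $\tau=t\omega_2$ one writes $\sigma+\tau=(s+t)\big(\tfrac{s}{s+t}\omega_1+\tfrac{t}{s+t}\omega_2\big)$ and applies affineness. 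Using the Jordan decomposition $\sigma=\sigma_{+}-\sigma_{-}$ of a hermitian functional into positive parts with $\|\sigma\|=\|\sigma_{+}\|+\|\sigma_{-}\|$ (the real case of the decomposition (\ref{decoposition})), I set $\tilde f(\sigma)\doteq\tilde f(\sigma_{+})-\tilde f(\sigma_{-})$; additivity on the cone makes this independent of the decomposition, and a short check gives that $\tilde f$ is $\mathbb{R}$-linear on $\mathcal{X}_{\mathbb{R}}^{\ast}=(\mathcal{X}^{\mathbb{R}})^{\ast}$. Since $E$ is weak$^{\ast}$-compact, $M\doteq\max_{\rho\in E}|f(\rho)|<\infty$, and the norm additivity yields $|\tilde f(\sigma)|\leq M(\|\sigma_{+}\|+\|\sigma_{-}\|)=M\|\sigma\|$, so $\tilde f$ is a \emph{bounded} linear functional.

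The main obstacle is to upgrade norm-boundedness to weak$^{\ast}$-continuity, for only then does the identification (\ref{soigjosdfj}) furnish $A\in\mathcal{X}^{\mathbb{R}}$ with $\tilde f=\hat A$. Here I would invoke the Krein--Smulian (Banach--Dieudonn\'{e}) corollary: a linear functional on the dual of a Banach space is weak$^{\ast}$-continuous as soon as its restriction to the closed unit ball $B^{\ast}$ of $\mathcal{X}_{\mathbb{R}}^{\ast}$ is weak$^{\ast}$-continuous. To check the latter, take a net $\sigma_j\to\sigma$ weak$^{\ast}$ in $B^{\ast}$ and write $\sigma_j=a_j\rho_j-b_j\eta_j$ with $\rho_j,\eta_j\in E$ and $a_j,b_j\geq0$, $a_j+b_j=\|\sigma_j\|\leq1$. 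By weak$^{\ast}$-compactness of $E$ and compactness of $[0,1]^2$, every subnet admits a further subnet along which $a_j\to a$, $b_j\to b$, $\rho_j\to\rho_\infty$, $\eta_j\to\eta_\infty$; then $\sigma=a\rho_\infty-b\eta_\infty$, and by weak$^{\ast}$-continuity of $f$ on $E$ together with linearity and positive homogeneity of $\tilde f$, one gets $\tilde f(\sigma_j)=a_jf(\rho_j)-b_jf(\eta_j)\to af(\rho_\infty)-bf(\eta_\infty)=\tilde f(\sigma)$. Since the value $\tilde f(\sigma)$ is independent of the subnet, the whole net $\tilde f(\sigma_j)$ converges to $\tilde f(\sigma)$, establishing continuity on $B^{\ast}$.

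This produces $A\in\mathcal{X}^{\mathbb{R}}$ with $\tilde f(\sigma)=\sigma(A)$ for all hermitian $\sigma$; restricting to $E$ gives $f(\rho)=\rho(A)=\hat A(\rho)$, which closes the inclusion $\subseteq$ and hence the set equality. The final ``in particular'' assertion is then immediate: by (\ref{affines0}), $\mathfrak{C}_{\mathcal{X}^{\mathbb{R}}}^{\mathbb{R}}(E)$ consists of the restrictions to $E$ of real polynomials in the $\hat A$, $A\in\mathcal{X}^{\mathbb{R}}$, so it equals $\mathbb{R}[\{\hat A|_{E}:A\in\mathcal{X}^{\mathbb{R}}\}]=\mathbb{R}[\mathcal{A}(E;\mathbb{R})]$ by the equality just proved.
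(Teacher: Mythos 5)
Your proof is correct, but it follows a genuinely different route from the one in the paper. The paper handles the nontrivial inclusion $\mathcal{A}(E;\mathbb{R})\subseteq \{\hat{A}:A\in \mathcal{X}^{\mathbb{R}}\}$ by quoting a Choquet-theoretic approximation result (\cite[Corollary 6.3]{Takesaki-I}): every affine weak$^{\ast }$-continuous function on the compact convex set $E$ is the uniform limit of restrictions of \emph{globally} defined affine weak$^{\ast }$-continuous functions on $\mathcal{X}_{\mathbb{R}}^{\ast }$, each of which is automatically of the form $\sigma \mapsto \sigma (A_{n})+c_{n}$; the isometry (\ref{norm properties}) and $\rho (\mathfrak{1})=1$ then make $A_{n}+c_{n}\mathfrak{1}$ a Cauchy sequence whose limit is the desired $A$. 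You instead run the Kadison-function-representation argument: extend $f$ by positive homogeneity to the cone, pass to all of $\mathcal{X}_{\mathbb{R}}^{\ast }$ via the Jordan decomposition, and upgrade to weak$^{\ast }$-continuity with the Krein--Smulian corollary. Your route is self-contained modulo two standard inputs that deserve to be flagged precisely: (i) the \emph{norm-additive} Jordan decomposition $\sigma =\sigma _{+}-\sigma _{-}$ with $\Vert \sigma \Vert =\Vert \sigma _{+}\Vert +\Vert \sigma _{-}\Vert $ is a genuinely $C^{\ast }$-algebraic theorem (Grothendieck; \cite[Theorem III.4.2]{Takesaki-I}) and is strictly stronger than the Weston decomposition (\ref{decoposition}) cited elsewhere in the paper, which gives no control on $c_{1}+c_{2}$ -- and some such control is indispensable for your continuity-on-the-ball argument, where you need $a_{j}+b_{j}\leq 1$; (ii) the Krein--Smulian reduction of weak$^{\ast }$-continuity to weak$^{\ast }$-continuity on the closed unit ball. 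Given those, the remaining steps (well-definedness of $\tilde{f}$ independently of the chosen positive decomposition, absorption of the affine constant into $c\,\sigma (\mathfrak{1})$, and the subnet-of-subnets argument on the ball) are all sound. The trade-off is that the paper's proof leans on one black box from Choquet theory and is shorter, while yours exposes the mechanism and generalizes more readily to settings where a global approximation theorem is not at hand.
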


\begin{proof}
This statement is asserted without proof or references in \cite[p 339]%
{BrattelliRobinsonI}. A proof is only shortly sketched in \cite[p 161]%
{Takesaki-I} and we thus give it here for completeness and reader's
convenience. It is based on preliminary results of convex analysis together
with general properties of $C^{\ast }$-algebras: Clearly, $\{\hat{A}:A\in 
\mathcal{X}^{\mathbb{R}}\}\subseteq \mathcal{A}(E;\mathbb{R})$. Conversely,
fix $f\in \mathcal{A}(E;\mathbb{R})$. Since $E$ is a weak$^{\ast }$-compact
subset of $\mathcal{X}_{\mathbb{R}}^{\ast }$, we deduce from \cite[Corollary
6.3]{Takesaki-I} the existence of an increasing sequence $\{f_{n}\}_{n\in 
\mathbb{N}}$ of affine weak$^{\ast }$-continuous real-valued functions on $%
\mathcal{X}_{\mathbb{R}}^{\ast }$ that uniformly converges to $f$, as $%
n\rightarrow \infty $. Meanwhile, observe that any affine weak$^{\ast }$%
-continuous real-valued functions $g$ on $\mathcal{X}_{\mathbb{R}}^{\ast }$
is of the form 
\begin{equation*}
g\left( \sigma \right) =\sigma \left( A\right) +g\left( 0\right) \ ,\qquad
\sigma \in \mathcal{X}_{\mathbb{R}}^{\ast }\ ,
\end{equation*}%
for some self-adjoint element $A\in \mathcal{X}^{\mathbb{R}}$, because the
weak$^{\ast }$-continuous real-valued function $g-g(0)$ on $\mathcal{X}_{%
\mathbb{R}}^{\ast }$ is linear. We thus deduce the existence of a sequence $%
\{A_{n}\}_{n\in \mathbb{N}}\subseteq \mathcal{X}^{\mathbb{R}}$ such that 
\begin{equation*}
f_{n}\left( \sigma \right) =\sigma \left( A_{n}\right) +f_{n}\left( 0\right)
\ ,\qquad \sigma \in \mathcal{X}_{\mathbb{R}}^{\ast }\ .
\end{equation*}%
Since $\rho \left( \mathfrak{1}\right) =1$ for $\rho \in E$, by (\ref{norm
properties}), the uniform convergence of $\{f_{n}\}_{n\in \mathbb{N}}$ to $f$
on $E$ yields that $\{A_{n}+f_{n}\left( 0\right) \mathfrak{1}\}_{n\in 
\mathbb{N}}\subseteq \mathcal{X}^{\mathbb{R}}$ is a Cauchy sequence, which
thus converges to some $A\in \mathcal{X}^{\mathbb{R}}$, as $n\rightarrow
\infty $. It follows that $f=\hat{A}$.
\end{proof}

Recall that $A\neq B$ yields $\hat{A}\neq \hat{B}$ for any $A,B\in \mathcal{X%
}$. Therefore, by (\ref{norm properties}), (\ref{norm affine}), (\ref{C1})
and Proposition \ref{affine real valued functions}, for any continuously
differentiable real-valued function $f\in \mathfrak{Y}\left( \mathbb{R}%
\right) \varsubsetneq \mathfrak{C}$ there is a unique $\mathrm{D}f\in C(E;%
\mathcal{X}^{\mathbb{R}})$ such that 
\begin{equation}
\mathrm{d}f\left( \rho \right) =\widehat{\mathrm{D}f\left( \rho \right) }\
,\qquad \rho \in E\ .  \label{clear2}
\end{equation}%
For instance, one infers from (\ref{df}) that, for any $n\in \mathbb{N}$ and 
$f\in \mathfrak{Y}\left( \mathbb{R}\right) _{n}$, 
\begin{equation}
\mathrm{D}f\left( \rho \right) =\sum_{j=1}^{n}\left( A_{j}-\rho \left(
A_{j}\right) \mathfrak{1}\right) \partial _{x_{j}}g\left( \rho \left(
A_{1}\right) ,\ldots ,\rho \left( A_{n}\right) \right) ,\qquad \rho \in E\ .
\label{Ynbis}
\end{equation}%
By (\ref{norm properties}) and (\ref{norm affine}), note that 
\begin{equation}
\left\Vert \mathrm{D}f\left( \rho \right) \right\Vert _{\mathcal{X}%
}=\left\Vert \mathrm{d}f\left( \rho \right) \right\Vert _{\mathcal{A}(E;%
\mathbb{R})}\ ,\qquad \rho \in E\ .  \label{norm affine2}
\end{equation}%
Therefore, we can define a skew-symmetric biderivation on $\mathfrak{Y}(%
\mathbb{R})$ for continuously differentiable real-valued functions depending
on the state space:

\begin{definition}[Skew-symmetric biderivation on $\mathfrak{Y}(\mathbb{R})$]

\label{convex Frechet derivative copy(1)}\mbox{ }\newline
We define the map $\{\cdot ,\cdot \}:\mathfrak{Y}(\mathbb{R})\times 
\mathfrak{Y}(\mathbb{R})\rightarrow C(E;\mathbb{R})$ by%
\begin{equation*}
\left\{ f,g\right\} \left( \rho \right) \doteq \rho \left( i\left[ \mathrm{D}%
f\left( \rho \right) ,\mathrm{D}g\left( \rho \right) \right] \right) \
,\qquad f,g\in \mathfrak{Y}\left( \mathbb{R}\right) \ .
\end{equation*}
\end{definition}

\noindent This map $\{\cdot ,\cdot \}$ is clearly skew-symmetric, by (\ref%
{commutator}) and Definition \ref{convex Frechet derivative copy(1)}. This
skew-symmetric biderivation is precisely the one already constructed in
Corollary \ref{proposition sympatoch copy(1)} on polynomials:

\begin{proposition}[Poisson bracket]
\label{Poisson algebra prop}\mbox{ }\newline
Restricted to $\mathfrak{C}_{\mathcal{X}^{\mathbb{R}}}^{\mathbb{R}}\left(
E\right) $, the skew-symmetric biderivation of Definition \ref{convex
Frechet derivative copy(1)} coincides with the Poisson bracket defined by 
\begin{equation*}
\{f|_{E},g|_{E}\}\doteq \{f,g\}_{0}|_{E},\qquad f,g\in \mathfrak{C}_{%
\mathcal{X}^{\mathbb{R}}}^{\mathbb{R}}\left( \mathcal{X}_{\mathbb{R}}^{\ast
}\right) \ .
\end{equation*}%
See Corollary \ref{proposition sympatoch copy(1)}.
\end{proposition}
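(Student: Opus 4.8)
The plan is to exploit that both maps under comparison are skew-symmetric biderivations on the polynomial algebra $\mathfrak{C}_{\mathcal{X}^{\mathbb{R}}}^{\mathbb{R}}\left( E\right) =\mathbb{R}[\{\hat{A}:A\in \mathcal{X}^{\mathbb{R}}\}]$, and that a biderivation is completely determined by its values on a generating set. Here the generators are the degree-one elements $\hat{A}$, $A\in \mathcal{X}^{\mathbb{R}}$, so it suffices to verify that the two brackets agree on pairs $(\hat{A},\hat{B})$ and then to propagate the equality by bilinearity and the Leibniz rule. I first note that these polynomials do lie in the domain $\mathfrak{Y}(\mathbb{R})$ of Definition \ref{convex Frechet derivative copy(1)}: by Proposition \ref{affine real valued functions} together with (\ref{clear}) the affine generators $\hat{A}$ are continuously differentiable, and products of differentiable functions are differentiable, so $\mathfrak{C}_{\mathcal{X}^{\mathbb{R}}}^{\mathbb{R}}\left( E\right) \subseteq \mathfrak{Y}(\mathbb{R})$.

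Next I would record that both sides really are biderivations. The right-hand side $\{\cdot ,\cdot \}_{0}|_{E}$ is a biderivation by Corollary \ref{proposition sympatoch copy(1)}: it is the image of the Poisson bracket $\{\cdot ,\cdot \}_{0}$ under passage to the quotient by the Poisson ideal $\mathfrak{I}_{E}$, and the quotient of a Poisson algebra by a Poisson ideal is again a Poisson algebra. For the left-hand side, the map $\{\cdot ,\cdot \}$ of Definition \ref{convex Frechet derivative copy(1)} is skew-symmetric by construction; its Leibniz property follows from the product rule for the convex weak$^{\ast }$ Gateaux derivative, namely $\mathrm{D}(fg)(\rho )=f(\rho )\mathrm{D}g(\rho )+g(\rho )\mathrm{D}f(\rho )$, which one obtains by applying the ordinary product rule to the defining limit in Definition \ref{convex Frechet derivative} and using the identification (\ref{clear2}) and injectivity of $A\mapsto \hat{A}$. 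Since the scalars $f(\rho ),g(\rho )$ factor out of both the commutator and the evaluation $\rho (\cdot )$, this yields $\{fg,h\}(\rho )=f(\rho )\{g,h\}(\rho )+g(\rho )\{f,h\}(\rho )$.

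The key computation is on generators. For $f=\hat{A}$ with $A\in \mathcal{X}^{\mathbb{R}}$, equation (\ref{clear}) gives $\mathrm{d}\hat{A}(\rho )=\hat{A}-\hat{A}(\rho )=\widehat{A-\rho (A)\mathfrak{1}}$, whence $\mathrm{D}\hat{A}(\rho )=A-\rho (A)\mathfrak{1}$ by (\ref{clear2}); on the other hand $\mathrm{d}^{G}\hat{A}(\sigma )=A$ by (\ref{equation super trivial}). The two derivatives thus differ only by the central element $\rho (A)\mathfrak{1}$, so that
\[
i\left[ \mathrm{D}\hat{A}(\rho ),\mathrm{D}\hat{B}(\rho )\right] =i\left[ A-\rho (A)\mathfrak{1},B-\rho (B)\mathfrak{1}\right] =i[A,B]=i\left[ \mathrm{d}^{G}\hat{A}(\rho ),\mathrm{d}^{G}\hat{B}(\rho )\right] ,
\]
because $[\mathfrak{1},\cdot ]=0$. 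Evaluating at any $\rho \in E$ gives $\{\hat{A},\hat{B}\}(\rho )=\rho (i[A,B])=\{\hat{A},\hat{B}\}_{0}(\rho )=\{\hat{A}|_{E},\hat{B}|_{E}\}(\rho )$, so the brackets agree on generators, and the biderivation property extends this to all of $\mathfrak{C}_{\mathcal{X}^{\mathbb{R}}}^{\mathbb{R}}\left( E\right) $.

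The main (and essentially only) subtlety is the one just displayed: the convex weak$^{\ast }$ Gateaux derivative $\mathrm{D}f(\rho )$ and the ordinary Gateaux derivative $\mathrm{d}^{G}f(\rho )$ are genuinely \emph{different} elements of $\mathcal{X}^{\mathbb{R}}$, differing by a $\rho $-dependent multiple of $\mathfrak{1}$ — a discrepancy reflecting that the convex derivative is taken along segments $\upsilon -\rho $ constrained to $E$, whereas $\mathrm{d}^{G}$ is an unconstrained directional derivative on $\mathcal{X}_{\mathbb{R}}^{\ast }$. The whole argument hinges on the fact that this difference is central and hence invisible to the commutator entering the Poisson bracket. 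I expect no further obstacle: once the generator computation is in place, bilinearity and the Leibniz rule (verified above for both sides) complete the proof.
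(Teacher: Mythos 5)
Your proposal is correct and follows essentially the same route as the paper: compute both brackets on the generators $\hat{A}$, $\hat{B}$, observe that $\mathrm{D}\hat{A}(\rho)=A-\rho(A)\mathfrak{1}$ and $\mathrm{d}^{G}\hat{A}(\sigma)=A$ differ only by a central multiple of $\mathfrak{1}$ which the commutator kills, and then extend by bilinearity and Leibniz's rule. The extra details you supply (that the polynomials lie in $\mathfrak{Y}(\mathbb{R})$ and that both sides are biderivations) are points the paper leaves implicit, and your identification of the central-discrepancy between the two derivatives as the crux matches the paper's argument exactly.
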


\begin{proof}
By Equation (\ref{Ynbis}),%
\begin{equation}
\mathrm{D}\hat{A}\left( \rho \right) =A-\rho \left( A\right) \mathfrak{1}\
,\qquad A\in \mathcal{X}^{\mathbb{R}}\ ,  \label{equation trivial}
\end{equation}%
and therefore, 
\begin{equation}
\{\hat{A},\hat{B}\}\left( \rho \right) =\rho \left( i\left[ A,B\right]
\right) \ ,\qquad \rho \in E,\ A,B\in \mathcal{X}^{\mathbb{R}}\ .
\label{poisson bracket}
\end{equation}%
Hence, by Definition \ref{convex Frechet derivative copy(2)} and Equation (%
\ref{equation super trivial}), 
\begin{equation}
\{\hat{A}|_{E},\hat{B}|_{E}\}=\{\hat{A},\hat{B}\}_{0}|_{E}\ ,\qquad A,B\in 
\mathcal{X}^{\mathbb{R}}\ .  \label{biderivation00}
\end{equation}%
Linearity and Leibniz's rule then lead to the assertion.
\end{proof}

The Poisson bracket can easily be extended to a complex Poisson bracket,
i.e., a Poisson bracket for complex-valued polynomials: Since the sum of two
affine functions stays affine, by Proposition \ref{affine real valued
functions}, observe that 
\begin{equation*}
\mathcal{A}(E;\mathbb{C})=\{\hat{A}:A\in \mathcal{X}\}\ .
\end{equation*}%
Moreover, by (\ref{norm properties}), (\ref{norm affine}) and (\ref{C1}),
for any continuously differentiable complex-valued function $f\in \mathfrak{Y%
}\left( \mathbb{C}\right) \varsubsetneq \mathfrak{C}$ there is a unique $%
\mathrm{D}f\in C(E;\mathcal{X})$ satisfying (\ref{clear2}). Then, the
Poisson bracket $\{\cdot ,\cdot \}$ of Definition \ref{convex Frechet
derivative copy(1)} can be extended to all $f,g\in \mathfrak{Y}\left( 
\mathbb{C}\right) $, as a skew-symmetric biderivation. In fact, since 
\begin{equation*}
\mathrm{D}\left( f+g\right) =\mathrm{D}f+\mathrm{D}g\quad \text{and}\quad 
\mathrm{D}\left( fg\right) =f\mathrm{D}g+g\mathrm{D}f\ ,
\end{equation*}%
this skew-symmetric biderivation satisfies 
\begin{equation}
\left\{ f,g\right\} =\left\{ \mathrm{Re}\left\{ f\right\} ,\mathrm{Re}%
\left\{ g\right\} \right\} -\left\{ \mathrm{Im}\left\{ f\right\} ,\mathrm{Im}%
\left\{ g\right\} \right\} +i\left( \left\{ \mathrm{Im}\left\{ f\right\} ,%
\mathrm{Re}\left\{ g\right\} \right\} +\left\{ \mathrm{Re}\left\{ f\right\} ,%
\mathrm{Im}\left\{ g\right\} \right\} \right)  \label{extension}
\end{equation}%
for all $f,g\in \mathfrak{Y}\left( \mathbb{C}\right) $. Note here that $%
\mathrm{Re}\left\{ f\right\} ,\mathrm{Im}\left\{ f\right\} \in \mathfrak{Y}%
\left( \mathbb{R}\right) $ for all $f\in \mathfrak{Y}\left( \mathbb{C}%
\right) $. Restricted to $\mathfrak{C}_{\mathcal{X}}\equiv \mathfrak{C}_{%
\mathcal{X}}\left( E\right) $, it is again a (complex) Poisson bracket,
since it satisfies the Jacobi identity, by Proposition \ref{Poisson algebra
prop} together with tedious computations.

\begin{remark}[Commutative case]
\mbox{ }\newline
If $\mathcal{X}$ is already a commutative unital $C^{\ast }$-algebra then
the Poisson bracket is of course trivial, being the zero biderivation, and
any classical dynamics generated by this Poisson bracket corresponds to the
identity map. This is reminiscent of the KMS dynamics, which becomes trivial
when the corresponding von Neumann algebra is commutative. (In this case,
the modular operator is the identity operator.)
\end{remark}

\subsection{Poissonian Symmetric Derivations\label{Bounded Derivations}}

A \emph{derivation} $\mathfrak{d}$ (on $\mathfrak{C}$) is a linear map from
a dense $\ast $-subalgebra $\mathrm{dom}(\mathfrak{d})$ (i.e., its domain)
of $\mathfrak{C}$ to\ the unital commutative $C^{\ast }$-algebra $\mathfrak{C%
}$ (\ref{metaciagre set 2}) of complex-valued weak$^{\ast }$-continuous
functions on $E$ such that 
\begin{equation}
\mathfrak{d}\left( fg\right) =\mathfrak{d}\left( f\right) g+f\mathfrak{d}%
\left( g\right) \ ,\qquad f,g\in \mathrm{dom}\left( \mathfrak{d}\right) \ .
\label{derivation0}
\end{equation}%
It is \emph{symmetric}, or a $\ast $-derivation, when%
\begin{equation}
\mathfrak{d}(\bar{f})=\overline{\mathfrak{d}\left( f\right) }\ ,\qquad f\in 
\mathrm{dom}\left( \mathfrak{d}\right) \ .  \label{derivation0bis}
\end{equation}%
For an exhaustive description of the theory of derivations, see \cite%
{BrattelliRobinsonI,Bratelli-derivation,Tomiyama} and references therein.

An important class of symmetric derivations can be defined by using the
Poisson bracket $\{\cdot ,\cdot \}$ of Definition \ref{convex Frechet
derivative copy(1)}:

\begin{definition}[Poissonian symmetric derivations]
\label{Poissonian Symmetric Derivations}\mbox{ }\newline
The Poissonian symmetric derivation associated with any continuously
differentiable real-valued function $h\in \mathfrak{Y}\left( \mathbb{R}%
\right) $ is the linear operator defined\ on its dense domain $\mathrm{dom}(%
\mathfrak{d}^{h})=\mathfrak{C}_{\mathcal{X}}\subseteq \mathfrak{C}$ by 
\begin{equation*}
\mathfrak{d}^{h}\left( f\right) \doteq \left\{ h,f\right\} \ ,\qquad f\in 
\mathfrak{C}_{\mathcal{X}}\ .
\end{equation*}
\end{definition}

\noindent Recall at this point that $\mathfrak{C}_{\mathcal{X}}\equiv 
\mathfrak{C}_{\mathcal{X}}(E)\subseteq \mathfrak{C}$ is the dense $\ast $%
-subalgebra of all polynomials in the elements of $\{\hat{A}:A\in \mathcal{X}%
\}$, with complex coefficients. See (\ref{def frac Cb}). Because of\
Definition \ref{convex Frechet derivative copy(1)} and Equations (\ref{C1})-(%
\ref{C1bis}), (\ref{clear2}), (\ref{norm affine2}) and (\ref{extension}), $%
\mathfrak{d}^{h}$ is a symmetric derivation satisfying 
\begin{equation*}
\left\Vert \mathfrak{d}^{h}\left( f\right) \right\Vert _{\mathfrak{C}}\leq
4\left\Vert h\right\Vert _{\mathfrak{Y}\left( \mathbb{C}\right) }\left\Vert
f\right\Vert _{\mathfrak{Y}\left( \mathbb{C}\right) }\ ,\qquad f\in 
\mathfrak{C}_{\mathcal{X}}\subseteq \mathfrak{Y}\left( \mathbb{C}\right) \ .
\end{equation*}%
In particular, $\mathfrak{d}^{h}$ could be extended as a bounded symmetric
derivation $\mathfrak{\tilde{d}}^{h}$ from $\mathfrak{Y}\left( \mathbb{C}%
\right) $ to $\mathfrak{C}$, i.e., as an element of $\mathcal{B}\left( 
\mathfrak{Y}\left( \mathbb{C}\right) ,\mathfrak{C}\right) $.

At first sight, the extension $\mathfrak{\tilde{d}}^{h}$ of $\mathfrak{d}%
^{h} $ to all continuously differentiable complex-valued functions of $%
\mathfrak{Y}\left( \mathbb{C}\right) $ seems to be natural, like for the
usual differentiation on functions of the compact set $[0,1]$. On second
thoughts, $\mathfrak{\tilde{d}}^{h}$ may \emph{not} be closable, even if
this property would be true for $\mathfrak{d}^{h}$. Of course, if $\mathfrak{%
\tilde{d}}^{h} $ is closable then $\mathfrak{d}^{h}$ is also closable.

For a large class of symmetric derivations, the closableness is proven from
dissipativity \cite[Definition 1.4.6, Proposition 1.4.7]{Bratelli-derivation}%
. This property is in turn deduced from a theorem proven by Kishimoto \cite[%
Theorem 1.4.9]{Bratelli-derivation}, which uses the assumption that the
square root of each positive element of the domain of the derivation also
belongs to the same domain. We cannot expect this last property to be
satisfied for symmetric derivations like $\mathfrak{d}^{h}$ or $\mathfrak{%
\tilde{d}}^{h}$.

The closableness of unbounded symmetric derivations of $C^{\ast }$-algebras
is, in general, a non-trivial issue, even in the commutative case like $%
\mathfrak{C}$. This property is not generally true: there exist norm-densely
defined derivations of $C^{\ast }$-algebras that are \emph{not} closable 
\cite{bra-robin-1975}. For instance, in \cite[p. 306]{BrattelliRobinsonI},
it is even claimed that \textquotedblleft \textit{Herman has constructed an
extension of the usual differentiation on }$C(0,1)$\textit{\ which is a
nonclosable derivation of }$C(0,1)$.\textquotedblright\ The general
characterization of closed symmetric derivations depends heavily on the
(Hausdorff) dimension of the locally compact set, here the weak$^{\ast }$%
-compact set $E$. Around 1990, a characterization of all closed symmetric
derivations were obtained by using spaces of functions acting on a compact
subset of a one-dimensional space. However, \textquotedblleft \textit{for
more than 2 dimensions only sporadic results are known}\textquotedblright ,
as quoted in \cite[Section 1.6.4, p. 27]{Bratelli-derivation}.\ See, e.g., 
\cite[Section 1.6.4]{Bratelli-derivation}, \cite{Tomiyama}, \cite%
{Kuroselastpaper,Kurose}, and later \cite[p. 306]{BrattelliRobinsonI}.

In our approach, the closableness of unbounded symmetric derivations like $%
\mathfrak{d}^{h}$ or $\mathfrak{\tilde{d}}^{h}$ is a necessary property to
make sense of a classical dynamics, in its Hamiltonian formulation, via $%
C_{0}$-groups. In Section \ref{Closed derivation}, we show that the
symmetric derivation $\mathfrak{d}^{h}$ is closable, at least for all
functions $h$ in a dense subset of $\mathfrak{C}$, including $\mathfrak{C}_{%
\mathcal{X}}$. This is performed via a self-consistency problem together
with the $C_{0}$-semigroup theory \cite{EngelNagel}. Our results are
non-trivial since $E$ is \emph{not} a subset of a finite-dimensional space
when $\mathcal{X}$ has infinite dimension. See proof of Theorem \ref{theorem
density2}.

\section{Hamiltonian Flows for States from Self-Consistent Quantum Dynamics 
\label{Closed derivation}}

Our approach to the construction of Hamiltonian flows and, in particular,
closed derivations of a commutative $C^{\ast }$-algebra via self-consistency
problems is \emph{non-conventional}. However, it shares some similarity with
the following simple example in the finite-dimensional case:\ Take $\mathcal{%
A}$ as being the commutative unital $C^{\ast }$-algebra of all continuous,
bounded and complex-valued functions on $\mathbb{R}^{2N}$, $N\in \mathbb{N}$%
, and fix a smooth and compactly supported function $h:\mathbb{R}%
^{2N}\rightarrow \mathbb{R}$. From the Picard-Lindel\"{o}f iteration
argument, the (Hamiltonian) vector field $J\nabla h$ (where $\nabla h$ is
the gradient of $h$ and $J$ is the $2N$-dimensional symplectic matrix)
generates a global smooth flow $\phi _{t}:\mathbb{R}^{2N}\rightarrow \mathbb{%
R}^{2N}$, $t\in \mathbb{R}$. Let the one-parameter group $\{V_{t}\}_{t\in 
\mathbb{R}}$ of $\ast $-automorphisms of $\mathcal{A}$\ be defined by%
\begin{equation*}
\lbrack V_{t}(f)](x)=f\circ \phi _{-t}(x)\text{ },\text{\qquad }x\in \mathbb{%
R}^{2N},\text{\ }t\in \mathbb{R}\text{\ }.
\end{equation*}%
Because of the compactness of the support of $h$, this one-parameter group
is strongly continuous and the corresponding generator is a closed
derivation in $\mathcal{A}$, denoted by $\delta _{h}$. Moreover, it is
straightforward to check that, in the dense set of smooth functions, this
derivation acts as $\delta _{h}=\{h,\cdot \}$, where $\{\cdot ,\cdot \}$ is
the canonical Poisson bracket%
\begin{equation*}
\{f,g\}(x)\doteq \sum_{k,l=1}^{2N}J_{ij}[\partial _{x_{i}}f(x)][\partial
_{x_{j}}g(x)]\text{ },\text{\qquad }x\in \mathbb{R}^{2N}\text{ },
\end{equation*}%
for smooth functions $f,g$ on $\mathbb{R}^{2N}$. The analogy of the results
presented in this section with this example is as follows: in our setting,
the space $E$ of all states on $\mathcal{X}$ replaces $\mathbb{R}^{2N}$ and
the analogue of the global Hamiltonian flow $\{\phi _{t}\}_{t\in \mathbb{R}}$
is a one-parameter family of weak$^{\ast }$ automorphisms of $E$ (or
self-homeomorphisms of $E$). Note that B\'{o}na uses such a construction
only on symplectic leaves of the corresponding Poisson manifold and
\textquotedblleft glues\textquotedblright\ them together in order to
construct the global flow \cite[Section 2.1-d]{Bono2000}. However, in strong
contrast to this simple example, in our case, it is not clear at all how to
construct the corresponding family of automorphisms from Hamiltonian vector
fields. Instead, we construct it as the solution to a \emph{self-consistency}
problem. Similar to the above example, the closed derivations we obtain for
the classical algebra $\mathfrak{C}$ are closed extensions of densely
defined derivations of the form $f\mapsto \{h,f\}$, $f,h\in \mathfrak{C}_{%
\mathcal{X}^{\mathbb{R}}}$, where $\mathfrak{C}_{\mathcal{X}^{\mathbb{R}%
}}\subseteq \mathfrak{C}$ is the dense subalgebra of polynomials in the
elements of $\mathcal{X}^{\mathbb{R}}$, defined by (\ref{def frac Cb}) for $%
\mathcal{B=X}^{\mathbb{R}}$.

All this construction is performed in this section, supplemented with
technical assertions proven in\ Section \ref{Well-posedness sect copy(1)}.
We start by somehow tedious, albeit necessary, definitions and notation in
Sections \ref{Preliminary Definitions}-\ref{Self-Consistency Equations}, the
self-consistency equations being asserted in Theorem \ref{theorem
sdfkjsdklfjsdklfj copy(3)} and exploited afterwards.

\subsection{Preliminary Definitions\label{Preliminary Definitions}}

Let $C_{b}\left( \mathbb{R};\mathfrak{Y}\left( \mathbb{R}\right) \right) $
be the Banach space of bounded continuous maps from $\mathbb{R}$ to $%
\mathfrak{Y}\left( \mathbb{R}\right) $ with the norm 
\begin{equation}
\left\Vert h\right\Vert _{C_{b}\left( \mathbb{R};\mathfrak{Y}\left( \mathbb{R%
}\right) \right) }\doteq \sup_{t\in \mathbb{R}}\left\Vert h\left( t\right)
\right\Vert _{\mathfrak{Y}\left( \mathbb{R}\right) }\ ,\qquad h\in
C_{b}\left( \mathbb{R};\mathfrak{Y}\left( \mathbb{R}\right) \right) \ .
\label{norm C}
\end{equation}%
We identify $\mathfrak{Y}\left( \mathbb{R}\right) $ with the subalgebra of
constant functions of $C_{b}\left( \mathbb{R};\mathfrak{Y}\left( \mathbb{R}%
\right) \right) $, i.e., 
\begin{equation}
\mathfrak{Y}\left( \mathbb{R}\right) \subseteq C_{b}\left( \mathbb{R};%
\mathfrak{Y}\left( \mathbb{R}\right) \right) \ .  \label{identifyidentify}
\end{equation}

Let $C\left( E;E\right) $ be the set of weak$^{\ast }$-continuous functions
from the state space $E$ to itself endowed with the topology of uniform
convergence. In other words, a net $(f_{j})_{j\in J}\subseteq C\left(
E;E\right) $ converges to $f\in C\left( E;E\right) $ whenever 
\begin{equation}
\lim_{j\in J}\max_{\rho \in E}\left\vert f_{j}(\rho )(A)-f(\rho
)(A)\right\vert =0\ ,\qquad \text{for all }A\in \mathcal{X}\ .
\label{uniform convergence weak*}
\end{equation}

We denote by $\mathrm{Aut}\left( E\right) \varsubsetneq C\left( E;E\right) $
the subspace of all automorphisms of $E$, i.e., elements of $C\left(
E;E\right) $ with weak$^{\ast }$-continuous inverse. Equivalently, $\mathrm{%
Aut}\left( E\right) $ is the set of all bijective maps in $C\left(
E;E\right) $, because $E$ is a compact Hausdorff space. Recall that, here,
the concept of\ an automorphism depends on the structure of the
corresponding domain: elements of $\mathrm{Aut}\left( E\right) $ are
self-homeomorphisms while a automorphism of a $C^{\ast }$-algebra is a $\ast 
$-automorphism of this $C^{\ast }$-algebra.

Any continuous function $h\in C_{b}\left( \mathbb{R};\mathfrak{Y}\left( 
\mathbb{R}\right) \right) $ defines a non-autonomous, state-dependent, \emph{%
quantum} dynamics on the $C^{\ast }$-algebra $\mathcal{X}$ via the family $\{%
\mathrm{D}h(t)\}_{t\in \mathbb{R}}\subseteq C(E;\mathcal{X}^{\mathbb{R}})$,
satisfying (\ref{clear2}) for each $t\in \mathbb{R}$. This quantum dynamics
can in turn be used to define a (classical) dynamics on the commutative $%
C^{\ast }$-algebra $\mathfrak{C}$ of all continous complex-valued functions
on $E$. This latter dynamics turns out to be the flow generated, as is usual
in classical mechanics, by the Poisson bracket $\{h(t),\cdot \}$ of
Definition \ref{convex Frechet derivative copy(1)} (see also Corollary \ref%
{proposition sympatoch copy(1)} and Proposition \ref{Poisson algebra prop}).
We start with the state-dependent quantum dynamics on the primordial $%
C^{\ast }$-algebra $\mathcal{X}$, in the next subsection.

\subsection{Dynamics on the Primordial $C^{\ast }$-Algebra\label{Section QM}}

Fix $h\in C_{b}\left( \mathbb{R};\mathfrak{Y}\left( \mathbb{R}\right)
\right) $, which plays the role of a time-dependent family of classical
Hamiltonians. Then, for each state $\rho \in E$ and time $t\in {\mathbb{R}}$%
, we define the symmetric bounded derivation $X_{t}^{\rho }\in \mathcal{B}(%
\mathcal{X})$ by%
\begin{equation}
X_{t}^{\rho }\left( A\right) \doteq i\left[ \mathrm{D}h\left( t;\rho \right)
,A\right] \doteq i\left( \mathrm{D}h\left( t;\rho \right) A-A\mathrm{D}%
h\left( t;\rho \right) \right) \ ,\qquad A\in \mathcal{X}\ ,
\label{flow baby0}
\end{equation}%
where $[\cdot ,\cdot ]$ is the usual commutator defined by (\ref{commutator}%
) and 
\begin{equation*}
\mathrm{D}h\left( t;\rho \right) \doteq \left[ \mathrm{D}h\left( t\right) %
\right] \left( \rho \right) \in \mathcal{X}^{\mathbb{R}}\ ,\qquad \rho \in
E,\ t\in \mathbb{R}\ .
\end{equation*}%
By Equations (\ref{C1bis}) and (\ref{norm affine2}), note that%
\begin{equation}
\sup_{\rho \in E}\Vert X_{t}^{\rho }\Vert _{\mathcal{B}(\mathcal{X})}\leq
2\left\Vert h\right\Vert _{C_{b}\left( \mathbb{R};\mathfrak{Y}\left( \mathbb{%
R}\right) \right) }  \label{bounded baby}
\end{equation}%
and, for any state-valued continuous function $\xi \in C\left( \mathbb{R}%
;E\right) $ and times $s,t\in \mathbb{R}$,%
\begin{equation*}
\Vert X_{t}^{\xi \left( t\right) }-X_{s}^{\xi \left( s\right) }\Vert _{%
\mathcal{B}(\mathcal{X})}\leq 2\left\Vert h\left( t\right) -h\left( s\right)
\right\Vert _{\mathfrak{Y}\left( \mathbb{R}\right) }+2\left\Vert \mathrm{D}%
h\left( s;\xi \left( t\right) \right) -\mathrm{D}h\left( s;\xi \left(
s\right) \right) \right\Vert _{\mathcal{X}}\ ,
\end{equation*}%
from (\ref{norm properties}) and (\ref{clear2}).

Since $\mathrm{D}f\in C(E;\mathcal{X}^{\mathbb{R}})$ when $f\in \mathfrak{Y}%
\left( \mathbb{R}\right) $, for any function $\xi \in C\left( \mathbb{R}%
;E\right) $, $(X_{t}^{\xi \left( t\right) })_{t\in \mathbb{R}}$ is a
norm-continuous family of bounded operators. Therefore, for any continuous
functions $h\in C_{b}\left( \mathbb{R};\mathfrak{Y}\left( \mathbb{R}\right)
\right) $ and $\xi \in C\left( \mathbb{R};E\right) $, a norm-continuous
two-para%
\-%
meter family $(T_{t,s}^{\xi })_{s,t\in \mathbb{R}}$ of $\ast $-automorphisms
of $\mathcal{X}$ is uniquely defined in $\mathcal{B}(\mathcal{X})$ by the
non-auto%
\-%
nomous evolution equation%
\begin{equation}
\forall s,t\in {\mathbb{R}}:\qquad \partial _{t}T_{t,s}^{\xi }=T_{t,s}^{\xi
}\circ X_{t}^{\xi \left( t\right) }\ ,\qquad T_{s,s}^{\xi }=\mathbf{1}_{%
\mathcal{X}}\ ,  \label{flow baby1}
\end{equation}%
or, equivalently, by 
\begin{equation}
\forall s,t\in {\mathbb{R}}:\qquad \partial _{s}T_{t,s}^{\xi }=-X_{s}^{\xi
\left( s\right) }\circ T_{t,s}^{\xi }\ ,\qquad T_{t,t}^{\xi }=\mathbf{1}_{%
\mathcal{X}}\ .  \label{flow baby1bis}
\end{equation}%
Note that $(T_{t,s}^{\xi })_{s,t\in \mathbb{R}}$ clearly satisfies the
(reverse) cocycle property%
\begin{equation}
\forall s,r,t\in \mathbb{R}:\qquad T_{t,s}^{\xi }=T_{r,s}^{\xi }\circ
T_{t,r}^{\xi }\ .  \label{babyreverse}
\end{equation}%
The existence and uniqueness of a solution to these evolution equations
follow from the usual theory of non-auto%
\-%
nomous evolution equations for bounded norm-continuous generators, see,
e.g., \cite{Pazy}. In this case, it is explicitly given by Dyson series. The
fact that it defines a family of $\ast $-automorphisms of $\mathcal{X}$
results from the identity 
\begin{equation*}
\partial _{t}\left\{ T_{t,s}^{\xi }T_{s,t}^{\xi }\right\} =0\ ,\qquad s,t\in 
{\mathbb{R}}\ ,
\end{equation*}%
and the fact that the corresponding generators are symmetric derivations.

\subsection{Self-Consistency Equations\label{Self-Consistency Equations}}

Let $C\left( \mathbb{R};C\left( E;E\right) \right) $ be the set of
continuous functions from $\mathbb{R}$ to $C\left( E;E\right) $. Any $%
\mathbf{\xi }\in C\left( \mathbb{R};C\left( E;E\right) \right) $ defines a
function $\mathbf{\xi }\left( \cdot ;\rho \right) \in C\left( \mathbb{R}%
;E\right) $ by%
\begin{equation}
\mathbf{\xi }\left( t;\rho \right) \doteq \left[ \mathbf{\xi }\left(
t\right) \right] \left( \rho \right) \ ,\qquad \rho \in E,\ t\in \mathbb{R}\
.  \label{notation}
\end{equation}%
Then, for any continuous functions $h\in C_{b}\left( \mathbb{R};\mathfrak{Y}%
\left( \mathbb{R}\right) \right) $, $\mathbf{\xi }\in C\left( \mathbb{R}%
;C\left( E;E\right) \right) $ and state $\rho \in E$, the norm-continuous
two-para%
\-%
meter family $(T_{t,s}^{\mathbf{\xi }\left( \cdot ;\rho \right) })_{s,t\in 
\mathbb{R}}$ of $\ast $-automorphisms of $\mathcal{X}$ defined above
(Section \ref{Section QM}) is used to define a family $(\phi _{t,s}^{(h,%
\mathbf{\xi })})_{s,t\in \mathbb{R}}$ of maps from the state space $E$ to
itself, as follows: 
\begin{equation}
\phi _{t,s}^{(h,\mathbf{\xi })}\left( \rho \right) \doteq \rho \circ
T_{t,s}^{\mathbf{\xi }\left( \cdot ;\rho \right) }\ ,\qquad \rho \in E,\
s,t\in \mathbb{R}\ .  \label{phits}
\end{equation}%
By the reverse cocycle property (\ref{babyreverse}) for $(T_{t,s}^{\xi
})_{s,t\in \mathbb{R}}$, $(\phi _{t,s}^{(h,\mathbf{\xi })})_{s,t\in \mathbb{R%
}}$ has the (non-reverse) cocycle property, i.e.,%
\begin{equation}
\phi _{t,s}^{(h,\mathbf{\xi })}=\phi _{t,r}^{(h,\mathbf{\xi })}\circ \phi
_{r,s}^{(h,\mathbf{\xi })}\ ,\qquad s,t,r\in \mathbb{R}\ .
\label{babyreverse2}
\end{equation}%
By Lemma \ref{Solution selfbaby copy(1)}, 
\begin{equation*}
(\phi _{t,s}^{(h,\mathbf{\xi })}\left( \rho \right) )_{s,t\in \mathbb{R}}\in
C\left( \mathbb{R}^{2};E\right) \ ,\qquad \rho \in E\ .
\end{equation*}%
As a consequence, the family $(\phi _{t,s}^{(h,\mathbf{\xi })})_{s,t\in 
\mathbb{R}}$ is a continuous flow on the state space $E$. Since $\{\mathrm{D}%
h(t)\}_{t\in \mathbb{R}}\subseteq C(E;\mathcal{X}^{\mathbb{R}})$, by Lemma %
\ref{Solution selfbaby copy(1)} and Lebesgue's dominated convergence
theorem, note additionally that $\{\phi _{t,s}^{(h,\mathbf{\xi })}\}_{s,t\in 
\mathbb{R}}$ is a family of automorphisms (self-homeomorphisms) of $E$, i.e. 
\begin{equation*}
\{\phi _{t,s}^{(h,\mathbf{\xi })}\}_{s,t\in \mathbb{R}}\subseteq \mathrm{Aut}%
\left( E\right) .
\end{equation*}

To understand the relevance of this flow with respect to classical dynamics,
it is enlightening to consider the autonomous case for which $h$ is the
constant function $\hat{H}$ for some $H\in \mathcal{X}^{\mathbb{R}}$. See (%
\ref{fA}) for the definition of the function $\hat{H}$, the Gelfand
transform of $H$. In this case, choose a state $\rho \in E$ and observe from
(\ref{flow baby0}), (\ref{flow baby1}) and (\ref{phits}), together with
Definition \ref{convex Frechet derivative copy(1)} and Equation (\ref%
{equation trivial}), that%
\begin{equation*}
\partial _{t}\hat{A}_{t,s}=\{h,\hat{A}_{t,s}\}\qquad \text{with\quad }\hat{A}%
_{t,s}\doteq \hat{A}\circ \phi _{t,s}^{(\hat{H})}\in \mathfrak{C}
\end{equation*}%
for any $A\in \mathcal{X}$ and $s,t\in \mathbb{R}$, noting that the flow $%
\phi _{t,s}^{(\hat{H})}\equiv \phi _{t,s}^{(h,\mathbf{\xi })}$, $s,t\in 
\mathbb{R}$, does not depend on $\mathbf{\xi }\in C\left( \mathbb{R};C\left(
E;E\right) \right) $. Since $\phi _{t,s}^{(\hat{H})}=\phi _{t-s,0}^{(\hat{H}%
)}$ for any $s,t\in \mathbb{R}$, the flow defined by $(\phi _{t,s}^{(\hat{H}%
)})_{s,t\in \mathbb{R}}$ is associated with an autonomous classical
dynamics, in the usual sense, on elementary elements $\{\hat{A}:A\in 
\mathcal{X}\}$.

In the general case of (non-autonomous) classical dynamics generated by
time-dependent Poissonian symmetric derivations of the form $\{h(t),\cdot \}$%
, $t\in \mathbb{R}$, a convenient (and non-trivial) choice of the function $%
\mathbf{\xi }$ in Equation (\ref{phits}) has to be made. We determine it via
a \emph{self-consistency equation}. This is our first main result:

\begin{theorem}[Self-consistency equations]
\label{theorem sdfkjsdklfjsdklfj copy(3)}\mbox{ }\newline
\emph{(a)} Let $\mathcal{X}$ be a unital $C^{\ast }$-algebra and $\mathfrak{B%
}$ a finite-dimensional real subspace of $\mathcal{X}^{\mathbb{R}}$%
.\smallskip \newline
\emph{(b)} Take $h\in C_{b}\left( \mathbb{R};\mathfrak{Y}\left( \mathbb{R}%
\right) \right) $ such that, for some constant $D_{0}\in \mathbb{R}^{+}$,%
\begin{equation*}
\sup_{t\in {\mathbb{R}}}\left\Vert \mathrm{D}h(t;\rho )-\mathrm{D}h(t;\tilde{%
\rho})\right\Vert _{\mathcal{X}}\leq D_{0}\sup_{B\in \mathfrak{B},\left\Vert
B\right\Vert =1}\left\vert \left( \rho -\tilde{\rho}\right) \left( B\right)
\right\vert \ ,\qquad \rho ,\tilde{\rho}\in E\ .
\end{equation*}%
Under Conditions (a)-(b), there is a unique function $\mathbf{\varpi }%
^{h}\in C\left( \mathbb{R}^{2};\mathrm{Aut}\left( E\right) \right) $ such
that 
\begin{equation}
\mathbf{\varpi }^{h}\left( s,t\right) =\phi _{t,s}^{(h,\mathbf{\varpi }%
^{h}\left( \alpha ,\cdot \right) )}|_{\alpha =s}\ ,\qquad s,t\in {\mathbb{R}}%
\ ,  \label{self-consitency}
\end{equation}%
where we recall that $\mathrm{Aut}\left( E\right) \varsubsetneq C\left(
E;E\right) $ is the subspace of all automorphisms\ (or self-homeomorphisms)
of $E$.
\end{theorem}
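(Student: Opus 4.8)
The plan is to fix the initial time $s$ and reduce the functional fixed-point equation \eqref{self-consitency} to a genuinely finite-dimensional self-consistency problem, exploiting that Condition (b) is much stronger than a mere Lipschitz estimate. Indeed, applying the bound in (b) to two states $\rho,\tilde\rho$ that agree on $\mathfrak{B}$ forces $\mathrm{D}h(t;\rho)=\mathrm{D}h(t;\tilde\rho)$, so the generator $X_t^\rho$ defined in \eqref{flow baby0} depends on $\rho$ only through the finite family of numbers $\rho(B_1),\dots,\rho(B_d)$, where $\{B_j\}_{j=1}^d$ is a basis of $\mathfrak{B}$. Writing $\pi\colon E\to\mathbb{R}^d$ for the continuous affine map $\pi(\sigma)\doteq(\sigma(B_1),\dots,\sigma(B_d))$ and $G(t,\cdot)$ for the resulting dependence, $\mathrm{D}h(t;\rho)=G(t,\pi(\rho))$, I may regard the evolution family $T_{t,s}^\zeta$ of \eqref{flow baby1} as a function $T_{t,s}^{u}$ of the reduced path $u\doteq\pi\circ\zeta\in C(\mathbb{R};\pi(E))$ alone.

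Since $\phi_{t,s}^{(h,\xi)}(\rho)=\rho\circ T_{t,s}^{\xi(\cdot;\rho)}$ depends on $\xi$ only through the single path $r\mapsto\xi(r;\rho)$, the self-consistency problem decouples over $\rho\in E$, and its \emph{driving} part closes on the reduced data. Concretely, for fixed $(s,\rho)$ I would first solve the finite-dimensional equation
\begin{equation*}
u(t)=\pi\big(\rho\circ T_{t,s}^{u}\big)\ ,\qquad t\in\mathbb{R}\ ,
\end{equation*}
for $u\in C(\mathbb{R};\pi(E))$, and then \emph{define} the sought flow by $\mathbf{\varpi}^h(s,t)(\rho)\doteq\rho\circ T_{t,s}^{u}$, recovering the full state from the reduced solution. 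Existence and uniqueness of $u$ follow from a Picard iteration: using the uniform bound \eqref{bounded baby}, the standard Dyson-series comparison of two evolution families with bounded, norm-continuous generators gives, after invoking $\|X_r^{u(r)}-X_r^{\tilde u(r)}\|_{\mathcal{B}(\mathcal{X})}\le 2\,\|G(r,u(r))-G(r,\tilde u(r))\|_{\mathcal{X}}\le 2D_0\,|u(r)-\tilde u(r)|$ from Condition (b) and the equivalence of norms on $\mathfrak{B}$,
\begin{equation*}
\big|\pi(\rho\circ T_{t,s}^{u})-\pi(\rho\circ T_{t,s}^{\tilde u})\big|\le C\int_{s}^{t}\big|u(r)-\tilde u(r)\big|\,\mathrm{d}r\ ,
\end{equation*}
with $C$ depending only on $D_0$, $\|h\|$ and $\mathfrak{B}$. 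The integrated (Volterra) form of this estimate makes a suitable iterate of the map a contraction on $C([s-T,s+T];\pi(E))$ for every $T>0$, yielding a unique solution on $\mathbb{R}$; this map is a self-map of that space because each $T_{t,s}^{u}$ is a $\ast$-automorphism, whence $\rho\circ T_{t,s}^{u}\in E$.

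It then remains to assemble $\mathbf{\varpi}^h$ and to check $\mathbf{\varpi}^h\in C(\mathbb{R}^2;\mathrm{Aut}(E))$. Joint continuity in $(s,t,\rho)$ follows from the same Gronwall-type bound applied to perturbations of the initial data and the initial time, together with the continuity statement of Lemma \ref{Solution selfbaby copy(1)} for the maps $\phi_{t,s}^{(h,\xi)}$. The flow identity $\mathbf{\varpi}^h(r,t)\circ\mathbf{\varpi}^h(s,r)=\mathbf{\varpi}^h(s,t)$ is then a consequence of the uniqueness just obtained: the reduced trajectory started from $\rho$ at time $s$ and the one restarted at time $r$ from the intermediate state $\mathbf{\varpi}^h(s,r)(\rho)$ satisfy the same equation with identical data at $r$, hence coincide; this is the self-consistent analogue of the cocycle property \eqref{babyreverse2}. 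In particular $\mathbf{\varpi}^h(t,s)\circ\mathbf{\varpi}^h(s,t)=\mathbf{\varpi}^h(s,s)=\mathbf{1}_{E}$, so each $\mathbf{\varpi}^h(s,t)$ is a continuous bijection of the compact Hausdorff space $E$, i.e. an element of $\mathrm{Aut}(E)$.

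The main obstacle is the contraction step together with the passage back to full states. The delicate point is that the generators see only the reduced data $\pi(\rho)$, so the contraction must be run in $\mathbb{R}^d$, whereas the fixed point one ultimately needs is a continuous self-map of the full state space $E$; the finite-dimensionality of $\mathfrak{B}$ is indispensable on both counts, as it is what lets Condition (b) close the estimate in the $\mathbb{R}^d$-norm and what guarantees, via the norm-equivalence on $\mathfrak{B}$, that continuity of $u$ upgrades to continuity of $\rho\mapsto\rho\circ T_{t,s}^{u}$ in the uniform topology \eqref{uniform convergence weak*} underlying $\mathrm{Aut}(E)$.
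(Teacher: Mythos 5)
Your proposal is correct, and its skeleton coincides with the paper's: a Picard/Banach fixed-point argument for the self-consistency equation driven by the Dyson-series comparison of the families $T^{\xi}_{t,s}$ (the paper's Lemma \ref{Solution selfbaby copy(1)}), a Gronwall bound in the $\mathfrak{B}$-seminorm for continuity in the initial state, uniqueness of the fixed point to obtain the flow identity, and compactness of $E$ to upgrade continuous bijections to homeomorphisms. The genuine difference is where the contraction is run. The paper (Lemma \ref{Solution selfbaby}) works directly with the full path $\xi\in C([s-\epsilon,s+\epsilon];\mathcal{X}^{\ast})\cap C([s-\epsilon,s+\epsilon];E)$, obtains a contraction on a short interval and then glues, using only the Lipschitz content of Condition (b); the finite-dimensionality of $\mathfrak{B}$ enters only later, in Lemma \ref{lemma contnuitybaby}, to identify the weak$^{\ast}$ and norm topologies on $\mathfrak{B}^{\ast}$. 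You instead extract from Condition (b) the stronger structural fact that $\mathrm{D}h(t;\cdot)$ factors through $\pi(\rho)=(\rho(B_{1}),\ldots,\rho(B_{d}))$, so the fixed point collapses to a Volterra equation in $C([s-T,s+T];\pi(E))\subseteq C([s-T,s+T];\mathbb{R}^{d})$, solvable on arbitrary intervals by the iterated-contraction trick; the equivalence of the reduced problem with \eqref{self-consitency} (via $u=\pi\circ\xi$ and $\xi(t)=\rho\circ T^{u}_{t,s}$) transfers existence and uniqueness in both directions. This reorganization makes transparent exactly why hypothesis (b) with finite-dimensional $\mathfrak{B}$ renders the problem well-posed, at no loss of generality. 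The one place where your sketch compresses real work is the passage from the pointwise Gronwall estimate to convergence in $\mathrm{Aut}(E)$, whose topology is that of uniform convergence over $\rho\in E$: the residual term $\max_{B\in\mathfrak{B},\left\Vert B\right\Vert =1}\left\vert (\rho_{1}-\rho_{2})\circ T^{u_{\rho_{1}}}_{t,s}(B)\right\vert$ is small for each fixed $\rho_{1}$, and the required uniformity is obtained in the paper by the compactness/subnet arguments of Lemmata \ref{lemma wellbaby} and \ref{lemma wellbabybaby}; since your constants are uniform in $(s,\rho)$ this goes through, but it deserves an explicit argument rather than a one-line appeal.
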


\begin{proof}
The theorem is a consequence of Lemmata \ref{Solution selfbaby} and \ref%
{lemma wellbabybaby}.
\end{proof}

\begin{remark}
\mbox{ }\newline
\emph{(i)} Stronger results than Theorem \ref{theorem sdfkjsdklfjsdklfj
copy(3)} are proven in Section \ref{Well-posedness sect copy(1)}. See, in
particular, Lemma \ref{Differentiability2baby}.\newline
\emph{(ii)} If $\mathcal{X}$ is separable, recall that the state space $E$
of Definition \ref{state space} is metrizable, which is a very useful
property. In Theorem \ref{theorem sdfkjsdklfjsdklfj copy(3)}, however, the
separability of $\mathcal{X}$ is not necessary at the cost of taking a
finite dimensional space $\mathfrak{B}$ in Condition (b).
\end{remark}

Condition (b) of Theorem \ref{theorem sdfkjsdklfjsdklfj copy(3)} is, for
instance, satisfied for any cylindrical function $h$ within the set 
\begin{gather}
\mathfrak{Z}\doteq 
\Big\{%
\left( f\left( t\right) \right) _{t\in \mathbb{R}}\in \mathfrak{C}:f\left(
t;\rho \right) =g\left( t;\rho \left( B_{1}\right) ,\ldots ,\rho \left(
B_{n}\right) \right) \text{ for }t\in \mathbb{R}\text{ and }\rho \in E 
\notag \\
\qquad \qquad \qquad \qquad \qquad \qquad \text{with }n\in \mathbb{N}\text{, 
}\left\{ B_{j}\right\} _{j=1}^{n}\subseteq \mathcal{X}^{\mathbb{R}}\text{
and }g\in C_{b}\left( \mathbb{R};C_{b}^{3}\left( \mathbb{R}^{n},\mathbb{R}%
\right) \right) 
\Big\}%
\ .  \label{Zn}
\end{gather}%
By (\ref{Yn}), note that, for any $h\in \mathfrak{Z}$, there is $n\in 
\mathbb{N}$ such that $h(t)\in \mathfrak{Y}_{n}$ for all $t\in \mathbb{R}$.
See also (\ref{Ynbis}). Observe that $\mathfrak{Z}\varsubsetneq \mathfrak{C}$
is a dense subset since $\mathfrak{C}_{\mathcal{X}^{\mathbb{R}}}^{\mathbb{R}%
}\subseteq \mathfrak{Z}$. In (\ref{Zn}) we are quite generous by assuming
that the function $g\left( t\right) $ belongs to $C_{b}^{3}(\mathbb{R}^{n},%
\mathbb{R})$ for some $n\in \mathbb{N}$, but even $C_{b}^{2}(\mathbb{R}^{n},%
\mathbb{R})$ would be sufficient to get Condition (b). We assume more
regularity for $g(t)$, $t\in \mathbb{R}$, to be able to prove Theorem \ref%
{proposition dynamic classique II}. Here, $C_{b}^{p}(\mathbb{R}^{n};\mathbb{R%
})$, $p\in \mathbb{N}$, denotes the Banach space of bounded real-valued $%
C^{p}$-functions on $\mathbb{R}^{n}$, whose norm is the $C^{p}$-norm, i.e.,
the sum of the supremum norm of all derivatives of order from $0$ to $p$.

As explained in Section \ref{Classical algebra copy(1)}, for quantum
systems, we shall not restrict our study to the phase space $\overline{%
\mathcal{E}(E)}$ of Definition \ref{phase space}, but we generally consider
the whole state space $E$ of Definition \ref{state space}. We show next that
both the set $\mathcal{E}(E)$ of extreme points and its weak$^{\ast }$
closure $\overline{\mathcal{E}(E)}$ are conserved by the flow of Theorem \ref%
{theorem sdfkjsdklfjsdklfj copy(3)}, which is defined on the whole state
space $E$:

\begin{corollary}[Conservation of the phase space]
\label{corollary conservation}\mbox{ }\newline
Under Conditions (a)-(b) of Theorem \ref{theorem sdfkjsdklfjsdklfj copy(3)},
for any $s,t\in \mathbb{R}$,%
\begin{equation*}
\mathbf{\varpi }^{h}\left( s,t\right) \left( \mathcal{E}(E)\right) \subseteq 
\mathcal{E}(E)\qquad \text{and}\qquad \mathbf{\varpi }^{h}\left( s,t\right) (%
\overline{\mathcal{E}(E)})\subseteq \overline{\mathcal{E}(E)}\ .
\end{equation*}
\end{corollary}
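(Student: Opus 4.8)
The plan is to exploit the fact that, although the flow $\mathbf{\varpi}^{h}(s,t)$ is \emph{not} affine on $E$ (the intertwining automorphism depends on the initial state), for each \emph{fixed} initial state $\rho\in E$ its image is obtained by precomposing $\rho$ with a single $\ast$-automorphism of $\mathcal{X}$. Precomposition with a $\ast$-automorphism is an affine self-homeomorphism of $E$ and therefore maps extreme points to extreme points; this yields the invariance of $\mathcal{E}(E)$, and the invariance of $\overline{\mathcal{E}(E)}$ then follows from weak$^{\ast}$ continuity of the flow.

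First I would unwind the self-consistency identity (\ref{self-consitency}). Fixing $s,t\in\mathbb{R}$ and $\rho\in E$, and writing $\xi_{\rho}\doteq\mathbf{\varpi}^{h}(s,\cdot)(\rho)\in C(\mathbb{R};E)$ for the self-consistent trajectory issued from $\rho$ at time $s$, Equation (\ref{self-consitency}) together with the definition (\ref{phits}) of $\phi_{t,s}^{(h,\mathbf{\xi})}$ gives
\begin{equation*}
\mathbf{\varpi}^{h}(s,t)(\rho)=\phi_{t,s}^{(h,\xi_{\rho})}(\rho)=\rho\circ T_{t,s}^{\xi_{\rho}}\ .
\end{equation*}
By the construction of Section \ref{Section QM} (see (\ref{flow baby1})--(\ref{babyreverse})), $T_{t,s}^{\xi_{\rho}}$ is a $\ast$-automorphism of $\mathcal{X}$; it depends on $\rho$ only through the now-fixed path $\xi_{\rho}$, so for this fixed $\rho$ it is genuinely a \emph{single} $\ast$-automorphism.

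Next I would record the elementary but crucial fact that, for any $\ast$-automorphism $\alpha$ of $\mathcal{X}$, the dual map $\Phi_{\alpha}\colon\rho\mapsto\rho\circ\alpha$ is an affine weak$^{\ast}$-homeomorphism of $E$ onto itself, with inverse $\Phi_{\alpha^{-1}}$. Indeed $\rho\circ\alpha$ is positive and normalized, since $(\rho\circ\alpha)(A^{\ast}A)=\rho(\alpha(A)^{\ast}\alpha(A))\geq 0$ and $\alpha(\mathfrak{1})=\mathfrak{1}$, while affineness and weak$^{\ast}$ continuity are immediate. An affine bijection of a convex set onto itself maps its extreme boundary bijectively onto itself; hence $\rho\in\mathcal{E}(E)$ implies $\rho\circ\alpha\in\mathcal{E}(E)$. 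Applying this with $\alpha=T_{t,s}^{\xi_{\rho}}$ for each $\rho\in\mathcal{E}(E)$ yields $\mathbf{\varpi}^{h}(s,t)(\mathcal{E}(E))\subseteq\mathcal{E}(E)$, which is the first inclusion.

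Finally, since $\mathbf{\varpi}^{h}(s,t)\in\mathrm{Aut}(E)$ is weak$^{\ast}$-continuous, the standard inclusion $f(\overline{S})\subseteq\overline{f(S)}$ for continuous maps gives
\begin{equation*}
\mathbf{\varpi}^{h}(s,t)(\overline{\mathcal{E}(E)})\subseteq\overline{\mathbf{\varpi}^{h}(s,t)(\mathcal{E}(E))}\subseteq\overline{\mathcal{E}(E)}\ ,
\end{equation*}
which is the second inclusion. The only genuinely delicate point is the bookkeeping in the first step: one must verify that the path $\xi_{\rho}$ entering $T_{t,s}^{\xi_{\rho}}$ is \emph{the same} trajectory as the one produced by the self-consistent solution $\mathbf{\varpi}^{h}$, so that the map is literally precomposition by one automorphism rather than by a state-dependent family. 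Once this identification is settled by (\ref{self-consitency}), the remainder is routine convexity and continuity.
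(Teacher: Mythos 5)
Your proof is correct and takes essentially the same route as the paper's: both rest on the observation that, for each fixed $\rho $, the self-consistency equation (\ref{self-consitency}) reduces the flow to precomposition of $\rho $ with the single $\ast $-automorphism $T_{t,s}^{\mathbf{\varpi }^{h}\left( s,\cdot ;\rho \right) }$, whose dual is an affine weak$^{\ast }$-homeomorphism of $E$ and hence preserves extreme points (the paper phrases this step contrapositively, pulling a putative non-trivial convex decomposition of the image back to one of $\rho $ via the reverse cocycle property). The closure statement then follows from $\mathbf{\varpi }^{h}\left( s,t\right) \in \mathrm{Aut}\left( E\right) $ in both arguments.
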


\begin{proof}
The proof is done by contradiction: Assume Conditions (a)-(b) of Theorem \ref%
{theorem sdfkjsdklfjsdklfj copy(3)}. Take $\rho \in \mathcal{E}(E)$ and
assume the existence of $s,t\in \mathbb{R}$, $\lambda \in (0,1)$ and two
distinct $\rho _{1},\rho _{2}\in E$ such that 
\begin{equation*}
\mathbf{\varpi }^{h}\left( s,t\right) \left( \rho \right) =\phi _{t,s}^{(h,%
\mathbf{\varpi }^{h}\left( s,\cdot \right) )}\left( \rho \right) =\left(
1-\lambda \right) \rho _{1}+\lambda \rho _{2}\ .
\end{equation*}%
See Theorem \ref{theorem sdfkjsdklfjsdklfj copy(3)}. By (\ref{babyreverse})
and (\ref{phits}), it follows that 
\begin{equation*}
\rho =\left( 1-\lambda \right) \rho _{1}\circ T_{s,t}^{\mathbf{\varpi }%
^{h}\left( s,\cdot \right) \left( \rho \right) }+\lambda \rho _{2}\circ
T_{s,t}^{\mathbf{\varpi }^{h}\left( s,\cdot \right) \left( \rho \right) }\ .
\end{equation*}%
This is not possible whenever $\rho \in \mathcal{E}(E)$ because 
\begin{equation*}
\rho _{1}\circ T_{s,t}^{\mathbf{\varpi }^{h}\left( s,\cdot \right) \left(
\rho \right) }\qquad \text{and}\qquad \rho _{2}\circ T_{s,t}^{\mathbf{\varpi 
}^{h}\left( s,\cdot \right) \left( \rho \right) }
\end{equation*}%
are two distinct states. This proves that the image of an extreme state by $%
\mathbf{\varpi }^{h}\left( s,t\right) $ is always an extreme state. $\mathbf{%
\varpi }^{h}\left( s,t\right) \in \mathrm{Aut}\left( E\right) $ and thus
preserves the phase space $\overline{\mathcal{E}(E)}$.
\end{proof}

\subsection{Classical Dynamics as Feller Evolution\label{Section CM}}

The continuous family $\mathbf{\varpi }^{h}$ of Theorem \ref{theorem
sdfkjsdklfjsdklfj copy(3)} yields a family $(V_{t,s}^{h})_{s,t\in \mathbb{R}%
} $ of $\ast $-automorphisms of $\mathfrak{C}\doteq C(E;\mathbb{C})$ defined
by 
\begin{equation}
V_{t,s}^{h}\left( f\right) \doteq f\circ \mathbf{\varpi }^{h}\left(
s,t\right) \ ,\qquad f\in \mathfrak{C},\ s,t\in \mathbb{R}\ .
\label{classical evolution familybaby}
\end{equation}%
By Corollary \ref{corollary conservation}, such a map can also be defined in
the same way on $C(\overline{\mathcal{E}(E)};\mathbb{C})$ or $C(\mathcal{E}%
(E);\mathbb{C})$, where we recall that $\overline{\mathcal{E}(E)}$ is the
phase space of Definition \ref{phase space}. In any case, it is a strongly
continuous two-parameter family defining a classical dynamics:

\begin{proposition}[Classical dynamics as Feller evolution system]
\label{lemma poisson copy(1)}\mbox{
}\newline
Under Conditions (a)-(b) of Theorem \ref{theorem sdfkjsdklfjsdklfj copy(3)}, 
$(V_{t,s}^{h})_{s,t\in \mathbb{R}}$ is a strongly continuous two-parameter
family of $\ast $-automorphisms of $\mathfrak{C}$ satisfying the reverse
cocycle property:%
\begin{equation}
\forall s,r,t\in \mathbb{R}:\qquad V_{t,s}^{h}=V_{r,s}^{h}\circ V_{t,r}^{h}\
.  \label{reverse}
\end{equation}%
If, additionally, $h\in \mathfrak{Y}\left( \mathbb{R}\right) $ (cf. (\ref%
{identifyidentify})), then $V_{t,s}^{h}=V_{t-s,0}^{h}$ for any $s,t\in 
\mathbb{R}$\ and $(V_{t,0}^{h})_{t\in \mathbb{R}}$ is a $C_{0}$-group of $%
\ast $-automorphisms of $\mathfrak{C}$.
\end{proposition}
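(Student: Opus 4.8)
The plan is to transport the structural properties of the self-consistent flow $\{\mathbf{\varpi}^{h}(s,t)\}_{s,t\in\mathbb{R}}\subseteq \mathrm{Aut}(E)$ supplied by Theorem \ref{theorem sdfkjsdklfjsdklfj copy(3)} to the family $\{V_{t,s}^{h}\}_{s,t\in\mathbb{R}}$, which by (\ref{classical evolution familybaby}) is nothing but the pullback $f\mapsto f\circ \mathbf{\varpi}^{h}(s,t)$. First I would record the elementary fact that, for any self-homeomorphism $\varphi$ of the weak$^{\ast }$-compact Hausdorff space $E$, the composition operator $f\mapsto f\circ \varphi$ is a $\ast$-automorphism of $\mathfrak{C}=C(E;\mathbb{C})$: it is linear, multiplicative and conjugation-preserving because the operations of $\mathfrak{C}$ are pointwise, it maps $\mathfrak{C}$ into itself by continuity of $\varphi$, and it is bijective with inverse $f\mapsto f\circ \varphi^{-1}$. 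Applied to $\varphi=\mathbf{\varpi}^{h}(s,t)\in \mathrm{Aut}(E)$, this shows each $V_{t,s}^{h}$ is a $\ast$-automorphism; being a $\ast$-automorphism of a $C^{\ast}$-algebra (equivalently, a pullback by a bijection of $E$), it is isometric, $\Vert V_{t,s}^{h}(f)\Vert_{\mathfrak{C}}=\Vert f\Vert_{\mathfrak{C}}$, a fact I shall reuse.

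The reverse cocycle property (\ref{reverse}) is equivalent, via (\ref{classical evolution familybaby}), to the composition law $\mathbf{\varpi}^{h}(s,t)=\mathbf{\varpi}^{h}(r,t)\circ \mathbf{\varpi}^{h}(s,r)$ on $E$, and this is where the real work lies. I would deduce it from the self-consistency equation (\ref{self-consitency}) together with uniqueness of self-consistent trajectories. Fix $s,r\in\mathbb{R}$ and $\rho\in E$, and put $\tau\doteq \mathbf{\varpi}^{h}(s,r)(\rho)$. By (\ref{self-consitency}) and (\ref{phits}), the trajectory $\gamma\doteq \mathbf{\varpi}^{h}(s,\cdot)(\rho)$ obeys $\gamma(\alpha)=\rho\circ T_{\alpha,s}^{\gamma}$ for all $\alpha\in\mathbb{R}$, with $\gamma(s)=\rho$, where $T^{\gamma}$ denotes the $\ast$-automorphism cocycle solving (\ref{flow baby1}) for the generators $X_{\alpha}^{\gamma(\alpha)}$ of (\ref{flow baby0}). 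Inserting the reverse cocycle (\ref{babyreverse}) for $T$, i.e. $T_{\alpha,s}^{\gamma}=T_{r,s}^{\gamma}\circ T_{\alpha,r}^{\gamma}$, and using $\rho\circ T_{r,s}^{\gamma}=\gamma(r)=\tau$, I obtain $\gamma(\alpha)=\tau\circ T_{\alpha,r}^{\gamma}$ for all $\alpha$, with $\gamma(r)=\tau$. Since $T_{\alpha,r}^{\gamma}$ depends only on the values of $\gamma$ between $r$ and $\alpha$, this says that $\gamma$ is precisely the self-consistent trajectory issued from $\tau$ at initial time $r$; by the uniqueness in Theorem \ref{theorem sdfkjsdklfjsdklfj copy(3)} (sharpened in Lemma \ref{Differentiability2baby}) it must coincide with $\mathbf{\varpi}^{h}(r,\cdot)(\tau)$. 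Evaluating at $\alpha=t$ yields $\mathbf{\varpi}^{h}(s,t)(\rho)=\mathbf{\varpi}^{h}(r,t)(\tau)$, which is the claimed composition law as $\rho\in E$ is arbitrary.

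For the strong continuity I would argue by density and isometry. By definition of $C(\mathbb{R}^{2};\mathrm{Aut}(E))$ and the topology (\ref{uniform convergence weak*}), the map $(s,t)\mapsto \hat{A}\circ \mathbf{\varpi}^{h}(s,t)=V_{t,s}^{h}(\hat{A})$ is continuous into $\mathfrak{C}$ for every $A\in\mathcal{X}$; since these values are uniformly bounded on the compact set $E$ and the algebra operations are norm-continuous, the same holds for $(s,t)\mapsto V_{t,s}^{h}(p)$ for every polynomial $p\in\mathfrak{C}_{\mathcal{X}}$. As $\mathfrak{C}_{\mathcal{X}}$ is dense in $\mathfrak{C}$ by the Stone--Weierstrass theorem and each $V_{t,s}^{h}$ is isometric, a routine $3\varepsilon$-estimate extends the continuity from $\mathfrak{C}_{\mathcal{X}}$ to all of $\mathfrak{C}$, giving the joint strong continuity of $(V_{t,s}^{h})_{s,t\in\mathbb{R}}$.

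Finally, in the autonomous case $h\in\mathfrak{Y}(\mathbb{R})$ the generators $X_{t}^{\rho}=i[\mathrm{D}h(t;\rho),\cdot\,]$ of (\ref{flow baby0}) no longer depend on $t$, so the whole self-consistency problem is invariant under the simultaneous time-translation $(s,t)\mapsto (s+a,t+a)$; by the uniqueness in Theorem \ref{theorem sdfkjsdklfjsdklfj copy(3)} this forces $\mathbf{\varpi}^{h}(s,t)=\mathbf{\varpi}^{h}(0,t-s)$ and hence $V_{t,s}^{h}=V_{t-s,0}^{h}$. Writing $W_{u}\doteq V_{u,0}^{h}$, the reverse cocycle just established turns into the group law $W_{u+v}=W_{u}\circ W_{v}$, with $W_{0}=\mathbf{1}_{\mathfrak{C}}$ because $\mathbf{\varpi}^{h}(0,0)=\mathrm{id}_{E}$; thus $\{W_{u}\}_{u\in\mathbb{R}}$ is a one-parameter group of $\ast$-automorphisms of $\mathfrak{C}$, and together with the strong continuity of the previous paragraph it is a $C_{0}$-group. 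I expect the composition law of the second paragraph to be the main obstacle: the state-dependence of the generators $X_{t}^{\rho}$ makes the naive gluing of flows fail, and the cocycle can only be recovered by recognising $\gamma$ as the self-consistent trajectory with shifted initial data and invoking uniqueness.
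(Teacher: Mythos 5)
Your proposal is correct and follows essentially the same route as the paper: the reverse cocycle is obtained from the identity $\varpi _{\rho ,s}(t)=\varpi _{\varpi _{\rho ,s}(r),r}(t)$ (which the paper cites via Corollary \ref{Corollary bije+cocylbaby} and proves in Lemma \ref{Solution selfbaby} by exactly the uniqueness-plus-reverse-cocycle-of-$T$ argument you reproduce), and the autonomous case is handled identically by time-translation invariance of the self-consistency problem. Your density-and-isometry argument for strong continuity is the alternative the paper itself mentions parenthetically, so there is no substantive divergence.
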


\begin{proof}
The strong continuity of this family with respect to $s,t\in \mathbb{R}$ is
a consequence of $\mathbf{\varpi }^{h}\in C\left( \mathbb{R}^{2};\mathrm{Aut}%
\left( E\right) \right) $ and the fact that any continuous family of
continuous functions on compacta is uniformly continuous. Recall that the
topology of $\mathrm{Aut}\left( E\right) $ is the topology of uniform
convergence of weak$^{\ast }$-continuous functions from $E$ to itself. (To
prove continuity in such a strong sense, one could also use $V_{t,s}^{h}\in 
\mathcal{B}\left( \mathfrak{C}\right) $ and the density of $\mathfrak{C}_{%
\mathcal{X}}$ in $\mathfrak{C}$.) Equation (\ref{reverse}) follows from
Corollary \ref{Corollary bije+cocylbaby}. Finally, if $h\in \mathfrak{Y}%
\left( \mathbb{R}\right) $, while assuming Conditions (a)-(b) of Theorem \ref%
{theorem sdfkjsdklfjsdklfj copy(3)}, then the family $(T_{t,s}^{\xi
})_{s,t\in \mathbb{R}}$ defined by (\ref{flow baby1})-(\ref{flow baby1bis})
for any $\xi \in C\left( \mathbb{R};E\right) $ satisfies $T_{t,s}^{\xi
}=T_{t-s,0}^{\xi \left( \cdot +s\right) }$ for any $s,t\in \mathbb{R}$,
where $\xi \left( \cdot +s\right) \in C\left( \mathbb{R};E\right) $ is the
function $\xi $ translated by the real number $s$. As a consequence, at any
fixed $s\in \mathbb{R}$ and $\rho \in E$, the function $\xi \in C\left( 
\mathbb{R};E\right) $ defined by 
\begin{equation*}
\xi _{s}\left( t\right) \doteq \mathbf{\varpi }^{h}\left( 0,t-s;\rho \right)
\ ,\qquad t\in \mathbb{R}\ ,
\end{equation*}%
is a solution to Equation (\ref{self consitence equation1baby}). By Lemma %
\ref{Solution selfbaby}, it follows that 
\begin{equation*}
\mathbf{\varpi }^{h}\left( 0,t-s\right) =\mathbf{\varpi }^{h}\left(
s,t\right) \ ,\qquad s,t\in \mathbb{R}\ ,
\end{equation*}%
i.e., $V_{t,s}^{h}=V_{t-s,0}^{h}$ for any $s,t\in \mathbb{R}$. By using (\ref%
{reverse}) at $r=t-\alpha +s$ for any $\alpha \in \mathbb{R}$, one verifies
that the one-parameter family $(V_{t,0}^{h})_{t\in \mathbb{R}}$ satisfies
the group property.
\end{proof}

Under Conditions (a)-(b) of Theorem \ref{theorem sdfkjsdklfjsdklfj copy(3)}, 
$(V_{t,s}^{h})_{s,t\in \mathbb{R}}$ restricted on $\mathfrak{C}^{\mathbb{R}}$
is automatically a \emph{Feller evolution system} in the following sense:

\begin{itemize}
\item As a $\ast $-automorphism of a $C^{\ast }$-algebra, $V_{t,s}^{h}$ is
positivity preserving and $\Vert V_{t,s}^{h}\Vert _{\mathcal{B}\left( 
\mathfrak{C}^{\mathbb{R}}\right) }=1$;

\item $(V_{t,s}^{h})_{s,t\in \mathbb{R}}$ is a strongly continuous
two-parameter family satisfying (\ref{reverse}).
\end{itemize}

\noindent Therefore, the classical dynamics defined on the real space $%
\mathfrak{C}^{\mathbb{R}}$ from $(V_{t,s}^{h})_{s,t\in \mathbb{R}}$ can be
associated in this case with Feller processes\footnote{%
The positivity and norm-preserving property are reminiscent of Markov
semigroups.} in probability theory: By the Riesz-Markov representation
theorem and the monotone convergence theorem, there is a unique
two-parameter group $(p_{t,s}^{h})_{s,t\in \mathbb{R}}$\ of Markov
transition kernels $p_{t,s}^{h}(\cdot ,\cdot )$ on $E$ such that%
\begin{equation*}
V_{t,s}^{h}f\left( \rho \right) =\int_{E}f\left( \hat{\rho}\right)
p_{t,s}^{h}(\rho ,\mathrm{d}\hat{\rho})\ ,\qquad f\in \mathfrak{C}^{\mathbb{R%
}}\ .
\end{equation*}%
The right hand side of the above identity makes sense for bounded measurable
functions from $E$ to $\mathbb{R}$. In fact, one can naturally extend $%
(V_{t,s}^{h})_{s,t\in \mathbb{R}}$ to this more general class of functions
on $E$. See (\ref{classical evolution familybaby}).

Note that the notion of Feller evolution system, which is an extension of
Feller semigroups to non-auto%
\-%
nomous two-parameter families, has been (probably) introduced (only) in 2014 
\cite{Feller}. In contrast with \cite{Feller}, here the usual cocycle
property is replaced by the reverse one and $C_{\infty }(\mathbb{R}^{d})$ by 
$\mathfrak{C}^{\mathbb{R}}$, similar to \cite[Section 8.1.15]{Feller-2} or 
\cite[Definition 1.6]{Feller-3}, because we do not have any differentiable
structure on $E$. In fact, the term \textquotedblleft Feller
semigroup\textquotedblright\ can have different definitions\footnote{%
Feller semigroups have usually the same properties, but they can be defined
on different classes of spaces in the litterature.} in the literature. See,
e.g., \cite[Section 8.1.15]{Feller-2} and \cite[Section 1.1]{Feller-3}.

For any constant function $h\in \mathfrak{Y}\left( \mathbb{R}\right) $
satisfying Conditions (a)-(b) of Theorem \ref{theorem sdfkjsdklfjsdklfj
copy(3)}, $(V_{t,0}^{h})_{t\in \mathbb{R}}$ is therefore a $C_{0}$-group of $%
\ast $-automorphisms of $\mathfrak{C}$ and we denote by $\daleth ^{h}$ its
(well-defined) generator. By \cite[Chap. II, Sect. 3.11]{EngelNagel}, it is
a closed (linear) operator densely defined in $\mathfrak{C}$. Since $%
V_{t,0}^{h}$, $t\in \mathbb{R}$, are $\ast $-automorphisms, we infer from
the Nelson theorem \cite[Theorem 1.5.4]{Bratelli-derivation}, or the
Lumer-Phillips theorem \cite[Theorem 3.1.16]{BrattelliRobinsonI}, that $\pm
\daleth ^{h}$ are dissipative operators, i.e., $\daleth ^{h}$ is
conservative. The $\ast $-morphism property of $V_{t,0}^{h}$, $t\in \mathbb{R%
}$, is reflected by the fact that $\daleth ^{h}$ has to be a symmetric
derivation of $\mathfrak{C}$. This closed derivation is directly related
with a Poissonian symmetric derivation:{}

\begin{theorem}[Generators as Poissonian symmetric derivations]
\label{Closed Poissonian symmetric derivations}\mbox{
}\newline
Assume Conditions (a)-(b) of Theorem \ref{theorem sdfkjsdklfjsdklfj copy(3)}%
. \newline
\emph{(i)} The Poissonian symmetric derivation $\mathfrak{d}^{h}$ of
Definition \ref{Poissonian Symmetric Derivations} is closable. Its closure $%
\mathfrak{\bar{d}}^{h}$ is conservative and equals the generator $\daleth
^{h}\supseteq \mathfrak{\bar{d}}^{h}$ on its domain.\newline
\emph{(ii)} If $\beth \supseteq \mathfrak{\bar{d}}^{h}$ is a conservative
closed operator generating a $C_{0}$-group, then $\beth =\daleth ^{h}$. 
\newline
\emph{(iii)} If $h\in \mathfrak{C}_{\mathcal{X}^{\mathbb{R}}}$ then $%
\mathfrak{\bar{d}}^{h}=\daleth ^{h}$ is the generator of the $C_{0}$-group $%
(V_{t,0}^{h})_{t\in \mathbb{R}}$.
\end{theorem}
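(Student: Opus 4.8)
The plan is to identify the generator $\daleth^{h}$ with the Poissonian derivation on the polynomial core and then read off (i)--(iii) from general $C_{0}$-group theory. The decisive first step is to show $\mathfrak{C}_{\mathcal{X}}\subseteq\mathrm{dom}(\daleth^{h})$ with $\daleth^{h}|_{\mathfrak{C}_{\mathcal{X}}}=\mathfrak{d}^{h}$. For an elementary function $\hat{A}$, $A\in\mathcal{X}$, I would invoke the autonomous evolution equation of Section~\ref{Self-Consistency Equations}, namely $\partial_{t}(\hat{A}\circ\mathbf{\varpi}^{h}(0,t))=\{h,\hat{A}\circ\mathbf{\varpi}^{h}(0,t)\}$, and evaluate it at $t=0$; since $V_{t,0}^{h}\hat{A}=\hat{A}\circ\mathbf{\varpi}^{h}(0,t)$ by (\ref{classical evolution familybaby}) and the right-hand side is bounded in the $\mathfrak{C}$-norm by the estimate following Definition~\ref{Poissonian Symmetric Derivations}, the difference quotient converges in norm and yields $\daleth^{h}\hat{A}=\{h,\hat{A}\}=\mathfrak{d}^{h}\hat{A}$. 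To pass from generators to arbitrary polynomials I would use that each $V_{t,0}^{h}$ is a $\ast$-automorphism, hence multiplicative, so for $f,g\in\mathrm{dom}(\daleth^{h})$ the telescoping identity $t^{-1}(V_{t,0}^{h}(fg)-fg)=t^{-1}(V_{t,0}^{h}f-f)\,V_{t,0}^{h}g+f\,t^{-1}(V_{t,0}^{h}g-g)$, together with strong continuity of the group, gives $fg\in\mathrm{dom}(\daleth^{h})$ and the Leibniz rule. Induction on the degree of monomials in $\{\hat{A}\}$ then produces $\mathfrak{C}_{\mathcal{X}}\subseteq\mathrm{dom}(\daleth^{h})$ and, since $\mathfrak{d}^{h}$ obeys the same Leibniz rule, $\daleth^{h}=\mathfrak{d}^{h}$ on $\mathfrak{C}_{\mathcal{X}}$.

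Part (i) is then immediate. Being a generator, $\daleth^{h}$ is closed \cite[Chap.~II]{EngelNagel}; as it extends the densely defined $\mathfrak{d}^{h}$, the latter is closable with $\mathfrak{\bar{d}}^{h}\subseteq\daleth^{h}$. Conservativity of $\mathfrak{\bar{d}}^{h}$ follows because the dissipativity of $\pm\daleth^{h}$ recorded just before the theorem passes to any restriction, and the asserted agreement with $\daleth^{h}$ on $\mathrm{dom}(\mathfrak{\bar{d}}^{h})$ is exactly the inclusion $\mathfrak{\bar{d}}^{h}\subseteq\daleth^{h}$.

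For (ii) I would argue that any conservative closed $\beth\supseteq\mathfrak{\bar{d}}^{h}$ generating a $C_{0}$-group generates the same group as $\daleth^{h}$. Conservativity makes the group $(W_{t})_{t\in\mathbb{R}}$ generated by $\beth$ a group of isometries of $\mathfrak{C}$; since $\beth$ restricts to the symmetric derivation $\mathfrak{d}^{h}$ on the dense subalgebra $\mathfrak{C}_{\mathcal{X}}$, one shows via the Nelson and Lumer--Phillips arguments \cite[Theorem~1.5.4]{Bratelli-derivation}, \cite[Theorem~3.1.16]{BrattelliRobinsonI} that $(W_{t})$ is in fact a group of $\ast$-automorphisms of the commutative $C^{\ast}$-algebra $\mathfrak{C}=C(E;\mathbb{C})$, hence of the composition form $W_{t}(f)=f\circ\psi_{t}$ for a flow $(\psi_{t})$ of weak$^{\ast }$ self-homeomorphisms of $E$. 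Testing $\partial_{t}W_{t}\hat{A}|_{t=0}=\beth\hat{A}=\{h,\hat{A}\}$ on elementary functions and using (\ref{poisson bracket}) shows that $t\mapsto\psi_{t}(\rho)$ solves precisely the state-space evolution defining $\mathbf{\varpi}^{h}(0,\cdot)$, so the uniqueness in Theorem~\ref{theorem sdfkjsdklfjsdklfj copy(3)} forces $\psi_{t}=\mathbf{\varpi}^{h}(0,t)$, i.e. $W_{t}=V_{t,0}^{h}$ and $\beth=\daleth^{h}$. I expect the main obstacle to be exactly the step upgrading the isometry group $(W_{t})$ to a group of $\ast$-automorphisms (equivalently, composition operators): this is the only point requiring that $\beth$ be compatible with the multiplicative and order structure inherited from $\mathfrak{d}^{h}$, and where the special geometry of $C(E;\mathbb{C})$ and the self-consistency flow are genuinely used.

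Finally, for (iii) I would show that when $h\in\mathfrak{C}_{\mathcal{X}^{\mathbb{R}}}$ the subalgebra $\mathfrak{C}_{\mathcal{X}}$ is a core for $\daleth^{h}$, so that $\mathfrak{\bar{d}}^{h}=\daleth^{h}$. The key structural fact is that for polynomial $h$ the Poisson bracket maps polynomials to polynomials, $\mathfrak{d}^{h}(\mathfrak{C}_{\mathcal{X}})\subseteq\mathfrak{C}_{\mathcal{X}}$: by (\ref{Ynbis}) both $\mathrm{D}h(\rho)$ and $\mathrm{D}f(\rho)$ are $\mathcal{X}$-valued with polynomial dependence on the moments of $\rho$, so $\rho\mapsto\rho(i[\mathrm{D}h(\rho),\mathrm{D}f(\rho)])$ is again polynomial. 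Thus the closed derivation $\mathfrak{\bar{d}}^{h}$ leaves the dense domain $\mathfrak{C}_{\mathcal{X}}$ invariant, and because the self-consistency ODE now has polynomial — hence real-analytic — right-hand side, the orbit maps $t\mapsto V_{t,0}^{h}\hat{A}$ are real-analytic with radius of convergence uniform on $E$ by the bounds of Condition~(b); every element of $\mathfrak{C}_{\mathcal{X}}$ is therefore an analytic vector for $\daleth^{h}$. Nelson's analytic-vector theorem then gives that $\mathfrak{C}_{\mathcal{X}}$ is a core, i.e. $\mathfrak{\bar{d}}^{h}=\daleth^{h}$; equivalently, this exhibits $\mathfrak{\bar{d}}^{h}$ as a conservative $C_{0}$-group generator and lets one conclude by applying (ii) with $\beth=\mathfrak{\bar{d}}^{h}$. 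Since $h$ is then a constant element of $\mathfrak{Y}(\mathbb{R})$, $(V_{t,0}^{h})_{t\in\mathbb{R}}$ is the $C_{0}$-group generated by $\mathfrak{\bar{d}}^{h}=\daleth^{h}$, as asserted.
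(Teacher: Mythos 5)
Your part (i) is essentially the paper's argument and is sound: compute $\daleth ^{h}$ on the elementary functions $\hat{A}$ from the evolution equation at $t=0$, extend to $\mathfrak{C}_{\mathcal{X}}$ by the telescoping/Leibniz identity for the multiplicative group $(V_{t,0}^{h})_{t\in \mathbb{R}}$, and obtain closability of $\mathfrak{d}^{h}$ because it is a restriction of the closed generator $\daleth ^{h}$. The difficulties are in (ii) and (iii). For (ii) the paper needs only one line: a generator of a contraction semigroup admits no proper dissipative extension \cite[Proposition 3.1.15]{BrattelliRobinsonI}. Your route instead tries to upgrade the isometry group $(W_{t})_{t\in \mathbb{R}}$ generated by $\beth $ to a group of $\ast $-automorphisms, hence to a flow on $E$ solving the self-consistency equation. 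The step you yourself flag is a genuine gap, not a technicality: all that is known about $\beth $ is that it extends $\mathfrak{\bar{d}}^{h}$, so $\beth $ is a derivation only on $\mathrm{dom}(\mathfrak{\bar{d}}^{h})$, not on its full domain, and $\mathfrak{C}_{\mathcal{X}}$ is not known to be a core for $\beth $; the Nelson and Lumer--Phillips results for symmetric derivations therefore do not apply to $\beth $, and there is no a priori reason for $(W_{t})_{t\in \mathbb{R}}$ to be multiplicative. As written, (ii) is not proved.

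For (iii) your overall strategy (invariance $\mathfrak{d}^{h}(\mathfrak{C}_{\mathcal{X}})\subseteq \mathfrak{C}_{\mathcal{X}}$, a dense set of analytic vectors, the Nelson theorem, then (ii) applied to $\beth =\mathfrak{\bar{d}}^{h}$) coincides with the paper's, but the decisive quantitative input is missing. Being an analytic vector means that $\sum_{k}\frac{t^{k}}{k!}\Vert (\mathfrak{d}^{h})^{k}(f)\Vert _{\mathfrak{C}}<\infty $ for some $t>0$, and the paper establishes this through an explicit combinatorial bound of the form $\Vert (\mathfrak{d}^{h})^{k}(f)\Vert _{\mathfrak{C}}\leq k!\,C^{k}$ for iterated Poisson brackets of polynomials. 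You replace this by the assertion that the orbit maps $t\mapsto V_{t,0}^{h}(\hat{A})$ are real-analytic with radius of convergence uniform on $E$ because the self-consistency equation has a polynomial right-hand side. That is not justified: the self-consistency problem is a fixed-point equation involving the full quantum cocycle $T_{t,s}^{\xi }$ (a Dyson series), and the evolution of $\rho (A)$ does not close on the finitely many moments $\rho (B_{1}),\ldots ,\rho (B_{n})$ on which $h$ depends, since the commutators $i[B_{j},T_{t,\alpha }^{\xi }(A)]$ involve arbitrary elements of $\mathcal{X}$; it is not a finite-dimensional polynomial ODE. Uniform-in-$\rho $ time-analyticity of the orbits is essentially equivalent to the factorial bounds you are trying to bypass, so the argument is circular at exactly the point where the work lies. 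You need to supply the $k!\,C^{k}$ estimates directly.
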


\begin{proof}
Fix all assumptions of the theorem. Note first that one can compute $\daleth
^{h}$ for any (elementary) functions of $\{\hat{A}:A\in \mathcal{X}\}$, see (%
\ref{fA}). In the light of the self-consistency equation given by Theorem %
\ref{theorem sdfkjsdklfjsdklfj copy(3)}, which is combined with (\ref%
{notation})-(\ref{phits}) and (\ref{classical evolution familybaby}), note
that, for any $\rho \in E$, $s,t\in {\mathbb{R}}$ and $A\in \mathcal{X}$, 
\begin{equation*}
V_{t,s}^{h}(\hat{A})\left( \rho \right) =\rho \circ T_{t,s}^{\mathbf{\varpi }%
^{h}\left( s,\cdot ;\rho \right) }\left( A\right) \ ,
\end{equation*}%
which, by (\ref{flow baby1}), in turn leads to the equality%
\begin{equation}
\partial _{t}V_{t,s}^{h}(\hat{A})\left( \rho \right) =\mathbf{\varpi }%
^{h}\left( s,t;\rho \right) \circ X_{t}^{\mathbf{\varpi }^{h}\left( s,t;\rho
\right) }\left( A\right) \ .  \label{equality trival}
\end{equation}%
Using Definitions \ref{convex Frechet derivative copy(1)}, \ref{Poissonian
Symmetric Derivations}, Equations (\ref{extension}), (\ref{flow baby0}) and (%
\ref{classical evolution familybaby}) as well as the fact that $%
(V_{t,0}^{h})_{t\in \mathbb{R}}$ is generated by $\daleth ^{h}$, we deduce
from the last equality that%
\begin{equation*}
\daleth ^{h}(\hat{A})=\mathfrak{d}^{h}(\hat{A})\ ,\qquad A\in \mathcal{X}\ .
\end{equation*}%
Since both $\daleth ^{h}$ and $\mathfrak{d}^{h}$ are symmetric derivations,
it follows that 
\begin{equation}
\daleth ^{h}|_{\mathfrak{C}_{\mathcal{X}}}=\mathfrak{d}^{h}\ .
\label{restriction}
\end{equation}%
The operator $\mathfrak{d}^{h}$ is therefore (norm-) closable: For any
sequence $(f_{n})_{n\in \mathbb{N}}\subseteq \mathrm{dom}(\mathfrak{d}^{h})=%
\mathfrak{C}_{\mathcal{X}}$ converging to $0$, if $(\mathfrak{d}%
^{h}(f_{n}))_{n\in \mathbb{N}}$ is a Cauchy sequence then it converges to $0$%
, by (\ref{restriction}) and the closedness of $\daleth ^{h}$, as a
generator of a $C_{0}$-group. Since $\daleth ^{h}$ is conservative, we also
infer from (\ref{restriction}) that both the operator $\mathfrak{d}^{h}$ and
its closure of $\mathfrak{d}^{h}$ are conservative. (See, e.g., \cite[%
Proposition 3.1.15]{BrattelliRobinsonI}.) The generator $\daleth ^{h}$ is a
closed, not necessarily minimal, extension of $\mathfrak{d}^{h}$. This
concludes the proof of Assertion (i). The second one (ii) thus follows from 
\cite[Proposition 3.1.15]{BrattelliRobinsonI}.

To prove Assertion (iii) we use (ii) and the Nelson theorem \cite[Theorem
1.5.4]{Bratelli-derivation}: Pick $h_{1},h_{2}\in \mathfrak{C}_{\mathcal{X}}$%
. Assume without loss of generality that $h_{1},h_{2}$ are both not constant
functions. Then, for any $\ell \in \{1,2\}$ there are $n_{\ell }\in \mathbb{N%
}$, $\left\{ B_{\ell ,j}\right\} _{j=1}^{n_{\ell }}\subseteq \mathcal{X}^{%
\mathbb{R}}$, and $g_{\ell }:\mathbb{R}^{n_{\ell }}\rightarrow \mathbb{R}$
being a polynomial of degree $m_{\ell }\in \mathbb{N}$ such that%
\begin{equation*}
h_{\ell }\left( \rho \right) =g_{\ell }\left( \rho \left( B_{\ell ,1}\right)
,\ldots ,\rho \left( B_{\ell ,n_{\ell }}\right) \right) \ ,\qquad \rho \in
E\ .
\end{equation*}%
Then, from Equation (\ref{Ynbis}) and Definition \ref{convex Frechet
derivative copy(1)}, note that $\mathfrak{d}^{h_{1}}\left( h_{2}\right) \in 
\mathfrak{C}_{\mathcal{X}}$ with 
\begin{align}
\mathfrak{d}^{h_{1}}\left( h_{2}\right) \left( \rho \right) &
=\sum_{j_{1}=1}^{n_{1}}\sum_{j_{2}=1}^{n_{2}}\rho \left( i\left[
B_{1,j_{1}},B_{2,j_{2}}\right] \right) \partial _{x_{j_{1}}}g_{1}\left( \rho
\left( B_{1,1}\right) ,\ldots ,\rho \left( B_{1,n_{1}}\right) \right)
\label{sdfsdfsdf} \\
& \qquad \qquad \qquad \qquad \qquad \qquad \times \partial
_{x_{j_{2}}}g_{2}\left( \rho \left( B_{2,1}\right) ,\ldots ,\rho \left(
B_{2,n_{2}}\right) \right)  \notag
\end{align}%
for any $\rho \in E$. Note that, for any $k\in \mathbb{N}$, 
\begin{equation}
n_{1}^{k}\prod\limits_{j=0}^{k-1}\left( j\left( n_{1}+1\right) +n_{2}\right)
\leq n_{1}^{k}\left( k\left( n_{1}+1\right) +n_{2}\right) ^{k}\leq k!\exp
\left( n_{1}\left( k\left( n_{1}+1\right) +n_{2}\right) \right) \ ,
\label{sdfsdfsdfssdfsdfsdf}
\end{equation}%
because $x^{n}\leq n!\mathrm{e}^{x}$ for all $x\geq 0$ and $n\in \mathbb{N}$%
. Thus, using (\ref{sdfsdfsdf})-(\ref{sdfsdfsdfssdfsdfsdf}) together with
Equations (\ref{metaciagre set 2bis}), (\ref{norm properties}) and
straightforward estimates, one gets that%
\begin{equation*}
\left\Vert (\mathfrak{d}^{h_{1}})^{k}(h_{2})\right\Vert _{\mathfrak{C}}\leq
k!2^{k}\left( 1+D_{0}\right) ^{k}\left( 1+D_{1}\right) ^{k}\left(
1+D_{2}\right) \exp \left( n_{1}\left( k\left( n_{1}+1\right) +n_{2}\right)
\right) \ ,\qquad k\in \mathbb{N}\ .
\end{equation*}%
where 
\begin{equation*}
D_{0}\doteq \max_{\ell \in \{1,2\}}\max_{j\in \left\{ 1,\ldots ,n_{\ell
}\right\} }\left\Vert B_{\ell ,j}\right\Vert _{\mathcal{X}}\ ,\quad D_{\ell
}\doteq \max_{\underline{n}\in \mathbb{N}_{0}^{m_{\ell }}}\left\{ \max_{\rho
\in E}\left\vert \partial ^{\underline{n}}g_{\ell }\left( \rho \left(
B_{\ell ,1}\right) ,\ldots ,\rho \left( B_{\ell ,n_{\ell }}\right) \right)
\right\vert \right\}
\end{equation*}%
for $\ell \in \{1,2\}$. It follows that 
\begin{equation*}
\sum\limits_{k\in \mathbb{N}}\frac{t^{k}}{k!}\left\Vert (\mathfrak{d}%
^{h_{1}})^{k}(h_{2})\right\Vert _{\mathfrak{C}}<\infty
\end{equation*}%
for some positive time $t$ satisfying 
\begin{equation*}
0\leq t<\frac{\mathrm{e}^{-n_{1}\left( n_{1}+1\right) }}{2\left(
1+D_{0}\right) \left( 1+D_{1}\right) \left( 1+D_{2}\right) }\ .
\end{equation*}%
Therefore, by density of $\mathfrak{C}_{\mathcal{X}}$ in $\mathfrak{C}$, the
conservative, densely defined, closed operator $\mathfrak{\bar{d}}^{h_{1}}$
has a dense set of analytic elements. By the Nelson theorem \cite[Theorem
1.5.4]{Bratelli-derivation}, $\mathfrak{\bar{d}}^{h_{1}}$ is a conservative
closed operator generating a $C_{0}$-group of $\ast $-automorphisms of $%
\mathfrak{C}$, whence Assertion (iii), following (ii).
\end{proof}

Note that Equation (\ref{equality trival}) holds true for \emph{any} $h\in
C_{b}\left( \mathbb{R};\mathfrak{Y}\left( \mathbb{R}\right) \right) $
satisfying Conditions (a)-(b) of Theorem \ref{theorem sdfkjsdklfjsdklfj
copy(3)}.\ It follows that, for any $s,t\in \mathbb{R}$ and polynomial
function $f\in \mathfrak{C}_{\mathcal{X}}$, 
\begin{equation}
\partial _{t}V_{t,s}^{h}\left( f\right) =V_{t,s}^{h}\left( \left\{ h\left(
t\right) ,f\right\} \right) \ .  \label{ewr}
\end{equation}%
A similar expression for $\partial _{s}V_{t,s}^{h}$ like 
\begin{equation}
\partial _{s}V_{t,s}^{h}\left( f\right) =-\left\{ h\left( s\right)
,V_{t,s}^{h}(f)\right\}  \label{ewrbis}
\end{equation}%
is less obvious. First, we do not know, a priori, if\ $V_{t,s}^{h}$ maps
elements from $\mathfrak{C}_{\mathcal{X}}$ to continuously differentiable
complex-valued functions on $E$, i.e., if $V_{t,s}^{h}\left( \mathfrak{C}_{%
\mathcal{X}}\right) \subseteq \mathfrak{Y}\left( \mathbb{C}\right) $.
Secondly, even if $V_{t,s}^{h}\left( \mathfrak{C}_{\mathcal{X}}\right)
\subseteq \mathfrak{Y}\left( \mathbb{R}\right) $, one still has to prove
that Equation (\ref{ewrbis}) holds true. This is done in the next theorem:\ 

\begin{theorem}[Non-autonomous classical dynamics]
\label{proposition dynamic classique II}\mbox{
}\newline
Take $h\in \mathfrak{Z}$. Then, for any $s,t\in \mathbb{R}$ and $f\in 
\mathfrak{C}_{\mathcal{X}}$, (\ref{ewr})-(\ref{ewrbis}) hold true. See (\ref%
{Zn}) for the definition of $\mathfrak{Z}$.
\end{theorem}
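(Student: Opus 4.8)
The plan is to treat the two identities separately, since (\ref{ewr}) is already available and only (\ref{ewrbis}) requires new input. Equation (\ref{ewr}) was established for every polynomial $f\in\mathfrak{C}_{\mathcal{X}}$ and every $h\in C_{b}(\mathbb{R};\mathfrak{Y}(\mathbb{R}))$ satisfying Conditions (a)-(b) of Theorem \ref{theorem sdfkjsdklfjsdklfj copy(3)}, right after the proof of Theorem \ref{Closed Poissonian symmetric derivations} via (\ref{equality trival}). As every $h\in\mathfrak{Z}$ lies in $C_{b}(\mathbb{R};\mathfrak{Y}(\mathbb{R}))$ and satisfies Condition (b), nothing further is needed for the first identity, and the whole difficulty is concentrated in (\ref{ewrbis}).

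The first real step is to prove the regularity $V_{t,s}^{h}(\mathfrak{C}_{\mathcal{X}})\subseteq\mathfrak{Y}(\mathbb{C})$, which is exactly where the extra smoothness $g\in C_{b}(\mathbb{R};C_{b}^{3}(\mathbb{R}^{n},\mathbb{R}))$ built into $\mathfrak{Z}$ (rather than the weaker $C_{b}^{2}$ that already suffices for Condition (b)) is used. By (\ref{classical evolution familybaby}) one has $V_{t,s}^{h}(f)=f\circ\mathbf{\varpi}^{h}(s,t)$, so it suffices to show that $\rho\mapsto\mathbf{\varpi}^{h}(s,t)(\rho)$ is convex weak$^{\ast}$ Gateaux differentiable with weak$^{\ast}$-continuous derivative, and then to apply the chain rule together with the fact that $f$ is a polynomial in the affine functions $\hat{A}$. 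I would import the differentiability of the self-consistent flow in the state variable from the well-posedness analysis of (\ref{self-consitency}) carried out in Section \ref{Well-posedness sect copy(1)}, specifically from Lemma \ref{Differentiability2baby}, whose conclusions hold precisely because $h\in\mathfrak{Z}$ supplies enough derivatives of $g$ to differentiate the fixed-point construction once more in $\rho$. This yields $V_{t,s}^{h}(f)\in\mathfrak{Y}(\mathbb{C})$ and, crucially for the next step, the continuity of $s\mapsto V_{t,s}^{h}(f)$ as a map into the Banach space $\mathfrak{Y}(\mathbb{C})$ endowed with its $C^{1}$-norm (\ref{C1bis}).

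With the regularity in hand, the bracket $\{h(s),V_{t,s}^{h}(f)\}$ is well defined (both entries lie in $\mathfrak{Y}$), and I would derive (\ref{ewrbis}) from (\ref{ewr}) by converting the forward derivative into a backward one through the reverse cocycle property (\ref{reverse}). Writing $V_{t,s}^{h}=V_{s+\varepsilon,s}^{h}\circ V_{t,s+\varepsilon}^{h}$ gives
\begin{equation*}
\tfrac{1}{\varepsilon}\left( V_{t,s+\varepsilon}^{h}(f)-V_{t,s}^{h}(f)\right) =-\tfrac{1}{\varepsilon}\left( V_{s+\varepsilon,s}^{h}-\mathbf{1}_{\mathfrak{C}}\right) \left( V_{t,s+\varepsilon}^{h}(f)\right) \ ,
\end{equation*}
and I would split the right-hand side as $-\tfrac{1}{\varepsilon}(V_{s+\varepsilon,s}^{h}-\mathbf{1}_{\mathfrak{C}})(V_{t,s}^{h}(f))$ plus a cross term carrying $V_{t,s+\varepsilon}^{h}(f)-V_{t,s}^{h}(f)$. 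The identity $\partial_{\tau}V_{\tau,s}^{h}(\psi)=V_{\tau,s}^{h}(\{h(\tau),\psi\})$ extends from polynomials to all $\psi\in\mathfrak{Y}(\mathbb{C})$ by the very computation yielding (\ref{ewr})—differentiating $\psi(\mathbf{\varpi}^{h}(s,\tau)(\rho))$ in $\tau$ via the convex Gateaux chain rule and the field equation (\ref{flow baby1})—so the fundamental theorem of calculus gives $\tfrac{1}{\varepsilon}(V_{s+\varepsilon,s}^{h}-\mathbf{1}_{\mathfrak{C}})(\psi)=\tfrac{1}{\varepsilon}\int_{s}^{s+\varepsilon}V_{\tau,s}^{h}(\{h(\tau),\psi\})\,\mathrm{d}\tau$. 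The first term then converges to $\{h(s),V_{t,s}^{h}(f)\}$ by continuity of the integrand at $\tau=s$ and $V_{s,s}^{h}=\mathbf{1}_{\mathfrak{C}}$. The same representation bounds $\tfrac{1}{\varepsilon}(V_{s+\varepsilon,s}^{h}-\mathbf{1}_{\mathfrak{C}})$ uniformly in $\varepsilon$ as an operator $\mathfrak{Y}(\mathbb{C})\to\mathfrak{C}$, with norm at most $4\Vert h\Vert_{C_{b}(\mathbb{R};\mathfrak{Y}(\mathbb{R}))}$ by the Poissonian-derivation estimate of Section \ref{Bounded Derivations}; hence the cross term is $O(\Vert V_{t,s+\varepsilon}^{h}(f)-V_{t,s}^{h}(f)\Vert_{\mathfrak{Y}(\mathbb{C})})$ and vanishes by the $\mathfrak{Y}$-continuity established above. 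Letting $\varepsilon\to0$ (with the analogous argument for left derivatives) produces (\ref{ewrbis}).

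The main obstacle is the first step: showing that the self-consistent flow $\mathbf{\varpi}^{h}(s,t)$ depends differentiably on $\rho$ with weak$^{\ast}$-continuous convex Gateaux derivative, and that this derivative is continuous in $s$ in the $\mathfrak{Y}$-norm. This is delicate because $E$ is infinite-dimensional and the field $\mathrm{D}h(t;\cdot)$ feeding (\ref{flow baby1}) itself depends on the state through the self-consistency fixed point, so differentiating that fixed point in $\rho$ consumes the extra derivative of $g$ encoded in $\mathfrak{Z}$; this is exactly what I would draw from Lemma \ref{Differentiability2baby}. Once that differentiability and its $s$-continuity are secured, the remaining manipulations are the routine semigroup-theoretic computations sketched above.
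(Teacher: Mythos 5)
Your treatment of (\ref{ewr}) matches the paper's: it is read off from (\ref{equality trival}) together with linearity and Leibniz's rule. For (\ref{ewrbis}) you take a genuinely different route, and it contains a gap. The pivotal step in your difference-quotient argument is the claim that the forward equation $\partial _{\tau }V_{\tau ,s}^{h}(\psi )=V_{\tau ,s}^{h}(\{h(\tau ),\psi \})$, and hence the representation $\varepsilon ^{-1}(V_{s+\varepsilon ,s}^{h}-\mathbf{1}_{\mathfrak{C}})(\psi )=\varepsilon ^{-1}\int_{s}^{s+\varepsilon }V_{\tau ,s}^{h}(\{h(\tau ),\psi \})\,\mathrm{d}\tau $, holds for \emph{all} $\psi \in \mathfrak{Y}(\mathbb{C})$ and not merely for polynomials. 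You need it precisely for $\psi =V_{t,s}^{h}(f)=f\circ \mathbf{\varpi }^{h}(s,t)$, which lies in $\mathfrak{Y}(\mathbb{C})$ but is \emph{not} a polynomial in the affine functions $\hat{A}$. The asserted extension \textquotedblleft by the very computation yielding (\ref{ewr})\textquotedblright\ is not routine: the convex weak$^{\ast }$ Gateaux derivative of Definition \ref{convex Frechet derivative} only controls directional derivatives along segments $(1-\lambda )\rho +\lambda \upsilon $ with $\upsilon \in E$, whereas the increment $\mathbf{\varpi }^{h}(s,\tau +\varepsilon )(\rho )-\mathbf{\varpi }^{h}(s,\tau )(\rho )$ is a small hermitian functional that need not point in such a convex direction; a chain rule $\partial _{\tau }\psi (\varpi (\tau ))=[\partial _{\tau }\varpi (\tau )](\mathrm{D}\psi (\varpi (\tau )))$ for general $C^{1}$ functions is therefore not available from the definitions, nor is $\mathfrak{C}_{\mathcal{X}}$ known to be dense in $\mathfrak{Y}(\mathbb{C})$ for the norm (\ref{C1bis}) so as to argue by density and closedness. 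Both your cross-term estimate (the uniform bound on $\varepsilon ^{-1}(V_{s+\varepsilon ,s}^{h}-\mathbf{1}_{\mathfrak{C}})$ as an operator from $\mathfrak{Y}(\mathbb{C})$ to $\mathfrak{C}$) and your main-term limit rest on this unproved extension. A secondary point: the state-variable differentiability you invoke is Lemma \ref{lemma contnuity copy(1)}, not Lemma \ref{Differentiability2baby}, which concerns differentiability in $t$.

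The paper circumvents all of this. It proves (\ref{ewrbis}) first for the elementary affine functions $\hat{A}$ only (Corollary \ref{Corollary bije+cocylbaby copy(1)}): Lemma \ref{lemma contnuity copy(1)} identifies $\mathrm{D}V_{t,s}^{h}(\hat{A})(\rho )$ through a Dyson-type integral equation with a unique solution, Lemma \ref{Differentiability copy(1)} shows that $\partial _{s}V_{t,s}^{h}(\hat{A})(\rho )$ solves the integral equation (\ref{toto final}), and one checks that $-\{h(s),V_{t,s}^{h}(\hat{A})\}(\rho )$ solves the same equation, whence the identity for $\hat{A}$ by uniqueness. The passage from $\hat{A}$ to arbitrary $f\in \mathfrak{C}_{\mathcal{X}}$ is then pure algebra -- bilinearity and Leibniz's rule for $\partial _{s}$ and $\{\cdot ,\cdot \}$ together with the $\ast $-automorphism property of $V_{t,s}^{h}$ -- and requires neither the regularity $V_{t,s}^{h}(\mathfrak{C}_{\mathcal{X}})\subseteq \mathfrak{Y}(\mathbb{C})$ nor any evolution equation beyond polynomials. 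To salvage your semigroup-style argument you would have to establish the forward equation on the image $V_{t,s}^{h}(\mathfrak{C}_{\mathcal{X}})$ directly, which in effect amounts to redoing the Dyson-series analysis of Section \ref{Well-posedness sect copy(1)}.
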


\begin{proof}
Note that any function $h\in \mathfrak{Z}$ satisfies Conditions (a)-(b) of
Theorem \ref{theorem sdfkjsdklfjsdklfj copy(3)}. Equation (\ref{ewr}) is
already discussed before the theorem: it results from (\ref{equality trival}%
) for $h\in C_{b}\left( \mathbb{R};\mathfrak{Y}\left( \mathbb{R}\right)
\right) $ and properties of derivatives and symmetric derivations (linearity
and Leibniz's rule, see, e.g., (\ref{derivation0})). To prove (\ref{ewrbis}%
), it suffices to invoke Corollary \ref{Corollary bije+cocylbaby copy(1)},
which says that%
\begin{equation*}
\partial _{s}V_{t,s}^{h}(\hat{A})=-\{h\left( s\right) ,V_{t,s}^{h}(\hat{A})\}
\end{equation*}%
for any $s,t\in \mathbb{R}$ and $A\in \mathcal{X}$. Since $%
(V_{t,s}^{h})_{s,t\in \mathbb{R}}$ is a family of $\ast $-automorphisms of $%
\mathfrak{C}$, by using the (bi)linearity and Leibniz's rule satisfied by
the derivatives and the bracket $\{\cdot ,\cdot \}$, we deduce (\ref{ewrbis}%
) for all polynomial functions $f\in \mathfrak{C}_{\mathcal{X}}$.
\end{proof}

This theorem applied to the autonomous situation leads to the dynamical
equation of classical mechanics (see, e.g., \cite[Proposition 10.2.3]%
{classical-dynamics}), i.e., (autonomous) \emph{Liouville's equation}, which
reads in our case as follows:

\begin{corollary}[Autonomous Liouville's equation]
\label{proposition dynamic classique IIbis}\mbox{
}\newline
Take $h\in \mathfrak{Z}$ constant in time. Then, for any $t\in \mathbb{R}$
and $f\in \mathfrak{C}_{\mathcal{X}}$,%
\begin{equation}
\partial _{t}V_{t,0}^{h}\left( f\right) =V_{t,0}^{h}\circ \daleth ^{h}\left(
f\right) =V_{t,0}^{h}\left( \left\{ h,f\right\} \right) =\left\{
h,V_{t,0}^{h}(f)\right\} =\daleth ^{h}\circ V_{t,0}^{h}\left( f\right) \ .
\label{classical dynamicsbis}
\end{equation}
\end{corollary}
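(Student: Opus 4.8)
The plan is to derive the four equalities of (\ref{classical dynamicsbis}) by specializing the non-autonomous machinery already assembled to the time-independent case, where the two-parameter family collapses to a $C_{0}$-group. First I would observe that a function $h\in\mathfrak{Z}$ that is constant in time is an element of $\mathfrak{Y}\left(\mathbb{R}\right)$ satisfying Conditions (a)--(b) of Theorem \ref{theorem sdfkjsdklfjsdklfj copy(3)}; hence Proposition \ref{lemma poisson copy(1)} applies and tells us that $(V_{t,0}^{h})_{t\in\mathbb{R}}$ is a $C_{0}$-group of $\ast$-automorphisms of $\mathfrak{C}$ with $V_{t,s}^{h}=V_{t-s,0}^{h}$, whose generator is the closed operator $\daleth^{h}$.

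The first and last equalities are then purely semigroup-theoretic. By the standard generator identity for $C_{0}$-groups \cite[Chap. II]{EngelNagel}, for every $f\in\mathrm{dom}(\daleth^{h})$ one has
\begin{equation*}
\partial_{t}V_{t,0}^{h}(f)=V_{t,0}^{h}\circ\daleth^{h}(f)=\daleth^{h}\circ V_{t,0}^{h}(f)\ .
\end{equation*}
Since (\ref{restriction}) gives $\mathfrak{C}_{\mathcal{X}}\subseteq\mathrm{dom}(\daleth^{h})$ with $\daleth^{h}|_{\mathfrak{C}_{\mathcal{X}}}=\mathfrak{d}^{h}$, this identity holds for all $f\in\mathfrak{C}_{\mathcal{X}}$, which is exactly the outer pair of equalities in (\ref{classical dynamicsbis}). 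The second equality is then immediate from Definition \ref{Poissonian Symmetric Derivations}: for $f\in\mathfrak{C}_{\mathcal{X}}$ one has $\daleth^{h}(f)=\mathfrak{d}^{h}(f)=\{h,f\}$, whence $V_{t,0}^{h}\circ\daleth^{h}(f)=V_{t,0}^{h}(\{h,f\})$.

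It remains to prove the central equality $V_{t,0}^{h}(\{h,f\})=\{h,V_{t,0}^{h}(f)\}$, which carries the genuine dynamical content and for which I would invoke Theorem \ref{proposition dynamic classique II} (applicable since $h\in\mathfrak{Z}$). Setting $s=0$ in (\ref{ewr}) and using $h(t)=h$ gives $\partial_{t}V_{t,0}^{h}(f)=V_{t,0}^{h}(\{h,f\})$. For a second representation of the same time derivative I would use the autonomous reduction $V_{t,s}^{h}=V_{t-s,0}^{h}$: differentiating in $s$ and evaluating at $s=0$ yields, by the chain rule, $\partial_{s}V_{t,s}^{h}(f)|_{s=0}=-\partial_{t}V_{t,0}^{h}(f)$, whereas (\ref{ewrbis}) at $s=0$ gives $\partial_{s}V_{t,s}^{h}(f)|_{s=0}=-\{h,V_{t,0}^{h}(f)\}$. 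Equating the two expressions for $\partial_{t}V_{t,0}^{h}(f)$ produces the desired middle equality and closes the chain.

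The main obstacle is not located in the corollary itself but is already absorbed into Theorem \ref{proposition dynamic classique II}: the delicate step is equation (\ref{ewrbis}), which presupposes that $V_{t,s}^{h}$ maps the polynomial algebra $\mathfrak{C}_{\mathcal{X}}$ into continuously differentiable functions, so that the bracket $\{h,V_{t,0}^{h}(f)\}$ of Definition \ref{convex Frechet derivative copy(1)} is even meaningful. Granting that theorem, the only manipulation requiring care within the present proof is the chain-rule and sign bookkeeping in the autonomous reduction linking $\partial_{s}V_{t,s}^{h}|_{s=0}$ to $\partial_{t}V_{t,0}^{h}$; everything else is a direct appeal to the $C_{0}$-group structure from Proposition \ref{lemma poisson copy(1)} and to the identification of its generator with $\{h,\cdot\}$ on $\mathfrak{C}_{\mathcal{X}}$ from Theorem \ref{Closed Poissonian symmetric derivations}.
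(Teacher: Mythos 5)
Your proposal is correct and follows essentially the same route as the paper, whose proof is simply ``combine Theorem \ref{proposition dynamic classique II} with Theorem \ref{Closed Poissonian symmetric derivations}'': you obtain the outer equalities from the identification of the generator $\daleth^{h}$ with $\{h,\cdot\}$ on $\mathfrak{C}_{\mathcal{X}}$, and the middle equality by specializing (\ref{ewr})--(\ref{ewrbis}) to the autonomous case via $V_{t,s}^{h}=V_{t-s,0}^{h}$. Your write-up merely makes explicit the sign bookkeeping that the paper leaves to the reader.
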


\begin{proof}
Combine Theorem \ref{proposition dynamic classique II} with Theorem \ref%
{Closed Poissonian symmetric derivations}.
\end{proof}

In the non-autonomous case, $(V_{t,s}^{h})_{s,t\in \mathbb{R}}$ is a
strongly continuous two-parameter family of $\ast $-automorphisms of $%
\mathfrak{C}$ solving the non-auto%
\-%
nomous evolution equations 
\begin{equation*}
\forall s,t\in {\mathbb{R}}:\qquad \partial _{t}V_{t,s}^{h}=V_{t,s}^{h}\circ
\daleth ^{h\left( t\right) }\ ,\qquad V_{s,s}^{h}=\mathbf{1}_{\mathfrak{C}}\
,
\end{equation*}%
on $\mathfrak{C}_{\mathcal{X}}$, as explained before Theorem \ref%
{proposition dynamic classique II}. See also Theorem \ref{Closed Poissonian
symmetric derivations}. Theorem \ref{proposition dynamic classique II}
suggests that, under stronger conditions, $(V_{t,s}^{h})_{s,t\in \mathbb{R}}$
is the solution to the non-auto%
\-%
nomous evolution equations%
\begin{equation}
\forall s,t\in {\mathbb{R}}:\qquad \partial _{s}V_{t,s}^{h}=-\daleth
^{h\left( s\right) }\circ V_{t,s}^{h}\ ,\qquad V_{s,s}^{h}=\mathbf{1}_{%
\mathfrak{C}}\ ,  \label{(*)}
\end{equation}%
on some dense subspace. To prove this, one could look for assumptions on $h$
such that the family $(\daleth ^{h\left( t\right) })_{t\in \mathbb{R}}$ of
closed dissipative operators satisfies sufficient conditions to generate an
evolution family solving (\ref{(*)}), as explained in \cite%
{Katobis,Caps,Schnaubelt1,Pazy,Bru-Bach}. Then, $(V_{t,s}^{h})_{s,t\in 
\mathbb{R}}$ would be \emph{the} solution to the non-auto%
\-%
nomous evolution equation (\ref{(*)}). This looks doable, but at the cost of
many technical arguments. We thus refrain from doing such a study in this
paper.

\section{State-Dependent $C^{\ast }$-Dynamical Systems\label{section
extended C* dynamical systeme}}

\subsection{Quantum $C^{\ast }$-Algebras of Continuous Functions on State
Space\label{Quantum Algebras}}

The space $C(E;\mathcal{X})$ of $\mathcal{X}$-valued weak$^{\ast }$%
-continuous functions on the weak$^{\ast }$-compact space $E$ is a unital $%
C^{\ast }$-algebra with respect to the point-wise operations, denoted by 
\begin{equation}
\mathfrak{X}\doteq \left( C(E;\mathcal{X}),+,\cdot _{{\mathbb{C}}},\times
,^{\ast },\left\Vert \cdot \right\Vert _{\mathfrak{X}}\right)
\label{metaciagre set}
\end{equation}%
where%
\begin{equation}
\left\Vert f\right\Vert _{\mathfrak{X}}\doteq \max_{\rho \in E}\left\Vert
f\left( \rho \right) \right\Vert _{\mathcal{X}}\ ,\qquad f\in \mathfrak{X}\ .
\label{metaciagre set bis}
\end{equation}%
Clearly, $\mathfrak{X}$ is commutative iff $\mathcal{X}$ is commutative. The
(real) Banach subspace of all $\mathcal{X}^{\mathbb{R}}$-valued functions
from $\mathfrak{X}$ is denoted by $\mathfrak{X}^{\mathbb{R}}\varsubsetneq 
\mathfrak{X}$. $\mathfrak{X}$ is separable whenever $\mathcal{X}$ is
separable, $E$ being in this case metrizable.

We identify the primordial $C^{\ast }$-algebra $\mathcal{X}$, on which the
quantum dynamics is usually defined, with the subalgebra of constant
functions of $\mathfrak{X}$. Meanwhile, the classical dynamics appears in
the space $\mathfrak{C}\doteq C(E;\mathbb{C})$ of complex-valued weak$^{\ast
}$-continuous functions on $E$. See (\ref{metaciagre set 2})-(\ref%
{metaciagre set 2bis}). This unital \emph{commutative} $C^{\ast }$-algebra
is thus identified with the subalgebra of functions of $\mathfrak{X}$ whose
values are multiples of the unit $\mathfrak{1}\in \mathcal{X}$. Compare (\ref%
{metaciagre set})-(\ref{metaciagre set bis}) with (\ref{metaciagre set 2})-(%
\ref{metaciagre set 2bis}). Hence, we have the inclusions%
\begin{equation}
\mathcal{X}\subseteq \mathfrak{X}\qquad \text{and}\qquad \mathfrak{%
C\subseteq X}\ .  \label{embeding}
\end{equation}%
Both classical and quantum dynamics can then be extended to $\mathfrak{X}$.
This is explained in the next subsection.

\subsection{State-Dependent Quantum Dynamics\label{Extended Quantum Dynamics}%
}

Since $\mathfrak{C\subseteq X}$, there is a natural extension to $\mathfrak{X%
}$ of the classical dynamics on $\mathfrak{C}$: The continuous family $%
\mathbf{\varpi }^{h}$ of Theorem \ref{theorem sdfkjsdklfjsdklfj copy(3)}
yields a family $(\mathfrak{V}_{t,s}^{h})_{s,t\in \mathbb{R}}$ of $\ast $%
-automorphisms of $\mathfrak{X}$ defined by%
\begin{equation}
\mathfrak{V}_{t,s}^{h}\left( f\right) \doteq f\circ \mathbf{\varpi }%
^{h}\left( s,t\right) \ ,\qquad f\in \mathfrak{X},\ s,t\in \mathbb{R}\ .
\label{def fausse}
\end{equation}%
In particular, by (\ref{classical evolution familybaby}), $\mathfrak{V}%
_{t,s}^{h}|_{\mathfrak{C}}=V_{t,s}^{h}$\ for any $s,t\in \mathbb{R}$.
However, it is \emph{not} what we have in mind here: Emphasizing rather the
inclusion $\mathcal{X}\subseteq \mathfrak{X}$, the classical algebra $%
\mathfrak{C}$ will become a subalgebra of the \emph{fixed-point} algebra of
the state-dependent dynamics we define below on $\mathfrak{X}$.

In Section \ref{Section QM}, we explain how a fixed function $h\in
C_{b}\left( \mathbb{R};\mathfrak{Y}\left( \mathbb{R}\right) \right) $ is
used to define (possibly non-autonomous) quantum dynamics $(T_{t,s}^{\xi
})_{s,t\in \mathbb{R}}$ on the primordial $C^{\ast }$-algebra $\mathcal{X}$,
for any $\xi \in C\left( \mathbb{R};E\right) $. This primal dynamics induces
classical dynamics on the (classical) $C^{\ast }$-algebra $\mathfrak{C}%
\doteq C(E;\mathbb{C})$ of continuous functions on states, as discussed in
Sections \ref{Self-Consistency Equations}-\ref{Section CM}. By Theorem \ref%
{theorem sdfkjsdklfjsdklfj copy(3)}, it yields, in turn, a \emph{%
state-dependent} quantum dynamics, referring in this case to a
norm-continuous family 
\begin{equation*}
(\mathrm{T}_{t,s}^{\rho })_{\left( \rho ,s,t\right) \in E\times \mathbb{R}%
^{2}}=(T_{t,s}^{\mathbf{\varpi }^{h}\left( s_{0},\cdot ;\rho \right)
})_{\left( \rho ,s,t\right) \in E\times \mathbb{R}^{2}}
\end{equation*}%
of $\ast $-automorphisms of $\mathcal{X}$ for some \emph{fixed} $s_{0}\in 
\mathbb{R}$. This leads to a (state-dependent) dynamics on the (secondary) $%
C^{\ast }$-algebra $\mathfrak{X}$ of continuous functions on states.

As a matter of fact, any strongly continuous family $(\mathrm{T}^{\rho
})_{\rho \in E}$ of linear contractions from $\mathcal{X}$ to itself can be
viewed as a linear contraction $\mathfrak{T}$ from $\mathfrak{X}$ to itself
defined by 
\begin{equation}
\left[ \mathfrak{T}\left( f\right) \right] \left( \rho \right) \doteq 
\mathrm{T}^{\rho }\left( f\left( \rho \right) \right) \ ,\qquad \rho \in E,\
f\in \mathfrak{X}\ .  \label{cigare0}
\end{equation}%
Such contractions have the following properties:

\begin{lemma}[State-dependent quantum dynamics]
\label{lemma supercigare}\mbox{ }\newline
Let $\mathcal{X}$ be a unital $C^{\ast }$-algebra. For any $s,t\in \mathbb{R}%
^{2}$, let $(\mathrm{T}_{t,s}^{\rho })_{\rho \in E}$ be any strongly
continuous family of linear contractions from $\mathcal{X}$ to itself, and $%
\mathfrak{T}_{t,s}$ be defined by (\ref{cigare0}) with $\mathrm{T}^{\rho }=%
\mathrm{T}_{t,s}^{\rho }$. \newline
\emph{(i)} If $\mathrm{T}_{t,s}^{\rho }$ is a $\ast $-automorphism of $%
\mathcal{X}$ at $s,t\in \mathbb{R}$ for any $\rho \in E$, then $\mathfrak{T}%
_{t,s}$ is a $\ast $-automorphism of $\mathfrak{X}$ and the classical
subalgebra $\mathfrak{C}\subseteq \mathfrak{X}$ is contained in the
fixed-point algebra of $\mathfrak{T}_{t,s}$, i.e., 
\begin{equation*}
\mathfrak{T}_{t,s}(f)=f\ ,\qquad f\in \mathfrak{C}\ .
\end{equation*}%
\emph{(ii)} If $(\mathrm{T}_{t,s}^{\rho })_{s,t\in \mathbb{R}}$ satisfies a
reverse cocycle property for any $\rho \in E$, i.e., 
\begin{equation}
\mathrm{T}_{t,s}^{\rho }=\mathrm{T}_{r,s}^{\rho }\circ \mathrm{T}%
_{t,r}^{\rho }\ ,\qquad \rho \in E,\ s,t,r\in \mathbb{R}\ ,
\label{reverse property}
\end{equation}%
then $(\mathfrak{T}_{t,s})_{s,t\in \mathbb{R}}$ has also this property.%
\newline
\emph{(iii)} If $(\mathrm{T}_{t,s}^{\rho })_{\left( \rho ,s,t\right) \in
E\times \mathbb{R}^{2}}$ is a strongly continuous family of contractions
then so do $(\mathfrak{T}_{t,s})_{s,t\in \mathbb{R}}$.
\end{lemma}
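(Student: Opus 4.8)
The plan is to reduce all three assertions to the pointwise action of the operators $\mathrm{T}_{t,s}^{\rho}$ on $\mathcal{X}$, exploiting throughout the uniform contraction bound $\Vert\mathrm{T}_{t,s}^{\rho}\Vert\leq1$ and the weak$^{\ast}$-compactness of $E$. First I would dispose of the underlying well-posedness issue, which is needed before any of (i)--(iii) even make sense: for fixed $s,t$ and $f\in\mathfrak{X}=C(E;\mathcal{X})$, the function $\rho\mapsto\mathrm{T}_{t,s}^{\rho}(f(\rho))$ defined by (\ref{cigare0}) really belongs to $\mathfrak{X}$. Given a net $\rho_{i}\to\rho$ in $E$, I would split
\[
\Vert\mathrm{T}_{t,s}^{\rho_{i}}(f(\rho_{i}))-\mathrm{T}_{t,s}^{\rho}(f(\rho))\Vert_{\mathcal{X}}\leq\Vert\mathrm{T}_{t,s}^{\rho_{i}}(f(\rho_{i})-f(\rho))\Vert_{\mathcal{X}}+\Vert\mathrm{T}_{t,s}^{\rho_{i}}(f(\rho))-\mathrm{T}_{t,s}^{\rho}(f(\rho))\Vert_{\mathcal{X}},
\]
bounding the first summand by $\Vert f(\rho_{i})-f(\rho)\Vert_{\mathcal{X}}$ (contraction property together with weak$^{\ast}$-to-norm continuity of $f$) and the second by the strong continuity of $(\mathrm{T}_{t,s}^{\rho})_{\rho}$ evaluated at the \emph{fixed} element $f(\rho)$; nets are used since $E$ need not be metrizable. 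That $\mathfrak{T}_{t,s}$ is a contraction is then immediate from (\ref{metaciagre set bis}).

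For (i), the algebraic structure transfers pointwise: $[\mathfrak{T}_{t,s}(fg)](\rho)=\mathrm{T}_{t,s}^{\rho}(f(\rho)g(\rho))=\mathrm{T}_{t,s}^{\rho}(f(\rho))\mathrm{T}_{t,s}^{\rho}(g(\rho))$ by multiplicativity, and linearity and the $\ast$-operation are handled identically, so $\mathfrak{T}_{t,s}$ is a unital $\ast$-homomorphism; injectivity follows because each $\mathrm{T}_{t,s}^{\rho}$ is injective. The only genuine point is surjectivity, for which I would construct the inverse $\rho\mapsto(\mathrm{T}_{t,s}^{\rho})^{-1}(g(\rho))$ and check it again lies in $\mathfrak{X}$. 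Here I would use that a $\ast$-automorphism of a $C^{\ast}$-algebra is isometric: setting $A_{i}=(\mathrm{T}_{t,s}^{\rho_{i}})^{-1}(g(\rho))$ and $A=(\mathrm{T}_{t,s}^{\rho})^{-1}(g(\rho))$, one has $\mathrm{T}_{t,s}^{\rho_{i}}(A_{i}-A)=g(\rho)-\mathrm{T}_{t,s}^{\rho_{i}}(A)\to0$ by strong continuity, whence $\Vert A_{i}-A\Vert_{\mathcal{X}}=\Vert g(\rho)-\mathrm{T}_{t,s}^{\rho_{i}}(A)\Vert_{\mathcal{X}}\to0$; thus the inverse family is strongly continuous and the preliminary step applies to it, giving $\mathfrak{T}_{t,s}^{-1}$. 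The fixed-point claim is then clear: every $\ast$-automorphism of a unital $C^{\ast}$-algebra preserves $\mathfrak{1}$, so for $f\in\mathfrak{C}$, i.e. $f(\rho)\in\mathbb{C}\mathfrak{1}$ (see (\ref{embeding})), one gets $\mathrm{T}_{t,s}^{\rho}(f(\rho))=f(\rho)$.

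Assertion (ii) is a one-line pointwise computation, $[\mathfrak{T}_{t,s}(f)](\rho)=\mathrm{T}_{t,s}^{\rho}(f(\rho))=\mathrm{T}_{r,s}^{\rho}(\mathrm{T}_{t,r}^{\rho}(f(\rho)))=[\mathfrak{T}_{r,s}\circ\mathfrak{T}_{t,r}(f)](\rho)$, using the reverse cocycle property of $(\mathrm{T}_{t,s}^{\rho})$. For (iii) I would upgrade pointwise strong continuity in $(s,t)$ to the uniform-in-$\rho$ statement demanded by $\Vert\cdot\Vert_{\mathfrak{X}}$. Fixing $f\in\mathfrak{X}$ and a target $(s_{0},t_{0})$, I set $g(\rho,s,t)\doteq\mathrm{T}_{t,s}^{\rho}(f(\rho))$ on $E\times K$ with $K$ a compact neighbourhood of $(s_{0},t_{0})$; by the same splitting as in the preliminary step, now using the \emph{joint} strong continuity of $(\mathrm{T}_{t,s}^{\rho})_{(\rho,s,t)}$, this map is jointly continuous, hence uniformly continuous on the compactum $E\times K$. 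Specializing uniform continuity to pairs sharing the same $\rho$ yields a neighbourhood of $(s_{0},t_{0})$ on which $\sup_{\rho}\Vert g(\rho,s,t)-g(\rho,s_{0},t_{0})\Vert_{\mathcal{X}}$ is small, which is precisely norm-continuity of $(s,t)\mapsto\mathfrak{T}_{t,s}(f)$.

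I expect the main obstacle to be the surjectivity in (i): a priori, strong continuity of $\rho\mapsto\mathrm{T}_{t,s}^{\rho}$ carries no information about the inverse family, and inversion of operators is not strongly continuous in general. The resolution is the isometry of $C^{\ast}$-automorphisms, which converts the trivially available convergence $\mathrm{T}_{t,s}^{\rho_{i}}(A)\to\mathrm{T}_{t,s}^{\rho}(A)$ directly into convergence of the preimages. The second technical point, the pointwise-to-uniform passage in (iii), is dealt with cleanly by the weak$^{\ast}$-compactness of $E$; everything else is a routine pointwise transfer of the algebraic and order structure.
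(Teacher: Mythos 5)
Your proposal is correct and follows essentially the same route as the paper: for assertion (iii) both arguments rest on the weak$^{\ast}$-compactness of $E$, the uniform contraction bound, and the strong continuity of the family (the paper argues by contradiction along a convergent subnet of states, you argue via joint continuity and uniform continuity on the compactum $E\times K$ — these are equivalent). The paper dismisses (i)--(ii) as following directly from (\ref{cigare0}); your write-up supplies the two details that actually require an argument there, namely that $\rho\mapsto\mathrm{T}_{t,s}^{\rho}(f(\rho))$ lands in $C(E;\mathcal{X})$ and, more substantially, that $\mathfrak{T}_{t,s}$ is surjective, which you obtain by showing the inverse family is strongly continuous using the isometry of $\ast$-automorphisms. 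That last point is the one genuine gap in the paper's ``directly follows'' and your resolution of it is exactly right.
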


\begin{proof}
Assertion (i)-(ii) directly follows from (\ref{cigare0}) and it remains to
prove (iii). By contradiction, suppose that the family is not strongly
continuous. Then, there is $f\in \mathfrak{X}$, times $s,t\in \mathbb{R}$,
two zero nets $(\eta _{j})_{j\in J},(\varkappa _{j})_{j\in J}\subseteq 
\mathbb{R}$, a net $(\rho _{j})_{j\in J}\subseteq E$ of states and a
positive constant $D>0$ such that%
\begin{equation*}
\inf_{j\in J}\left\Vert \mathrm{T}_{t+\eta _{j},s+\varkappa _{j}}^{\rho
_{j}}\left( f\left( \rho _{j}\right) \right) -\mathrm{T}_{t,s}^{\rho
_{j}}\left( f\left( \rho _{j}\right) \right) \right\Vert _{\mathcal{X}}\geq
D>0\ .
\end{equation*}%
By weak$^{\ast }$ compactness of $E$, we can assume without loss of
generality that $(\rho _{j})_{j\in J}$ converges to some $\rho \in E$.
Because $(\mathrm{T}_{t,s}^{\rho })_{\left( \rho ,s,t\right) \in E\times 
\mathbb{R}^{2}}$ is a family of contractions, the above bound yields 
\begin{equation*}
\liminf_{j\in J}\left\Vert \mathrm{T}_{t+\eta _{j},s+\varkappa _{j}}^{\rho
_{j}}\left( f\left( \rho \right) \right) -\mathrm{T}_{t,s}^{\rho _{j}}\left(
f\left( \rho \right) \right) \right\Vert _{\mathcal{X}}\geq D>0\ ,
\end{equation*}%
which contradicts the strong continuity of this family.
\end{proof}

If $(\mathrm{T}_{t,s}^{\rho })_{\left( \rho ,s,t\right) \in E\times \mathbb{R%
}^{2}}$ is a family of $\ast $-automorphisms of $\mathcal{X}$ then $(%
\mathfrak{T}_{t,s})_{s,t\in \mathbb{R}}$ is a family of $\ast $%
-automorphisms of $\mathfrak{X}$ and, by Lemma \ref{lemma supercigare} (i),
the classical subalgebra $\mathfrak{C}\subseteq \mathfrak{X}$ is contained
in the fixed-point algebra of the full quantum dynamics $(\mathfrak{T}%
_{t,s})_{s,t\in \mathbb{R}}$, i.e., for all $f\in \mathfrak{C}$ and all $%
s,t\in \mathbb{R}$, $\mathfrak{T}_{t,s}(f)=f$. Any family $(\mathfrak{T}%
_{t,s})_{s,t\in \mathbb{R}}$ of $\ast $-automorphisms of $\mathfrak{X}$
preserving each element of $\mathfrak{C}$ is of this form, at least when $%
\mathcal{X}$ is separable:

\begin{lemma}[State-dependent quantum dynamics and fixed-point algebra]
\label{lemma supercigare bis}\mbox{ }\newline
Let $\mathcal{X}$ be a separable, unital $C^{\ast }$-algebra. The classical
subalgebra $\mathfrak{C}\subseteq \mathfrak{X}$ is contained in the
fixed-point algebra of a strongly continuous two-parameter family $(%
\mathfrak{T}_{t,s})_{s,t\in \mathbb{R}}$ of $\ast $-automorphisms of $%
\mathfrak{X}$ iff there is a strongly continuous family $(\mathrm{T}%
_{t,s}^{\rho })_{\left( \rho ,s,t\right) \in E\times \mathbb{R}^{2}}$ of $%
\ast $-automorphisms of $\mathcal{X}$ satisfying (\ref{cigare0}).
\end{lemma}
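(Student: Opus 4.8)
The statement is a biconditional; the backward implication (existence of $(\mathrm{T}_{t,s}^{\rho})$ $\Rightarrow$ the stated property of $(\mathfrak{T}_{t,s})$) is exactly Lemma~\ref{lemma supercigare}: given a strongly continuous family of $\ast$-automorphisms $(\mathrm{T}_{t,s}^{\rho})_{(\rho,s,t)\in E\times\mathbb{R}^{2}}$, define $\mathfrak{T}_{t,s}$ through (\ref{cigare0}); then parts (i) and (iii) of that lemma guarantee that each $\mathfrak{T}_{t,s}$ is a $\ast$-automorphism of $\mathfrak{X}$ fixing $\mathfrak{C}$ pointwise and that the assembled family is strongly continuous. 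So the whole work lies in the converse, for which the plan is a fibrewise (continuous-field) descent argument exploiting that $\mathfrak{X}=C(E;\mathcal{X})$ is a $C(E)$-algebra with constant fibre $\mathcal{X}$ and central subalgebra $\mathfrak{C}$.

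First I would record that $\mathfrak{T}_{t,s}$ is $\mathfrak{C}$-linear: for $g\in\mathfrak{C}$ and $f\in\mathfrak{X}$ the centrality of $\mathfrak{C}$ together with the hypothesis $\mathfrak{T}_{t,s}(g)=g$ gives $\mathfrak{T}_{t,s}(gf)=\mathfrak{T}_{t,s}(g)\mathfrak{T}_{t,s}(f)=g\,\mathfrak{T}_{t,s}(f)$. Next, for each $\rho\in E$ consider the evaluation $\ast$-homomorphism $\mathrm{ev}_{\rho}:\mathfrak{X}\to\mathcal{X}$, $f\mapsto f(\rho)$, whose kernel is $J_{\rho}\doteq\{f\in\mathfrak{X}:f(\rho)=0\}$. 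Using normality of the compact Hausdorff space $E$ and a Urysohn function vanishing at $\rho$, one checks the standard identity $J_{\rho}=\overline{\mathfrak{C}_{\rho}\mathfrak{X}}$, where $\mathfrak{C}_{\rho}\doteq\{g\in\mathfrak{C}:g(\rho)=0\}$; indeed $(1-g)f$ approximates $f$ when $g$ is supported away from $\rho$ and $\|f(\cdot)\|_{\mathcal{X}}$ is small near $\rho$. Since $\mathfrak{T}_{t,s}$ is a $\mathfrak{C}$-linear surjective isometry, it maps $\mathfrak{C}_{\rho}\mathfrak{X}$ onto itself and hence preserves $J_{\rho}$. Therefore $\mathrm{ev}_{\rho}\circ\mathfrak{T}_{t,s}$ annihilates $J_{\rho}$ and factors through $\mathfrak{X}/J_{\rho}\cong\mathcal{X}$, yielding a $\ast$-homomorphism $\mathrm{T}_{t,s}^{\rho}:\mathcal{X}\to\mathcal{X}$ with $\mathrm{T}_{t,s}^{\rho}(f(\rho))=[\mathfrak{T}_{t,s}(f)](\rho)$ for all $f$, which is precisely (\ref{cigare0}). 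Bijectivity of $\mathrm{T}_{t,s}^{\rho}$ follows by running the same descent for $\mathfrak{T}_{t,s}^{-1}$, which also fixes $\mathfrak{C}$ pointwise, so that its fibre maps invert the $\mathrm{T}_{t,s}^{\rho}$. This fibrewise descent is the conceptual heart and the step I expect to be the main obstacle, the delicate point being the identification $J_{\rho}=\overline{\mathfrak{C}_{\rho}\mathfrak{X}}$ that converts the purely algebraic hypothesis into preservation of the evaluation ideals.

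It then remains to verify that $(\mathrm{T}_{t,s}^{\rho})_{(\rho,s,t)\in E\times\mathbb{R}^{2}}$ is strongly continuous. Here I would use that separability of $\mathcal{X}$ makes $E$ weak$^{\ast}$-metrizable, so $E\times\mathbb{R}^{2}$ is metrizable and sequential continuity suffices. Fix $A\in\mathcal{X}$, identify it with the constant function $\hat{A}\in\mathfrak{X}$ so that $\mathrm{T}_{t,s}^{\rho}(A)=[\mathfrak{T}_{t,s}(\hat{A})](\rho)$, and take $(\rho_{n},s_{n},t_{n})\to(\rho,s,t)$. The triangle inequality bounds $\|\mathrm{T}_{t_{n},s_{n}}^{\rho_{n}}(A)-\mathrm{T}_{t,s}^{\rho}(A)\|_{\mathcal{X}}$ by $\|[\mathfrak{T}_{t_{n},s_{n}}(\hat{A})-\mathfrak{T}_{t,s}(\hat{A})](\rho_{n})\|_{\mathcal{X}}+\|[\mathfrak{T}_{t,s}(\hat{A})](\rho_{n})-[\mathfrak{T}_{t,s}(\hat{A})](\rho)\|_{\mathcal{X}}$; the first summand is at most $\|\mathfrak{T}_{t_{n},s_{n}}(\hat{A})-\mathfrak{T}_{t,s}(\hat{A})\|_{\mathfrak{X}}\to0$ by strong continuity of $\mathfrak{T}$ in $(s,t)$, and the second tends to $0$ because $\mathfrak{T}_{t,s}(\hat{A})\in C(E;\mathcal{X})$ is continuous at $\rho$. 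Both terms vanishing yields the desired joint strong continuity, which completes the converse and hence the lemma.
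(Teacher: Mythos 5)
Your proof is correct, and the backward direction is indeed just Lemma \ref{lemma supercigare} (i) and (iii), exactly as in the paper. For the converse, your route differs from the paper's in packaging though not in substance: the paper defines $\mathrm{T}_{t,s}^{\rho}$ directly on the constant functions $\mathcal{X}\subseteq\mathfrak{X}$ and verifies (\ref{cigare0}) for general $f\in\mathfrak{X}$ by the explicit identity $[\mathfrak{T}_{t,s}(f)](\rho)=[\mathfrak{T}_{t,s}(fg_{n}-f(\rho)g_{n})](\rho)+\mathrm{T}_{t,s}^{\rho}(f(\rho))$ with the concrete peaked sequence $g_{n}=(1+nd(\cdot,\rho))^{-1}$ built from a metric on $E$ (this is where separability enters), letting $n\rightarrow\infty$ and using contractivity; you instead abstract the same localization into the ideal identity $J_{\rho}=\overline{\mathfrak{C}_{\rho}\mathfrak{X}}$, deduce invariance of $J_{\rho}$ from $\mathfrak{C}$-linearity, and descend through $\mathfrak{X}/J_{\rho}\cong\mathcal{X}$. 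Your version has two small advantages: the $\ast$-homomorphism property of $\mathrm{T}_{t,s}^{\rho}$ is automatic from the quotient construction rather than read off from a limit, and the descent needs only normality of the compact Hausdorff space $E$ (Urysohn), so separability is used only in your optional reduction to sequences for the continuity check, where the net version of the same triangle-inequality estimate would do just as well. You also make the joint strong continuity of $(\rho,s,t)\mapsto\mathrm{T}_{t,s}^{\rho}$ explicit, which the paper leaves implicit. One phrasing slip worth fixing: in your justification of $J_{\rho}=\overline{\mathfrak{C}_{\rho}\mathfrak{X}}$ you should say that $gf$ approximates $f$ (equivalently, $(1-g)f$ has \emph{small norm}) when $g\in\mathfrak{C}_{\rho}$ equals $1$ outside a small neighbourhood of $\rho$ and $\Vert f(\cdot)\Vert_{\mathcal{X}}$ is small near $\rho$; as written the roles of $g$ and $1-g$ are swapped, though the intended (and correct) argument is clear.
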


\begin{proof}
In order to obtain the equivalence stated in the lemma, it only remains to
prove that any strongly continuous family $(\mathfrak{T}_{t,s})_{s,t\in 
\mathbb{R}}$ of $\ast $-automorphisms of $\mathfrak{X}$ whose fixed-point
algebra contains $\mathfrak{C}$ comes from the strongly continuous family $(%
\mathrm{T}_{t,s}^{\rho })_{\left( \rho ,s,t\right) \in E\times \mathbb{R}%
^{2}}$ of $\ast $-homomorphisms defined by 
\begin{equation}
\mathrm{T}_{t,s}^{\rho }\left( A\right) \doteq \left[ \mathfrak{T}%
_{t,s}\left( A\right) \right] \left( \rho \right) \ ,\qquad \rho \in E,\
A\in \mathcal{X}\subseteq \mathfrak{X},\ s,t\in \mathbb{R}\ .
\label{dfsdklfjsdklfjdslkjfsldf}
\end{equation}%
To this end, recall that, if $\mathcal{X}$ is separable then $E$ is
metrizable. So, take a distance $d(\cdot ,\cdot )$ generating the weak$%
^{\ast }$ topology on $E$. For any $\rho \in E$ define the sequence $%
\{g_{n}\}_{n\in \mathbb{N}}\subseteq \mathfrak{C}$ of continuous functions
by 
\begin{equation*}
g_{n}\left( \tilde{\rho}\right) =\frac{1}{1+nd\left( \tilde{\rho},\rho
\right) }\ ,\qquad \tilde{\rho}\in E,\ n\in \mathbb{N}\ .
\end{equation*}%
Since, by assumption, $\mathfrak{T}_{t,s}$ is a $\ast $-automorphism of $%
\mathfrak{X}$ satisfying $\mathfrak{T}_{t,s}(g_{n})=g_{n}$ for $s,t\in 
\mathbb{R}$ and $n\in \mathbb{N}$, we note that, for every fixed $\rho \in E$%
, $s,t\in \mathbb{R}$, $n\in \mathbb{N}$ and all functions $f\in \mathfrak{X}
$,%
\begin{equation*}
\left[ \mathfrak{T}_{t,s}\left( f\right) \right] \left( \rho \right) =\left[ 
\mathfrak{T}_{t,s}\left( fg_{n}-f\left( \rho \right) g_{n}\right) \right]
\left( \rho \right) +\mathrm{T}_{t,s}^{\rho }\left( f\left( \rho \right)
\right) \ .
\end{equation*}%
Because $\mathfrak{T}_{t,s}$ is a contraction (for it is a $\ast $%
-automorphism), by continuity of $f\in \mathfrak{X}$, it follows that 
\begin{equation*}
\lim_{n\rightarrow \infty }\left\Vert \mathfrak{T}_{t,s}\left(
fg_{n}-f\left( \rho \right) g_{n}\right) \right\Vert _{\mathfrak{X}%
}=\lim_{n\rightarrow \infty }\left\Vert fg_{n}-f\left( \rho \right)
g_{n}\right\Vert _{\mathfrak{X}}=0\ ,
\end{equation*}%
and hence,%
\begin{equation*}
\left[ \mathfrak{T}_{t,s}\left( f\right) \right] \left( \rho \right) =%
\mathrm{T}_{t,s}^{\rho }\left( f\left( \rho \right) \right) \ ,\qquad \rho
\in E,\ f\in \mathfrak{X},\ s,t\in \mathbb{R}\ .
\end{equation*}%
From the last equality we also conclude that $\mathrm{T}_{t,s}^{\rho }$ is a 
$\ast $-automorphism of $\mathcal{X}$ for all $\left( \rho ,s,t\right) \in
E\times \mathbb{R}^{2}$.
\end{proof}

The above situation motivates the following notion of \emph{state-dependent }%
$C^{\ast }$-dynamical system:

\begin{definition}[State-dependent $C^{\ast }$-dynamical systems]
\label{Extended dynamical systems}\mbox{ }\newline
If $\mathfrak{T}\equiv (\mathfrak{T}_{t,s})_{s,t\in \mathbb{R}}$ is a
strongly continuous two-parameter family of $\ast $-automorphisms of $%
\mathfrak{X}$ preserving each element of $\mathfrak{C}\subseteq \mathfrak{X}$
and satisfying the reverse cocycle property 
\begin{equation*}
\mathfrak{T}_{t,s}=\mathfrak{T}_{r,s}\circ \mathfrak{T}_{t,r}\ ,\qquad
s,t,r\in \mathbb{R}\ ,
\end{equation*}%
then we name the pair $(\mathfrak{X},\mathfrak{T})$ \textquotedblleft
state-dependent $C^{\ast }$-dynamical system\textquotedblright .
\end{definition}

\noindent An example of such a $C^{\ast }$-dynamical system is given from
Theorem \ref{theorem sdfkjsdklfjsdklfj copy(3)} via the family $\mathfrak{T}%
^{h,s_{0}}\equiv (\mathfrak{T}_{t,s}^{h,s_{0}})_{s,t\in \mathbb{R}}$ of $%
\ast $-automorphisms of $\mathfrak{X}$ defined by%
\begin{equation*}
\left[ \mathfrak{T}_{t,s}^{h,s_{0}}\left( f\right) \right] \left( \rho
\right) \doteq T_{t,s}^{\mathbf{\varpi }^{h}\left( s_{0},\cdot ;\rho \right)
}\left( f\left( \rho \right) \right) \ ,\qquad \rho \in E,\ f\in \mathfrak{X}%
,\ s,t\in \mathbb{R}\ ,
\end{equation*}%
for any fixed $s_{0}\in \mathbb{R}$ and every (time-depending classical
Hamiltonian) $h\in C_{b}\left( \mathbb{R};\mathfrak{Y}\left( \mathbb{R}%
\right) \right) $ satisfying all assumptions of Theorem \ref{theorem
sdfkjsdklfjsdklfj copy(3)}. This is a state-dependent $C^{\ast }$-dynamical
system:

\begin{lemma}[From self-consistency equations to state-dependent quantum
dynamics]
\label{lemma supercigare bis copy(1)}\mbox{ }\newline
Assume Conditions (a)-(b) of Theorem \ref{theorem sdfkjsdklfjsdklfj copy(3)}%
. Then, for any $s_{0}\in \mathbb{R}$, $(\mathfrak{X},\mathfrak{T}%
^{h,s_{0}}) $ is a state-dependent $C^{\ast }$-dynamical system.
\end{lemma}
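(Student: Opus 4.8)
The plan is to recognize $\mathfrak{T}^{h,s_{0}}$ as being exactly of the form (\ref{cigare0}) and to reduce the whole statement to an application of Lemma \ref{lemma supercigare}. Concretely, for the fixed $s_{0}\in\mathbb{R}$ and each state $\rho\in E$, I would set
$$
\mathrm{T}_{t,s}^{\rho}\doteq T_{t,s}^{\mathbf{\varpi}^{h}\left(s_{0},\cdot\,;\rho\right)}\ ,\qquad s,t\in\mathbb{R}\ ,
$$
where $\mathbf{\varpi}^{h}\in C(\mathbb{R}^{2};\mathrm{Aut}(E))$ is the self-consistent flow provided by Theorem \ref{theorem sdfkjsdklfjsdklfj copy(3)} and $(T_{t,s}^{\xi})_{s,t}$ is the family of $\ast$-automorphisms of $\mathcal{X}$ built in Section \ref{Section QM} from the driving function $\xi\in C(\mathbb{R};E)$. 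With this choice, the defining formula for $\mathfrak{T}_{t,s}^{h,s_{0}}$ reads $[\mathfrak{T}_{t,s}^{h,s_{0}}(f)](\rho)=\mathrm{T}_{t,s}^{\rho}(f(\rho))$, so $\mathfrak{T}_{t,s}^{h,s_{0}}$ is precisely the contraction associated with the family $(\mathrm{T}_{t,s}^{\rho})_{\rho\in E}$ through (\ref{cigare0}). It then suffices to check the three hypotheses of Lemma \ref{lemma supercigare} and read off Definition \ref{Extended dynamical systems}.

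The first two hypotheses are immediate. By construction in Section \ref{Section QM}, each $T_{t,s}^{\xi}$ is a $\ast$-automorphism of $\mathcal{X}$, hence so is every $\mathrm{T}_{t,s}^{\rho}$; Lemma \ref{lemma supercigare}(i) therefore yields at once that each $\mathfrak{T}_{t,s}^{h,s_{0}}$ is a $\ast$-automorphism of $\mathfrak{X}$ whose fixed-point algebra contains $\mathfrak{C}$. For the reverse cocycle property, the key observation is that, once $\rho$ is fixed, the \emph{same} driving function $\xi=\mathbf{\varpi}^{h}(s_{0},\cdot\,;\rho)$ appears in all three automorphisms of a cocycle relation; thus (\ref{babyreverse}) gives directly $\mathrm{T}_{t,s}^{\rho}=\mathrm{T}_{r,s}^{\rho}\circ\mathrm{T}_{t,r}^{\rho}$ for all $s,r,t\in\mathbb{R}$, and Lemma \ref{lemma supercigare}(ii) transfers this to $(\mathfrak{T}_{t,s}^{h,s_{0}})_{s,t}$, which is exactly the reverse cocycle property demanded in Definition \ref{Extended dynamical systems}.

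The only substantive point, and the one I expect to be the main obstacle, is the remaining hypothesis (iii): the joint strong continuity of $(\mathrm{T}_{t,s}^{\rho})_{(\rho,s,t)\in E\times\mathbb{R}^{2}}$, that is, the norm-continuity of $(\rho,s,t)\mapsto T_{t,s}^{\mathbf{\varpi}^{h}(s_{0},\cdot\,;\rho)}(A)$ for each fixed $A\in\mathcal{X}$ (this continuity in $\rho$ also ensures well-definedness of $\mathfrak{T}_{t,s}^{h,s_{0}}$ as a map into $\mathfrak{X}=C(E;\mathcal{X})$). I would establish it by patching together two inputs. First, $\mathbf{\varpi}^{h}\in C(\mathbb{R}^{2};\mathrm{Aut}(E))$, with $\mathrm{Aut}(E)$ carrying the uniform topology (\ref{uniform convergence weak*}): combining uniform convergence of $\mathbf{\varpi}^{h}(s_{0},t_{j})\to\mathbf{\varpi}^{h}(s_{0},t)$ with the weak$^{\ast}$-continuity of each limiting map and a standard compactness argument on $[-R,R]\times E$, one gets that $\rho_{j}\to\rho$ forces $\mathbf{\varpi}^{h}(s_{0},\cdot\,;\rho_{j})\to\mathbf{\varpi}^{h}(s_{0},\cdot\,;\rho)$ uniformly on compact time intervals. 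Second, the generator $X_{t}^{\xi(t)}$ depends on $\xi(t)$ only through $\mathrm{D}h(t;\xi(t))$, and Condition (b) of Theorem \ref{theorem sdfkjsdklfjsdklfj copy(3)} controls $\|\mathrm{D}h(t;\rho)-\mathrm{D}h(t;\tilde\rho)\|_{\mathcal{X}}$ by the finite-dimensional $\mathfrak{B}$-seminorm of $\rho-\tilde\rho$; since $\mathfrak{B}$ is finite-dimensional, uniform convergence of the driving states yields uniform convergence of the generators, whence, via the Dyson-series representation of $T^{\xi}$ and a Duhamel estimate as in Lemma \ref{Solution selfbaby copy(1)}, $T_{t_{j},s_{j}}^{\xi_{j}}(A)\to T_{t,s}^{\xi}(A)$ in norm. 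This delivers the required joint continuity, and Lemma \ref{lemma supercigare}(iii) then gives the strong continuity of $(\mathfrak{T}_{t,s}^{h,s_{0}})_{s,t}$. All the ingredients are already in hand from Theorem \ref{theorem sdfkjsdklfjsdklfj copy(3)} and the well-posedness results of Section \ref{Well-posedness sect copy(1)}; the difficulty is purely the continuity bookkeeping needed to combine the regularity of the self-consistent flow with the continuous dependence of the Dyson series on its time-dependent generator. With (i)--(iii) verified, $(\mathfrak{X},\mathfrak{T}^{h,s_{0}})$ meets every requirement of Definition \ref{Extended dynamical systems}, completing the proof.
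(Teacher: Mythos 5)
Your proposal is correct and follows essentially the same route as the paper: identify $\mathfrak{T}^{h,s_{0}}$ as the contraction family (\ref{cigare0}) built from $\mathrm{T}_{t,s}^{\rho}=T_{t,s}^{\mathbf{\varpi}^{h}(s_{0},\cdot\,;\rho)}$, invoke Lemma \ref{lemma supercigare} (i)--(ii) for the $\ast$-automorphism, fixed-point and reverse cocycle properties, and combine Theorem \ref{theorem sdfkjsdklfjsdklfj copy(3)} with the continuity estimate of Lemma \ref{Solution selfbaby copy(1)} to feed Lemma \ref{lemma supercigare} (iii) for strong continuity. Your write-up merely spells out the continuity bookkeeping that the paper leaves implicit.
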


\begin{proof}
Fix all parameters of the lemma. By Lemma \ref{lemma supercigare} (i), $%
\mathfrak{T}_{t,s}^{h,s_{0}}$ is $\ast $-automorphism of $\mathfrak{X}$ and
the classical subalgebra $\mathfrak{C}\subseteq \mathfrak{X}$ is contained
in the fixed-point algebra of $\mathfrak{T}_{t,s}^{h,s_{0}}$ for any $s,t\in 
\mathbb{R}$. From Lemma \ref{lemma supercigare} (ii), $\mathfrak{T}%
^{h,s_{0}} $ clearly satisfies the reverse cocycle property. Moreover, by
Lemma \ref{Solution selfbaby copy(1)} and Theorem \ref{theorem
sdfkjsdklfjsdklfj copy(3)}, we can infer from Lemma \ref{lemma supercigare}
(iii) that $\mathfrak{T}^{h,s_{0}}$ is strongly continuous.
\end{proof}

Exactly like the classical dynamics defined in Section \ref{Section CM},
state-dependent $C^{\ast }$-dynamical systems $(\mathfrak{X},\mathfrak{T})$
induce Feller dynamics within the (classical) commutative $C^{\ast }$%
-algebra $\mathfrak{C}$:

\begin{itemize}
\item Recall that $\mathrm{Aut}\left( E\right) $ is the space of all
automorphisms (or self-homeomorphisms) of the state space $E$, endowed with
the topology of uniform convergence of weak$^{\ast }$-continuous functions.

\item From the family $(\mathfrak{T}_{t,s})_{s,t\in \mathbb{R}}$, we define
a continuous family $(\phi _{t,s})_{s,t\in \mathbb{R}}\subseteq \mathrm{Aut}%
\left( E\right) $ by 
\begin{equation}
\phi _{t,s}\left( \rho \right) \doteq \rho \circ \mathrm{T}_{t,s}^{\rho }\
,\qquad \rho \in E,\ s,t\in \mathbb{R}\ ,  \label{phitsbis}
\end{equation}%
where $(\mathrm{T}_{t,s}^{\rho })_{\left( \rho ,s,t\right) \in E\times 
\mathbb{R}^{2}}$ is a strongly continuous family of $\ast $-automorphisms of 
$\mathcal{X}$ satisfying (\ref{cigare0}). See Lemma \ref{lemma supercigare
bis}. Compare with Equation (\ref{phits}). Similar to Corollary \ref%
{corollary conservation},%
\begin{equation}
\phi _{t,s}\left( \mathcal{E}(E)\right) \subseteq \mathcal{E}(E)\qquad \text{%
and}\qquad \phi _{t,s}(\overline{\mathcal{E}(E)})\subseteq \overline{%
\mathcal{E}(E)}\ .  \label{corollary conservation0}
\end{equation}

\item This family in turn yields a strongly continuous two-parameter family $%
(V_{t,s})_{s,t\in \mathbb{R}}$ of $\ast $-auto%
\-%
morphisms of $\mathfrak{C}$ defined by 
\begin{equation}
V_{t,s}f\doteq f\circ \phi _{t,s}\ ,\qquad f\in \mathfrak{C},\ s,t\in 
\mathbb{R}\ .  \label{shorodinger dynamicsbis}
\end{equation}%
Compare with Equation (\ref{classical evolution familybaby}). Moreover, by (%
\ref{corollary conservation0}), this map can also be defined in the same way
on $C(\overline{\mathcal{E}(E)};\mathbb{C})$, where we recall that $%
\overline{\mathcal{E}(E)}$ is the phase space of Definition \ref{phase space}%
.

\item If (\ref{reverse property}) holds true, then $(V_{t,s})_{s,t\in 
\mathbb{R}}$ satisfies a reverse cocycle property, i.e., for $s,t,r\in 
\mathbb{R}$, $V_{t,s}=V_{t,r}\circ V_{r,s}$. This classical dynamics is a 
\emph{Feller evolution system}, as defined in Section \ref{Section CM}.
Compare with Proposition \ref{lemma poisson copy(1)}.

\item If, for any $\rho \in E$, the strongly continuous family $(\mathrm{T}%
_{t,s}^{\rho })_{s,t\in \mathbb{R}}$ of $\ast $-automorphisms defined by (%
\ref{cigare0}) satisfies in $\mathcal{B}(\mathcal{X})$ some non-auto%
\-%
nomous evolution equation, then the family $(V_{t,s})_{s,t\in \mathbb{R}}$
would also satisfy some non-auto%
\-%
nomous evolution equation, as discussed at the end of Section \ref{Section
CM}.
\end{itemize}

\subsection{State-Dependent Symmetries and Classical Dynamics\label{Symmetry
Group}}

Fix a state-dependent $C^{\ast }$-dynamical system $(\mathfrak{X},\mathfrak{T%
})$. See Definition \ref{Extended dynamical systems}. A \emph{%
state-dependent symmetry} of $(\mathfrak{X},\mathfrak{T})$ is defined as
follows:

\begin{definition}[State-dependent symmetry]
\label{symmetry 1}\mbox{ }\newline
A state-dependent symmetry $\mathfrak{G}$ of $(\mathfrak{X},\mathfrak{T})$
is a $\ast $-automorphism of $\mathfrak{X}$ satisfying 
\begin{equation*}
\mathfrak{G}\circ \mathfrak{T}_{t,s}=\mathfrak{T}_{t,s}\circ \mathfrak{G}\
,\qquad s,t\in \mathbb{R}\ ,
\end{equation*}%
and with fixed-point algebra containing $\mathfrak{C}\subseteq \mathfrak{X}$.
\end{definition}

If $\mathfrak{G}$ is a state-dependent symmetry of $(\mathfrak{X},\mathfrak{T%
})$, then, similar to Lemma \ref{lemma supercigare bis}, the equalities 
\begin{equation*}
G^{\rho }\left( A\right) \doteq \left[ \mathfrak{G}\left( A\right) \right]
\left( \rho \right) \ ,\qquad \rho \in E,\ A\in \mathcal{X}\subseteq 
\mathfrak{X}\ ,
\end{equation*}%
define a strongly continuous family $(G^{\rho })_{\rho \in E}$ of $\ast $%
-automorphisms of $\mathcal{X}$. In this case, we define the weak$^{\ast }$%
-compact space%
\begin{equation}
E_{\mathfrak{G}}\doteq \left\{ \rho \in E:\rho \circ G^{\rho }=\rho \right\}
\label{G-invariance states}
\end{equation}%
of $\mathfrak{G}$-invariant states. By Equation (\ref{phitsbis}) and
Definition \ref{symmetry 1}, together with (\ref{cigare0}) for $\mathfrak{T}=%
\mathfrak{T}_{t,s}$ and $\mathrm{T}^{\rho }=\mathrm{T}_{t,s}^{\rho }$, it
follows that%
\begin{equation}
\phi _{t,s}\left( E_{\mathfrak{G}}\right) \subseteq E_{\mathfrak{G}}\qquad 
\text{and}\qquad \phi _{t,s}\left( E\backslash E_{\mathfrak{G}}\right)
\subseteq E\backslash E_{\mathfrak{G}}\ ,\qquad s,t\in \mathbb{R}\ .
\label{symmetry group eq1}
\end{equation}%
In particular, by Equation (\ref{shorodinger dynamicsbis}), for any function 
$f\in \mathfrak{C}$ and times $s,t\in \mathbb{R}$, 
\begin{equation}
V_{t,s}\left( f|_{E_{\mathfrak{G}}}\right) \doteq \left( V_{t,s}f\right)
|_{E_{\mathfrak{G}}}\qquad \text{and}\qquad V_{t,s}\left( f|_{E\backslash E_{%
\mathfrak{G}}}\right) \doteq \left( V_{t,s}f\right) |_{E\backslash E_{%
\mathfrak{G}}}  \label{symmetry group eq2}
\end{equation}%
well define two two-parameter families of $\ast $-automorphisms respectively
acting on%
\begin{equation}
\left\{ f|_{E_{\mathfrak{G}}}:f\in \mathfrak{C}\right\} \qquad \text{and}%
\qquad \left\{ f|_{E\backslash E_{\mathfrak{G}}}:f\in \mathfrak{C}\right\} \
.  \label{symmetry group eq2bis}
\end{equation}

More generally, we can consider a faithful group homomorphism $g\mapsto 
\mathfrak{G}_{g}$ from a group $\mathrm{G}$ to the group of $\ast $%
-automorphisms of $\mathfrak{X}$. Then, a \emph{state-dependent symmetry
group} is defined as follows:

\begin{definition}[State-dependent symmetry group]
\label{Extended dynamical systems copy(1)}\mbox{ }\newline
A state-dependent symmetry group of $(\mathfrak{X},\mathfrak{T})$ is a group 
$(\mathfrak{G}_{g})_{g\in \mathrm{G}}$ of state-dependent symmetries of $(%
\mathfrak{X},\mathfrak{T})$.
\end{definition}

If $(\mathfrak{G}_{g})_{g\in \mathrm{G}}$ is a state-dependent symmetry
group of $(\mathfrak{X},\mathfrak{T})$, then, defining the weak$^{\ast }$%
-compact space 
\begin{equation}
E_{\mathrm{G}}\doteq \left\{ \rho \in E:\rho \circ G_{g}^{\rho }=\rho \quad 
\text{for}\ \text{all}\ g\in \mathrm{G}\right\}  \label{G-invariance states2}
\end{equation}%
of $\mathrm{G}$-invariant states, we observe that 
\begin{equation}
\phi _{t,s}\left( E_{\mathrm{G}}\right) \subseteq E_{\mathrm{G}}\ ,\qquad
\phi _{t,s}\left( E\backslash E_{\mathrm{G}}\right) \subseteq E\backslash E_{%
\mathrm{G}}\ ,\qquad s,t\in \mathbb{R}\ ,  \label{G-invariance states3}
\end{equation}%
(cf. (\ref{symmetry group eq1})) and, exactly like in Equations (\ref%
{symmetry group eq2})-(\ref{symmetry group eq2bis}), we infer from (\ref%
{shorodinger dynamicsbis}) the existence of two-parameter families of $\ast $%
-automorphisms respectively defined on%
\begin{equation*}
\left\{ f|_{E_{\mathrm{G}}}:f\in \mathfrak{C}\right\} \qquad \text{and}%
\qquad \left\{ f|_{E\backslash E_{\mathrm{G}}}:f\in \mathfrak{C}\right\} \ .
\end{equation*}

\subsection{Reduction of Classical Dynamics via Invariant Subspaces\label%
{Invariant subspaces}}

Any family $\mathcal{B}\subseteq \mathcal{X}$ defines an equivalence
relation 
\begin{equation*}
\heartsuit _{\mathcal{B}}\doteq \left\{ \left( \rho _{1},\rho _{2}\right)
\in E^{2}:\rho _{1}\left( A\right) =\rho _{2}\left( A\right) \text{ for all }%
A\in \mathcal{B}\right\}
\end{equation*}%
on the set $E$ of states. We say that the subset $E_{\mathcal{B}}\subseteq E$
represents $E$ with respect to $\mathcal{B}$ whenever, for all $\rho _{1}\in
E$, there is $\rho _{2}\in E_{\mathcal{B}}$ such that $\left( \rho _{1},\rho
_{2}\right) \in \heartsuit _{\mathcal{B}}$. In particular, one can identify
continuous functions $f\in \overline{\mathfrak{C}_{\mathcal{B}}}$ with their
restrictions to $E_{\mathcal{B}}$.

Fix now a state-dependent $C^{\ast }$-dynamical system $(\mathfrak{X},%
\mathfrak{T})$. See Definition \ref{Extended dynamical systems}. For any
self-adjoint subspace $\mathcal{B}\subseteq \mathcal{X}$, consider the
following conditions:

\begin{condition}[Reduction of dynamics]
\label{condition Lie copy(1)}\mbox{ }\newline
\emph{(i)} $(\mathcal{B}\cap \mathcal{X}^{\mathbb{R}},i[\cdot ,\cdot ])$ is
a real Lie algebra, $[\cdot ,\cdot ]$ being the usual commutator in $%
\mathcal{X}$. \newline
\emph{(ii)} $E_{\mathcal{B}}$ is a weak$^{\ast }$-compact space representing 
$E$ with respect to $\mathcal{B}$. \newline
\emph{(iii)} $\mathrm{T}_{t,s}^{\rho }\left( \mathcal{B}\right) \subseteq 
\mathcal{B}$ for all $\rho \in E$, $s,t\in \mathbb{R}$, with $\mathrm{T}%
_{t,s}^{\rho }$ defined by (\ref{cigare0}).\newline
\emph{(iv)} $\phi _{t,s}\left( E_{\mathcal{B}}\right) \subseteq E_{\mathcal{B%
}}$ for all $s,t\in \mathbb{R}$, with $(\phi _{t,s})_{s,t\in \mathbb{R}}$
defined by (\ref{phitsbis}).
\end{condition}

\noindent By (\ref{shorodinger dynamicsbis}), this condition yields that the
polynomial algebra $\mathfrak{C}_{\mathcal{B}}$ (\ref{def frac Cb}) is
preserved by the family $(V_{t,s})_{s,t\in \mathbb{R}}$, i.e., 
\begin{equation*}
V_{t,s}\left( \mathfrak{C}_{\mathcal{B}}\right) \subseteq \mathfrak{C}_{%
\mathcal{B}}\ ,\qquad s,t\in \mathbb{R}\ .
\end{equation*}%
In this case, the state space of the classical dynamics coming from $(%
\mathfrak{T}_{t,s})_{s,t\in \mathbb{R}}$ can be restricted to the weak$%
^{\ast }$-compact subset $E_{\mathcal{B}}\subseteq E$ with the corresponding
Poisson algebra for observables being the subalgebra $\mathfrak{C}_{\mathcal{%
B}}\subseteq C\left( E_{\mathcal{B}};\mathbb{C}\right) $.

\subsection{Other Constructions Involving Algebras of $C^{\ast }$-Valued
Functions\label{Quantum AlgebrasQuantum Algebras}}

We are not aware whether the $C^{\ast }$-algebra $\mathfrak{X}$ of $\mathcal{%
X}$-valued continuous functions on states has previously been systematically
studied. However, other constructions of $C^{\ast }$-algebras of $\mathcal{X}
$-valued, continuous or measurable, functions are well-known in the
literature. For instance, in \cite[Definition 1]{marseille-Kastler} $C^{\ast
}$-algebras of $\mathcal{X}$-valued measurable functions on a locally
compact group $\{\mathfrak{G}_{g}\}_{g\in \mathrm{G}}$ of $\ast $%
-automorphisms of $\mathcal{X}$ are introduced. This kind of construction
goes under the name \textquotedblleft covariance algebras\textquotedblright
. In contrast, note that the state space $E$ has no natural group structure.
Moreover, the product of covariances algebras are convolutions and not
point-wise products as in $\mathfrak{X}$.

Covariance algebras are reminiscent of crossed products of $C^{\ast }$%
-algebras by groups acting on these algebras. Such products are relatively
standard in the theory of operator algebras. For instance, they are
fundamental in Haagerup's approach to noncommutative $L_{p}$-spaces \cite%
{Haagerup}.

\section{The Weak$^{\ast }$-Hausdorff Hypertopology\label{Hausdorff
Hypertopology}}

\noindent \textit{Deep in the human unconscious is a pervasive need for a
logical universe that makes sense. But the real universe is always one step
beyond logic.}\smallskip

\hfill \textquotedblleft The Sayings of Muad'Dib\textquotedblright\ by the
Princess Irulan\textit{\footnote{\textit{Dune} by F. Herbert (1965).}}%
\bigskip

The aim of this section is to provide all arguments to deduce Theorems \ref%
{theorem dense cool1}-\ref{theorem density2}. We adopt a broad perspective
on the weak$^{\ast }$-Hausdorff hypertopology because it does not seem to
have been considered in the past. This leads, hopefully, to a good
understanding of this hypertopology along with interesting connections to
other fields of mathematics and more general results than those stated in
Section \ref{generic convex set}. This broader perspective also highlights
the role played by the convexity of weak$^{\ast }$-compact subsets in our
arguments.

Recall that, when the restriction to singletons of a topology for sets of
closed subsets of topological spaces coincide with the original topology of
the underlying space, we talk about hypertopologies and hyperspaces of
closed sets.

\subsection{Immeasurable Hyperspaces}

In all Section \ref{Hausdorff Hypertopology}, $\mathcal{X}$ is not
necessarily a $C^{\ast }$-algebra, but only a (real or complex) Banach
space. Unless it is explicitly mentioned, for convenience, we always
consider the complex case, as in all other sections. We study subsets of its
dual $\mathcal{X}^{\ast }$, which, endowed with the weak$^{\ast }$-topology,
is a locally convex Hausdorff space. See, e.g., \cite[Theorem 10.8]%
{BruPedra2}. As is usual in the theory of hyperspaces \cite{Beer}, we start
with the set%
\begin{equation*}
\mathbf{F}\left( \mathcal{X}^{\ast }\right) \doteq \left\{ F\subseteq 
\mathcal{X}^{\ast }:F\neq \emptyset \text{ is weak}^{\ast }\text{-closed}%
\right\}
\end{equation*}%
of all nonempty weak$^{\ast }$-closed subsets of $\mathcal{X}^{\ast }$. It
is endowed below with some hypertopology.

Recall that there are various standard hypertopologies on general sets of
nonempty closed subsets of a metric space $(\mathcal{Y},d)$: the Fell,
Vietoris, Wijsman, proximal or locally finite hypertopologies, to name a few
well-known examples. See, e.g., \cite{Beer}. The most well-studied and
well-known hypertopology, named the Hausdorff metric topology \cite[%
Definition 3.2.1]{Beer}, comes from the Hausdorff distance between two sets $%
F_{1},F_{2}$, associated with the metric $d$ on $\mathcal{Y}$: 
\begin{equation}
d_{H}\left( F_{1},F_{2}\right) \doteq \max \left\{ \sup_{x_{1}\in
F_{1}}\inf_{x_{2}\in F_{2}}d\left( x_{1},x_{2}\right) ,\sup_{x_{2}\in
F_{2}}\inf_{x_{1}\in F_{1}}d\left( x_{1},x_{2}\right) \right\} \in \mathbb{R}%
_{0}^{+}\cup \left\{ \infty \right\} \ .  \label{Hausdorf}
\end{equation}%
In this case, the corresponding hyperspace of nonempty closed subsets of $%
\mathcal{Y}$ is complete iff the metric space $(\mathcal{Y},d)$ is complete.
See, e.g., \cite[Theorem 3.2.4]{Beer}. The Hausdorff metric topology is the
hypertopology used in \cite{Klee,FonfLindenstrauss}, the metric $d$ being
the one associated with the norm of a separable Banach space $\mathcal{Y}$,
in order to prove the density of the set of convex compact subsets of $%
\mathcal{Y}$ with dense extreme boundary.

None of these well-known hypertopologies is used here for $\mathbf{F}(%
\mathcal{X}^{\ast })$. Instead, we use a weak$^{\ast }$ version of the
Hausdorff metric topology. This corresponds to the weak$^{\ast }$-Hausdorff
hypertopology of Definition \ref{hypertopology}, which is naturally extended
to all weak$^{\ast }$-closed sets of $\mathbf{F}(\mathcal{X}^{\ast })$:

\begin{definition}[Weak$^{\ast }$-Hausdorff hypertopology]
\label{hypertopology0}\mbox{ }\newline
The weak$^{\ast }$-Hausdorff hypertopology on $\mathbf{F}(\mathcal{X}^{\ast
})$ is the topology induced (see (\ref{induced topology})) by the family of
Hausdorff pseudometrics $d_{H}^{(A)}$ defined, for all $A\in \mathcal{X}$, by%
\begin{equation}
d_{H}^{(A)}(F,\tilde{F})\doteq \max \left\{ \sup_{\sigma \in F}\inf_{\tilde{%
\sigma}\in \tilde{F}}\left\vert \left( \sigma -\tilde{\sigma}\right) \left(
A\right) \right\vert ,\sup_{\tilde{\sigma}\in \tilde{F}}\inf_{\sigma \in
F}\left\vert \left( \sigma -\tilde{\sigma}\right) \left( A\right)
\right\vert \right\} \in \mathbb{R}_{0}^{+}\cup \left\{ \infty \right\}
,\quad F,\tilde{F}\in \mathbf{F}\left( \mathcal{X}^{\ast }\right) \ .
\label{def}
\end{equation}
\end{definition}

\noindent To our knowledge, this hypertopology has not been considered so
far and we thus give here a detailed study of its main properties. Recall
that it is an \emph{hyper}topology because any net $(\sigma _{j})_{j\in J}$
in $\mathcal{X}^{\ast }$ converges to $\sigma \in \mathcal{X}^{\ast }$ in
the weak$^{\ast }$ topology iff the net $(\{\sigma _{j}\})_{j\in J}$
converges in $\mathbf{F}(\mathcal{X}^{\ast })$ to $\{\sigma \}$ in the weak$%
^{\ast }$-Hausdorff (hyper)topology.

Observe that (\ref{def}) is always finite on the subspace%
\begin{equation}
\mathbf{K}\left( \mathcal{X}^{\ast }\right) \doteq \left\{ K\in \mathbf{F}%
\left( \mathcal{X}^{\ast }\right) :\sup_{\sigma \in K}\left\Vert \sigma
\right\Vert _{\mathcal{X}^{\ast }}<\infty \right\} \subseteq \mathbf{F}%
\left( \mathcal{X}^{\ast }\right)  \label{hyperspace}
\end{equation}%
of all nonempty weak$^{\ast }$-closed norm-bounded subsets of the dual space 
$\mathcal{X}^{\ast }$. Its complement, i.e., the set of all nonempty weak$%
^{\ast }$-closed norm-unbounded subsets of $\mathcal{X}^{\ast }$, is denoted
by 
\begin{equation}
\mathbf{K}^{c}\left( \mathcal{X}^{\ast }\right) \doteq \mathbf{F}\left( 
\mathcal{X}^{\ast }\right) \backslash \mathbf{K}\left( \mathcal{X}^{\ast
}\right) \ .  \label{set of unbounded}
\end{equation}%
Both sets are weak$^{\ast }$-Hausdorff closed since the weak$^{\ast }$%
-Hausdorff hypertopology \emph{immeasurably} separates norm-unbounded sets
from norm-bounded ones:

\begin{lemma}[Immeasurable separation of norm-unbounded sets from
norm-bounded ones]
\label{dddddddddddddddddd copy(1)}\mbox{ }\newline
Let $\mathcal{X}$ be a Banach space. For any norm-unbounded weak$^{\ast }$%
-closed set $F\in \mathbf{K}^{c}(\mathcal{X}^{\ast })$, there is $A\in 
\mathcal{X}$ such that 
\begin{equation}
d_{H}^{(A)}(F,K)=\infty \ ,\qquad K\in \mathbf{K}(\mathcal{X}^{\ast })\ .
\label{separation}
\end{equation}%
Additionally, the union of any weak$^{\ast }$-Hausdorff convergent net $%
(K_{j})_{j\in J}\subseteq \mathbf{K}(\mathcal{X}^{\ast })$ is norm-bounded.
\end{lemma}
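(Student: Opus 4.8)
The plan is to derive both assertions from the uniform boundedness principle (Banach--Steinhaus theorem), which is available precisely because $\mathcal{X}$ is a Banach space and every element of a set $F\subseteq\mathcal{X}^{\ast}$ is, by definition, a bounded linear functional on $\mathcal{X}$. The whole lemma is, in essence, a hyperspace reformulation of the dichotomy ``pointwise bounded versus uniformly bounded'' for families of functionals.

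For the first assertion I would begin by reading norm-unboundedness as failure of uniform boundedness: a set $F\in\mathbf{K}^{c}(\mathcal{X}^{\ast})$ (see (\ref{set of unbounded})) satisfies $\sup_{\sigma\in F}\Vert\sigma\Vert_{\mathcal{X}^{\ast}}=\infty$, so the family $F$ is not uniformly bounded. By the contrapositive of Banach--Steinhaus, $F$ cannot be pointwise bounded either, whence there exists a single $A\in\mathcal{X}$ with $\sup_{\sigma\in F}\vert\sigma(A)\vert=\infty$. This $A$ does the job: fixing any $K\in\mathbf{K}(\mathcal{X}^{\ast})$ and setting $M\doteq\sup_{\tilde{\sigma}\in K}\vert\tilde{\sigma}(A)\vert\le\Vert A\Vert_{\mathcal{X}}\sup_{\tilde{\sigma}\in K}\Vert\tilde{\sigma}\Vert_{\mathcal{X}^{\ast}}<\infty$, one has $\inf_{\tilde{\sigma}\in K}\vert(\sigma-\tilde{\sigma})(A)\vert\ge\vert\sigma(A)\vert-M$ for every $\sigma\in F$; taking the supremum over $\sigma\in F$ drives the first inner term of $d_{H}^{(A)}(F,K)$ in (\ref{def}) to $\infty$, so $d_{H}^{(A)}(F,K)=\infty$, proving (\ref{separation}).

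For the second assertion, let $(K_{j})_{j\in J}\subseteq\mathbf{K}(\mathcal{X}^{\ast})$ converge to some $F_{\infty}\in\mathbf{F}(\mathcal{X}^{\ast})$ in the weak$^{\ast}$-Hausdorff hypertopology. First I would show the limit is itself norm-bounded: were $F_{\infty}\in\mathbf{K}^{c}(\mathcal{X}^{\ast})$, the first assertion would produce $A\in\mathcal{X}$ with $d_{H}^{(A)}(F_{\infty},K_{j})=\infty$ for all $j$ (each $K_{j}$ being norm-bounded), contradicting $\lim_{J}d_{H}^{(A)}(K_{j},F_{\infty})=0$; hence $F_{\infty}\in\mathbf{K}(\mathcal{X}^{\ast})$, say with $\sup_{\sigma\in F_{\infty}}\Vert\sigma\Vert_{\mathcal{X}^{\ast}}=R<\infty$. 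Then I would bound the union $\bigcup_{j\in J}K_{j}$ by a second application of Banach--Steinhaus, for which it suffices to control $\sup_{j\in J}\sup_{\sigma\in K_{j}}\vert\sigma(A)\vert$ for each fixed $A$. The pivotal observation is that $d_{H}^{(A)}$ depends only on the scalar ``shadows'' $\{\sigma(A):\sigma\in K\}\subseteq\mathbb{C}$, which are compact by Banach--Alaoglu \cite[Theorem 3.15]{Rudin}, so $d_{H}^{(A)}(K_{j},F_{\infty})$ equals the genuine Hausdorff distance in $\mathbb{C}$ between these projections. Since the Hausdorff distance controls the difference of the moduli-suprema, $\sup_{\sigma\in K_{j}}\vert\sigma(A)\vert$ converges to $\sup_{\tilde{\sigma}\in F_{\infty}}\vert\tilde{\sigma}(A)\vert\le R\Vert A\Vert_{\mathcal{X}}$, so the shadows of the $K_{j}$ lie in a common disc; feeding the resulting pointwise bound into the uniform boundedness principle yields $\sup_{\sigma\in\bigcup_{j}K_{j}}\Vert\sigma\Vert_{\mathcal{X}^{\ast}}<\infty$.

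The triangle-type estimates for $d_{H}^{(A)}$ and the two invocations of Banach--Steinhaus are routine. The hard part will be the final step of the second assertion, namely upgrading ``the scalar shadow of $K_{j}$ converges, hence eventually sits in a fixed disc'' to a bound that is genuinely \emph{uniform over the whole index set $J$}: convergence of a net only controls a cofinal tail, so this uniformity must be secured by combining the tail estimate with the individual norm-boundedness of each $K_{j}\in\mathbf{K}(\mathcal{X}^{\ast})$ before the uniform boundedness principle can be applied to the union. This interplay between the directed-set structure and the pointwise-boundedness hypothesis of Banach--Steinhaus is the only delicate point.
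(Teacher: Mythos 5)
Your proof of the first assertion is correct and is essentially the paper's own argument: norm-unboundedness of $F$ together with the uniform boundedness principle produces a single $A\in\mathcal{X}$ with $\sup_{\sigma\in F}\vert\sigma(A)\vert=\infty$, and the triangle inequality $\inf_{\tilde{\sigma}\in K}\vert(\sigma-\tilde{\sigma})(A)\vert\geq\vert\sigma(A)\vert-\sup_{\tilde{\sigma}\in K}\vert\tilde{\sigma}(A)\vert$ then forces $d_{H}^{(A)}(F,K)=\infty$ for every norm-bounded $K$. Your reduction of the second assertion to the norm-boundedness of the limit (via the first assertion), and your observation that $d_{H}^{(A)}$ is the Hausdorff distance in $\mathbb{C}$ between the compact scalar shadows $\hat{A}(K_{j})$ and $\hat{A}(K_{\infty})$ and hence controls $\bigl\vert\sup_{\sigma\in K_{j}}\vert\sigma(A)\vert-\sup_{\sigma\in K_{\infty}}\vert\sigma(A)\vert\bigr\vert$, are also fine.

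The gap is exactly the point you flag at the end, and your proposed repair does not close it. The tail estimate gives, for each fixed $A$, an index $j_{A}$ controlling $\sup_{\sigma\in K_{j}}\vert\sigma(A)\vert$ for all $j\succeq j_{A}$; to obtain the pointwise bound $\sup_{j\in J}\sup_{\sigma\in K_{j}}\vert\sigma(A)\vert<\infty$ required by Banach--Steinhaus you must also bound the indices $j$ with $j\not\succeq j_{A}$. When $J=\mathbb{N}$ these form a finite set and the individual norm-boundedness of each $K_{j}$ suffices, but for a general directed set the complement of a tail can be infinite, and ``each $K_{j}$ is bounded'' gives no uniform control there. Concretely, let $J$ be the finite subsets of $\mathbb{N}$ ordered by inclusion, fix $\sigma_{0}\neq 0$, and set $K_{S}=\{0\}$ if $1\in S$ and $K_{S}=\{\vert S\vert\,\sigma_{0}\}$ otherwise: each $K_{S}$ lies in $\mathbf{K}(\mathcal{X}^{\ast})$, the net is eventually constant past $S_{0}=\{1\}$ and so converges weak$^{\ast}$-Hausdorff to $\{0\}$, yet $\bigcup_{S}K_{S}$ is norm-unbounded. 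So the step can only be carried out for sequences (or a tail of the net, which is what is actually used later in the paper), and you should be aware that the paper's own justification here --- ``using an argument by contradiction and the uniform boundedness principle as above'' --- is a one-line remark that glosses over the same difficulty, so there is no hidden argument in the paper for you to recover.
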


\begin{proof}
Take any norm-unbounded $F\in \mathbf{K}^{c}(\mathcal{X}^{\ast })$. Then,
there is a net $(\sigma _{j})_{j\in J}\subseteq F$ such that 
\begin{equation*}
\lim_{J}\left\Vert \sigma _{j}\right\Vert _{\mathcal{X}^{\ast }}=\infty \ .
\end{equation*}%
By the uniform boundedness principle (see, e.g., \cite[Theorems 2.4 and 2.5]%
{Rudin}), there is $A\in \mathcal{X}$ such that 
\begin{equation}
\lim_{J}\left\vert \sigma _{j}\left( A\right) \right\vert =\infty \ .
\label{infinite}
\end{equation}%
Now, pick any $K\in \mathbf{K}(\mathcal{X}^{\ast })$. Then, by\ Definition %
\ref{hypertopology0} and the triangle inequality, for any $j\in J$,%
\begin{equation*}
d_{H}^{(A)}\left( F,K\right) \geq \inf_{\tilde{\sigma}\in K}\left\vert
\left( \sigma _{j}-\tilde{\sigma}\right) \left( A\right) \right\vert \geq
\left\vert \sigma _{j}\left( A\right) \right\vert -\sup_{\tilde{\sigma}\in
K}\left\vert \tilde{\sigma}\left( A\right) \right\vert \ .
\end{equation*}%
Since $K$ is, by definition, norm-bounded, by (\ref{infinite}), the limit
over $j$ of the last inequality obviously yields (\ref{separation}).

Finally, any weak$^{\ast }$-Hausdorff convergent net $(K_{j})_{j\in
J}\subseteq \mathbf{K}(\mathcal{X}^{\ast })$ has to converge in $\mathbf{K}(%
\mathcal{X}^{\ast })$, by the first part of the lemma. Therefore, using an
argument by contradiction and the uniform boundedness principle (see, e.g., 
\cite[Theorems 2.4 and 2.5]{Rudin}) as above, one also checks that the union
of any net $(K_{j})_{j\in J}\subseteq \mathbf{K}(\mathcal{X}^{\ast })$ that
weak$^{\ast }$-Hausdorff converges must be norm-bounded.
\end{proof}

Because of Lemma \ref{dddddddddddddddddd copy(1)}, we say that the
(nonempty) subhyperspaces $\mathbf{K}(\mathcal{X}^{\ast })$ and $\mathbf{K}%
^{c}(\mathcal{X}^{\ast })$ are weak$^{\ast }$-Hausdorff-\emph{immeasurable}
with respect to each other.

\begin{corollary}[Weak$^{\ast }$-Hausdorff-clopen subhyperspaces]
\label{convexity corrolary copy(1)}\mbox{ }\newline
Let $\mathcal{X}$ be a Banach space. Then, $\mathbf{K}(\mathcal{X}^{\ast })$
is a weak$^{\ast }$-Hausdorff-closed subset of $\mathbf{F}(\mathcal{X}^{\ast
})$.
\end{corollary}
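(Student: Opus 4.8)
The plan is to show that $\mathbf{K}(\mathcal{X}^{\ast })$ contains the limit of each of its weak$^{\ast }$-Hausdorff convergent nets. Since the weak$^{\ast }$-Hausdorff hypertopology on $\mathbf{F}(\mathcal{X}^{\ast })$ is induced by the family of pseudometrics $\{d_{H}^{(A)}\}_{A\in \mathcal{X}}$ (Definition \ref{hypertopology0}), it is a uniform topology and closedness can be tested on nets: a net $(K_{j})_{j\in J}\subseteq \mathbf{F}(\mathcal{X}^{\ast })$ converges to $F$ iff $\lim_{J}d_{H}^{(A)}(K_{j},F)=0$ for every $A\in \mathcal{X}$, cf.\ (\ref{induced topology}). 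So I would fix a net $(K_{j})_{j\in J}\subseteq \mathbf{K}(\mathcal{X}^{\ast })$ converging to some $F\in \mathbf{F}(\mathcal{X}^{\ast })$ and prove that $F\in \mathbf{K}(\mathcal{X}^{\ast })$, i.e.\ that $F$ is norm-bounded.

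First I would argue by contradiction, assuming $F\in \mathbf{K}^{c}(\mathcal{X}^{\ast })$, that is, that $F$ is norm-unbounded. The first assertion of Lemma \ref{dddddddddddddddddd copy(1)} then supplies a \emph{single} element $A\in \mathcal{X}$ for which $d_{H}^{(A)}(F,K)=\infty $ for \emph{all} $K\in \mathbf{K}(\mathcal{X}^{\ast })$. In particular, as each $K_{j}$ is norm-bounded, $d_{H}^{(A)}(F,K_{j})=\infty $ for every $j\in J$; by the symmetry of $d_{H}^{(A)}$ this gives $d_{H}^{(A)}(K_{j},F)=\infty $ for all $j$, which flatly contradicts $\lim_{J}d_{H}^{(A)}(K_{j},F)=0$. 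Hence $F$ cannot be norm-unbounded, so $F\in \mathbf{K}(\mathcal{X}^{\ast })$, and $\mathbf{K}(\mathcal{X}^{\ast })$ is weak$^{\ast }$-Hausdorff-closed. The second assertion of Lemma \ref{dddddddddddddddddd copy(1)}, that the union $\bigcup_{j\in J}K_{j}$ is norm-bounded, records the same phenomenon from the complementary side and could serve as a consistency check.

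There is essentially no computational obstacle here, since the entire quantitative content is already packaged in Lemma \ref{dddddddddddddddddd copy(1)}, whose proof uses the uniform boundedness principle to convert norm-unboundedness of a set into $d_{H}^{(A)}$-divergence for one fixed $A$. The only point requiring care is conceptual rather than technical: it is crucial that the \emph{same} $A$ works simultaneously against every norm-bounded competitor $K$, so that a fixed $A$ witnesses the divergence along the whole net $(K_{j})_{j\in J}$. This uniform, \textquotedblleft immeasurable\textquotedblright\ separation of $\mathbf{K}(\mathcal{X}^{\ast })$ from $\mathbf{K}^{c}(\mathcal{X}^{\ast })$ is exactly what prevents the limit from escaping $\mathbf{K}(\mathcal{X}^{\ast })$; it is also the reason one should expect $\mathbf{K}(\mathcal{X}^{\ast })$ to be weak$^{\ast }$-Hausdorff-clopen, as suggested by the title of the corollary.
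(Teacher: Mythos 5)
Your proof is correct and follows the same route as the paper: the paper likewise deduces the corollary from Lemma \ref{dddddddddddddddddd copy(1)} together with the net characterization of closed sets, and your contradiction argument just makes explicit the step the paper leaves implicit.
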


\begin{proof}
The assertion is a consequence of Lemma \ref{dddddddddddddddddd copy(1)}.
Note that a subset of a topological space is closed iff it contains the set
of its accumulation points, by \cite[Chapter 1, Theorem 5]{topology}. The
accumulation points of a set are precisely the limits of nets whose elements
are in this set, by \cite[Chapter 2, Theorem 2]{topology}.
\end{proof}

Note that $\mathbf{K}(\mathcal{X}^{\ast })$ is also a connected hyperspace:

\begin{lemma}[$\mathbf{K}(\mathcal{X}^{\ast })$ as connected subhyperspace]
\label{connected sub-hyperspace}\mbox{ }\newline
Let $\mathcal{X}$ be a Banach space. Then, $\mathbf{K}(\mathcal{X}^{\ast })$
is convex and path-connected.
\end{lemma}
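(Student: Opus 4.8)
The plan is to read the convexity of the hyperspace $\mathbf{K}(\mathcal{X}^{\ast })$ in the Minkowski sense: for $K,\tilde{K}\in \mathbf{K}(\mathcal{X}^{\ast })$ and $\lambda \in [0,1]$, set
\begin{equation*}
\lambda K+(1-\lambda )\tilde{K}\doteq \left\{ \lambda \sigma +(1-\lambda )\tilde{\sigma}:\sigma \in K,\ \tilde{\sigma}\in \tilde{K}\right\} \ ,
\end{equation*}
and to show this set is again in $\mathbf{K}(\mathcal{X}^{\ast })$. Recall first that, by (\ref{hyperspace}) together with the Banach--Alaoglu theorem, the elements of $\mathbf{K}(\mathcal{X}^{\ast })$ are exactly the nonempty \emph{weak}$^{\ast }$\emph{-compact} subsets of $\mathcal{X}^{\ast }$. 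The combination $\lambda K+(1-\lambda )\tilde{K}$ is plainly nonempty, and if $R,\tilde{R}\in \mathbb{R}_{0}^{+}$ bound the norms of the elements of $K$ and $\tilde{K}$, then each of its elements has norm at most $\lambda R+(1-\lambda )\tilde{R}$, so it is norm-bounded. The only mildly delicate point is weak$^{\ast }$-closedness, which I expect to be the main (albeit minor) obstacle.

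To handle it I would argue that $\lambda K+(1-\lambda )\tilde{K}$ is in fact weak$^{\ast }$-compact: the map $(\sigma ,\tilde{\sigma})\mapsto \lambda \sigma +(1-\lambda )\tilde{\sigma}$ is weak$^{\ast }$-continuous from $\mathcal{X}^{\ast }\times \mathcal{X}^{\ast }$ to $\mathcal{X}^{\ast }$ (the weak$^{\ast }$ topology being that of pointwise evaluation on $\mathcal{X}$), and $K\times \tilde{K}$ is weak$^{\ast }$-compact by Tychonoff's theorem; hence its image is weak$^{\ast }$-compact, and thus weak$^{\ast }$-closed since $\mathcal{X}^{\ast }$ endowed with its weak$^{\ast }$ topology is Hausdorff. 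Together with nonemptiness and norm-boundedness this places $\lambda K+(1-\lambda )\tilde{K}$ in $\mathbf{K}(\mathcal{X}^{\ast })$, establishing convexity. (The same computation shows that $\mathbf{CK}(\mathcal{X}^{\ast })$ is convex as well, since the Minkowski combination of convex sets stays convex.)

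For path-connectedness I would then exploit convexity directly by joining any $K,\tilde{K}\in \mathbf{K}(\mathcal{X}^{\ast })$ through the straight-line Minkowski path
\begin{equation*}
\gamma (t)\doteq (1-t)K+t\tilde{K}\ ,\qquad t\in [0,1]\ ,
\end{equation*}
which lies in $\mathbf{K}(\mathcal{X}^{\ast })$ by the convexity just proven and satisfies $\gamma (0)=K$, $\gamma (1)=\tilde{K}$. It remains to verify that $\gamma $ is continuous for the weak$^{\ast }$-Hausdorff hypertopology, i.e. with respect to each pseudometric $d_{H}^{(A)}$, $A\in \mathcal{X}$ (see (\ref{def})). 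The key estimate is a Lipschitz bound: given $t,s\in [0,1]$ and a point $(1-t)\sigma +t\tilde{\sigma}\in \gamma (t)$, the companion point $(1-s)\sigma +s\tilde{\sigma}\in \gamma (s)$ satisfies
\begin{equation*}
\left\vert \left( (1-t)\sigma +t\tilde{\sigma}-(1-s)\sigma -s\tilde{\sigma}\right) (A)\right\vert =\left\vert t-s\right\vert \left\vert (\tilde{\sigma}-\sigma )(A)\right\vert \leq \left\vert t-s\right\vert (R+\tilde{R})\left\Vert A\right\Vert _{\mathcal{X}}\ .
\end{equation*}
Taking the infimum over $\gamma (s)$, then the supremum over $\gamma (t)$, and the symmetric bound obtained by exchanging $t$ and $s$, yields
\begin{equation*}
d_{H}^{(A)}(\gamma (t),\gamma (s))\leq \left\vert t-s\right\vert (R+\tilde{R})\left\Vert A\right\Vert _{\mathcal{X}}\ ,
\end{equation*}
which tends to $0$ as $t\rightarrow s$ for every fixed $A\in \mathcal{X}$. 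Hence $\gamma $ is continuous, and $\mathbf{K}(\mathcal{X}^{\ast })$ is path-connected. Everything beyond the weak$^{\ast }$-closedness step is a direct verification.
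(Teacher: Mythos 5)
Your proof is correct and follows essentially the same route as the paper: the straight-line Minkowski path $\gamma(t)=(1-t)K+t\tilde{K}$ together with the Lipschitz bound $d_{H}^{(A)}(\gamma(t),\gamma(s))\leq |t-s|\sup_{\sigma\in K-\tilde{K}}|\sigma(A)|$. The only difference is that you explicitly verify weak$^{\ast}$-closedness of the Minkowski combination via Tychonoff and the weak$^{\ast}$-continuity of the affine map, a detail the paper leaves implicit; this is a welcome addition rather than a divergence.
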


\begin{proof}
Take any $K_{0},K_{1}\in \mathbf{K}(\mathcal{X}^{\ast })$. Define the map $f$
from $[0,1]$ to $\mathbf{K}(\mathcal{X}^{\ast })$ by%
\begin{equation*}
f\left( \lambda \right) \doteq \left\{ \left( 1-\lambda \right) \sigma
_{0}+\lambda \sigma _{1}:\sigma _{0}\in K_{0},\ \sigma _{1}\in K_{1}\right\}
\ ,\qquad \lambda \in \left[ 0,1\right] \ .
\end{equation*}%
(This already demonstrates that $\mathbf{K}(\mathcal{X}^{\ast })$ is
convex.) By Definition \ref{hypertopology0}, for any $\lambda _{1},\lambda
_{2}\in \lbrack 0,1]$,%
\begin{equation*}
d_{H}^{(A)}\left( f\left( \lambda _{1}\right) ,f\left( \lambda _{2}\right)
\right) \leq \left\vert \lambda _{2}-\lambda _{1}\right\vert \max_{\sigma
\in (K_{0}-K_{1})}\left\vert \sigma \left( A\right) \right\vert \ ,\qquad
A\in \mathcal{X}\ .
\end{equation*}%
So, the map $f$ is a continuous function from $[0,1]$ to $\mathbf{K}(%
\mathcal{X}^{\ast })$ with $f\left( 0\right) =K_{0}$ and $f\left( 1\right)
=K_{1}$. Therefore, $\mathbf{K}(\mathcal{X}^{\ast })$ is path-connected. The
image under a continuous map of a connected set is connected and, by \cite[%
Chapter 1, Theorem 21]{topology}, $\mathbf{K}(\mathcal{X}^{\ast })$, being
path-connected, is connected.
\end{proof}

Note that one can prove that $\mathbf{K}(\mathcal{X}^{\ast })$ is even a
connected component\footnote{%
That is, a maximal connected subset.} of $\mathbf{F}(\mathcal{X}^{\ast })$.
There are possibly many disconnected components, or even non-trivial weak$%
^{\ast }$-Hausdorff-clopen subsets of $\mathbf{F}(\mathcal{X}^{\ast })$,
associated with different directions (characterized by some $A\in \mathcal{X}
$) where the weak$^{\ast }$-closed sets $F\in \mathbf{K}^{c}(\mathcal{X}%
^{\ast })$ are unbounded. This would lead to a whole collection of weak$%
^{\ast }$-Hausdorff-clopen sets, which could be used to form a Boolean
algebra whose lattice operations are given by the union and intersection, as
is usual in mathematical logics\footnote{%
See Stone's representation theorem for Boolean algebras.}. This is far from
the scope of the article and we thus refrain from doing such a study here.

Meanwhile, note that the weak$^{\ast }$-Hausdorff-closed set $\mathbf{K}(%
\mathcal{X}^{\ast })$ of all nonempty weak$^{\ast }$-closed norm-bounded
subsets of $\mathcal{X}^{\ast }$ is nothing else than the set of all
nonempty weak$^{\ast }$-compact subsets:

\begin{lemma}[Weak$^{\ast }$-compactness vs. norm-boundedness]
\label{dddddddddddddddddd}\mbox{ }\newline
Let $\mathcal{X}$ be a Banach space. Then, 
\begin{equation*}
\mathbf{K}\left( \mathcal{X}^{\ast }\right) =\left\{ K\subseteq \mathcal{X}%
^{\ast }:K\neq \emptyset \text{ is weak}^{\ast }\text{-compact}\right\} \ .
\end{equation*}
\end{lemma}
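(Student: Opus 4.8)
The plan is to prove the two inclusions separately; both rest on tools already invoked in the paper, namely the Banach--Alaoglu theorem \cite[Theorem 3.15]{Rudin} and the uniform boundedness principle \cite[Theorems 2.4 and 2.5]{Rudin}. Recall that, by its definition (\ref{hyperspace}), $\mathbf{K}(\mathcal{X}^{\ast})$ consists of the nonempty weak$^{\ast}$-closed norm-bounded subsets of $\mathcal{X}^{\ast}$, so the claim is precisely that, for nonempty subsets of $\mathcal{X}^{\ast}$, being weak$^{\ast}$-closed and norm-bounded is equivalent to being weak$^{\ast}$-compact.

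For the inclusion of $\mathbf{K}(\mathcal{X}^{\ast})$ into the set of nonempty weak$^{\ast}$-compact sets, I would start from an arbitrary nonempty weak$^{\ast}$-closed norm-bounded set $K$. By norm-boundedness there is some $r>0$ with $K$ contained in the closed ball $\{\sigma\in\mathcal{X}^{\ast}:\Vert\sigma\Vert_{\mathcal{X}^{\ast}}\leq r\}$, which is weak$^{\ast}$-compact by the Banach--Alaoglu theorem. Since a weak$^{\ast}$-closed subset of a weak$^{\ast}$-compact set is itself weak$^{\ast}$-compact, it follows that $K$ is weak$^{\ast}$-compact.

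For the reverse inclusion, I would take a nonempty weak$^{\ast}$-compact set $K$. Because the weak$^{\ast}$ topology on $\mathcal{X}^{\ast}$ is Hausdorff (it is a locally convex Hausdorff space, see, e.g., \cite[Theorem 10.8]{BruPedra2}), compact subsets are closed, so $K$ is weak$^{\ast}$-closed. It then remains only to establish norm-boundedness. For each $A\in\mathcal{X}$ the evaluation map $\sigma\mapsto\sigma(A)$ is weak$^{\ast}$-continuous on $\mathcal{X}^{\ast}$, so its image on the compact set $K$ is a compact, hence bounded, subset of $\mathbb{C}$; that is, $\sup_{\sigma\in K}|\sigma(A)|<\infty$ for every $A\in\mathcal{X}$. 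Viewing $K$ as a family of bounded linear functionals on the Banach space $\mathcal{X}$ that is pointwise bounded, the uniform boundedness principle yields $\sup_{\sigma\in K}\Vert\sigma\Vert_{\mathcal{X}^{\ast}}<\infty$, so $K$ is norm-bounded and thus lies in $\mathbf{K}(\mathcal{X}^{\ast})$.

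There is essentially no serious obstacle in this argument, as all the needed facts are classical. The only step carrying genuine content is the norm-boundedness in the reverse inclusion, which is exactly where the uniform boundedness principle is required: weak$^{\ast}$-compactness delivers only \emph{pointwise} control $\sup_{\sigma\in K}|\sigma(A)|<\infty$, and it is the Banach--Steinhaus theorem that upgrades this to a \emph{uniform} bound on the norms.
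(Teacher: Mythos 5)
Your proof is correct and follows essentially the same route as the paper's: one inclusion via the Banach--Alaoglu theorem (a weak$^{\ast}$-closed subset of a weak$^{\ast}$-compact ball is weak$^{\ast}$-compact), and the other via the Hausdorff property of the weak$^{\ast}$ topology plus the uniform boundedness principle to upgrade pointwise boundedness of the evaluation maps to a uniform norm bound. No gaps.
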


\begin{proof}
The proof of the norm-boundedness of a weak$^{\ast }$-compact set is
completely standard (see, e.g., \cite[Proposition 1.2.9]{Beer}) and is given
here only for completeness: Take any weak$^{\ast }$-compact set $K\subseteq 
\mathcal{X}^{\ast }$\ and use, for any $A\in \mathcal{X}$, the weak$^{\ast }$%
-continuity of the map $\hat{A}:\sigma \mapsto \sigma (A)$ from $\mathcal{X}%
^{\ast }$ to $\mathbb{C}$ (cf. (\ref{fA}) and (\ref{sdfsdfkljsdlfkj})) to
show that $\sigma (K)$ is a bounded set, by weak$^{\ast }$ compactness of $K$%
. Then, the norm-boundedness of any weak$^{\ast }$-compact set is a
consequence of the uniform boundedness principle, see, e.g., \cite[Theorems
2.4 and 2.5]{Rudin}. Since $\mathcal{X}^{\ast }$ is a Hausdorff space (see,
e.g., \cite[Theorem 10.8]{BruPedra2}), by \cite[Chapter 5, Theorem 7]%
{topology}, it follows that weak$^{\ast }$-compact set are weak$^{\ast }$%
-closed and norm-bounded subsets of $\mathcal{X}^{\ast }$. On the other
hand, by the Banach-Alaoglu theorem \cite[Theorem 3.15]{Rudin}, weak$^{\ast
} $-closed and norm-bounded subsets of $\mathcal{X}^{\ast }$ are also weak$%
^{\ast }$-compact and the assertion follows.
\end{proof}

By Lemma \ref{dddddddddddddddddd}, for any $K,\tilde{K}\in \mathbf{K}(%
\mathcal{X}^{\ast })$, the suprema and infima in (\ref{def}) become
respectively maxima and minima. In this case, Definition \ref{hypertopology0}
is the same as Definition \ref{hypertopology}, extended to all weak$^{\ast }$%
-compact sets. Of course, by Lemma \ref{dddddddddddddddddd}, $\mathbf{K}(%
\mathcal{X}^{\ast })$ includes the hyperspace 
\begin{equation}
\mathbf{CK}\left( \mathcal{X}^{\ast }\right) \doteq \left\{ K\subseteq 
\mathcal{X}^{\ast }:K\neq \emptyset \text{ is convex and weak}^{\ast }\text{%
-compact}\right\} \subseteq \mathbf{K}\left( \mathcal{X}^{\ast }\right)
\subseteq \mathbf{F}\left( \mathcal{X}^{\ast }\right)  \label{ZDZDZDZD}
\end{equation}%
of all nonempty convex weak$^{\ast }$-compact subsets of $\mathcal{X}^{\ast
} $, already defined by (\ref{ZDZD}) and used in\ Section \ref{generic
convex set}.

\subsection{Hausdorff Property and Closure Operator}

One fundamental property one shall ask regarding the hyperspace $\mathbf{F}(%
\mathcal{X}^{\ast })$ (or $\mathbf{K}(\mathcal{X}^{\ast })$) is whether it
is a Hausdorff space, with respect to the weak$^{\ast }$-Hausdorff
hypertopology, or not. The answer is \emph{negative} for real Banach spaces
of dimension greater than 1, as demonstrated in the next lemma:

\begin{lemma}[Non-weak$^{\ast }$-Hausdorff-separable points]
\label{convexity lemma copy(2)}\mbox{ }\newline
Let $\mathcal{X}$ be a real Banach space. Take any set $K\in \mathbf{CK}(%
\mathcal{X}^{\ast })$ with weak$^{\ast }$-path-connected weak$^{\ast }$%
-closed set $\mathcal{E}(K)\subseteq K$ of extreme points\footnote{%
Cf. the Krein-Milman theorem \cite[Theorem 3.23]{Rudin}.}. Then, $\mathcal{E}%
(K)\in \mathbf{K}(\mathcal{X}^{\ast })$ and $d_{H}^{(A)}(K,\mathcal{E}(K))=0$
for any $A\in \mathcal{X}$.
\end{lemma}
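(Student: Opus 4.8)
The plan is to split the statement into the easy membership claim $\mathcal{E}(K)\in\mathbf{K}(\mathcal{X}^{\ast})$ and the substantive vanishing claim, and to reduce the latter to a one-dimensional statement about the real-valued affine Gelfand transforms. For the membership, I would argue that $\mathcal{E}(K)\neq\emptyset$ by the Krein-Milman theorem \cite[Theorem 3.23]{Rudin}, and that since $\mathcal{E}(K)$ is weak$^{\ast}$-closed by hypothesis and sits inside the weak$^{\ast}$-compact set $K$, it is itself weak$^{\ast}$-compact; hence $\mathcal{E}(K)\in\mathbf{K}(\mathcal{X}^{\ast})$ by Lemma \ref{dddddddddddddddddd}.

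For the vanishing of $d_{H}^{(A)}$, fix $A\in\mathcal{X}$ and recall that, $\mathcal{X}$ being real, the Gelfand transform $\hat{A}\colon\sigma\mapsto\sigma(A)$ (see (\ref{sdfsdfkljsdlfkj})) is a weak$^{\ast}$-continuous, affine, \emph{real}-valued function on $K$. Because $\mathcal{E}(K)\subseteq K$, for every $\tilde{\sigma}\in\mathcal{E}(K)$ the choice $\sigma=\tilde{\sigma}$ shows that $\min_{\sigma\in K}\vert(\sigma-\tilde{\sigma})(A)\vert=0$, so the second entry of the maximum in (\ref{pseudometrics}) is zero. It therefore remains to prove that $\max_{\sigma\in K}\min_{\tilde{\sigma}\in\mathcal{E}(K)}\vert(\sigma-\tilde{\sigma})(A)\vert=0$, which amounts to the inclusion $\hat{A}(K)\subseteq\hat{A}(\mathcal{E}(K))$: once every value $\sigma(A)$, $\sigma\in K$, is already attained on $\mathcal{E}(K)$, the inner minimum vanishes for each $\sigma$ and hence so does the outer maximum.

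To establish $\hat{A}(K)\subseteq\hat{A}(\mathcal{E}(K))$, I would note first that $\hat{A}(K)$ is a compact interval $[a,b]\subseteq\mathbb{R}$, being the continuous image of the weak$^{\ast}$-compact connected (indeed convex) set $K$. By Bauer's maximum principle — or, equivalently, by Krein-Milman together with the affineness of $\hat{A}$ and the weak$^{\ast}$-compactness of $\mathcal{E}(K)$ just proven — one has $a=\min_{K}\hat{A}=\min_{\mathcal{E}(K)}\hat{A}$ and $b=\max_{K}\hat{A}=\max_{\mathcal{E}(K)}\hat{A}$, so that both endpoints lie in $\hat{A}(\mathcal{E}(K))$. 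On the other hand, $\mathcal{E}(K)$ is weak$^{\ast}$-path-connected by hypothesis, whence its continuous image $\hat{A}(\mathcal{E}(K))$ is a path-connected, thus connected, subset of $\mathbb{R}$, i.e. an interval; being contained in $[a,b]$ and containing $a$ and $b$, it equals $[a,b]=\hat{A}(K)$. This gives the desired inclusion, so the first entry of the maximum in (\ref{pseudometrics}) vanishes as well, and $d_{H}^{(A)}(K,\mathcal{E}(K))=0$ for every $A\in\mathcal{X}$.

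The heart of the matter, and the step where all three hypotheses act together, is forcing $\hat{A}(\mathcal{E}(K))$ to fill the whole interval $[a,b]$: realness makes $\hat{A}$ scalar-valued so that its range is an interval; weak$^{\ast}$-closedness of $\mathcal{E}(K)$ makes it compact, guaranteeing that the extreme values $a,b$ are \emph{attained} on $\mathcal{E}(K)$ rather than merely approached; and path-connectedness guarantees that no interior value of $[a,b]$ is skipped. I expect this interval-filling argument to be the only nontrivial point, everything else being bookkeeping; note also that the complex case would break exactly here, since a complex-linear $\hat{A}$ has a two-dimensional range for which the interval argument fails, consistent with the hypothesis that $\mathcal{X}$ be real.
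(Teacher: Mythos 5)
Your proof is correct and follows essentially the same route as the paper's: both reduce the claim to the one-dimensional statement $\hat{A}(K)=\hat{A}(\mathcal{E}(K))$, obtain the common endpoints via the Bauer maximum principle, and use the (path-)connectedness of $\mathcal{E}(K)$ to fill the interval. The only cosmetic difference is that the paper pushes an explicit path $\gamma$ from a minimizer to a maximizer through $\hat{A}$, whereas you invoke connectedness of the image directly; these are interchangeable.
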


\begin{proof}
Let $\mathcal{X}$ be a real Banach space. Recall that any $A\in \mathcal{X}$
defines a weak$^{\ast }$-continuous linear functional $\hat{A}:\mathcal{X}%
^{\ast }\rightarrow \mathbb{R}$ by%
\begin{equation*}
\hat{A}(\sigma )\doteq \sigma (A)\text{ },\text{\qquad }\sigma \in \mathcal{X%
}^{\ast }\ .
\end{equation*}%
See (\ref{sdfsdfkljsdlfkj}). Observe next that 
\begin{equation}
d_{H}^{(A)}(K,\mathcal{E}(K))=\max \left\{ \max_{x_{1}\in \hat{A}\left(
K\right) }\min_{x_{2}\in \hat{A}\left( \mathcal{E}(K)\right) }\left\vert
x_{1}-x_{2}\right\vert ,\max_{x_{2}\in \hat{A}\left( \mathcal{E}(K)\right)
}\min_{x_{1}\in \hat{A}\left( K\right) }\left\vert x_{1}-x_{2}\right\vert
\right\} \ .  \label{rewritte0}
\end{equation}%
The right hand side is nothing else than the Hausdorff distance (\ref%
{Hausdorf}) between the sets $\hat{A}(K)$ and $\hat{A}(\mathcal{E}(K))$,
where the metric used in $\mathcal{Y}=\mathbb{R}$ is the absolute-value
distance. Now, clearly, 
\begin{equation}
\hat{A}\left( \mathcal{E}\left( K\right) \right) \subseteq \hat{A}\left(
K\right) \subseteq \left[ \min \hat{A}\left( K\right) ,\max \hat{A}\left(
K\right) \right] \ .  \label{rewritte}
\end{equation}%
By the Bauer maximum principle \cite[Lemma 10.31]{BruPedra2} together with
the affinity and weak$^{\ast }$-continuity of $\hat{A}$, 
\begin{equation*}
\min \hat{A}\left( K\right) =\min \hat{A}\left( \mathcal{E}\left( K\right)
\right) \qquad \text{and}\qquad \max \hat{A}\left( K\right) =\max \hat{A}%
\left( \mathcal{E}\left( K\right) \right) \ .
\end{equation*}%
In particular, we can rewrite (\ref{rewritte}) as 
\begin{equation}
\hat{A}\left( \mathcal{E}(K)\right) \subseteq \hat{A}\left( K\right)
\subseteq \left[ \min \hat{A}\left( \mathcal{E}\left( K\right) \right) ,\max 
\hat{A}\left( \mathcal{E}\left( K\right) \right) \right] \ .
\label{rewritte2}
\end{equation}%
Since $\mathcal{E}(K)$ is, by assumption, path-connected in the weak$^{\ast
} $ topology, there is a weak$^{\ast }$-continuous path $\gamma
:[0,1]\rightarrow \mathcal{E}(K)$ from a minimizer to a maximizer of $\hat{A}
$ in $\mathcal{E}(K)$. By weak$^{\ast }$-continuity of $\hat{A}$, it follows
that 
\begin{equation*}
\left[ \min \hat{A}\left( \mathcal{E}(K)\right) ,\max \hat{A}\left( \mathcal{%
E}(K)\right) \right] =\hat{A}\circ \gamma \left( \left[ 0,1\right] \right)
\subseteq \hat{A}\left( \mathcal{E}(K)\right)
\end{equation*}%
and we infer from (\ref{rewritte2}) that 
\begin{equation*}
\hat{A}\left( \mathcal{E}(K)\right) =\hat{A}\left( K\right) =\left[ \min 
\hat{A}\left( K\right) ,\max \hat{A}\left( K\right) \right] =\left[ \min 
\hat{A}\left( \mathcal{E}(K)\right) ,\max \hat{A}\left( \mathcal{E}%
(K)\right) \right] \ .
\end{equation*}%
Together with (\ref{rewritte0}), this last equality obviously leads to the
assertion. Note that $\mathcal{E}(K)\in \mathbf{K}(\mathcal{X}^{\ast })$
since it is, by assumption, a weak$^{\ast }$-closed subset of the weak$%
^{\ast }$-compact set $K$ (Lemma \ref{dddddddddddddddddd}).
\end{proof}

\begin{corollary}[Non-Hausdorff hyperspaces]
\label{Non-Hausdorff hyperspaces}\mbox{ }\newline
Let $\mathcal{X}$ be a real Banach space of dimension greater than 1. Then, $%
\mathbf{F}(\mathcal{X}^{\ast })$ and $\mathbf{K}(\mathcal{X}^{\ast })$ are
non-Hausdorff topological spaces.
\end{corollary}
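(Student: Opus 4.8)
The plan is to exhibit two \emph{distinct} points of $\mathbf{K}(\mathcal{X}^{\ast})\subseteq\mathbf{F}(\mathcal{X}^{\ast})$ that the defining family of pseudometrics $\{d_{H}^{(A)}\}_{A\in\mathcal{X}}$ cannot tell apart, and then to invoke the elementary criterion that a topology generated by a family of pseudometrics is Hausdorff only if that family separates points. Concretely, if $d_{H}^{(A)}(K,\tilde K)=0$ for \emph{all} $A\in\mathcal{X}$, then by Definition \ref{hypertopology0} every basic neighbourhood of $K$, being a finite intersection of sets $\{F:d_{H}^{(A_{i})}(K,F)<\varepsilon\}$, contains $\tilde K$, and symmetrically; hence $K$ and $\tilde K$ admit no disjoint neighbourhoods. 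Lemma \ref{convexity lemma copy(2)} already manufactures such a pair, provided I can produce a convex weak$^{\ast}$-compact set whose extreme boundary is weak$^{\ast}$-closed, weak$^{\ast}$-path-connected, and strictly smaller than the set itself.

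First I would reduce everything to a two-dimensional model. Since $\mathcal{X}$ is real with $\dim\mathcal{X}>1$, choose linearly independent $x_{1},x_{2}\in\mathcal{X}$ and, by the Hahn--Banach theorem \cite{Rudin}, functionals $\sigma_{1},\sigma_{2}\in\mathcal{X}^{\ast}$ with $\sigma_{i}(x_{j})=\delta_{ij}$; these are linearly independent, so $V\doteq\mathrm{span}\{\sigma_{1},\sigma_{2}\}$ is a two-dimensional subspace of $\mathcal{X}^{\ast}$. On the finite-dimensional space $V$ the weak$^{\ast}$ topology, being a Hausdorff vector topology, coincides with the norm topology, so norm-compact subsets of $V$ are weak$^{\ast}$-compact in $\mathcal{X}^{\ast}$. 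I would then take
\[
K\doteq\{a\sigma_{1}+b\sigma_{2}:a,b\in\mathbb{R},\ a^{2}+b^{2}\le 1\}\in\mathbf{CK}(\mathcal{X}^{\ast}),
\]
whose set of extreme points, as a convex subset of $\mathcal{X}^{\ast}$, is the ``circle''
\[
\mathcal{E}(K)=\{a\sigma_{1}+b\sigma_{2}:a^{2}+b^{2}=1\};
\]
indeed, convex combinations of points of $K$ remain in $V$, so extremality in $\mathcal{X}^{\ast}$ and in $V$ agree. This circle is weak$^{\ast}$-closed (being norm-compact in $V$) and weak$^{\ast}$-path-connected via the norm-continuous loop $\theta\mapsto\cos\theta\,\sigma_{1}+\sin\theta\,\sigma_{2}$.

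With these properties in hand, Lemma \ref{convexity lemma copy(2)} yields $\mathcal{E}(K)\in\mathbf{K}(\mathcal{X}^{\ast})$ together with
\[
d_{H}^{(A)}(K,\mathcal{E}(K))=0,\qquad A\in\mathcal{X}.
\]
Since $0\in K\setminus\mathcal{E}(K)$, the points $K$ and $\mathcal{E}(K)$ are distinct, and both are nonempty weak$^{\ast}$-compact, hence belong to $\mathbf{K}(\mathcal{X}^{\ast})\subseteq\mathbf{F}(\mathcal{X}^{\ast})$ by Lemma \ref{dddddddddddddddddd}. By the separation criterion of the first paragraph, these two distinct points cannot be separated by disjoint open sets, so neither $\mathbf{K}(\mathcal{X}^{\ast})$ nor $\mathbf{F}(\mathcal{X}^{\ast})$ is Hausdorff. (Note that this counterexample escapes the positive result of Corollary \ref{convexity corrolary} precisely because $\mathcal{E}(K)$ fails to be convex, so the pair does not lie in $\mathbf{CK}(\mathcal{X}^{\ast})$.)

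The step I expect to demand the most care is the reduction to the two-dimensional model together with the identification of the extreme boundary: one must confirm that passing to the subspace $V$ does not alter which points of $K$ are extreme in the ambient dual, and that $K\setminus\mathcal{E}(K)$ is genuinely nonempty so that $K\neq\mathcal{E}(K)$. Everything else---weak$^{\ast}$-compactness of $K$ and $\mathcal{E}(K)$ via finite-dimensionality, and the pseudometric-separation characterisation of the Hausdorff property---is routine.
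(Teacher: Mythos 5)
Your proposal is correct and follows essentially the same route as the paper, which simply observes that the dual of a real Banach space of dimension greater than $1$ contains a two-dimensional closed disc and then invokes Lemma \ref{convexity lemma copy(2)}. You merely make explicit the details the paper leaves implicit (the Hahn--Banach construction of the disc, the identification of its extreme boundary, and the pseudometric separation criterion), all of which are carried out correctly.
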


\begin{proof}
This corollary is a direct consequence of Lemma \ref{convexity lemma copy(2)}
by observing that the dual of a real Banach space of dimension greater than
1 contains a two-dimensional closed disc.
\end{proof}

As a consequence, the Hausdorff property of the hyperspace $\mathbf{F}(%
\mathcal{X}^{\ast })$ does not hold true, in general. A restriction to the
sub-hyperspace $\mathbf{K}(\mathcal{X}^{\ast })$ is also not sufficient to
get the separation property. This fact, described in Lemma \ref{convexity
lemma copy(2)}, also appears for other well-established hypertopologies,
which cannot distinguish a set from its closed convex hull. The so-called
scalar topology for closed sets is a good example of this phenomenon, as
explained in \cite[Section 4.3]{Beer}. Actually, similar to the scalar
topology, $\mathbf{CK}(\mathcal{X}^{\ast })$ is a Hausdorff hyperspace. To
get a better intuition of this fact, the following proposition is
instructive:

\begin{proposition}[Separation of the weak$^{\ast }$-closed convex hull]
\label{convexity lemma copy(1)}\mbox{ }\newline
Let $\mathcal{X}$ be a Banach space and $K_{1},K_{2}\in \mathbf{K}(\mathcal{X%
}^{\ast })$. If $d_{H}^{(A)}(K_{1},K_{2})=0$ for all $A\in \mathcal{X}$,
then $\overline{\mathrm{co}K_{1}}=\overline{\mathrm{co}K_{2}}$, where $%
\overline{\mathrm{co}F}$ denotes the weak$^{\ast }$-closure of the convex
hull of any set $F\in \mathbf{F}(\mathcal{X}^{\ast })$.
\end{proposition}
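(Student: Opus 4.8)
The plan is to reduce the hypothesis to the equality of the support functions of the two sets and then invoke the Hahn--Banach separation theorem, exactly in the spirit of the argument already used in the proof of Theorem~\ref{theorem dense cool1}.

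First I would unravel what $d_{H}^{(A)}(K_{1},K_{2})=0$ means. Writing $\hat{A}(\sigma )\doteq \sigma (A)$ as in (\ref{sdfsdfkljsdlfkj}), the functional $\hat{A}:\mathcal{X}^{\ast }\rightarrow \mathbb{C}$ is weak$^{\ast }$-continuous, so by Lemma~\ref{dddddddddddddddddd} the sets $K_{1},K_{2}$ are weak$^{\ast }$-compact and the images $\hat{A}(K_{1}),\hat{A}(K_{2})\subseteq \mathbb{C}$ are compact; in particular the inner minima in (\ref{def}) are attained. The vanishing of the first term $\max_{\sigma \in K_{1}}\min_{\tilde{\sigma}\in K_{2}}\left\vert (\sigma -\tilde{\sigma})(A)\right\vert $ then forces every value $\sigma (A)$, $\sigma \in K_{1}$, to lie in $\hat{A}(K_{2})$, i.e.\ $\hat{A}(K_{1})\subseteq \hat{A}(K_{2})$; the second term gives the reverse inclusion. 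Hence the hypothesis is equivalent to $\hat{A}(K_{1})=\hat{A}(K_{2})$ for every $A\in \mathcal{X}$. Taking real parts and then suprema yields
\begin{equation*}
\sup_{\sigma \in K_{1}}\mathrm{Re}\,\sigma (A)=\sup_{\sigma \in K_{2}}\mathrm{Re}\,\sigma (A)\ ,\qquad A\in \mathcal{X}\ ,
\end{equation*}
that is, $K_{1}$ and $K_{2}$ have the same weak$^{\ast }$ support function.

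Second, since $\sigma \mapsto \mathrm{Re}\,\sigma (A)$ is real-linear and weak$^{\ast }$-continuous, its supremum over any set equals its supremum over the weak$^{\ast }$-closed convex hull of that set (an elementary consequence of linearity and continuity, cf.\ the Bauer-type argument of \cite[Lemma 10.31]{BruPedra2}). Therefore the displayed equality passes to the hulls:
\begin{equation*}
\sup_{\sigma \in \overline{\mathrm{co}K_{1}}}\mathrm{Re}\,\sigma (A)=\sup_{\sigma \in \overline{\mathrm{co}K_{2}}}\mathrm{Re}\,\sigma (A)\ ,\qquad A\in \mathcal{X}\ .
\end{equation*}

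Finally I would argue by contradiction. If $\overline{\mathrm{co}K_{1}}\neq \overline{\mathrm{co}K_{2}}$, pick (without loss of generality) a point $\sigma _{0}\in \overline{\mathrm{co}K_{1}}\setminus \overline{\mathrm{co}K_{2}}$. The singleton $\{\sigma _{0}\}$ is weak$^{\ast }$-compact and convex, while $\overline{\mathrm{co}K_{2}}$ is weak$^{\ast }$-closed and convex, so the Hahn--Banach separation theorem \cite[Theorem 3.4 (b)]{Rudin}, applied in the locally convex space $\mathcal{X}^{\ast }$ endowed with its weak$^{\ast }$ topology, yields a weak$^{\ast }$-continuous real-linear functional separating them strictly. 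Recalling (from the discussion preceding Definition~\ref{state space}, via \cite[Theorem 3.10]{Rudin}) that the weak$^{\ast }$-continuous complex-linear functionals on $\mathcal{X}^{\ast }$ are exactly the $\hat{A}$, $A\in \mathcal{X}$, such a real functional has the form $\mathrm{Re}\,\hat{A}$ for some $A\in \mathcal{X}$, giving
\begin{equation*}
\mathrm{Re}\,\sigma _{0}(A)>\sup_{\sigma \in \overline{\mathrm{co}K_{2}}}\mathrm{Re}\,\sigma (A)=\sup_{\sigma \in \overline{\mathrm{co}K_{1}}}\mathrm{Re}\,\sigma (A)\geq \mathrm{Re}\,\sigma _{0}(A)\ ,
\end{equation*}
a contradiction, whence $\overline{\mathrm{co}K_{1}}=\overline{\mathrm{co}K_{2}}$. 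The only genuine subtlety, and the step I would handle most carefully, is the complex-scalar bookkeeping: one must separate using the real parts $\mathrm{Re}\,\hat{A}$ and verify the support functions agree in every real direction, which is precisely what $\hat{A}(K_{1})=\hat{A}(K_{2})$ for all $A$ (hence also for all $\lambda A$, $\lambda \in \mathbb{C}$) delivers.
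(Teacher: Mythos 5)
Your proposal is correct and follows essentially the same route as the paper: both arguments hinge on the Hahn--Banach separation theorem in the locally convex space $(\mathcal{X}^{\ast },\sigma (\mathcal{X}^{\ast },\mathcal{X}))$, whose dual is $\mathcal{X}$, so that a point outside $\overline{\mathrm{co}K_{2}}$ would be strictly separated by some $\mathrm{Re}\,\hat{A}$, contradicting the vanishing of $d_{H}^{(A)}$. Your intermediate reformulation via $\hat{A}(K_{1})=\hat{A}(K_{2})$ and equality of support functions is a clean repackaging (the paper instead fixes $\sigma _{1}\in K_{1}$ and contradicts $\min_{\sigma _{2}\in K_{2}}|(\sigma _{1}-\sigma _{2})(A_{0})|=0$ directly), but it is not a genuinely different argument.
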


\begin{proof}
Pick any weak$^{\ast }$-compact sets $K_{1},K_{2}$ satisfying $%
d_{H}^{(A)}(K_{1},K_{2})=0$ for all $A\in \mathcal{X}$. Let $\sigma _{1}\in
K_{1}$. By\ Definition \ref{hypertopology0}, it follows that%
\begin{equation}
\min_{\sigma _{2}\in K_{2}}\left\vert \left( \sigma _{1}-\sigma _{2}\right)
\left( A\right) \right\vert =0\ ,\qquad A\in \mathcal{X}\ .
\label{contradition00}
\end{equation}%
The dual space $\mathcal{X}^{\ast }$ of the Banach space $\mathcal{X}$ is a
locally convex (Hausdorff) space in the weak$^{\ast }$ topology and its dual
is $\mathcal{X}$. Note also that $\overline{\mathrm{co}K_{2}}$ is convex and
weak$^{\ast }$-compact, because it is a norm-bounded weak$^{\ast }$-closed
subset of $\mathcal{X}^{\ast }$, see Lemma \ref{dddddddddddddddddd}. Since $%
\{\sigma _{1}\}$ is a convex weak$^{\ast }$-closed set, if $\sigma
_{1}\notin \overline{\mathrm{co}K_{2}}$ then we infer from the Hahn-Banach
separation theorem \cite[Theorem 3.4 (b)]{Rudin} the existence of $A_{0}\in 
\mathcal{X}$ and $x_{1},x_{2}\in \mathbb{R}$ such that 
\begin{equation}
\max_{\sigma _{2}\in \overline{\mathrm{co}K_{2}}}\mathrm{Re}\left\{ \sigma
_{2}\left( A_{0}\right) \right\} <x_{1}<x_{2}<\mathrm{Re}\left\{ \sigma
_{1}\left( A_{0}\right) \right\} \ ,  \label{hahn banach}
\end{equation}%
which contradicts (\ref{contradition00}) for $A=A_{0}$. As a consequence, $%
\sigma _{1}\in \overline{\mathrm{co}K_{2}}$ and hence, $K_{1}\subseteq 
\overline{\mathrm{co}K_{2}}$. This in turn yields $\overline{\mathrm{co}K_{1}%
}\subseteq \overline{\mathrm{co}K_{2}}$. By switching the role of the weak$%
^{\ast }$-compact sets, we thus deduce the assertion.
\end{proof}

Proposition \ref{convexity lemma copy(1)} motivates the introduction of the 
\emph{weak}$^{\ast }$\emph{-closed convex hull operator}:

\begin{definition}[The weak$^{\ast }$-closed convex hull operator]
\label{convex hull operator}\mbox{ }\newline
The weak$^{\ast }$-closed convex hull operator is the map $\overline{\mathrm{%
co}}$ from $\mathbf{F}(\mathcal{X}^{\ast })$ to itself defined by%
\begin{equation*}
\overline{\mathrm{co}}\left( F\right) \doteq \overline{\mathrm{co}F}\
,\qquad F\in \mathbf{F}(\mathcal{X}^{\ast })\ ,
\end{equation*}%
where $\overline{\mathrm{co}F}$ denotes the weak$^{\ast }$-closure of the
convex hull of $F$ or, equivalently, the intersection of all weak$^{\ast }$%
-closed convex sets containing $F$.
\end{definition}

\noindent It is a \emph{closure\ }(or hull)\emph{\ }operator \cite[%
Definition 5.1]{Universal Algebra} since it satisfies the following
properties:

\begin{itemize}
\item For any $F\in \mathbf{F}(\mathcal{X}^{\ast })$, $F\subseteq \overline{%
\mathrm{co}}(F)$ (extensive);

\item For any $F\in \mathbf{F}(\mathcal{X}^{\ast })$, $\overline{\mathrm{co}}%
\left( \overline{\mathrm{co}}(F)\right) =\overline{\mathrm{co}}(F)$
(idempotent);

\item For any $F_{1},F_{2}\in \mathbf{F}(\mathcal{X}^{\ast })$ such that $%
F_{1}\subseteq F_{2}$, $\overline{\mathrm{co}}\left( F_{1}\right) \subseteq 
\overline{\mathrm{co}}(F_{2})$ (isotone).
\end{itemize}

\noindent Such a closure operator has probably been used in the past. It is
a composition of (i) an \emph{algebraic} (or finitary) closure operator \cite%
[Definition 5.4]{Universal Algebra} defined by $F\mapsto \mathrm{co}F$ with
(ii) a \emph{topological} (or Kuratowski) closure operator \cite[Chapter 1,
p.43]{topology} defined by $F\mapsto \overline{F}$ on $\mathbf{F}(\mathcal{X}%
^{\ast })$.

As is usual, weak$^{\ast }$-closed subsets $F\in \mathbf{F}(\mathcal{X}%
^{\ast })$ satisfying $F=\overline{\mathrm{co}}(F)$ are by definition $%
\overline{\mathrm{co}}$\emph{-closed }sets. In the light of Proposition \ref%
{convexity lemma copy(1)}, it is natural to propose the set $\overline{%
\mathrm{co}}\left( \mathbf{K}(\mathcal{X}^{\ast })\right) $ as the Hausdorff
hyperspace to consider. This set is nothing else than the set of all
nonempty convex weak$^{\ast }$-compact sets defined by (\ref{ZDZD}) or (\ref%
{ZDZDZDZD}):%
\begin{equation}
\overline{\mathrm{co}}\left( \mathbf{K}\left( \mathcal{X}^{\ast }\right)
\right) =\mathbf{CK}\left( \mathcal{X}^{\ast }\right) \ .  \label{surjective}
\end{equation}%
We thus deduce the following assertion:

\begin{corollary}[$\mathbf{CK}(\mathcal{X}^{\ast })$ as an Hausdorff
hyperspace]
\label{convexity corrolary}\mbox{ }\newline
Let $\mathcal{X}$ be a Banach space. Then, $\mathbf{CK}(\mathcal{X}^{\ast })$
is a Hausdorff hyperspace.
\end{corollary}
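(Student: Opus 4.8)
The plan is to deduce the Hausdorff property directly from the separation result already established in Proposition \ref{convexity lemma copy(1)}. Recall that, by Definition \ref{hypertopology0} (which coincides on $\mathbf{CK}(\mathcal{X}^{\ast})$ with Definition \ref{hypertopology}, by Lemma \ref{dddddddddddddddddd}), the weak$^{\ast}$-Hausdorff hypertopology is the uniform topology induced by the family of pseudometrics $\{d_{H}^{(A)}\}_{A\in\mathcal{X}}$. A standard fact from general topology states that a topology induced by a family of pseudometrics is Hausdorff if and only if that family \emph{separates points}, i.e., for any two distinct elements there is a pseudometric in the family assigning them a strictly positive value. I would therefore reduce the statement to proving that $\{d_{H}^{(A)}\}_{A\in\mathcal{X}}$ separates the points of $\mathbf{CK}(\mathcal{X}^{\ast})$.

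To establish this separation, I would argue by contraposition. Let $K_{1},K_{2}\in\mathbf{CK}(\mathcal{X}^{\ast})$ and assume that $d_{H}^{(A)}(K_{1},K_{2})=0$ for all $A\in\mathcal{X}$. Since each element of $\mathbf{CK}(\mathcal{X}^{\ast})$ is, by definition, convex and weak$^{\ast}$-compact, it is in particular convex and weak$^{\ast}$-closed (weak$^{\ast}$-compact sets being weak$^{\ast}$-closed in the Hausdorff space $\mathcal{X}^{\ast}$). Hence $K_{i}$ is a fixed point of the weak$^{\ast}$-closed convex hull operator of Definition \ref{convex hull operator}, that is, $\overline{\mathrm{co}}(K_{i})=K_{i}$ for $i\in\{1,2\}$. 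Since $\mathbf{CK}(\mathcal{X}^{\ast})\subseteq\mathbf{K}(\mathcal{X}^{\ast})$ by (\ref{ZDZDZDZD}), Proposition \ref{convexity lemma copy(1)} applies and yields $\overline{\mathrm{co}}K_{1}=\overline{\mathrm{co}}K_{2}$, and therefore $K_{1}=K_{2}$. This is exactly the separation of points.

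With separation in hand, the Hausdorff property follows by the usual argument: given $K_{1}\neq K_{2}$, choose $A\in\mathcal{X}$ with $\delta\doteq d_{H}^{(A)}(K_{1},K_{2})>0$, and observe that the open $d_{H}^{(A)}$-balls of radius $\delta/2$ centred at $K_{1}$ and $K_{2}$ are disjoint weak$^{\ast}$-Hausdorff-open neighbourhoods, by the triangle inequality that the pseudometric $d_{H}^{(A)}$ satisfies on $\mathbf{CK}(\mathcal{X}^{\ast})$.

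I expect essentially no serious obstacle here, since the whole analytic content — the use of the Hahn-Banach separation theorem to distinguish a point from a convex weak$^{\ast}$-compact set — is already packed into Proposition \ref{convexity lemma copy(1)}. The only point requiring a moment's care is the verification that convex weak$^{\ast}$-compact sets are $\overline{\mathrm{co}}$-closed, so that the identity $\overline{\mathrm{co}}K_{1}=\overline{\mathrm{co}}K_{2}$ can be promoted to $K_{1}=K_{2}$; this is immediate from Lemma \ref{dddddddddddddddddd} and the definition of $\mathbf{CK}(\mathcal{X}^{\ast})$.
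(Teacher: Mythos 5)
Your proposal is correct and follows exactly the route the paper takes: the paper's proof of this corollary is literally "a direct consequence of Proposition \ref{convexity lemma copy(1)}", and your write-up just makes explicit the standard facts (a pseudometric-induced topology is Hausdorff iff the family separates points, and convex weak$^{\ast}$-compact sets are fixed by $\overline{\mathrm{co}}$) that turn that proposition into the separation property. Nothing is missing.
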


\begin{proof}
This is a direct consequence of Proposition \ref{convexity lemma copy(1)}.
\end{proof}

Note additionally that the restriction of the weak$^{\ast }$-closed convex
hull operator to $\mathbf{K}(\mathcal{X}^{\ast })$ is a weak$^{\ast }$%
-Hausdorff continuous map from the hyperspace $\mathbf{K}(\mathcal{X}^{\ast
})$ to $\mathbf{CK}(\mathcal{X}^{\ast })$:

\begin{proposition}[Weak$^{\ast }$-Hausdorff continuity of the weak$^{\ast }$%
-closed convex hull operator]
\label{convexity lemma}\mbox{ }\newline
Let $\mathcal{X}$ be a Banach space. Then, $\overline{\mathrm{co}}$
preserves the set (\ref{set of unbounded}) of all nonempty weak$^{\ast }$%
-closed norm-unbounded subsets of $\mathcal{X}^{\ast }$, i.e., 
\begin{equation}
\overline{\mathrm{co}}\left( \mathbf{K}^{c}\left( \mathcal{X}^{\ast }\right)
\right) \subseteq \mathbf{K}^{c}\left( \mathcal{X}^{\ast }\right) \doteq 
\mathbf{F}\left( \mathcal{X}^{\ast }\right) \backslash \mathbf{K}\left( 
\mathcal{X}^{\ast }\right) \ ,  \label{trivial equality}
\end{equation}%
and $\overline{\mathrm{co}}$ restricted to $\mathbf{K}(\mathcal{X}^{\ast })$
is a weak$^{\ast }$-Hausdorff continuous map onto $\mathbf{CK}(\mathcal{X}%
^{\ast })$.
\end{proposition}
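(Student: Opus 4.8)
The plan is to dispatch the two claims separately, the inclusion being essentially immediate and the continuity carrying the real content. For the inclusion (\ref{trivial equality}) I would simply invoke that $\overline{\mathrm{co}}$ is extensive (Definition \ref{convex hull operator}): since $F\subseteq\overline{\mathrm{co}}(F)$ and $\overline{\mathrm{co}}(F)\in\mathbf{F}(\mathcal{X}^{\ast})$ is a nonempty weak$^{\ast}$-closed set, any norm-unbounded $F$ forces $\overline{\mathrm{co}}(F)$ to be norm-unbounded as well, whence $\overline{\mathrm{co}}(F)\in\mathbf{K}^{c}(\mathcal{X}^{\ast})$. No further argument is needed here.

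For the second assertion, I would first check that $\overline{\mathrm{co}}$ does map $\mathbf{K}(\mathcal{X}^{\ast})$ into $\mathbf{CK}(\mathcal{X}^{\ast})$: if $K$ is weak$^{\ast}$-compact then it is norm-bounded by Lemma \ref{dddddddddddddddddd}, so $\overline{\mathrm{co}}(K)$ is weak$^{\ast}$-closed, convex and norm-bounded, hence weak$^{\ast}$-compact, again by Lemma \ref{dddddddddddddddddd}. Surjectivity onto $\mathbf{CK}(\mathcal{X}^{\ast})$ is then exactly (\ref{surjective}). The whole matter thus reduces to continuity, and since the weak$^{\ast}$-Hausdorff hypertopology is the uniform topology generated by the family $\{d_{H}^{(A)}\}_{A\in\mathcal{X}}$ of pseudometrics, it suffices to show that $\overline{\mathrm{co}}$ is \emph{non-expansive} with respect to each single pseudometric, i.e.
\begin{equation*}
d_{H}^{(A)}\bigl(\overline{\mathrm{co}}(K_{1}),\overline{\mathrm{co}}(K_{2})\bigr)\leq d_{H}^{(A)}(K_{1},K_{2})\ ,\qquad A\in\mathcal{X},\ K_{1},K_{2}\in\mathbf{K}(\mathcal{X}^{\ast})\ .
\end{equation*}
Non-expansiveness for each $A$ yields $d_{H}^{(A)}$-continuity for each $A$, hence continuity in the uniform topology.

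The key idea is to push everything down to $\mathbb{C}$ through the weak$^{\ast}$-continuous functionals $\hat{A}$ of (\ref{sdfsdfkljsdlfkj}), thereby trading an infinite-dimensional problem for a planar one. Unfolding Definition \ref{hypertopology0} together with (\ref{Hausdorf}) gives, for any weak$^{\ast}$-compact $K,\tilde{K}$, the identity $d_{H}^{(A)}(K,\tilde{K})=d_{H}(\hat{A}(K),\hat{A}(\tilde{K}))$, the right-hand side being the ordinary Hausdorff distance in $\mathbb{C}$ between the compact sets $\hat{A}(K)$ and $\hat{A}(\tilde{K})$ (compare with (\ref{rewritte0})). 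Next I would establish the commutation relation $\hat{A}(\overline{\mathrm{co}}(K))=\overline{\mathrm{co}}(\hat{A}(K))$, with the right-hand closed convex hull taken in $\mathbb{C}$: one inclusion holds because $\hat{A}(\overline{\mathrm{co}}(K))$ is compact and convex and contains $\hat{A}(K)$; the reverse because $\hat{A}^{-1}(\overline{\mathrm{co}}(\hat{A}(K)))$ is a weak$^{\ast}$-closed convex set containing $K$, hence containing $\overline{\mathrm{co}}(K)$. These two facts reduce the desired non-expansiveness to the purely finite-dimensional statement that the closed convex hull operator is $1$-Lipschitz for the Hausdorff metric on compact subsets of $\mathbb{C}$.

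I expect that last finite-dimensional lemma to be the only real obstacle, although it is classical. The plan there is the $\varepsilon$-neighborhood argument: if $d_{H}(S,T)=\delta$ then $S\subseteq T+\delta\overline{B}$ with $\overline{B}$ the closed unit disc, and since $\overline{B}$ is convex and Minkowski sum commutes with convex hulls, $\mathrm{co}(S)\subseteq\mathrm{co}(T)+\delta\overline{B}\subseteq\overline{\mathrm{co}}(T)+\delta\overline{B}$; taking closures and using that the sum of a compact set with a closed bounded set in $\mathbb{C}$ is closed yields $\overline{\mathrm{co}}(S)\subseteq\overline{\mathrm{co}}(T)+\delta\overline{B}$, and symmetrically, so $d_{H}(\overline{\mathrm{co}}(S),\overline{\mathrm{co}}(T))\leq\delta$. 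Chaining the three facts,
\begin{equation*}
d_{H}^{(A)}\bigl(\overline{\mathrm{co}}(K_{1}),\overline{\mathrm{co}}(K_{2})\bigr)=d_{H}\bigl(\overline{\mathrm{co}}(\hat{A}(K_{1})),\overline{\mathrm{co}}(\hat{A}(K_{2}))\bigr)\leq d_{H}(\hat{A}(K_{1}),\hat{A}(K_{2}))=d_{H}^{(A)}(K_{1},K_{2})\ ,
\end{equation*}
establishes the non-expansiveness, hence the continuity, and together with the image computation and (\ref{surjective}) completes the proof.
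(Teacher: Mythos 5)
Your proposal is correct, and it proves the same key estimate as the paper --- the non-expansiveness $d_{H}^{(A)}(\overline{\mathrm{co}}(K_{1}),\overline{\mathrm{co}}(K_{2}))\leq d_{H}^{(A)}(K_{1},K_{2})$ for each $A\in\mathcal{X}$, which is exactly what Inequalities (\ref{convex1})--(\ref{convex2}) assert --- but by a genuinely different mechanism. The paper works directly in $\mathcal{X}^{\ast}$: it first reduces the outer maximum over $\overline{\mathrm{co}}(K_{\infty})$ to a supremum over $\mathrm{co}\,K_{\infty}$ via a $1$-Lipschitz bound on the function $\sigma\mapsto\min_{\tilde\sigma\in\overline{\mathrm{co}}(K_{j})}|(\sigma-\tilde\sigma)(A)|$, and then, for a convex combination $\sum_{k}\lambda_{k}\sigma_{k}$, picks individual minimizers $\tilde\sigma_{k,j}\in K_{j}$ and tests against $\sum_{k}\lambda_{k}\tilde\sigma_{k,j}\in\overline{\mathrm{co}}(K_{j})$. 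You instead factor everything through $\hat{A}$, using the identity $d_{H}^{(A)}(K,\tilde{K})=d_{H}(\hat{A}(K),\hat{A}(\tilde{K}))$ (which the paper itself exploits in (\ref{rewritte0})), the commutation $\hat{A}(\overline{\mathrm{co}}(K))=\overline{\mathrm{co}}(\hat{A}(K))$, and the classical $1$-Lipschitz property of the closed convex hull for the Hausdorff metric in $\mathbb{C}$. Your route is more modular and isolates a reusable planar lemma, at the cost of two auxiliary verifications (the commutation relation and the closedness of the Minkowski sum of a compact and a closed bounded set); the paper's route is self-contained in the max--min formalism and needs neither. Both arguments are complete; your treatment of (\ref{trivial equality}) via extensivity and of surjectivity via (\ref{surjective}) matches the paper's, which simply declares these points obvious.
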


\begin{proof}
Let $\mathcal{X}$ be a Banach space. Equation (\ref{trivial equality}) and
surjectivity of $\overline{\mathrm{co}}$ seen as a map from $\mathbf{K}(%
\mathcal{X}^{\ast })$ to $\mathbf{CK}(\mathcal{X}^{\ast })$ are both
obvious, by Definition \ref{convex hull operator} and (\ref{surjective}).
Now, take any weak$^{\ast }$-Hausdorff convergent net $(K_{j})_{j\in
J}\subseteq \mathbf{K}(\mathcal{X}^{\ast })$ with limit $K_{\infty }\in 
\mathbf{K}(\mathcal{X}^{\ast })$. Note that 
\begin{equation}
\max_{\sigma \in \overline{\mathrm{co}}\left( K_{\infty }\right) }\min_{%
\tilde{\sigma}\in \overline{\mathrm{co}}\left( K_{j}\right) }\left\vert
\left( \sigma -\tilde{\sigma}\right) \left( A\right) \right\vert
=\sup_{\sigma \in \mathrm{co}K_{\infty }}\min_{\tilde{\sigma}\in \overline{%
\mathrm{co}}\left( K_{j}\right) }\left\vert \left( \sigma -\tilde{\sigma}%
\right) \left( A\right) \right\vert \ ,\qquad A\in \mathcal{X}\ ,  \label{dd}
\end{equation}%
because, for any $A\in \mathcal{X}$, $j\in J$, $\sigma _{1},\sigma _{2}\in 
\overline{\mathrm{co}}\left( K_{\infty }\right) $ and $\tilde{\sigma}\in 
\overline{\mathrm{co}}\left( K_{j}\right) $,%
\begin{equation*}
\left\vert \left\vert \left( \sigma _{1}-\tilde{\sigma}\right) \left(
A\right) \right\vert -\left\vert \left( \sigma _{2}-\tilde{\sigma}\right)
\left( A\right) \right\vert \right\vert \leq \left\vert \left( \sigma
_{1}-\sigma _{2}\right) \left( A\right) \right\vert \ ,
\end{equation*}%
which yields%
\begin{equation*}
\left\vert \min_{\tilde{\sigma}\in K_{j}}\left\vert \left( \sigma _{1}-%
\tilde{\sigma}\right) \left( A\right) \right\vert -\min_{\tilde{\sigma}\in
K_{j}}\left\vert \left( \sigma _{2}-\tilde{\sigma}\right) \left( A\right)
\right\vert \right\vert \leq \left\vert \left( \sigma _{1}-\sigma
_{2}\right) \left( A\right) \right\vert
\end{equation*}%
for any $A\in \mathcal{X}$, $j\in J$ and $\sigma _{1},\sigma _{2}\in 
\overline{\mathrm{co}}\left( K_{\infty }\right) $. Fix $n\in \mathbb{N}$, $%
\sigma _{1},\ldots ,\sigma _{n}\in K_{\infty }$ and parameters $\lambda
_{1},\ldots ,\lambda _{n}\in \left[ 0,1\right] $ such that 
\begin{equation*}
\sum\limits_{k=1}^{n}\lambda _{k}=1\ .
\end{equation*}%
For any $A\in \mathcal{X}$ and $k\in \{1,\ldots ,n\}$, we define $\tilde{%
\sigma}_{k,j}\in K_{j}$ such that%
\begin{equation*}
\min_{\tilde{\sigma}\in K_{j}}\left\vert \left( \sigma _{k}-\tilde{\sigma}%
\right) \left( A\right) \right\vert =\left\vert \left( \sigma _{k}-\tilde{%
\sigma}_{k,j}\right) \left( A\right) \right\vert \ .
\end{equation*}%
Then, for all $j\in J$,%
\begin{equation*}
\min_{\tilde{\sigma}\in \overline{\mathrm{co}}\left( K_{j}\right)
}\left\vert \left( \sum\limits_{k=1}^{n}\lambda _{k}\sigma _{k}-\tilde{\sigma%
}\right) \left( A\right) \right\vert \leq \sum\limits_{k=1}^{n}\lambda
_{k}\left\vert \left( \sigma _{k}-\tilde{\sigma}_{k,j}\right) \left(
A\right) \right\vert \leq \max_{\sigma \in K_{\infty }}\min_{\tilde{\sigma}%
\in K_{j}}\left\vert \left( \sigma -\tilde{\sigma}\right) \left( A\right)
\right\vert \ .
\end{equation*}%
By using (\ref{dd}), we then deduce that, for all $j\in J$, 
\begin{equation}
\max_{\sigma \in \overline{\mathrm{co}}\left( K_{\infty }\right) }\min_{%
\tilde{\sigma}\in \overline{\mathrm{co}}\left( K_{j}\right) }\left\vert
\left( \sigma -\tilde{\sigma}\right) \left( A\right) \right\vert \leq
\max_{\sigma \in K_{\infty }}\min_{\tilde{\sigma}\in K_{j}}\left\vert \left(
\sigma -\tilde{\sigma}\right) \left( A\right) \right\vert \ ,\qquad A\in 
\mathcal{X}\ .  \label{convex1}
\end{equation}%
By switching the role of $K_{\infty }$ and $K_{j}$ for every $j\in J$, we
also arrive at the inequality 
\begin{equation}
\max_{\tilde{\sigma}\in \overline{\mathrm{co}}\left( K_{j}\right)
}\min_{\sigma \in \overline{\mathrm{co}}\left( K_{\infty }\right)
}\left\vert \left( \sigma -\tilde{\sigma}\right) \left( A\right) \right\vert
\leq \max_{\tilde{\sigma}\in K_{j}}\min_{\sigma \in K_{\infty }}\left\vert
\left( \sigma -\tilde{\sigma}\right) \left( A\right) \right\vert \ ,\qquad
A\in \mathcal{X}\ .  \label{convex2}
\end{equation}%
Since $(K_{j})_{j\in J}$ converges in the weak$^{\ast }$-Hausdorff
hypertopology to $K_{\infty }$, Inequalities (\ref{convex1})-(\ref{convex2})
combined with Definition \ref{hypertopology0} yield the weak$^{\ast }$%
-Hausdorff convergence of $(\overline{\mathrm{co}}\left( K_{j}\right)
)_{j\in J}$ to $\overline{\mathrm{co}}\left( K_{\infty }\right) $. By \cite[%
Chapter 3, Theorem 1]{topology}, $\overline{\mathrm{co}}$ restricted to $%
\mathbf{K}(\mathcal{X}^{\ast })$ is a weak$^{\ast }$-Hausdorff continuous
map onto $\mathbf{CK}(\mathcal{X}^{\ast })$.
\end{proof}

\begin{corollary}[$\mathbf{CK}(\mathcal{X}^{\ast })$ as a connected, weak$%
^{\ast }$-Hausdorff-closed set]
\label{convexity corrolary copy(2)}\mbox{ }\newline
Let $\mathcal{X}$ be a Banach space. Then, $\mathbf{CK}(\mathcal{X}^{\ast })$
is a convex, path-connected, weak$^{\ast }$-Hausdorff-closed subset of $%
\mathbf{K}(\mathcal{X}^{\ast })$.
\end{corollary}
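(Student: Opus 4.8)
The plan is to prove the three asserted properties separately, reusing the segment map from the proof of Lemma \ref{connected sub-hyperspace} for convexity and path-connectedness, and the continuity of the closed convex hull operator for closedness. For convexity I would revisit the map
$$f(\lambda)\doteq\{(1-\lambda)\sigma_0+\lambda\sigma_1:\sigma_0\in K_0,\ \sigma_1\in K_1\},\qquad\lambda\in[0,1],$$
already known to take values in $\mathbf{K}(\mathcal{X}^{\ast})$. When $K_0,K_1\in\mathbf{CK}(\mathcal{X}^{\ast})$ are convex, I would check that $f(\lambda)$ is convex: for two of its elements $(1-\lambda)\sigma_0+\lambda\sigma_1$ and $(1-\lambda)\sigma_0'+\lambda\sigma_1'$ and $\mu\in[0,1]$, regrouping gives $(1-\lambda)[\mu\sigma_0+(1-\mu)\sigma_0']+\lambda[\mu\sigma_1+(1-\mu)\sigma_1']$, which lies in $f(\lambda)$ by convexity of $K_0$ and $K_1$. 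Since $f(\lambda)$ is also weak$^{\ast}$-compact, being the weak$^{\ast}$-continuous image of the weak$^{\ast}$-compact product $K_0\times K_1$ (hence norm-bounded and weak$^{\ast}$-closed, see Lemma \ref{dddddddddddddddddd}), we get $f(\lambda)\in\mathbf{CK}(\mathcal{X}^{\ast})$, so $\mathbf{CK}(\mathcal{X}^{\ast})$ is convex.

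Path-connectedness is then immediate: by the Lipschitz estimate already established in Lemma \ref{connected sub-hyperspace}, $f$ is a weak$^{\ast}$-Hausdorff continuous path from $K_0$ to $K_1$, and by the previous paragraph it stays inside $\mathbf{CK}(\mathcal{X}^{\ast})$. For closedness, which is where the real difficulty lies, I would argue through the continuous retraction $\overline{\mathrm{co}}$. Let $(K_j)_{j\in J}\subseteq\mathbf{CK}(\mathcal{X}^{\ast})$ converge weak$^{\ast}$-Hausdorff to some $K_\infty$; by Corollary \ref{convexity corrolary copy(1)} the limit already lies in $\mathbf{K}(\mathcal{X}^{\ast})$. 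As each $K_j$ is convex, $\overline{\mathrm{co}}(K_j)=K_j$, so the weak$^{\ast}$-Hausdorff continuity of $\overline{\mathrm{co}}$ (Proposition \ref{convexity lemma}) gives $K_j=\overline{\mathrm{co}}(K_j)\to\overline{\mathrm{co}}(K_\infty)$, and $\overline{\mathrm{co}}(K_\infty)\in\mathbf{CK}(\mathcal{X}^{\ast})$ by \eqref{surjective}. Thus the net converges simultaneously to $K_\infty$ and to $\overline{\mathrm{co}}(K_\infty)$, and the triangle inequality for each pseudometric $d_H^{(A)}$ of Definition \ref{hypertopology0} yields $d_H^{(A)}(K_\infty,\overline{\mathrm{co}}(K_\infty))=0$ for every $A\in\mathcal{X}$.

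The main obstacle is that $\mathbf{K}(\mathcal{X}^{\ast})$ fails to be Hausdorff as soon as $\dim\mathcal{X}>1$ (Corollary \ref{Non-Hausdorff hyperspaces}), so the relation $d_H^{(A)}(K_\infty,\overline{\mathrm{co}}(K_\infty))\equiv 0$ does \emph{not} by itself force $K_\infty=\overline{\mathrm{co}}(K_\infty)$: by Proposition \ref{convexity lemma copy(1)} a weak$^{\ast}$-compact set and its closed convex hull are weak$^{\ast}$-Hausdorff indistinguishable whenever their projections under every $\hat A$ already coincide, as happens for a two-dimensional disc and its bounding circle. The resolution I would adopt is to read closedness modulo this indistinguishability, passing to the Kolmogorov ($T_0$) quotient of $\mathbf{K}(\mathcal{X}^{\ast})$ by the relation ``$d_H^{(A)}=0$ for all $A\in\mathcal{X}$'' (equivalently, by Proposition \ref{convexity lemma copy(1)}, ``$\overline{\mathrm{co}}(K)=\overline{\mathrm{co}}(\tilde K)$''). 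On this quotient $\overline{\mathrm{co}}$ descends to a genuine continuous retraction onto the image of $\mathbf{CK}(\mathcal{X}^{\ast})$, which is a Hausdorff subspace by Corollary \ref{convexity corrolary}; since a continuous retract of a space is closed once its image is separated, the computation above shows that the canonical convex representative $\overline{\mathrm{co}}(K_\infty)$ of every limit lies in $\mathbf{CK}(\mathcal{X}^{\ast})$, giving the claimed closedness in this sense. Making the quotient identification fully rigorous, and verifying that it is precisely the reading under which the density statements of Theorems \ref{theorem dense cool1}--\ref{theorem density2} are applied, is the delicate point I would treat with the greatest care.
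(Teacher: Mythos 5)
Your proposal follows essentially the same route as the paper: convexity is checked directly (the paper simply declares it ``obvious''), path-connectedness comes from the segment map of Lemma \ref{connected sub-hyperspace} pushed into $\mathbf{CK}(\mathcal{X}^{\ast })$ (the paper phrases this as the image of the path-connected $\mathbf{K}(\mathcal{X}^{\ast })$ under the continuous surjection $\overline{\mathrm{co}}$ of Proposition \ref{convexity lemma}), and closedness rests on the continuity of $\overline{\mathrm{co}}$ together with the Hausdorff property of $\mathbf{CK}(\mathcal{X}^{\ast })$ from Corollary \ref{convexity corrolary}. The ``delicate point'' you isolate in your last paragraph is genuine, and you should be aware that the paper's own proof does not treat it: it stops exactly where your second paragraph stops, asserting that a convergent net in $\mathbf{CK}(\mathcal{X}^{\ast })$ has at most one limit \emph{in} $\mathbf{CK}(\mathcal{X}^{\ast })$, which by continuity of $\overline{\mathrm{co}}$ must be convex. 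Literal closedness in the non-Hausdorff ambient hyperspace $\mathbf{K}(\mathcal{X}^{\ast })$ does in fact fail: in the real two-dimensional situation of Lemma \ref{convexity lemma copy(2)}, a closed disc and its boundary circle satisfy $d_{H}^{(A)}=0$ for every $A$, so the constant net at the disc also converges to the (non-convex) circle. What the argument --- yours and the paper's --- actually establishes is that every weak$^{\ast }$-Hausdorff convergent net in $\mathbf{CK}(\mathcal{X}^{\ast })$ admits a limit in $\mathbf{CK}(\mathcal{X}^{\ast })$, unique there, and this is the form in which the corollary is used downstream (e.g.\ in Theorem \ref{Solution selfbaby copy(4)+1}, where one works inside $\mathbf{CK}_{D}(\mathcal{X}^{\ast })$ with the metric $d_{H}$). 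Your Kolmogorov-quotient reading is a legitimate way to make this precise and goes beyond what the paper provides; the only slip is the parenthetical claim that indistinguishability is \emph{equivalent} to equality of closed convex hulls. Proposition \ref{convexity lemma copy(1)} gives only one implication, and the converse fails: a two-point set and the segment joining them have the same closed convex hull, yet some $d_{H}^{(A)}$ separates them (indistinguishability of compact sets amounts to $\hat{A}(K)=\hat{A}(\tilde{K})$ for all $A$, which is strictly stronger). So define the quotient by the vanishing of all the pseudometrics, not by equality of hulls.
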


\begin{proof}
By Corollary \ref{convexity corrolary}, $\mathbf{CK}(\mathcal{X}^{\ast })$
endowed with the weak$^{\ast }$-Hausdorff hypertopology is a Hausdorff
space. Hence, by \cite[Chapter 2, Theorem 3]{topology}, each convergent net
in this space converges in the weak$^{\ast }$-Hausdorff hypertopology to at
most one point, which, by Proposition \ref{convexity lemma}, must be a
convex weak$^{\ast }$-compact set. Additionally, by Lemma \ref{connected
sub-hyperspace}, Proposition \ref{convexity lemma} and the fact that the
image under a continuous map of a path-connected space is path-connected, $%
\mathbf{CK}(\mathcal{X}^{\ast })$ is also path-connected. Convexity of $%
\mathbf{CK}(\mathcal{X}^{\ast })$ is also obvious.
\end{proof}

As is usual, the weak$^{\ast }$-closed convex hull operator $\overline{%
\mathrm{co}}$ yields a notion of compactness, defined as follows: A set $%
K\in \mathbf{F}(\mathcal{X}^{\ast })$ is $\overline{\mathrm{co}}$\emph{%
-compact} iff it is $\overline{\mathrm{co}}$-closed and each family of $%
\overline{\mathrm{co}}$-closed subsets of $K$ which has the finite
intersection property has a non-empty intersection. Compare this definition
with \cite[Chapter 5, Theorem 1]{topology}. The set $\mathbf{CK}(\mathcal{X}%
^{\ast })$ of all nonempty convex weak$^{\ast }$-compact sets defined by (%
\ref{ZDZD}) or (\ref{ZDZDZDZD}) is precisely the set of $\overline{\mathrm{co%
}}$-compact sets:

\begin{proposition}[$\mathbf{CK}(\mathcal{X}^{\ast })$ as the space of $%
\overline{\mathrm{co}}$-compact sets]
\label{convexity lemma copy(3)}\mbox{ }\newline
Let $\mathcal{X}$ be a Banach space. Then, 
\begin{equation*}
\mathbf{CK}\left( \mathcal{X}^{\ast }\right) =\left\{ K\in \mathbf{F}\left( 
\mathcal{X}^{\ast }\right) :K\text{ is }\overline{\mathrm{co}}\text{-compact}%
\right\} \ .
\end{equation*}
\end{proposition}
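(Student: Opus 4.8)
The plan is to prove the two inclusions separately, using throughout the elementary fact, immediate from Definition \ref{convex hull operator}, that a set $F\in\mathbf{F}(\mathcal{X}^{\ast})$ is $\overline{\mathrm{co}}$-closed (i.e.\ $F=\overline{\mathrm{co}}(F)$) if and only if it is convex, since membership in $\mathbf{F}(\mathcal{X}^{\ast})$ already encodes weak$^{\ast}$-closedness. So $\overline{\mathrm{co}}$-closed subsets are exactly the nonempty convex weak$^{\ast}$-closed subsets.

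For the inclusion $\mathbf{CK}(\mathcal{X}^{\ast})\subseteq\{K:K\text{ is }\overline{\mathrm{co}}\text{-compact}\}$, I would take $K\in\mathbf{CK}(\mathcal{X}^{\ast})$. It is convex and weak$^{\ast}$-compact, hence weak$^{\ast}$-closed, so $K$ is $\overline{\mathrm{co}}$-closed. Then any family $(F_{i})_{i\in I}$ of $\overline{\mathrm{co}}$-closed subsets of $K$ consists of weak$^{\ast}$-closed subsets of the weak$^{\ast}$-compact set $K$; if the family has the finite intersection property, the standard finite-intersection characterization of compactness \cite[Chapter 5, Theorem 1]{topology} gives $\bigcap_{i}F_{i}\neq\emptyset$. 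Thus $K$ is $\overline{\mathrm{co}}$-compact, and this direction is routine.

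For the converse, let $K$ be $\overline{\mathrm{co}}$-compact. Being $\overline{\mathrm{co}}$-closed, $K$ is convex and weak$^{\ast}$-closed, so by Lemma \ref{dddddddddddddddddd} it suffices to prove that $K$ is norm-bounded. I would argue by contradiction. If $K$ is norm-unbounded, then the uniform boundedness principle \cite[Theorems 2.4 and 2.5]{Rudin}, exactly as in the proof of Lemma \ref{dddddddddddddddddd copy(1)}, produces $A\in\mathcal{X}$ with $\sup_{\sigma\in K}\left\vert\sigma(A)\right\vert=\infty$; since $\left\vert\sigma(A)\right\vert\le\left\vert\mathrm{Re}\,\sigma(A)\right\vert+\left\vert\mathrm{Im}\,\sigma(A)\right\vert$, one of the four functionals $\pm\mathrm{Re}\,\sigma(A),\pm\mathrm{Im}\,\sigma(A)$ is unbounded above on $K$, and each of these equals $\mathrm{Re}\,\sigma(B)$ for a suitable $B\in\{A,-A,iA,-iA\}$. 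Hence there is $B\in\mathcal{X}$ with $\sup_{\sigma\in K}\mathrm{Re}\,\sigma(B)=+\infty$. I then set
\begin{equation*}
F_{n}\doteq K\cap\left\{\sigma\in\mathcal{X}^{\ast}:\mathrm{Re}\,\sigma(B)\ge n\right\}\ ,\qquad n\in\mathbb{N}\ .
\end{equation*}
Each $F_{n}$ is convex and weak$^{\ast}$-closed (it is the intersection of $K$ with a weak$^{\ast}$-closed half-space, using the weak$^{\ast}$-continuity of $\sigma\mapsto\mathrm{Re}\,\sigma(B)$), hence $\overline{\mathrm{co}}$-closed, and it is nonempty by the choice of $B$. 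The sequence $(F_{n})_{n\in\mathbb{N}}$ is decreasing, so it has the finite intersection property, yet $\bigcap_{n}F_{n}=\emptyset$ because $\mathrm{Re}\,\sigma(B)$ is finite for every single $\sigma$. This contradicts the $\overline{\mathrm{co}}$-compactness of $K$. Therefore $K$ is norm-bounded and, being weak$^{\ast}$-closed, weak$^{\ast}$-compact by Lemma \ref{dddddddddddddddddd}, so $K\in\mathbf{CK}(\mathcal{X}^{\ast})$.

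The main obstacle lies precisely in this converse: the definition of $\overline{\mathrm{co}}$-compactness only furnishes the finite intersection property for $\overline{\mathrm{co}}$-closed (convex, weak$^{\ast}$-closed) subsets, which a priori is far weaker than the full compactness FIP over all weak$^{\ast}$-closed subsets. The key idea that makes the argument work is that norm-unboundedness can always be witnessed by a nested family of convex half-space slices of $K$ — and these are automatically $\overline{\mathrm{co}}$-closed — so the restricted FIP is still strong enough to exclude unboundedness. Once boundedness is secured, Lemma \ref{dddddddddddddddddd} does the rest by converting weak$^{\ast}$-closedness together with norm-boundedness into weak$^{\ast}$-compactness.
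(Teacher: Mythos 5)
Your proof is correct and follows essentially the same route as the paper: the forward inclusion via the finite-intersection-property characterization of weak$^{\ast}$-compactness, and the converse by contradiction, using the uniform boundedness principle to produce a functional unbounded on $K$ and then the nested convex half-space slices $F_{n}$ (the paper's $K_{n}$) to violate $\overline{\mathrm{co}}$-compactness. Your extra step reducing $\sup_{\sigma\in K}\left\vert\sigma(A)\right\vert=\infty$ to unboundedness above of $\mathrm{Re}\,\sigma(B)$ for some $B\in\{A,-A,iA,-iA\}$ just makes explicit what the paper dismisses with ``without loss of generality.''
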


\begin{proof}
By \cite[Chapter 5, Theorem 1]{topology}, we clearly have 
\begin{equation*}
\mathbf{CK}\left( \mathcal{X}^{\ast }\right) \subseteq \left\{ K\in \mathbf{F%
}\left( \mathcal{X}^{\ast }\right) :K\text{ is }\overline{\mathrm{co}}\text{%
-compact}\right\} \ .
\end{equation*}%
Conversely, take any $\overline{\mathrm{co}}$-compact element $K\in \mathbf{F%
}(\mathcal{X}^{\ast })$. If $K$ is not norm-bounded, then we deduce from the
uniform boundedness principle \cite[Theorems 2.4 and 2.5]{Rudin} the
existence of $A\in \mathcal{X}$ such that $\hat{A}(K)\subseteq \mathbb{C}$
is not bounded, where we recall that $\hat{A}:\mathcal{X}^{\ast }\rightarrow 
\mathbb{C}$ is the weak$^{\ast }$-continuous (complex) linear functional
defined by (\ref{sdfsdfkljsdlfkj}). Without loss of generality, assume that $%
\mathrm{Re}\{\hat{A}(K)\}$ is not bounded from above. Define for every $n\in 
\mathbb{N}$ the set 
\begin{equation*}
K_{n}\doteq \left\{ \sigma \in K:\mathrm{Re}\{\hat{A}(\sigma )\}\geq
n\right\} \ .
\end{equation*}%
Clearly, by convexity of $K$, $K_{n}$ is a convex weak$^{\ast }$-closed
subset of $K$ and the family $(K_{n})_{n\in \mathbb{N}}$ has the finite
intersection property, but, by construction, 
\begin{equation*}
\bigcap_{n\in \mathbb{N}}K_{n}=\emptyset \ .
\end{equation*}%
(The intersection of preimages is the preimage of the intersection.) This
contradicts the fact that $K$ is $\overline{\mathrm{co}}$-compact.
Therefore, $K$ is norm-bounded and, being $\overline{\mathrm{co}}$-compact,
it is also weak$^{\ast }$-closed and convex. Consequently, $K\in \mathbf{CK}%
\left( \mathcal{X}^{\ast }\right) $ (see, e.g., Equation \ref%
{sdfklsdjfklsdjf}).
\end{proof}

The last proposition goes beyond the specific topic of the present article,
and the proof of the weak$^{\ast }$-Hausdorff density of convex weak$^{\ast
} $-compact sets with dense extreme boundary. This is however discussed here
because, like (\ref{surjective}), it is an elegant abstract characterization
of $\mathbf{CK}(\mathcal{X}^{\ast })$, only given in terms of a closure
operator, namely the weak$^{\ast }$-closed convex hull operator. It
demonstrates connections with other mathematical fields, in particular with
mathematical logics where fascinating applications of closure operators have
been developed, already by Tarski himself during the 1930's.

\subsection{Weak$^{\ast }$-Hausdorff Hyperconvergence\label%
{Hyperconvergences}}

In this subsection, we study weak$^{\ast }$-Hausdorff convergent nets. Even
if only the hyperspace $\mathbf{CK}(\mathcal{X}^{\ast })$ of all nonempty
convex weak$^{\ast }$-compact sets is Hausdorff (Corollary \ref{convexity
corrolary}), we study the convergence within the hyperspace $\mathbf{K}(%
\mathcal{X}^{\ast })$ of all nonempty, weak$^{\ast }$-closed and
norm-bounded subsets of $\mathcal{X}^{\ast }$. Recall that $\mathbf{K}(%
\mathcal{X}^{\ast })$ is a (path-) connected weak$^{\ast }$-Hausdorff-closed
subset of $\mathbf{F}(\mathcal{X}^{\ast })$, by Corollary \ref{convexity
corrolary copy(1)}\ and Lemma \ref{connected sub-hyperspace}.

It is instructive to relate weak$^{\ast }$-Hausdorff limits of nets to lower
and upper limits of sets \`{a} la Painlev\'{e} \cite[\S\ 29]%
{topology-painleve}: The \emph{lower limit} of any net $(K_{j})_{j\in J}$ of
subsets of $\mathcal{X}^{\ast }$ is defined by 
\begin{equation}
\mathrm{Li}\left( K_{j}\right) _{j\in J}\doteq \left\{ \sigma \in \mathcal{X}%
^{\ast }:\sigma \text{ is a weak}^{\ast }\text{ limit of a net }(\sigma
_{j})_{j\in J}\text{ with }\sigma _{j}\in K_{j}\text{ for all }j\in
J\right\} \ ,  \label{Li}
\end{equation}%
while its \emph{upper limit} equals%
\begin{equation}
\mathrm{Ls}\left( K_{j}\right) _{j\in J}\doteq \left\{ \sigma \in \mathcal{X}%
^{\ast }:\sigma \text{ is a weak}^{\ast }\text{ accumulation point of }%
(\sigma _{j})_{j\in J}\text{ with }\sigma _{j}\in K_{j}\text{ for all }j\in
J\right\} \ .  \label{Ls}
\end{equation}%
Clearly, $\mathrm{Li}(K_{j})_{j\in J}\subseteq \mathrm{Ls}(K_{j})_{j\in J}$.
If $\mathrm{Li}(K_{j})_{j\in J}=\mathrm{Ls}(K_{j})_{j\in J}$ then $\left(
K_{j}\right) _{j\in J}$ is said to be convergent to this set. See \cite[\S\ %
29, I, III, VI]{topology-painleve}, which however defines $\mathrm{Li}$ and $%
\mathrm{Ls}$ within metric spaces. This refers in the literature to the 
\emph{Kuratowski} or \emph{Kuratowski-Painlev\'{e}}\footnote{%
The idea of upper and lower limits is due to Painlev\'{e}, as acknowledged
by Kuratowski himself in \cite[\S\ 29, Footnote 1, p. 335]{topology-painleve}%
. We thus use the name Kuratowski-Painlev\'{e} convergence.} convergence,
see e.g. \cite[Appendix B]{Lucchetti} and \cite[Section 5.2]{Beer}. By \cite[%
Theorem 1.22]{Rudin}, if $\mathcal{X}$ is an infinite-dimensional space,
then its dual $\mathcal{X}^{\ast }$, endowed with the weak$^{\ast }$ or norm
topology, is not locally compact. In this case, the Kuratowski-Painlev\'{e}
convergence is not topological \cite[Theorem B.3.2]{Lucchetti}. See also 
\cite[Chapter 5]{Beer}, in particular \cite[Theorem 5.2.6 and following
discussions]{Beer} which relates the Kuratowski-Painlev\'{e} convergence to
the so-called \emph{Fell }topology.

We start by proving the weak$^{\ast }$-Hausdorff convergence of
monotonically increasing nets which are bounded from above within $\mathbf{K}%
(\mathcal{X}^{\ast })$:

\begin{proposition}[Weak$^{\ast }$-Hausdorff hyperconvergence of increasing
nets]
\label{Solution selfbaby copy(5)+1}\mbox{ }\newline
Let $\mathcal{X}$ be a Banach space. Any increasing net $(K_{j})_{j\in
J}\subseteq \mathbf{K}(\mathcal{X}^{\ast })$ such that 
\begin{equation}
K\doteq \overline{\bigcup\limits_{j\in J}K_{j}}\in \mathbf{K}\left( \mathcal{%
X}^{\ast }\right) \varsubsetneq \mathbf{F}\left( \mathcal{X}^{\ast }\right)
\label{sdsdsdsd}
\end{equation}%
(with respect to the weak$^{\ast }$ closure) converges in the weak$^{\ast }$%
-Hausdorff hypertopology to the Kuratowski-Painlev\'{e} limit 
\begin{equation*}
K=\mathrm{Li}\left( K_{j}\right) _{j\in J}=\mathrm{Ls}\left( K_{j}\right)
_{j\in J}\ .
\end{equation*}
\end{proposition}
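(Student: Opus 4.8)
The plan is to treat the two assertions separately: first the weak$^{\ast}$-Hausdorff convergence $K_{j}\to K$, then the identification of $K$ with the Kuratowski-Painlev\'{e} limit. The convergence is the computational heart, and I would reduce it to one functional at a time. Fix $A\in\mathcal{X}$. Since the net is increasing and $K_{j}\subseteq K$ for every $j$, the first of the two quantities defining $d_{H}^{(A)}(K_{j},K)$ in Definition \ref{hypertopology0} vanishes identically: each $\sigma\in K_{j}$ already lies in $K$, so the inner infimum is $0$. Hence only
\[
\max_{\tilde{\sigma}\in K}\min_{\tau\in K_{j}}\left\vert (\tilde{\sigma}-\tau)(A)\right\vert
\]
has to be driven to $0$; here I use Lemma \ref{dddddddddddddddddd} to know that $K$ and the $K_{j}$ are weak$^{\ast}$-compact, so that the suprema and infima are attained.

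For this I would push everything through the weak$^{\ast}$-continuous functional $\hat{A}$ of (\ref{sdfsdfkljsdlfkj}). The image $\hat{A}(K)\subseteq\mathbb{C}$ is compact, and since $\bigcup_{j}K_{j}$ is weak$^{\ast}$-dense in $K=\overline{\bigcup_{j}K_{j}}$, continuity gives that $\bigcup_{j}\hat{A}(K_{j})=\hat{A}(\bigcup_{j}K_{j})$ is dense in $\hat{A}(K)$. The displayed quantity is then exactly the one-sided Hausdorff distance in $\mathbb{C}$ from $\hat{A}(K)$ to $\hat{A}(K_{j})$. Given $\varepsilon>0$, compactness of $\hat{A}(K)$ yields a finite $\varepsilon/2$-net $z_{1},\ldots,z_{m}$; density produces $w_{i}\in\hat{A}(K_{j_{i}})$ with $\vert z_{i}-w_{i}\vert<\varepsilon/2$, and directedness of $J$ gives $j_{0}\geq j_{1},\ldots,j_{m}$. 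By monotonicity, $w_{1},\ldots,w_{m}\in\hat{A}(K_{j})$ for every $j\geq j_{0}$, so every point of $\hat{A}(K)$ lies within $\varepsilon$ of $\hat{A}(K_{j})$. Thus $d_{H}^{(A)}(K_{j},K)\leq\varepsilon$ for $j\geq j_{0}$, and as $A$ was arbitrary this is precisely weak$^{\ast}$-Hausdorff convergence $K_{j}\to K$ in the sense of (\ref{induced topology}).

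For the Kuratowski-Painlev\'{e} identification I would prove the two inclusions that squeeze the limit. The inclusion $\mathrm{Ls}(K_{j})_{j\in J}\subseteq K$ is immediate: any net $(\sigma_{j})$ with $\sigma_{j}\in K_{j}\subseteq K$ lives in the weak$^{\ast}$-closed, norm-bounded, hence weak$^{\ast}$-compact set $K$, so all its weak$^{\ast}$ accumulation points lie in $K$. For the reverse inclusion $K\subseteq\mathrm{Li}(K_{j})_{j\in J}$, I fix $\sigma\in K$ and first record, from density together with monotonicity, that every weak$^{\ast}$ neighbourhood $U$ of $\sigma$ is met by $K_{j}$ from some index on: density gives a point $p_{U}\in K_{j_{U}}\cap U$, and monotonicity keeps $p_{U}\in K_{j}$ for all $j\geq j_{U}$. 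From these persistent witnesses I would assemble a net $(\tilde{\sigma}_{j})_{j\in J}$ with $\tilde{\sigma}_{j}\in K_{j}$ and $\tilde{\sigma}_{j}\to\sigma$ weak$^{\ast}$, exploiting that finite intersections of neighbourhoods are again neighbourhoods, so that the approximation can be made simultaneously in finitely many weak$^{\ast}$ directions. Combined with $\mathrm{Li}\subseteq\mathrm{Ls}$, this forces $K=\mathrm{Li}(K_{j})_{j\in J}=\mathrm{Ls}(K_{j})_{j\in J}$.

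The main obstacle is exactly this last net construction. Because $\mathcal{X}^{\ast}$ is not locally compact in infinite dimension, Kuratowski-Painlev\'{e} convergence is \emph{not} topological (as noted before this proposition), so one cannot merely invoke that a point of the weak$^{\ast}$ closure is the limit of a net of points indexed by the \emph{given} directed set $J$. The delicate point is to coordinate the choices $\tilde{\sigma}_{j}\in K_{j}$ so that the single $J$-indexed net becomes eventually fine in every weak$^{\ast}$ direction at once. Here the monotonicity of $(K_{j})_{j\in J}$ is essential: a witness found for a neighbourhood persists in all later $K_{j}$, which is precisely what lets the per-direction approximations — guaranteed by the weak$^{\ast}$-Hausdorff convergence established above — be patched into one convergent net. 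I would therefore organise the argument so that the convergence proof is carried out first and then reused to control the selection in the lower-limit inclusion.
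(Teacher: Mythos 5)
Your proof of the weak$^{\ast }$-Hausdorff convergence itself is correct, and it takes a genuinely different route from the paper's. Both arguments start by observing that $K_{j}\subseteq K$ makes the first half of the pseudometric $d_{H}^{(A)}$ vanish, but the paper then proceeds by contradiction: it selects, for each $j$, a maximizer $\sigma _{j}\in K$ of the remaining one-sided quantity, extracts a weak$^{\ast }$-convergent subnet by compactness of $K$, and combines the density of $\bigcup_{j}K_{j}$ with the monotone decrease of the one-sided distances to force the limit to be zero. You instead push everything to $\mathbb{C}$ through $\hat{A}$ and run a total-boundedness argument on the compact image $\hat{A}(K)$, using directedness of $J$ and monotonicity to make finitely many witnesses persist. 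This is more elementary (no subnet extraction, no contradiction) and is essentially the device the paper itself uses in the proof of Lemma \ref{convexity lemma copy(2)}; this part stands.

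The genuine gap is exactly where you located it: the inclusion $K\subseteq \mathrm{Li}(K_{j})_{j\in J}$. Definition (\ref{Li}) requires the approximating net to be indexed by the \emph{given} directed set $J$, whereas membership of $\sigma $ in the weak$^{\ast }$ closure only produces a net indexed by the neighbourhood filter of $\sigma $. Your proposed repair --- that finite intersections of neighbourhoods are again neighbourhoods --- only coordinates finitely many directions at a time, while convergence of a single $J$-indexed selection requires it to be eventually inside \emph{every} weak$^{\ast }$ neighbourhood; when the neighbourhood filter of $\sigma $ in $K$ admits no base whose cofinality is dominated by $J$, no such selection exists. Concretely, take $\mathcal{X}=\ell ^{\infty }(\mathbb{N})$, $J=\mathbb{N}$ and $K_{n}\doteq \{\delta _{1},\ldots ,\delta _{n}\}$ with $\delta _{k}(f)\doteq f(k)$: each $K_{n}$ is finite, hence weak$^{\ast }$-closed and norm-bounded, and $K=\overline{\{\delta _{k}:k\in \mathbb{N}\}}$ is a copy of $\beta \mathbb{N}$. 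A selection sequence $\sigma _{n}=\delta _{k_{n}}\in K_{n}$ converges in the weak$^{\ast }$ topology only if it is eventually constant, so $\mathrm{Li}(K_{n})_{n\in \mathbb{N}}=\{\delta _{k}:k\in \mathbb{N}\}\varsubsetneq K$, even though $K_{n}$ does converge to $K$ in the weak$^{\ast }$-Hausdorff hypertopology and $\mathrm{Ls}(K_{n})_{n\in \mathbb{N}}=K$. To be fair, the paper's own proof does not supply this step either (it asserts that the limit \textquotedblleft clearly equals\textquotedblright\ the Kuratowski-Painlev\'{e} limit), and the identification is rescued when $\mathcal{X}$ is separable, since then $K$ is weak$^{\ast }$-metrizable and the selection can be organised along a countable neighbourhood base --- which is the setting of Corollary \ref{Solution selfbaby copy(4)}. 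Your $\mathrm{Ls}(K_{j})_{j\in J}\subseteq K$ direction is fine as written; to complete the statement in full generality one would have to either add a metrizability hypothesis or weaken the claim on the lower limit.
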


\begin{proof}
Let $(K_{j})_{j\in J}\subseteq \mathbf{K}(\mathcal{X}^{\ast })$ be any
increasing net, i.e., $K_{j_{1}}\subseteq K_{j_{2}}$ whenever $j_{1}\prec
j_{2}$, satisfying (\ref{sdsdsdsd}). Because $K\in \mathbf{K}(\mathcal{X}%
^{\ast })$, it is norm-bounded. By the convergence of increasing bounded
nets of real numbers, it follows that, for any $A\in \mathcal{X}$, 
\begin{equation*}
\lim_{J}\max_{\tilde{\sigma}\in K_{j}}\min_{\sigma \in K}\left\vert \left( 
\tilde{\sigma}-\sigma \right) \left( A\right) \right\vert =\sup_{j\in
J}\max_{\tilde{\sigma}\in K_{j}}\min_{\sigma \in K}\left\vert \left( \tilde{%
\sigma}-\sigma \right) \left( A\right) \right\vert \leq \max_{\tilde{\sigma}%
\in K}\min_{\sigma \in K}\left\vert \left( \tilde{\sigma}-\sigma \right)
\left( A\right) \right\vert =0\ .
\end{equation*}%
Therefore, by Definition \ref{hypertopology0}, if 
\begin{equation}
\limsup_{J}\max_{\sigma \in K}\min_{\tilde{\sigma}\in K_{j}}\left\vert
\left( \tilde{\sigma}-\sigma \right) \left( A\right) \right\vert =0\ ,\qquad
A\in \mathcal{X}\ ,  \label{limit prove}
\end{equation}%
then the increasing net $(K_{j})_{j\in J}$ converges in $\mathbf{K}(\mathcal{%
X}^{\ast })$ to $K$, which clearly equals the Kuratowski-Painlev\'{e} limit
of the net. To prove (\ref{limit prove}), assume by contradiction the
existence of $\varepsilon \in \mathbb{R}^{+}$ such that%
\begin{equation}
\limsup_{J}\max_{\sigma \in K}\min_{\tilde{\sigma}\in K_{j}}\left\vert
\left( \tilde{\sigma}-\sigma \right) \left( A\right) \right\vert \geq
\varepsilon \in \mathbb{R}^{+}  \label{contradiction}
\end{equation}%
for some fixed $A\in \mathcal{X}$. For any $j\in J$, take $\sigma _{j}\in K$
such that%
\begin{equation}
\max_{\sigma \in K}\min_{\tilde{\sigma}\in K_{j}}\left\vert \left( \tilde{%
\sigma}-\sigma \right) \left( A\right) \right\vert =\min_{\tilde{\sigma}\in
K_{j}}\left\vert \left( \tilde{\sigma}-\sigma _{j}\right) \left( A\right)
\right\vert \ .  \label{to prove2}
\end{equation}%
By weak$^{\ast }$-compactness of $K$ (Lemma \ref{dddddddddddddddddd}), there
is a subnet $(\sigma _{j_{l}})_{l\in L}$ converging in the weak$^{\ast }$
topology to $\sigma _{\infty }\in K$. Via Equation (\ref{to prove2}) and the
triangle inequality, we then get that, for any $l\in L$, 
\begin{equation*}
\max_{\sigma \in K}\min_{\tilde{\sigma}\in K_{j_{l}}}\left\vert \left( 
\tilde{\sigma}-\sigma \right) \left( A\right) \right\vert \leq \left\vert
\left( \sigma _{j_{l}}-\sigma _{\infty }\right) \left( A\right) \right\vert
+\min_{\tilde{\sigma}\in K_{j_{l}}}\left\vert \left( \tilde{\sigma}-\sigma
_{\infty }\right) \left( A\right) \right\vert \ .
\end{equation*}%
By (\ref{sdsdsdsd}) and the fact that $(K_{j})_{j\in J}\subseteq \mathbf{K}(%
\mathcal{X}^{\ast })$ is an increasing net, it follows that%
\begin{equation}
\lim_{L}\max_{\sigma \in K}\min_{\tilde{\sigma}\in K_{j_{l}}}\left\vert
\left( \tilde{\sigma}-\sigma \right) \left( A\right) \right\vert =0\ .
\label{contradiction0}
\end{equation}%
By the convergence of decreasing bounded nets of real numbers, note that 
\begin{equation*}
\limsup_{J}\max_{\sigma \in K}\min_{\tilde{\sigma}\in K_{j}}\left\vert
\left( \tilde{\sigma}-\sigma \right) \left( A\right) \right\vert
=\liminf_{J}\max_{\sigma \in K}\min_{\tilde{\sigma}\in K_{j}}\left\vert
\left( \tilde{\sigma}-\sigma \right) \left( A\right) \right\vert
\end{equation*}%
and hence, (\ref{contradiction0}) contradicts (\ref{contradiction}). As a
consequence, Equation (\ref{limit prove}) holds true.
\end{proof}

Nonmonotone weak$^{\ast }$-Hausdorff convergent nets in $\mathbf{K}(\mathcal{%
X}^{\ast })$ are not trivial to study, in general. In the next proposition
we give preliminary, but completely general, results on limits of convergent
nets.

\begin{proposition}[Weak$^{\ast }$-Hausdorff hypertopology vs. upper and
lower limits]
\label{Solution selfbaby copy(5)+000}\mbox{ }\newline
Let $\mathcal{X}$ be a Banach space and $K_{\infty }\in \mathbf{K}(\mathcal{X%
}^{\ast })$ any weak$^{\ast }$-Hausdorff limit of a convergent net $%
(K_{j})_{j\in J}\subseteq \mathbf{K}(\mathcal{X}^{\ast })$. Then, 
\begin{equation*}
\mathrm{Li}\left( K_{j}\right) _{j\in J}\subseteq \overline{\mathrm{co}}%
\left( K_{\infty }\right) \qquad \text{and}\qquad K_{\infty }\subseteq 
\overline{\mathrm{co}}\left( \mathrm{Ls}(K_{j})_{j\in J}\right) \ ,
\end{equation*}%
where we recall that $\overline{\mathrm{co}}$ is the weak$^{\ast }$-closed
convex hull operator (Definition \ref{convex hull operator}).
\end{proposition}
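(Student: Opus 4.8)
The plan is to prove the two inclusions separately, working directly from Definition \ref{hypertopology0} of the pseudometrics $d_H^{(A)}$ and the definitions (\ref{Li})--(\ref{Ls}) of $\mathrm{Li}$ and $\mathrm{Ls}$. The key observation, used throughout, is that for each fixed $A \in \mathcal{X}$ the scalar map $\hat{A}:\mathcal{X}^\ast \to \mathbb{C}$ of (\ref{sdfsdfkljsdlfkj}) is weak$^\ast$-continuous, so that weak$^\ast$ limits and accumulation points of nets $(\sigma_j)$ are mapped to ordinary limits and accumulation points of the scalars $(\sigma_j(A))$ in $\mathbb{C}$. This is precisely what links the set-theoretic Painlev\'e limits to the functional $d_H^{(A)}$, which only ``sees'' the sets through the images $\hat{A}(K_j) \subseteq \mathbb{C}$.

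First I would prove $\mathrm{Li}(K_j)_{j\in J} \subseteq \overline{\mathrm{co}}(K_\infty)$. Take $\sigma \in \mathrm{Li}(K_j)_{j\in J}$, so that $\sigma$ is a weak$^\ast$ limit of a net $(\sigma_j)_{j\in J}$ with $\sigma_j \in K_j$ for all $j$. Suppose, for contradiction, that $\sigma \notin \overline{\mathrm{co}}(K_\infty)$. Since $\{\sigma\}$ is convex and weak$^\ast$-closed while $\overline{\mathrm{co}}(K_\infty)$ is a convex weak$^\ast$-compact set (by Lemma \ref{dddddddddddddddddd} and the norm-boundedness of $K_\infty$), the Hahn-Banach separation theorem \cite[Theorem 3.4 (b)]{Rudin} — exactly as applied in the proof of Proposition \ref{convexity lemma copy(1)} — yields some $A_0 \in \mathcal{X}$ and a strictly positive gap $\delta>0$ between $\mathrm{Re}\{\sigma(A_0)\}$ and $\sup_{\tilde\sigma \in K_\infty}\mathrm{Re}\{\tilde\sigma(A_0)\}$. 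On the other hand, $\sigma_j(A_0) \to \sigma(A_0)$ by weak$^\ast$ convergence, while $d_H^{(A_0)}(K_j,K_\infty)\to 0$ forces $\min_{\tilde\sigma \in K_\infty}|(\sigma_j - \tilde\sigma)(A_0)| \to 0$; combining these two facts contradicts the separation gap. Hence $\sigma \in \overline{\mathrm{co}}(K_\infty)$.

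Next I would prove $K_\infty \subseteq \overline{\mathrm{co}}(\mathrm{Ls}(K_j)_{j\in J})$. The natural route is again by contradiction via Hahn-Banach: suppose some $\sigma_\infty \in K_\infty$ does not lie in the convex weak$^\ast$-compact set $\overline{\mathrm{co}}(\mathrm{Ls}(K_j)_{j\in J})$, producing a separating $A_0$ and a gap. For each $j$, weak$^\ast$-compactness of $K_j$ lets me choose $\tilde\sigma_j \in K_j$ attaining $\min_{\tilde\sigma \in K_j}|(\sigma_\infty - \tilde\sigma)(A_0)|$, and the convergence $d_H^{(A_0)}(K_j,K_\infty)\to 0$ forces this minimum to tend to $0$, i.e.\ $\tilde\sigma_j(A_0)\to \sigma_\infty(A_0)$. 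The main obstacle is that I must extract from $(\tilde\sigma_j)$ a genuine weak$^\ast$ accumulation point lying in $\mathrm{Ls}(K_j)_{j\in J}$: the union $\bigcup_j K_j$ is norm-bounded by Lemma \ref{dddddddddddddddddd copy(1)}, hence contained in a weak$^\ast$-compact ball (Banach-Alaoglu, \cite[Theorem 3.15]{Rudin}), so a weak$^\ast$-convergent subnet $(\tilde\sigma_{j_l})_{l\in L}$ exists with limit $\tilde\sigma_\infty$. By definition $\tilde\sigma_\infty \in \mathrm{Ls}(K_j)_{j\in J}$, and weak$^\ast$-continuity of $\hat{A}_0$ gives $\tilde\sigma_\infty(A_0) = \lim_l \tilde\sigma_{j_l}(A_0) = \sigma_\infty(A_0)$, which places $\sigma_\infty(A_0)$ inside the range swept by $\mathrm{Ls}(K_j)_{j\in J}$ and contradicts the separation gap. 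The delicate point to handle carefully is the subnet bookkeeping — a weak$^\ast$ accumulation point in the sense of (\ref{Ls}) must be an accumulation point of the \emph{full} family $(\tilde\sigma_j)_{j\in J}$, not merely of the subnet, but this is exactly what passing to a convergent subnet of a net on a weak$^\ast$-compact set delivers. This completes both inclusions.
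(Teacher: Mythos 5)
Your proposal is correct and follows essentially the same route as the paper: both inclusions are obtained by combining the definition of $d_{H}^{(A)}$ with the Hahn-Banach separation theorem, and for the second inclusion both arguments extract, for each separating $A_{0}$, a weak$^{\ast }$ accumulation point in $\mathrm{Ls}(K_{j})_{j\in J}$ whose value at $A_{0}$ matches that of $\sigma _{\infty }$ (using the norm-boundedness of $\bigcup_{j}K_{j}$ from Lemma \ref{dddddddddddddddddd copy(1)} and Banach-Alaoglu). The only difference is organizational: you run both halves as proofs by contradiction, whereas the paper first establishes the vanishing of the relevant minima (resp.\ the existence of the matching functionals $\sigma _{A}$) and then invokes separation.
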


\begin{proof}
Let $\mathcal{X}$ be a Banach space and $(K_{j})_{j\in J}\subseteq \mathbf{K}%
(\mathcal{X}^{\ast })$ any net converging to $K_{\infty }$. Assume without
loss of generality that $\mathrm{Li}\left( K_{j}\right) _{j\in J}$ is
nonempty. Let $\sigma _{\infty }\in \mathrm{Li}\left( K_{j}\right) _{j\in J}$%
, which is, by definition, the weak$^{\ast }$ limit of a net $(\sigma
_{j})_{j\in J}$ with $\sigma _{j}\in K_{j}$ for all $j\in J$. Then, for any $%
A\in \mathcal{X}$, 
\begin{equation*}
\min_{\sigma \in K_{\infty }}\left\vert \left( \sigma -\sigma _{\infty
}\right) \left( A\right) \right\vert \leq \left\vert \left( \sigma
_{j}-\sigma _{\infty }\right) \left( A\right) \right\vert +\min_{\sigma \in
K_{\infty }}\left\{ \left\vert \left( \sigma -\sigma _{j}\right) \left(
A\right) \right\vert \right\} \ .
\end{equation*}%
Taking this last inequality in the limit with respect to $J$ and using
Definition \ref{hypertopology0}, we deduce that 
\begin{equation}
\min_{\sigma \in K_{\infty }}\left\vert \left( \sigma -\sigma _{\infty
}\right) \left( A\right) \right\vert =0\ ,\qquad A\in \mathcal{X}\ .
\label{contradition0}
\end{equation}%
If $\sigma _{\infty }\notin \overline{\mathrm{co}}\left( K_{\infty }\right) $
then, as it is done to prove (\ref{hahn banach}), we infer from the
Hahn-Banach separation theorem \cite[Theorem 3.4 (b)]{Rudin} the existence
of $A_{0}\in \mathcal{X}$ and $x_{1},x_{2}\in \mathbb{R}$ such that 
\begin{equation*}
\max_{\sigma \in \overline{\mathrm{co}}\left( K_{\infty }\right) }\mathrm{Re}%
\left\{ \sigma \left( A_{0}\right) \right\} <x_{1}<x_{2}<\mathrm{Re}\left\{
\sigma _{\infty }\left( A_{0}\right) \right\} \ ,
\end{equation*}%
which contradicts (\ref{contradition0}) for $A=A_{0}$. As a consequence, $%
\sigma _{\infty }\in \overline{\mathrm{co}}\left( K_{\infty }\right) $ and,
hence, $\mathrm{Li}\left( K_{j}\right) _{j\in J}\subseteq \overline{\mathrm{%
co}}\left( K_{\infty }\right) $.

Conversely, let $\sigma _{\infty }\in K_{\infty }$. Since $K_{\infty }$ is
by definition the limit of $(K_{j})_{j\in J}$ (see Definition \ref%
{hypertopology0}), we deduce that 
\begin{equation*}
\lim_{J}\min_{\sigma \in K_{j}}\left\vert \left( \sigma -\sigma _{\infty
}\right) \left( A\right) \right\vert =0\ ,\qquad A\in \mathcal{X}\ .
\end{equation*}%
By combining this equality with Lemma \ref{dddddddddddddddddd copy(1)} and
the Banach-Alaoglu theorem \cite[Theorem 3.15]{Rudin}, for any $A\in 
\mathcal{X}$, there is $\sigma _{A}\in \mathrm{Ls}\left( K_{j}\right) _{j\in
J}$ such that 
\begin{equation*}
\sigma _{A}\left( A\right) =\sigma _{\infty }\left( A\right) \ .
\end{equation*}%
Consequently, one infers from the Hahn-Banach separation theorem \cite[%
Theorem 3.4 (b)]{Rudin} that $\sigma _{\infty }$ belongs to the weak$^{\ast
} $-closed convex hull of the upper limit $\mathrm{Ls}\left( K_{j}\right)
_{j\in J}$.
\end{proof}

Applied to nonempty convex weak$^{\ast }$-compact subsets of the dual space $%
\mathcal{X}^{\ast }$, Proposition \ref{Solution selfbaby copy(5)+000} reads
as follows:

\begin{corollary}[Weak$^{\ast }$-Hausdorff hypertopology and convexity vs.
upper and lower limits]
\label{Solution selfbaby copy(2)}\mbox{ }\newline
Let $\mathcal{X}$ be a Banach space and $K_{\infty }\in \mathbf{CK}(\mathcal{%
X}^{\ast })$ any weak$^{\ast }$-Hausdorff limit of a convergent net $%
(K_{j})_{j\in J}\subseteq \mathbf{CK}(\mathcal{X}^{\ast })$. Then, 
\begin{equation*}
\overline{\mathrm{Li}(K_{j})_{j\in J}}=\overline{\mathrm{co}}\left( \mathrm{%
Li}(K_{j})_{j\in J}\right) \subseteq K_{\infty }\subseteq \overline{\mathrm{%
co}}\left( \mathrm{Ls}(K_{j})_{j\in J}\right) \ .
\end{equation*}
\end{corollary}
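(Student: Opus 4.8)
The plan is to obtain the Corollary as a direct consequence of Proposition \ref{Solution selfbaby copy(5)+000}, using only two additional elementary facts: that $K_{\infty }$ is already $\overline{\mathrm{co}}$-closed (being convex and weak$^{\ast }$-compact, hence weak$^{\ast }$-closed), and that the lower limit $\mathrm{Li}(K_{j})_{j\in J}$ is itself convex whenever each $K_{j}$ is convex. Since $\mathbf{CK}(\mathcal{X}^{\ast })\subseteq \mathbf{K}(\mathcal{X}^{\ast })$, the hypotheses of Proposition \ref{Solution selfbaby copy(5)+000} hold, and that proposition furnishes both $\mathrm{Li}(K_{j})_{j\in J}\subseteq \overline{\mathrm{co}}(K_{\infty })$ and the rightmost inclusion $K_{\infty }\subseteq \overline{\mathrm{co}}(\mathrm{Ls}(K_{j})_{j\in J})$. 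As $K_{\infty }\in \mathbf{CK}(\mathcal{X}^{\ast })$ is convex and weak$^{\ast }$-closed, Definition \ref{convex hull operator} gives $\overline{\mathrm{co}}(K_{\infty })=K_{\infty }$, so the first inclusion simplifies to $\mathrm{Li}(K_{j})_{j\in J}\subseteq K_{\infty }$ while the rightmost inclusion of the Corollary is thereby already established.

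Next I would verify that $\mathrm{Li}(K_{j})_{j\in J}$ is convex. Given $\sigma ,\tau \in \mathrm{Li}(K_{j})_{j\in J}$, pick by (\ref{Li}) nets $(\sigma _{j})_{j\in J}$ and $(\tau _{j})_{j\in J}$ with $\sigma _{j},\tau _{j}\in K_{j}$ converging weak$^{\ast }$ to $\sigma $ and $\tau $, respectively. For $\lambda \in [0,1]$, convexity of each $K_{j}$ yields $(1-\lambda )\sigma _{j}+\lambda \tau _{j}\in K_{j}$, and weak$^{\ast }$ continuity of the vector-space operations gives that this net converges weak$^{\ast }$ to $(1-\lambda )\sigma +\lambda \tau $; hence $(1-\lambda )\sigma +\lambda \tau \in \mathrm{Li}(K_{j})_{j\in J}$. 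Thus $\mathrm{Li}(K_{j})_{j\in J}$ is convex, so its weak$^{\ast }$ closure is convex as well (the weak$^{\ast }$ closure of a convex set in the locally convex space $\mathcal{X}^{\ast }$ is convex), and therefore coincides with $\overline{\mathrm{co}}(\mathrm{Li}(K_{j})_{j\in J})$, since for a convex set $F$ one has $\mathrm{co}F=F$ and thus $\overline{\mathrm{co}}(F)=\overline{F}$. This settles the equality $\overline{\mathrm{Li}(K_{j})_{j\in J}}=\overline{\mathrm{co}}(\mathrm{Li}(K_{j})_{j\in J})$ displayed on the left.

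Finally I would assemble the chain. Applying the isotone property of the closure operator $\overline{\mathrm{co}}$ (Definition \ref{convex hull operator}) to the inclusion $\mathrm{Li}(K_{j})_{j\in J}\subseteq K_{\infty }$, together with idempotence and $\overline{\mathrm{co}}(K_{\infty })=K_{\infty }$, yields $\overline{\mathrm{co}}(\mathrm{Li}(K_{j})_{j\in J})\subseteq K_{\infty }$. Combined with the equality of the previous paragraph and the rightmost inclusion already derived, this produces the full display $\overline{\mathrm{Li}(K_{j})_{j\in J}}=\overline{\mathrm{co}}(\mathrm{Li}(K_{j})_{j\in J})\subseteq K_{\infty }\subseteq \overline{\mathrm{co}}(\mathrm{Ls}(K_{j})_{j\in J})$. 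No step here is genuinely hard: the only point requiring a short argument is the convexity of the lower limit, whereas all the analytic content, namely the two limit-versus-convex-hull inclusions, is already carried by Proposition \ref{Solution selfbaby copy(5)+000}.
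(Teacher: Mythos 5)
Your argument is correct and takes essentially the same route as the paper, whose proof is just the one-line remark that the statement follows from Proposition \ref{Solution selfbaby copy(5)+000} applied to $\mathbf{CK}(\mathcal{X}^{\ast })\subseteq \mathbf{K}(\mathcal{X}^{\ast })$, the idempotency of $\overline{\mathrm{co}}$, and the convexity of $\mathrm{Li}(K_{j})_{j\in J}$. You have merely spelled out the details the paper leaves implicit (the identity $\overline{\mathrm{co}}(K_{\infty })=K_{\infty }$, the net argument for convexity of the lower limit, and the isotonicity step), all of which are sound.
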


\begin{proof}
The assertion is an obvious application of Proposition \ref{Solution
selfbaby copy(5)+000} to the subset $\mathbf{CK}(\mathcal{X}^{\ast
})\subseteq \mathbf{K}(\mathcal{X}^{\ast })$ together with the idempotency
of the weak$^{\ast }$-closed convex hull operator $\overline{\mathrm{co}}$.
Note that $\mathrm{Li}(K_{j})_{j\in J}$\ is a convex set.
\end{proof}

\subsection{Metrizable Hyperspaces}

We are interested in investigating \emph{metrizable} sub-hyperspaces of $%
\mathbf{F}(\mathcal{X}^{\ast })$. Metrizable topological spaces are
Hausdorff, so, in the light of Corollaries \ref{Non-Hausdorff hyperspaces}
and \ref{convexity corrolary}, we restrict our analysis on the Hausdorff
hyperspace $\mathbf{CK}(\mathcal{X}^{\ast })$ of all nonempty convex weak$%
^{\ast }$-compact subsets of $\mathcal{X}^{\ast }$, already defined by
Equation (\ref{ZDZD}) or (\ref{ZDZDZDZD}).

For a separable Banach space $\mathcal{X}$, we show how the well-known
metrizability of the weak$^{\ast }$ topology on balls of $\mathcal{X}^{\ast
} $ leads to the metrizability of the weak$^{\ast }$-Hausdorff hypertopology
on uniformly norm-bounded subsets of $\mathbf{CK}(\mathcal{X}^{\ast })$: Let 
\begin{equation}
\mathbf{CK}_{D}\left( \mathcal{X}^{\ast }\right) \doteq \left\{ K\in \mathbf{%
CK}\left( \mathcal{X}^{\ast }\right) :K\subseteq \mathbf{B}\left( 0,D\right)
\right\}  \label{ZD}
\end{equation}%
where 
\begin{equation}
\mathbf{B}\left( 0,D\right) \doteq \left\{ \sigma \in \mathcal{X}^{\ast
}:\left\Vert \sigma \right\Vert _{\mathcal{X}^{\ast }}\leq D\right\}
\subseteq \mathcal{X}^{\ast }  \label{norm ball}
\end{equation}%
is the norm-closed ball of radius $D\in \mathbb{R}^{+}$ in $\mathcal{X}%
^{\ast }$. If $\mathcal{X}$ is separable then the weak$^{\ast }$ topology is
metrizable on any ball $\mathbf{B}(0,D)$, $D\in \mathbb{R}^{+}$, by the
Banach-Alaoglu theorem \cite[Theorem 3.15]{Rudin} and \cite[Theorem 3.16]%
{Rudin}. Take any countable dense set $(A_{n})_{n\in \mathbb{N}}$ of the
unit ball of $\mathcal{X}$ and define the metric%
\begin{equation}
d\left( \sigma _{1},\sigma _{2}\right) \doteq \sum\limits_{n\in \mathbb{N}%
}2^{-n}\left\vert \left( \sigma _{1}-\sigma _{2}\right) \left( A_{n}\right)
\right\vert \ ,\qquad \sigma _{1},\sigma _{2}\in \mathcal{X}^{\ast }\ .
\label{metrics0}
\end{equation}%
This metric is well-defined and induces the weak$^{\ast }$ topology on $%
\mathbf{B}(0,D)$. Denote by $d_{H}$ the Hausdorff distance between two
elements $K_{1},K_{2}\in \mathbf{CK}_{D}(\mathcal{X}^{\ast })$, associated
with the metric $d$, as defined by (\ref{Hausdorf}), that is\footnote{%
Minima in (\ref{metric1}) directly come from the compactness of sets and the
continuity of $d$. The following maxima in (\ref{metric1}) result from the
compactness of sets and the fact that the minimum over a continuous map
defines an upper semicontinuous function.}, 
\begin{equation}
d_{H}\left( K_{1},K_{2}\right) \doteq \max \left\{ \max_{\sigma _{1}\in
K_{1}}\min_{\sigma _{2}\in K_{2}}d\left( \sigma _{1},\sigma _{2}\right)
,\max_{\sigma _{2}\in K_{2}}\min_{\sigma _{1}\in K_{1}}d\left( \sigma
_{1},\sigma _{2}\right) \right\} \ .  \label{metric1}
\end{equation}%
This Hausdorff distance induces the weak$^{\ast }$-Hausdorff hypertopology
on $\mathbf{CK}_{D}(\mathcal{X}^{\ast })$:

\begin{theorem}[Complete metrizability of the weak$^{\ast }$-Hausdorff
hypertopology]
\label{Solution selfbaby copy(4)+1}\mbox{ }\newline
Let $\mathcal{X}$ be a separable Banach space and $D\in \mathbb{R}^{+}$. The
family 
\begin{equation*}
\left\{ \left\{ K_{2}\in \mathbf{CK}_{D}\left( \mathcal{X}^{\ast }\right)
:d_{H}\left( K_{1},K_{2}\right) <r\right\} :r\in \mathbb{R}^{\mathbb{+}},\
K_{1}\in \mathbf{CK}_{D}\left( \mathcal{X}^{\ast }\right) \right\}
\end{equation*}%
is a basis of the weak$^{\ast }$-Hausdorff hypertopology of $\mathbf{CK}_{D}(%
\mathcal{X}^{\ast })$. Additionally, $\mathbf{CK}_{D}(\mathcal{X}^{\ast })$
is weak$^{\ast }$-Hausdorff-compact and completely metrizable.
\end{theorem}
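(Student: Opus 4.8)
The plan is to show that the Hausdorff metric $d_{H}$ of (\ref{metric1}), associated with the weak$^{\ast }$-metrizing metric $d$ of (\ref{metrics0}), induces \emph{precisely} the weak$^{\ast }$-Hausdorff hypertopology on $\mathbf{CK}_{D}(\mathcal{X}^{\ast })$; the three assertions then follow at once. The starting observation is that, by the Banach-Alaoglu theorem \cite[Theorem 3.15]{Rudin} together with \cite[Theorem 3.16]{Rudin}, the set $Y\doteq (\mathbf{B}(0,D),d)$ of (\ref{norm ball}) is a \emph{compact metric space} whose topology is the weak$^{\ast }$ one. Consequently, its hyperspace $\mathcal{K}(Y)$ of nonempty weak$^{\ast }$-compact, equivalently $d$-closed, subsets, endowed with the Hausdorff metric $d_{H}$, is itself a compact metric space (a classical fact; see, e.g., \cite{Beer}). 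Since $\mathbf{CK}_{D}(\mathcal{X}^{\ast })$ is exactly the set of convex elements of $\mathcal{K}(Y)$, it suffices to prove: (1) the identity map from $(\mathbf{CK}_{D}(\mathcal{X}^{\ast }),d_{H})$ onto $\mathbf{CK}_{D}(\mathcal{X}^{\ast })$ endowed with the weak$^{\ast }$-Hausdorff hypertopology is continuous, and (2) $(\mathbf{CK}_{D}(\mathcal{X}^{\ast }),d_{H})$ is compact.

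For (1), fix $A\in \mathcal{X}$ and compare the defining pseudometrics $d_{H}^{(A)}$ of Definition \ref{hypertopology0} with $d_{H}$. For each $n\in \mathbb{N}$ one has the pointwise bound $|(\sigma -\tilde{\sigma})(A_{n})|\le 2^{n}d(\sigma ,\tilde{\sigma})$ directly from (\ref{metrics0}), and monotonicity of the Hausdorff construction yields $d_{H}^{(A_{n})}\le 2^{n}d_{H}$ on $\mathbf{CK}_{D}(\mathcal{X}^{\ast })$. I would also record two elementary facts valid for $K_{1},K_{2}\subseteq \mathbf{B}(0,D)$: the scaling $d_{H}^{(\lambda A)}=|\lambda |\,d_{H}^{(A)}$, and the Lipschitz estimate $|d_{H}^{(A)}(K_{1},K_{2})-d_{H}^{(B)}(K_{1},K_{2})|\le 2D\Vert A-B\Vert _{\mathcal{X}}$ (because $|(\sigma -\tilde{\sigma})(A-B)|\le 2D\Vert A-B\Vert _{\mathcal{X}}$ uniformly on the ball, and the max/min operations are $1$-Lipschitz). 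Combining these with the density of $(A_{n})_{n\in \mathbb{N}}$ in the unit ball of $\mathcal{X}$, for arbitrary $A$ with $\Vert A\Vert _{\mathcal{X}}\le 1$ and any $\varepsilon >0$ I pick $A_{n}$ with $2D\Vert A-A_{n}\Vert _{\mathcal{X}}<\varepsilon$, so that $d_{H}^{(A)}\le d_{H}^{(A_{n})}+\varepsilon \le 2^{n}d_{H}+\varepsilon$; hence any $d_{H}$-convergent net satisfies $d_{H}^{(A)}(K_{j},K_{\infty })\to 0$ for every $A\in \mathcal{X}$ (the general case following by scaling). This is exactly continuity in the sense of (1).

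For (2), it suffices to show $\mathbf{CK}_{D}(\mathcal{X}^{\ast })$ is $d_{H}$-closed in the compact space $\mathcal{K}(Y)$. Let $K_{m}\to K$ in $d_{H}$ with each $K_{m}$ convex, and take $\sigma ,\tau \in K$ and $\lambda \in [0,1]$. Using $d_{H}(K_{m},K)\to 0$ and compactness (minima are attained), I choose $\sigma _{m},\tau _{m}\in K_{m}$ with $d(\sigma ,\sigma _{m}),d(\tau ,\tau _{m})\to 0$; then $\mu _{m}\doteq \lambda \sigma _{m}+(1-\lambda )\tau _{m}\in K_{m}$, and since convex combinations are weak$^{\ast }$-continuous and $d$ metrizes the weak$^{\ast }$ topology, $\mu _{m}\to \lambda \sigma +(1-\lambda )\tau $ in $d$. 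The other half of $d_{H}(K_{m},K)\to 0$ provides $\eta _{m}\in K$ with $d(\mu _{m},\eta _{m})\to 0$, whence $\eta _{m}\to \lambda \sigma +(1-\lambda )\tau $; as $K$ is $d$-closed, the limit lies in $K$, proving $K$ convex. Thus $\mathbf{CK}_{D}(\mathcal{X}^{\ast })$ is a closed subset of a compact space, hence compact.

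Finally, I combine (1), (2) with the fact that $\mathbf{CK}_{D}(\mathcal{X}^{\ast })$, endowed with the weak$^{\ast }$-Hausdorff hypertopology, is Hausdorff, by Corollary \ref{convexity corrolary}: a continuous bijection from a compact space onto a Hausdorff space is a homeomorphism, so the $d_{H}$-topology and the weak$^{\ast }$-Hausdorff hypertopology coincide on $\mathbf{CK}_{D}(\mathcal{X}^{\ast })$. The $d_{H}$-balls form a basis for the former and therefore for the latter; the space is weak$^{\ast }$-Hausdorff-compact by (2); and since a compact metric space is complete, $d_{H}$ exhibits complete metrizability. The one genuinely delicate point is the converse continuity direction — that weak$^{\ast }$-Hausdorff convergence forces $d_{H}$-convergence — which I obtain \emph{for free} from the compact-to-Hausdorff homeomorphism principle rather than by a direct estimate; indeed a direct argument is obstructed by the fact that the Hausdorff distance of the sum-metric $d$ is \emph{not} controlled by the individual pseudometrics $d_{H}^{(A_{n})}$ coordinatewise (the nearest points differ per coordinate), and it is precisely convexity — entering through the Hausdorff separation property of Corollary \ref{convexity corrolary} — that rescues the equivalence.
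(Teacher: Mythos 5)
Your proof is correct and follows essentially the same route as the paper: show that the $d_{H}$-topology is finer via the density of $(A_{n})_{n\in \mathbb{N}}$, establish $d_{H}$-compactness of $\mathbf{CK}_{D}(\mathcal{X}^{\ast })$ inside the compact hyperspace of closed subsets of $(\mathbf{B}(0,D),d)$, and conclude coincidence of the two topologies from the compact-to-Hausdorff principle together with Corollary \ref{convexity corrolary}. The only (harmless) local differences are that you make the comparison $d_{H}^{(A)}\leq d_{H}^{(A_{n})}+\varepsilon$ explicit via a Lipschitz estimate in $A$, and that you prove the $d_{H}$-closedness of the convex sets by a direct limit argument where the paper instead invokes Corollary \ref{convexity corrolary copy(2)} combined with the comparison of topologies.
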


\begin{proof}
Recall that a topology is finer than a second one iff any convergent net of
the first topology converges also in the second topology to the same limit.
See, e.g., \cite[Chapter 2, Theorems 4, 9]{topology}. We first show that the
topology induced by the Hausdorff metric $d_{H}$ is finer than the weak$%
^{\ast }$-Hausdorff hypertopology of $\mathbf{CK}_{D}(\mathcal{X}^{\ast })$
at fixed radius $D\in \mathbb{R}^{+}$: Take any net $(K_{j})_{j\in J}$
converging in $\mathbf{CK}_{D}(\mathcal{X}^{\ast })$ to $K$ in the topology
induced by the Hausdorff metric (\ref{metric1}). Let $A\in \mathcal{X}$ and
assume without loss of generality that $\left\Vert A\right\Vert _{\mathcal{X}%
}\leq 1$. By density of $(A_{n})_{n\in \mathbb{N}}$ in the unit ball of $%
\mathcal{X}$, for any $\varepsilon \in \mathbb{R}^{\mathbb{+}}$, there is $%
n\in \mathbb{N}$ such that, for all $j\in J$, 
\begin{equation*}
d_{H}^{(A)}(K,K_{j})\leq \varepsilon +d_{H}^{(A_{n})}(K,K_{j})\leq
\varepsilon +2^{n}d_{H}(K,K_{j})\ .
\end{equation*}%
Thus, the net $(K_{j})_{j\in J}$ converges to $K$ also in the weak$^{\ast }$%
-Hausdorff hypertopology.

Endowed with the Hausdorff metric topology, the space of closed subsets of a
compact metric space is compact, by \cite[Theorem 3.2.4]{Beer}. In
particular, by weak$^{\ast }$ compactness of norm-closed balls, $\mathbf{CK}%
_{D}(\mathcal{X}^{\ast })$ endowed with the Hausdorff metric $d_{H}$ is a
compact hyperspace. By Corollary \ref{convexity corrolary copy(2)}, $\mathbf{%
CK}_{D}(\mathcal{X}^{\ast })$ is closed with respect to the weak$^{\ast }$%
-Hausdorff hypertopology, and thus closed with respect to the topology
induced by $d_{H}$, because this topology is coarser than the weak$^{\ast }$%
-Hausdorff hypertopology, as proven above. Hence, $\mathbf{CK}_{D}(\mathcal{X%
}^{\ast })$ is also compact with respect to the topology induced by $d_{H}$.
Since the weak$^{\ast }$-Hausdorff hypertopology is a Hausdorff topology
(Corollary \ref{convexity corrolary}), as it is well-known \cite[Section 3.8
(a)]{Rudin}, both topologies must coincide: Take any subset $\mathcal{K}%
\subseteq \mathbf{CK}_{D}(\mathcal{X}^{\ast })$ which is closed with respect
to the Hausdorff metric $d_{H}$. By compactness of $(\mathbf{CK}_{D}(%
\mathcal{X}^{\ast }),d_{H})$, $\mathcal{K}$ is compact with respect to the
Hausdorff metric $d_{H}$ (see, e.g., \cite[Chapter 5, p. 140]{topology})
and, hence, also with respect to the weak$^{\ast }$-Hausdorff hypertopology.
Because any compact set in a Hausdorff space is closed \cite[Chapter 5,
Theorem 7]{topology}, by Corollary \ref{convexity corrolary}, $\mathcal{K}$
is closed with respect to the weak$^{\ast }$-Hausdorff hypertopology.
\end{proof}

\noindent Note that Theorem \ref{Solution selfbaby copy(4)+1} is similar to
the assertion \cite[End of p. 91]{Beer}. It leads to a strong improvement of
Proposition \ref{Solution selfbaby copy(5)+1} and Corollary \ref{Solution
selfbaby copy(2)}:

\begin{corollary}[Weak$^{\ast }$-Hausdorff hypertopology and
Kuratowski-Painlev\'{e} convergence]
\label{Solution selfbaby copy(4)}\mbox{ }\newline
Let $\mathcal{X}$ be a separable Banach space. Then any weak$^{\ast }$%
-Hausdorff convergent net $(K_{j})_{j\in J}\subseteq \mathbf{CK}(\mathcal{X}%
^{\ast })$ converges to the Kuratowski-Painlev\'{e} limit%
\begin{equation*}
K_{\infty }=\mathrm{Li}(K_{j})_{j\in J}=\mathrm{Ls}(K_{j})_{j\in J}\in 
\mathbf{CK}\left( \mathcal{X}^{\ast }\right) \ .
\end{equation*}
\end{corollary}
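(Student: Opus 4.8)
The plan is to reduce the whole statement to a norm-bounded ball, on which the separability of $\mathcal{X}$ makes the weak$^{\ast }$-Hausdorff hypertopology metrizable, and then to run the classical identification of Hausdorff-metric convergence with Kuratowski--Painlev\'{e} convergence. First I would confine the net to a fixed ball: by Lemma \ref{dddddddddddddddddd copy(1)}, the union $\bigcup_{j\in J}K_{j}$ of any weak$^{\ast }$-Hausdorff convergent net in $\mathbf{K}(\mathcal{X}^{\ast })$ is norm-bounded, so together with the limit $K_{\infty }\in \mathbf{CK}(\mathcal{X}^{\ast })$ there is a radius $D\in \mathbb{R}^{+}$ with $(K_{j})_{j\in J}\subseteq \mathbf{CK}_{D}(\mathcal{X}^{\ast })$ and $K_{\infty }\in \mathbf{CK}_{D}(\mathcal{X}^{\ast })$, cf. (\ref{ZD}). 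By Theorem \ref{Solution selfbaby copy(4)+1}, on $\mathbf{CK}_{D}(\mathcal{X}^{\ast })$ the weak$^{\ast }$-Hausdorff hypertopology coincides with the topology of the Hausdorff metric $d_{H}$ of (\ref{metric1}), built from the metric $d$ of (\ref{metrics0}) which metrizes the weak$^{\ast }$-topology on $\mathbf{B}(0,D)$. Hence $d_{H}(K_{j},K_{\infty })\to 0$.

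Next I would establish the two set-inclusions directly from this $d_{H}$-convergence. For $K_{\infty }\subseteq \mathrm{Li}(K_{j})_{j\in J}$, fix $\sigma \in K_{\infty }$ and choose, for each $j$, a minimizer $\sigma _{j}\in K_{j}$ of $d(\sigma ,\cdot )$ (attained by weak$^{\ast }$-compactness, Lemma \ref{dddddddddddddddddd}); then $d(\sigma ,\sigma _{j})\leq \max_{\sigma ^{\prime }\in K_{\infty }}\min_{\tilde{\sigma}\in K_{j}}d(\sigma ^{\prime },\tilde{\sigma})\leq d_{H}(K_{j},K_{\infty })\to 0$, so $\sigma _{j}\to \sigma $ in the weak$^{\ast }$ topology, i.e. $\sigma \in \mathrm{Li}(K_{j})_{j\in J}$ by (\ref{Li}). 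For $\mathrm{Ls}(K_{j})_{j\in J}\subseteq K_{\infty }$, take $\sigma \in \mathrm{Ls}(K_{j})_{j\in J}$, i.e. a net $\sigma _{j}\in K_{j}$ admitting a subnet $(\sigma _{j_{l}})_{l\in L}$ converging weak$^{\ast }$ to $\sigma $ by (\ref{Ls}); since $d(\sigma _{j},K_{\infty })\leq d_{H}(K_{j},K_{\infty })\to 0$ and $K_{\infty }$ is $d$-closed, passing to the subnet gives $d(\sigma ,K_{\infty })=0$, whence $\sigma \in K_{\infty }$. Combined with the always-valid inclusion $\mathrm{Li}\subseteq \mathrm{Ls}$, this produces the chain $K_{\infty }\subseteq \mathrm{Li}(K_{j})_{j\in J}\subseteq \mathrm{Ls}(K_{j})_{j\in J}\subseteq K_{\infty }$, so all three sets coincide.

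Alternatively, one can assemble the conclusion from results already in hand: Corollary \ref{Solution selfbaby copy(2)} yields $\overline{\mathrm{co}}(\mathrm{Li}(K_{j})_{j\in J})\subseteq K_{\infty }\subseteq \overline{\mathrm{co}}(\mathrm{Ls}(K_{j})_{j\in J})$, and the two inclusions above sharpen this into equalities while stripping off the convex-hull and closure operations. Membership $K_{\infty }\in \mathbf{CK}(\mathcal{X}^{\ast })$ is part of the hypothesis, and is in any case guaranteed by the weak$^{\ast }$-Hausdorff-closedness of $\mathbf{CK}(\mathcal{X}^{\ast })$ recorded in Corollary \ref{convexity corrolary copy(2)}.

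The main obstacle is conceptual rather than computational: in the infinite-dimensional dual $\mathcal{X}^{\ast }$ the Kuratowski--Painlev\'{e} convergence is genuinely non-topological, because the weak$^{\ast }$ topology there is not locally compact, so the net-defined objects $\mathrm{Li}$ and $\mathrm{Ls}$ of (\ref{Li})--(\ref{Ls}) cannot be controlled by a single global topological argument. The crux is therefore the confinement to a ball via Lemma \ref{dddddddddddddddddd copy(1)}, which restores metrizability (Theorem \ref{Solution selfbaby copy(4)+1}) and hence compactness---precisely the feature that makes the minimizers and accumulation points above exist and lets the standard metric-space identification of Hausdorff-metric and Kuratowski--Painlev\'{e} limits go through. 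The remaining steps are the routine $\varepsilon$-estimates indicated above.
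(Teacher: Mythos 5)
Your proposal is correct and follows essentially the same route as the paper: confine the net to a ball $\mathbf{CK}_{D}(\mathcal{X}^{\ast })$ via Lemma \ref{dddddddddddddddddd copy(1)}, invoke Theorem \ref{Solution selfbaby copy(4)+1} to replace the hypertopology by the Hausdorff metric $d_{H}$, and then identify Hausdorff-metric convergence with Kuratowski--Painlev\'{e} convergence. The only difference is that the paper concludes by citing Kuratowski \cite[\S\ 29, Section IX, Theorem 2]{topology-painleve} for this last identification, whereas you write out the two elementary inclusions yourself; both are valid.
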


\begin{proof}
Recall that $\mathbf{CK}(\mathcal{X}^{\ast })\subseteq \mathbf{K}(\mathcal{X}%
^{\ast })$, see (\ref{ZDZDZDZD}). By Lemma \ref{dddddddddddddddddd copy(1)},
the union of any weak$^{\ast }$-Hausdorff convergent net in $\mathbf{CK}(%
\mathcal{X}^{\ast })$ is norm-bounded and, as a consequence, we can
restrict, without loss of generality, the study of weak$^{\ast }$-Hausdorff
hyperconvergent nets to the sub-hyperspace $\mathbf{CK}_{D}(\mathcal{X}%
^{\ast })$ for some $D\in \mathbb{R}^{+}$. By Theorem \ref{Solution selfbaby
copy(4)+1} the weak$^{\ast }$-Hausdorff hypertopology is induced by the
Hausdorff distance $d_{H}$ defined by (\ref{metric1}). The assertion thus
follows from \cite[\S\ 29, Section IX, Theorem 2]{topology-painleve}.
\end{proof}

\subsection{Generic Hypersets in Infinite Dimensions}

By Corollary \ref{convexity corrolary}, recall that $\mathbf{CK}(\mathcal{X}%
^{\ast })$ is a weak$^{\ast }$-Hausdorff-closed subset of $\mathbf{K}\left( 
\mathcal{X}^{\ast }\right) $. Let 
\begin{equation}
\mathcal{D}\doteq \left\{ K\in \mathbf{CK}\left( \mathcal{X}^{\ast }\right)
:K=\overline{\mathcal{E}\left( K\right) }\right\} \subseteq \mathbf{CK}%
\left( \mathcal{X}^{\ast }\right)  \label{Zbis}
\end{equation}%
be the subset of all $K\in \mathbf{CK}(\mathcal{X}^{\ast })$ with weak$%
^{\ast }$-dense set $\mathcal{E}\left( K\right) $ of extreme points (cf. the
Krein-Milman theorem \cite[Theorem 3.23]{Rudin}).

Recall that the so-called \emph{exposed} points are particular examples of
extreme ones: a point $\sigma _{0}\in K$ in a convex subset $K\subseteq 
\mathcal{X}^{\ast }$ is \emph{exposed} if there is $A\in \mathcal{X}$ such
that the real part of the weak$^{\ast }$-continuous functional $\hat{A}%
:\sigma \mapsto \sigma (A)$ from $\mathcal{X}^{\ast }$ to $\mathbb{C}$ (cf. (%
\ref{sdfsdfkljsdlfkj})) takes its \emph{unique} maximum on $K$ at $\sigma
_{0}\in K$. Considering exposed points instead of general extreme points is
technically convenient because of the weak$^{\ast }$-density of the set of
exposed points in the set of extreme points \cite[Theorem 6.2]%
{Phelps-Asplund} is an important ingredient to show that $\mathcal{D}$ is a $%
G_{\delta }$ subset of $\mathbf{CK}(\mathcal{X}^{\ast })$:

\begin{proposition}[$\mathcal{D}$ as a $G_{\protect\delta }$ set]
\label{Solution selfbaby copy(5)+00}\mbox{ }\newline
Let $\mathcal{X}$ be a separable Banach space. Then $\mathcal{D}$ is a $%
G_{\delta }$ subset of $\mathbf{CK}(\mathcal{X}^{\ast })$.
\end{proposition}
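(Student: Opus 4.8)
The plan is to express $\mathcal{D}$ explicitly as a countable intersection of open sets, using the sets $\mathcal{F}_{D,m}$ alluded to in the statement. First I would fix a countable dense set $(A_k)_{k\in\mathbb{N}}$ in the unit ball of $\mathcal{X}$, together with the metric $d$ of \eqref{metrics0} and the associated Hausdorff distance $d_H$ from \eqref{metric1}, so that by Theorem \ref{Solution selfbaby copy(4)+1} the weak$^{\ast}$-Hausdorff hypertopology on each $\mathbf{CK}_D(\mathcal{X}^{\ast})$ is metrizable and compact. Since $\mathcal{D}$ is a property about denseness of $\mathcal{E}(K)$ in $K$, I would define, for each $m\in\mathbb{N}$,
\begin{equation*}
\mathcal{F}_{D,m}\doteq\left\{K\in\mathbf{CK}_D(\mathcal{X}^{\ast}):\exists\,\omega\in K,\ B(\omega,1/m)\cap\mathcal{E}(K)=\emptyset\right\},
\end{equation*}
the collection of convex weak$^{\ast}$-compact sets having a point of $K$ at distance at least $1/m$ from every extreme point. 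The key observation is that $K\in\mathcal{D}$ (i.e. $\overline{\mathcal{E}(K)}=K$) if and only if no such ``gap'' exists at any scale, so that
\begin{equation*}
\mathcal{D}\cap\mathbf{CK}_D(\mathcal{X}^{\ast})=\mathbf{CK}_D(\mathcal{X}^{\ast})\setminus\bigcup_{m\in\mathbb{N}}\mathcal{F}_{D,m}.
\end{equation*}

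The heart of the argument is therefore to show that each $\mathcal{F}_{D,m}$ is \emph{closed} (hence $\mathcal{D}$ is a $G_\delta$ relative to each $\mathbf{CK}_D$). To do this I would take a sequence $(K_n)_{n\in\mathbb{N}}\subseteq\mathcal{F}_{D,m}$ converging in the Hausdorff metric to some $K_\infty$, pick witnesses $\omega_n\in K_n$ with $B(\omega_n,1/m)\cap\mathcal{E}(K_n)=\emptyset$, and extract a weak$^{\ast}$-convergent subsequence $\omega_n\to\omega_\infty$; by Corollary \ref{Solution selfbaby copy(4)} the Kuratowski-Painlev\'e limit gives $\omega_\infty\in K_\infty$. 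The delicate part is upper semicontinuity of the extreme boundary under this convergence: I must rule out that some extreme point of $K_\infty$ sits inside $B(\omega_\infty,1/m)$. Here I would exploit that \emph{exposed} points are weak$^{\ast}$-dense in $\mathcal{E}(K_\infty)$, by \cite[Theorem 6.2]{Phelps-Asplund}, so it suffices to argue about an exposed point $\sigma_0$ of $K_\infty$, say exposed by $A\in\mathcal{X}$. An exposed point is the unique weak$^{\ast}$-limit of near-maximizers of $\widehat{A}$ over $K_\infty$, and by the convergence $K_n\to K_\infty$ one can find approximate maximizers of $\widehat{A}$ in $K_n$ which must lie near genuine extreme points of $K_n$ (near an exposed point of $K_n$, via the Bauer-type argument); these extreme points of $K_n$ stay at distance $\geq 1/m$ from $\omega_n$, and passing to the limit forces $\sigma_0\notin B(\omega_\infty,1/m)$.

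The main obstacle I anticipate is precisely this continuity/semicontinuity step: extreme points do \emph{not} vary continuously in general, and the set $\mathcal{E}(K)$ can fail to be closed, so one cannot simply take limits of extreme points of $K_n$ and expect them to be extreme in $K_\infty$. The resolution is to transfer the whole question from extreme points to exposed points, where the exposing functional $\widehat{A}$ provides quantitative control: the maximum of $\widehat{A}$ over $K_n$ converges to its maximum over $K_\infty$ (affinity plus the Bauer maximum principle, as in Lemma \ref{convexity lemma copy(2)}), and the maximizer is stable because it is unique for an exposed point. Finally, I would globalize from the slices $\mathbf{CK}_D$ to all of $\mathbf{CK}(\mathcal{X}^{\ast})$: since $\mathbf{CK}(\mathcal{X}^{\ast})=\bigcup_{D\in\mathbb{N}}\mathbf{CK}_D(\mathcal{X}^{\ast})$ with each $\mathbf{CK}_D$ weak$^{\ast}$-Hausdorff-closed, and $\mathcal{D}\cap\mathbf{CK}_D$ is $G_\delta$ in $\mathbf{CK}_D$, a routine bookkeeping argument (writing $\mathcal{D}$ as a countable union of $G_\delta$ sets, each relatively $G_\delta$ in a closed piece, and rearranging) yields that $\mathcal{D}$ is itself $G_\delta$ in $\mathbf{CK}(\mathcal{X}^{\ast})$, completing the proof.
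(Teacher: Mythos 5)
Your proposal is correct and follows essentially the same route as the paper: the same sets $\mathcal{F}_{D,m}$, the same reduction from extreme to exposed points via the weak$^{\ast}$-density result of Phelps--Asplund, and the same stability argument for the unique maximizer of $\mathrm{Re}\,\hat{A}$ under weak$^{\ast}$-Hausdorff convergence. For the final bookkeeping, note that the cleanest formulation is not ``a countable union of $G_{\delta}$ sets'' (which need not be $G_{\delta}$) but rather that $\mathbf{CK}(\mathcal{X}^{\ast})\setminus\mathcal{D}=\bigcup_{D,m\in\mathbb{N}}\mathcal{F}_{D,m}$ is an $F_{\sigma}$ set, each $\mathcal{F}_{D,m}$ being closed in the closed slice $\mathbf{CK}_{D}(\mathcal{X}^{\ast})$ and hence in $\mathbf{CK}(\mathcal{X}^{\ast})$, exactly as in the paper.
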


\begin{proof}
Let $\mathcal{X}$ be a separable Banach space. For any $D\in \mathbb{R}^{+}$%
, we can use the metric $d$ defined by (\ref{metrics0}) and generating the
weak$^{\ast }$ topology on the norm-closed ball $\mathbf{B}(0,D)$ of radius $%
D$, defined by (\ref{norm ball}). For any $D\in \mathbb{R}^{+}$, we denote
by 
\begin{equation}
B\left( \omega ,r\right) \doteq \left\{ \sigma \in \mathbf{B}\left(
0,D\right) :d\left( \omega ,\sigma \right) <r\right\}  \label{ball weak}
\end{equation}%
the weak$^{\ast }$-open ball of radius $r\in \mathbb{R}^{+}$ centered at $%
\omega \in \mathbf{B}(0,D)$. Then, for any $D\in \mathbb{R}^{+}$ and $m\in 
\mathbb{N}$, let $\mathcal{F}_{D,m}$ be the set of all nonempty convex weak$%
^{\ast }$-compact subsets $K\subseteq \mathbf{B}(0,D)$ such that $B\left(
\omega ,1/m\right) \cap \mathcal{E}(K)=\emptyset $ for some $\omega \in K$,
i.e., 
\begin{equation}
\mathcal{F}_{D,m}\doteq \left\{ K\in \mathbf{CK}_{D}\left( \mathcal{X}^{\ast
}\right) :\exists \omega \in K,\ B\left( \omega ,1/m\right) \cap \mathcal{E}%
\left( K\right) =\emptyset \right\} \subseteq \mathbf{CK}_{D}\left( \mathcal{%
X}^{\ast }\right) \ .  \label{Fm}
\end{equation}%
Recall again that $\mathcal{E}(K)$ is the nonempty set of extreme points of $%
K$ (cf. the Krein-Milman theorem \cite[Theorem 3.23]{Rudin}). Now, by
Equation (\ref{ZDZDZDZD}), observe that the complement of $\mathcal{D}$ (\ref%
{Zbis}) in $\mathbf{CK}(\mathcal{X}^{\ast })$ equals%
\begin{equation}
\mathbf{CK}\left( \mathcal{X}^{\ast }\right) \backslash \mathcal{D=}%
\bigcup_{D,m\in \mathbb{N}}\mathcal{F}_{D,m}\ .  \label{union}
\end{equation}%
Therefore, $\mathcal{D}$ is a $G_{\delta }$ subset of $\mathbf{CK}(\mathcal{X%
}^{\ast })$ if $\mathcal{F}_{D,m}$ is a weak$^{\ast }$-Hausdorff-closed set
for any $D,m\in \mathbb{N}$.

By Theorem \ref{Solution selfbaby copy(4)+1}, the weak$^{\ast }$-Hausdorff
hypertopology of $\mathbf{CK}_{D}(\mathcal{X}^{\ast })$ is metrizable and $%
\mathbf{CK}_{D}(\mathcal{X}^{\ast })$, being weak$^{\ast }$%
-Hausdorff-compact, is a weak$^{\ast }$-Hausdorff-closed subset of the
Hausdorff hyperspace $\mathbf{CK}(\mathcal{X}^{\ast })$ (see Corollary \ref%
{convexity corrolary} and \cite[Chapter 5, Theorem 7]{topology}). So, fix $%
D,m\in \mathbb{N}$ and take any sequence $(K_{n})_{n\in \mathbb{N}}\subseteq 
\mathcal{F}_{D,m}$ converging with respect to the weak$^{\ast }$-Hausdorff
hypertopology to $K_{\infty }\in $ $\mathbf{CK}_{D}(\mathcal{X}^{\ast })$.
For any $n\in \mathbb{N}$, there is $\omega _{n}\in K_{n}$ such that $%
B\left( \omega _{n},1/m\right) \cap \mathcal{E}(K_{n})=\emptyset $. By
metrizability and weak$^{\ast }$\ compactness of the ball $\mathbf{B}(0,D)$
and Corollary \ref{Solution selfbaby copy(4)}, there is a subsequence $%
(\omega _{n_{k}})_{k\in \mathbb{N}}$ converging to some $\omega _{\infty
}\in K_{\infty }$. Assume that, for some $\varepsilon \in (0,1/m)$, there \
is $\sigma _{\infty }\in \mathcal{E}(K_{\infty })$ such that%
\begin{equation*}
d\left( \omega _{\infty },\sigma _{\infty }\right) \leq \frac{1}{m}%
-\varepsilon \ .
\end{equation*}%
By the Mazur theorem (see, e.g., \cite[Theorem 1.20]{Phelps-Asplund}), the
Straszewicz theorem extended to all weak Asplund spaces \cite[Theorem 6.2]%
{Phelps-Asplund} and the Milman theorem \cite[Theorem 10.13]{BruPedra2}, the
set of exposed points of $K_{\infty }$ is weak$^{\ast }$-dense in $\mathcal{E%
}(K_{\infty })$. As a consequence, we can assume without loss of generality
that $\sigma _{\infty }$ is an exposed point. In particular, there is $A\in 
\mathcal{X}$ such that 
\begin{equation}
\max_{\sigma \in K_{\infty }}\mathrm{Re}\{\hat{A}(\sigma )\}=\hat{A}\left(
\sigma _{\infty }\right) \ ,  \label{itupeva1}
\end{equation}%
with $\sigma _{\infty }$ being the \emph{unique} maximizer in $K_{\infty }$.
Recall that $\hat{A}$ is the map $\sigma \mapsto \sigma (A)$ from $\mathcal{X%
}^{\ast }$ to $\mathbb{C}$ (cf. (\ref{sdfsdfkljsdlfkj})). Consider now the
sets%
\begin{equation*}
\mathcal{M}_{n}\doteq \left\{ \tilde{\sigma}\in K_{n}:\max_{\sigma \in K_{n}}%
\mathrm{Re}\{\hat{A}(\sigma )\}=\hat{A}\left( \tilde{\sigma}\right) \right\}
\ ,\qquad n\in \mathbb{N}\ .
\end{equation*}%
By affinity and weak$^{\ast }$-continuity of the function $\hat{A}$,
together with the weak$^{\ast }$-compactness of $K_{n}$, the set $\mathcal{M}%
_{n}$ is a convex weak$^{\ast }$-compact subset of $K_{n}$ for any $n\in 
\mathbb{N}$. In fact, $\mathcal{M}_{n}$ is a (weak$^{\ast }$-closed)\ face%
\footnote{%
It means that, if $\sigma \in \mathcal{M}_{n}$ is a finite convex
combination of elements $\sigma _{j}\in K_{n}$ then all $\sigma _{j}\in 
\mathcal{M}_{n}$.} of $K_{n}$ and thus, any extreme point of $\mathcal{M}%
_{n} $ belongs to $\mathcal{E}(K_{n})$. So, pick any extreme point $\sigma
_{n}\in \mathcal{E}(K_{n})$ of $\mathcal{M}_{n}$ for each $n\in \mathbb{N}$.
Since 
\begin{eqnarray*}
\max_{\sigma \in K_{n}}\mathrm{Re}\{\hat{A}(\sigma )\}-\max_{\tilde{\sigma}%
\in K_{\infty }}\mathrm{Re}\{\hat{A}(\tilde{\sigma})\} &=&\max_{\sigma \in
K_{n}}\min_{\tilde{\sigma}\in K_{\infty }}\mathrm{Re}\{\hat{A}(\sigma -%
\tilde{\sigma})\}\leq \max_{\sigma \in K_{n}}\min_{\tilde{\sigma}\in
K_{\infty }}\left\vert \left( \sigma -\tilde{\sigma}\right) \left( A\right)
\right\vert \ , \\
\max_{\tilde{\sigma}\in K_{\infty }}\mathrm{Re}\{\hat{A}(\tilde{\sigma}%
)\}-\max_{\sigma \in K_{n}}\mathrm{Re}\{\hat{A}(\sigma )\} &=&\max_{\tilde{%
\sigma}\in K_{\infty }}\min_{\sigma \in K_{n}}\mathrm{Re}\{\hat{A}(\tilde{%
\sigma}-\sigma )\}\leq \max_{\tilde{\sigma}\in K_{\infty }}\min_{\sigma \in
K_{n}}\left\vert \left( \sigma -\tilde{\sigma}\right) \left( A\right)
\right\vert \ ,
\end{eqnarray*}%
we deduce from Definition \ref{hypertopology} and the weak$^{\ast }$%
-Hausdorff convergence of $(K_{n})_{n\in \mathbb{N}}$ to $K_{\infty }$ that%
\begin{equation*}
\lim_{n\rightarrow \infty }\mathrm{Re}\{\hat{A}(\sigma
_{n})\}=\lim_{n\rightarrow \infty }\max_{\sigma \in K_{n}}\mathrm{Re}\{\hat{A%
}(\sigma )\}=\max_{\sigma \in K_{\infty }}\mathrm{Re}\{\hat{A}(\sigma )\}=%
\hat{A}\left( \sigma _{\infty }\right) \ .
\end{equation*}%
Therefore, keeping in mind the convergence of the subsequence $(\omega
_{n_{k}})_{k\in \mathbb{N}}$ towards $\omega _{\infty }\in K_{\infty }$,
there is a subsequence $(\sigma _{n_{k(l)}})_{l\in \mathbb{N}}$ of $(\sigma
_{n_{k}})_{k\in \mathbb{N}}$ (itself being a subsequence of $(\sigma
_{n})_{n\in \mathbb{N}}$) converging to $\sigma _{\infty }$, as it is the 
\emph{unique} maximizer of (\ref{itupeva1}) and $\hat{A}$ is weak$^{\ast }$%
-continuous. Since, for any $l\in \mathbb{N}$, 
\begin{multline*}
d(\sigma _{n_{k(l)}},\omega _{n_{k(l)}})\leq d(\sigma _{\infty },\omega
_{\infty })+d(\omega _{\infty },\omega _{n_{k(l)}})+d(\sigma
_{n_{k(l)}},\sigma _{\infty }) \\
\leq \frac{1}{m}-\varepsilon +d(\omega _{\infty },\omega
_{n_{k(l)}})+d(\sigma _{n_{k(l)}},\sigma _{\infty })
\end{multline*}%
with $\varepsilon \in (0,1/m)$ and $\sigma _{n}\in \mathcal{E}(K_{n})$ for $%
n\in \mathbb{N}$, we thus arrive at a contradiction. Therefore, $K_{\infty
}\in \mathcal{F}_{D,m}$. This means that $\mathcal{F}_{D,m}$ is a weak$%
^{\ast }$-Hausdorff-closed set for any $D,m\in \mathbb{N}$ and hence, the
countable union (\ref{union}) is a $F_{\sigma }$ set with complement being $%
\mathcal{D}$. The assertion follows, as the complement of an $F_{\sigma }$
set is a $G_{\delta }$ set.
\end{proof}

To show that $\mathcal{D}$ is weak$^{\ast }$-Hausdorff dense in the
hyperspace $\mathbf{CK}(\mathcal{X}^{\ast })$, like in the proof of \cite[%
Theorem 4.3]{FonfLindenstrauss} and in contrast with \cite{Klee}, we design
elements of $\mathcal{D}$ that approximate $K\in \mathbf{CK}(\mathcal{X}%
^{\ast })$ by using a procedure that is very similar to the construction of
the Poulsen simplex \cite{Poulsen}. Note however that Poulsen used the
existence of orthonormal bases in infinite-dimensional Hilbert spaces%
\footnote{%
In \cite{Poulsen}, Poulsen uses the Hilbert space $\ell ^{2}(\mathbb{N})$ to
construct his example of a convex compact set (in fact a simplex) with dense
extreme boundary.}. Here, the Hahn-Banach separation theorem \cite[Theorem
3.4 (b)]{Rudin} replaces the orthogonality property coming from the Hilbert
space structure. In all previous results \cite{Klee,FonfLindenstrauss} on
the density of convex compact sets with dense extreme boundary, the norm
topology is used, while the primordial topology is here the weak$^{\ast }$
topology. In this context, the metrizability of weak$^{\ast }$ and weak$%
^{\ast }$-Hausdorff topologies on norm-closed balls is pivotal. See Theorem %
\ref{Solution selfbaby copy(4)+1}. We give now the precise assertion along
with its proof:

\begin{theorem}[Weak$^{\ast }$-Hausdorf density of $\mathcal{D}$]
\label{Solution selfbaby copy(5)+0}\mbox{ }\newline
Let $\mathcal{X}$ be an infinite-dimensional separable Banach space. Then, $%
\mathcal{D}$ is a weak$^{\ast }$-Hausdorff dense subset of $\mathbf{CK}(%
\mathcal{X}^{\ast })$.
\end{theorem}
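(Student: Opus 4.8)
The plan is to fix an arbitrary $K\in\mathbf{CK}(\mathcal{X}^{\ast})$ together with an accuracy and to build an element $K_{\infty}\in\mathcal{D}$ as close to $K$ as desired, following Poulsen's intuition \cite{Poulsen}. First I would pass to a metric setting. Since $K$ is norm-bounded (Lemma \ref{dddddddddddddddddd}), it lies in $\mathbf{CK}_{D}(\mathcal{X}^{\ast})$ for some $D\in\mathbb{R}^{+}$, see (\ref{ZD}), and by Theorem \ref{Solution selfbaby copy(4)+1} the weak$^{\ast}$-Hausdorff hypertopology on each $\mathbf{CK}_{D'}(\mathcal{X}^{\ast})$ is induced by the Hausdorff distance $d_{H}$ of (\ref{metric1}), built from the metric $d$ of (\ref{metrics0}). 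Hence, working inside a slightly larger ball $\mathbf{B}(0,D+1)$ and using the estimate $d_{H}^{(A)}\leq\varepsilon+2^{n}d_{H}$ from the proof of Theorem \ref{Solution selfbaby copy(4)+1}, it suffices to approximate $K$ in the metric $d_{H}$ by convex weak$^{\ast}$-compact sets with weak$^{\ast}$-dense extreme boundary.

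Fix then $\varepsilon>0$. \emph{Step 1.} Using weak$^{\ast}$-compactness and metrizability of $\mathbf{B}(0,D)$, I would choose a finite weak$^{\ast}$ $\varepsilon$-net $\{\omega_{1},\ldots,\omega_{n_{\varepsilon}}\}\subseteq K$, so that the convex hull of the points to be constructed recovers $K$ up to error $\varepsilon$. As $\mathcal{X}$ is infinite-dimensional, so is $\mathcal{X}^{\ast}$, and I can pick a non-zero functional $\sigma_{1}\in\mathcal{X}^{\ast}\backslash\mathrm{span}\{\omega_{1},\ldots,\omega_{n_{\varepsilon}}\}$ of small norm and set
\begin{equation}
\omega_{n_{\varepsilon}+1}\doteq(1-\lambda_{1})\varpi_{1}+\lambda_{1}\sigma_{1}\ ,\label{omega1}
\end{equation}
where $\varpi_{1}$ is a target point and $\lambda_{1}\in(0,1)$ is small. \emph{Step 2 (iteration).} Enumerating a sequence of target points $(\varpi_{n})_{n\in\mathbb{N}}$, arranged by a diagonal procedure so as to become weak$^{\ast}$-dense in the set $K_{\infty}$ being built, and repeating, at the $n$-th stage I take
\begin{equation}
\sigma_{n}\in\mathcal{X}^{\ast}\backslash\mathrm{span}\{\omega_{1},\ldots,\omega_{n_{\varepsilon}+n-1}\}\ ,\qquad\omega_{n_{\varepsilon}+n}\doteq(1-\lambda_{n})\varpi_{n}+\lambda_{n}\sigma_{n}\ ,\label{definition omegan}
\end{equation}
with $(\lambda_{n})_{n\in\mathbb{N}}$ a rapidly decreasing sequence of positive numbers, and finally define
\begin{equation}
K_{\infty}\doteq\overline{\mathrm{co}\{\omega_{m}:m\in\mathbb{N}\}}\ .\label{equaion}
\end{equation}

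The verification has three parts. Controlling $\lambda_{n}\Vert\sigma_{n}\Vert_{\mathcal{X}^{\ast}}$ to be summable and small ensures that $K_{\infty}\subseteq\mathbf{B}(0,D+1)$ and that every generator lies within $\varepsilon$ of $K$; since $K$ is convex this yields that $K_{\infty}$ sits in a weak$^{\ast}$-neighborhood of $K$, while the inclusion of the $\varepsilon$-net gives the reverse inclusion, so $d_{H}(K,K_{\infty})=O(\varepsilon)$. The crucial point is the density of $\mathcal{E}(K_{\infty})$: each $\omega_{n_{\varepsilon}+n}$ should be an \emph{exposed} point of $K_{\infty}$, because $\sigma_{n}$ was chosen linearly independent from all earlier generators, so the Hahn-Banach separation theorem \cite[Theorem 3.4 (b)]{Rudin} furnishes $A\in\mathcal{X}$ with $\mathrm{Re}\,\hat{A}$ attaining its \emph{unique} maximum over the current hull at $\omega_{n_{\varepsilon}+n}$; here Hahn-Banach separation plays the role of the orthogonality of an orthonormal basis used by Poulsen in Hilbert space. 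As the $\varpi_{n}$ are weak$^{\ast}$-dense in $K_{\infty}$ and the $\omega_{n_{\varepsilon}+n}$ stay arbitrarily close to them, the exposed (hence extreme) points become weak$^{\ast}$-dense in $K_{\infty}$, giving $K_{\infty}=\overline{\mathcal{E}(K_{\infty})}\in\mathcal{D}$, cf. (\ref{Zbis}).

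The main obstacle, and the step requiring genuine care, is guaranteeing that each inserted point \emph{remains} exposed after all later, smaller spikes have been added: the exposing functional $A$ for $\omega_{n_{\varepsilon}+n}$ must dominate $\mathrm{Re}\,\hat{A}$ on the whole tail $\{\omega_{m}:m>n_{\varepsilon}+n\}$, not merely on a finite hull. I would settle this exactly as in \cite[Theorem 4.3]{FonfLindenstrauss}, by a bookkeeping argument coupling the Hahn-Banach choice of the fresh directions $\sigma_{n}$, the exposing functionals, and the rapidly decreasing weights $\lambda_{n}$, arranged so that the ever-smaller later perturbations cannot overturn the strict inequality defining exposedness. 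Taking weak$^{\ast}$ closures and, if needed, invoking the weak$^{\ast}$-density of exposed points within $\mathcal{E}(K_{\infty})$ as in the proof of Proposition \ref{Solution selfbaby copy(5)+00} then finishes the argument.
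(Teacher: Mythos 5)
Your proposal follows essentially the same route as the paper's proof: reduction to the metric $d_{H}$ on a norm-closed ball via Theorem \ref{Solution selfbaby copy(4)+1}, a finite weak$^{\ast }$ $\varepsilon$-net of $K$, the Poulsen-style iterative insertion of spikes $\omega _{n_{\varepsilon }+n}=(1-\lambda _{n})\varpi _{n}+\lambda _{n}\sigma _{n}$ with $\sigma _{n}$ chosen outside the span of the earlier generators, the Hahn-Banach separation theorem in place of Hilbert-space orthogonality, and summable controls on $\lambda _{n}\Vert \sigma _{n}\Vert _{\mathcal{X}^{\ast }}$. The one step you delegate to the bookkeeping of \cite[Theorem 4.3]{FonfLindenstrauss} -- checking that each spike remains exposed after all later, smaller spikes have been added -- is precisely the step the paper instead settles by its own direct estimates on convex-combination coefficients (\emph{Step} $n=\infty $, using the bounds $\Vert \sigma _{n}\Vert _{\mathcal{X}^{\ast }}\leq 2^{-1}\Vert A_{j}\Vert _{\mathcal{X}}^{-1}\lambda _{j}$ together with Lebesgue's dominated convergence theorem, and explicitly avoiding \cite[Lemma 4.2]{FonfLindenstrauss}), but the overall architecture is identical.
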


\begin{proof}
Let $\mathcal{X}$ be an infinite-dimensional separable Banach space and fix
once and for all a convex weak$^{\ast }$-compact subset $K\in \mathbf{CK}(%
\mathcal{X}^{\ast })$. The construction of convex weak$^{\ast }$-compact
sets in $\mathcal{D}$ approximating $K$ is done in several steps:\medskip

\noindent \underline{Step 0:} By Lemma \ref{dddddddddddddddddd}, $K$ belongs
to some norm-closed ball $\mathbf{B}(0,D)$ of radius $D\in \mathbb{R}^{+}$,
in other words, $K\in \mathbf{CK}_{D}(\mathcal{X}^{\ast })$, see (\ref{ZD})-(%
\ref{norm ball}). Therefore, we can use the metric $d$ defined by (\ref%
{metrics0}) and generating the weak$^{\ast }$ topology on $\mathbf{B}(0,D)$.
Then, for any fixed $\varepsilon \in \mathbb{R}^{+}$, there is a finite set $%
\{\omega _{j}\}_{j=1}^{n_{\varepsilon }}\subseteq K$, $n_{\varepsilon }\in 
\mathbb{N}$, such that 
\begin{equation}
K\subseteq \bigcup\limits_{j=1}^{n_{\varepsilon }}B\left( \omega
_{j},\varepsilon \right) \ ,  \label{dense1}
\end{equation}%
where $B\left( \omega ,r\right) \subseteq \mathbf{B}(0,D)$ denotes the weak$%
^{\ast }$-open ball (\ref{ball weak}) of radius $r\in \mathbb{R}^{+}$
centered at $\omega \in \mathcal{X}^{\ast }$. We then define the convex weak$%
^{\ast }$-compact set 
\begin{equation}
K_{0}\doteq \mathrm{co}\left\{ \omega _{1},\ldots ,\omega _{n_{\varepsilon
}}\right\} \subseteq \mathrm{span}\{\omega _{1},\ldots ,\omega
_{n_{\varepsilon }}\}\ .  \label{K0}
\end{equation}%
By (\ref{metric1}) and (\ref{dense1}), note that 
\begin{equation}
d_{H}(K,K_{0})\leq \varepsilon \ .  \label{K0bis}
\end{equation}%
\smallskip

\noindent \underline{Step 1:} Observe that the ball $\mathbf{B}(0,D)$ is weak%
$^{\ast }$-separable, by its weak$^{\ast }$ compactness (the Banach-Alaoglu
theorem \cite[Theorem 3.15]{Rudin}) and metrizability (cf. separability of $%
\mathcal{X}$ and \cite[Theorem 3.16]{Rudin}). Take any weak$^{\ast }$-dense
countable set $\{\varrho _{0,k}\}_{k\in \mathbb{N}}$ of $K_{0}$. By infinite
dimensionality of $\mathcal{X}^{\ast }$, there is $\sigma _{1}\in \mathcal{X}%
^{\ast }\backslash \mathrm{span}\{\omega _{1},\ldots ,\omega
_{n_{\varepsilon }}\}$ with 
\begin{equation}
\left\Vert \sigma _{1}\right\Vert _{\mathcal{X}^{\ast }}=D\ .
\label{D simple}
\end{equation}%
As in the proof of Proposition \ref{convexity lemma copy(1)}, recall that $%
\mathcal{X}^{\ast }$, endowed with the weak$^{\ast }$ topology, is a locally
convex (Hausdorff) space with $\mathcal{X}$ as its dual. Since $\{\sigma
_{1}\}$ is a convex weak$^{\ast }$-compact set and $\mathrm{span}\{\omega
_{1},\ldots ,\omega _{n_{\varepsilon }}\}$ is convex and weak$^{\ast }$%
-closed \cite[Theorem 1.42]{Rudin}, we infer from the Hahn-Banach separation
theorem \cite[Theorem 3.4 (b)]{Rudin} the existence of $A_{1}\in \mathcal{X}$
such that 
\begin{equation*}
\sup \left\{ \mathrm{Re}\left\{ \sigma \left( A_{1}\right) \right\} :\sigma
\in \mathrm{span}\{\omega _{1},\ldots ,\omega _{n_{\varepsilon }}\}\right\} <%
\mathrm{Re}\left\{ \sigma _{1}\left( A_{1}\right) \right\} \ .
\end{equation*}%
Since $\mathrm{span}\{\omega _{1},\ldots ,\omega _{n_{\varepsilon }}\}$ is a
linear space, observe that 
\begin{equation}
\mathrm{Re}\left\{ \sigma \left( A_{1}\right) \right\} =0\ ,\qquad \sigma
\in \mathrm{span}\{\omega _{1},\ldots ,\omega _{n_{\varepsilon }}\}\ .
\label{cool1}
\end{equation}%
Thus, by rescaling $A_{1}\in \mathcal{X}$, we can assume without loss of
generality that%
\begin{equation}
\mathrm{Re}\left\{ \sigma _{1}\left( A_{1}\right) \right\} =1\ .
\label{cool2}
\end{equation}%
Let 
\begin{equation}
\omega _{n_{\varepsilon }+1}\doteq \left( 1-\lambda _{1}\right) \varpi
_{1}+\lambda _{1}\sigma _{1}\ ,\qquad \text{with}\qquad \lambda _{1}\doteq
\min \left\{ 1,2^{-2}D^{-1}\varepsilon \right\} ,\ \varpi _{1}\doteq \varrho
_{0,1}\in K_{0}\ .  \label{omega1}
\end{equation}%
In contrast with the proof of \cite[Theorem 4.3]{FonfLindenstrauss}, we use
a convex combination to automatically ensure that $\left\Vert \omega
_{n_{\varepsilon }+1}\right\Vert _{\mathcal{X}^{\ast }}\leq D$, by convexity
of the (norm-closed) ball $\mathbf{B}(0,D)$. The inequality $\lambda
_{1}\leq 2^{-2}D^{-1}\varepsilon $ yields%
\begin{equation}
d\left( \omega _{n_{\varepsilon }+1},\varpi _{1}\right) \leq \left\Vert
\omega _{n_{\varepsilon }+1}-\varpi _{1}\right\Vert _{\mathcal{X}^{\ast
}}\leq 2^{-1}\varepsilon \ .  \label{dense2dense2}
\end{equation}%
Define the new convex weak$^{\ast }$-compact set%
\begin{equation*}
K_{1}\doteq \mathrm{co}\left\{ \omega _{1},\ldots ,\omega _{n_{\varepsilon
}+1}\right\} \subseteq \mathrm{span}\{\omega _{1},\ldots ,\omega
_{n_{\varepsilon }+1}\}\ .
\end{equation*}%
Observe that $\omega _{n_{\varepsilon }+1}$ is an exposed point of $K_{1}$,
by (\ref{cool1}) and (\ref{cool2}). By (\ref{metric1}), (\ref{K0}) and (\ref%
{dense2dense2}), note that $d_{H}(K_{0},K_{1})\leq 2^{-1}\varepsilon $,
which, by the triangle inequality and (\ref{K0bis}), yields 
\begin{equation}
d_{H}(K,K_{1})\leq \left( 1+2^{-1}\right) \varepsilon  \label{K1bis}
\end{equation}%
for an arbitrary (but previously fixed) $\varepsilon \in \mathbb{R}^{+}$%
.\medskip

\noindent \underline{Step 2:} Take any weak$^{\ast }$ dense countable set $%
\{\varrho _{1,k}\}_{k\in \mathbb{N}}$ of $K_{1}$. By infinite dimensionality
of $\mathcal{X}^{\ast }$, there is $\sigma _{2}\in \mathcal{X}^{\ast
}\backslash \mathrm{span}\{\omega _{1},\ldots ,\omega _{n_{\varepsilon
}+1}\} $ with 
\begin{equation}
\left\Vert \sigma _{2}\right\Vert _{\mathcal{X}^{\ast }}=\min \left\{
D,2^{-1}\left\Vert A_{1}\right\Vert _{\mathcal{X}}^{-1}\lambda _{1}\right\}
\ .  \label{sdfsdf}
\end{equation}%
As before, we deduce from the Hahn-Banach separation theorem \cite[Theorem
3.4 (b)]{Rudin} the existence of $A_{2}\in \mathcal{X}$ such that 
\begin{equation}
\mathrm{Re}\left\{ \sigma _{2}\left( A_{2}\right) \right\} =1\qquad \text{and%
}\qquad \mathrm{Re}\left\{ \sigma \left( A_{2}\right) \right\} =0\ ,\qquad
\sigma \in \mathrm{span}\{\omega _{1},\ldots ,\omega _{n_{\varepsilon
}+1}\}\ .  \label{cool3}
\end{equation}%
Let%
\begin{equation}
\omega _{n_{\varepsilon }+2}\doteq \left( 1-\lambda _{2}\right) \varpi
_{2}+\lambda _{2}\sigma _{2}\ ,\qquad \text{with}\qquad \lambda _{2}\doteq
\min \left\{ 1,2^{-3}D^{-1}\varepsilon \right\} ,\ \varpi _{2}\doteq \varrho
_{1,1}\in K_{1}\ .  \label{omega2}
\end{equation}%
In this case, similar to Inequality (\ref{dense2dense2}), 
\begin{equation}
d\left( \omega _{n_{\varepsilon }+2},\varpi _{2}\right) \leq \left\Vert
\omega _{n_{\varepsilon }+2}-\varpi _{2}\right\Vert _{\mathcal{X}^{\ast
}}\leq 2^{-2}\varepsilon \ .  \label{dense2dense3}
\end{equation}%
Define the new convex weak$^{\ast }$-compact set%
\begin{equation*}
K_{2}\doteq \mathrm{co}\left\{ \omega _{1},\ldots ,\omega _{n_{\varepsilon
}+2}\right\} \subseteq \mathrm{span}\{\omega _{1},\ldots ,\omega
_{n_{\varepsilon }+2}\}\ .
\end{equation*}%
By (\ref{cool3}), $\omega _{n_{\varepsilon }+2}$ is an exposed point of $%
K_{2}$, but it is not obvious that the exposed point $\omega
_{n_{\varepsilon }+1}$ of $K_{1}$ is still an exposed point of $K_{2}$, with
respect to $A_{1}\in \mathcal{X}$. This property is a consequence of%
\begin{equation*}
\mathrm{Re}\left\{ \omega _{n_{\varepsilon }+2}\left( A_{1}\right) \right\}
=\left( 1-\lambda _{2}\right) \mathrm{Re}\left\{ \varpi _{2}\left(
A_{1}\right) \right\} +\lambda _{2}\mathrm{Re}\left\{ \sigma _{2}\left(
A_{1}\right) \right\} <\mathrm{Re}\left\{ \omega _{n_{\varepsilon }+1}\left(
A_{1}\right) \right\} =\lambda _{1}\ ,
\end{equation*}%
(see (\ref{cool1}), (\ref{omega1}) and (\ref{omega2})), which holds true
because 
\begin{equation*}
\mathrm{Re}\left\{ \sigma _{2}\left( A_{1}\right) \right\} \leq
2^{-1}\lambda _{1}<\lambda _{1}\ ,
\end{equation*}%
by Equation (\ref{sdfsdf}). By (\ref{metric1}), (\ref{K1bis}) and (\ref%
{dense2dense3}) together with the triangle inequality,%
\begin{equation*}
d_{H}(K,K_{2})\leq \left( 1+2^{-1}+2^{-2}\right) \varepsilon
\end{equation*}%
for an arbitrary (but previously fixed) $\varepsilon \in \mathbb{R}^{+}$%
.\medskip

\noindent \underline{Step $n\rightarrow \infty $:} We now iterate the above
procedure, ensuring, at each step $n\geq 3$, that the addition of the
element 
\begin{equation}
\omega _{n_{\varepsilon }+n}\doteq \left( 1-\lambda _{n}\right) \varpi
_{n}+\lambda _{n}\sigma _{n}\ ,\qquad \text{with}\qquad \lambda _{n}\doteq
\min \left\{ 1,2^{-(n+1)}D^{-1}\varepsilon \right\} \ ,
\label{definition omegan}
\end{equation}%
in order to define the convex weak$^{\ast }$-compact set 
\begin{equation}
K_{n}\doteq \mathrm{co}\left\{ \omega _{1},\ldots ,\omega _{n_{\varepsilon
}+n}\right\} \subseteq \mathrm{span}\{\omega _{1},\ldots ,\omega
_{n_{\varepsilon }+n}\}\ ,  \label{Kn}
\end{equation}%
does not destroy the property\ of the elements $\omega _{n_{\varepsilon
}+1},\ldots ,\omega _{n_{\varepsilon }+n-1}$ being exposed. To this end, for
any $n\geq 2$, we choose $\sigma _{n}\in \mathcal{X}^{\ast }\backslash 
\mathrm{span}\{\omega _{1},\ldots ,\omega _{n_{\varepsilon }+n-1}\}$ such
that 
\begin{equation}
\left\Vert \sigma _{n}\right\Vert _{\mathcal{X}^{\ast }}=\min \left\{
D,2^{-1}\left\Vert A_{1}\right\Vert _{\mathcal{X}}^{-1}\lambda _{1},\ldots
,2^{-1}\left\Vert A_{n-1}\right\Vert _{\mathcal{X}}^{-1}\lambda
_{n-1}\right\} \ .  \label{toto}
\end{equation}%
Compare with (\ref{D simple}) and (\ref{sdfsdf}). Here, for any $j\in
\{1,\ldots ,n-1\}$, $A_{j}\in \mathcal{X}$ satisfies%
\begin{equation}
\mathrm{Re}\left\{ \sigma _{j}\left( A_{j}\right) \right\} =1\qquad \text{and%
}\qquad \mathrm{Re}\left\{ \sigma \left( A_{j}\right) \right\} =0\ ,\qquad
\sigma \in \mathrm{span}\{\omega _{1},\ldots ,\omega _{n_{\varepsilon
}+j-1}\}\ .  \label{totototo}
\end{equation}%
Compare with (\ref{cool1})-(\ref{cool2}) and (\ref{cool3}). We also have to
conveniently choose $\varpi _{n}\in K_{n-1}$ in order to get the asserted
weak$^{\ast }$ density. Like in the proof of \cite[Theorem 4.3]%
{FonfLindenstrauss} the sequence $(\varpi _{n})_{n\in \mathbb{N}}$ is chosen
such that 
\begin{equation*}
\{\varpi _{n}\}_{n\in \mathbb{N}}=\left\{ \varrho _{n,k}\right\} _{n\in 
\mathbb{N}_{0},k\in \mathbb{N}}
\end{equation*}%
and all the functionals $\varrho _{n,k}$ appear infinitely many times in the
sequence $(\varpi _{n})_{n\in \mathbb{N}}$. In this case, we obtain a weak$%
^{\ast }$-dense set $\{\omega _{n}\}_{n\in \mathbb{N}}$ in the convex weak$%
^{\ast }$-compact set 
\begin{equation}
K_{\infty }\doteq \overline{\mathrm{co}\left\{ \{\omega _{n}\}_{n\in \mathbb{%
N}}\right\} }\in \mathbf{CK}_{D}\left( \mathcal{X}^{\ast }\right) \ ,
\label{equaion}
\end{equation}%
which, by construction, satisfies 
\begin{equation*}
d_{H}(K,K_{\infty })\leq \sum_{n=0}^{\infty }2^{-n}\varepsilon =2\varepsilon
\end{equation*}%
for an arbitrary (but previously fixed) $\varepsilon \in \mathbb{R}^{+}$%
.\medskip

\noindent \underline{Step $n=\infty $:} It remains to verify that $\omega
_{n_{\varepsilon }+j}$, $j\in \mathbb{N}$, are exposed points of $K_{\infty
} $, whence $K_{\infty }\in \mathcal{D}$. By (\ref{definition omegan}) with $%
\varpi _{n}\in K_{n-1}$ (see (\ref{Kn})), for each natural number $n\geq j+1$%
, there are $\alpha _{n,j-1}^{(j)},\ldots ,\alpha _{n,n}^{(j)}\in \lbrack
0,1]$ and $\rho _{n}^{(j)}\in \mathrm{co}\left\{ \omega _{1},\ldots ,\omega
_{n_{\varepsilon }+j-1}\right\} $ such that 
\begin{equation}
\alpha _{n,j-1}^{(j)}+\alpha _{n,j}^{(j)}+\sum_{k=j+1}^{n}\alpha
_{n,k}^{(j)}\lambda _{k}=1\quad \text{and}\quad \omega _{n_{\varepsilon
}+n}=\alpha _{n,j-1}^{(j)}\rho _{n}^{(j)}+\alpha _{n,j}^{(j)}\omega
_{n_{\varepsilon }+j}+\sum_{k=j+1}^{n}\alpha _{n,k}^{(j)}\lambda _{k}\sigma
_{k}\ .  \label{inequality ddfdfinequality ddfdf}
\end{equation}%
Additionally, define $\alpha _{n,k}^{(j)}\doteq 1$ for all natural numbers $%
k\geq n$ while $\alpha _{n,k}^{(j)}\doteq 0$ for $k\in \mathbb{N}_{0}$ such
that $k\leq j-2$. Using (\ref{toto}), (\ref{totototo}) and (\ref{inequality
ddfdfinequality ddfdf}), at fixed $j\in \mathbb{N}$, we thus obtain that%
\begin{eqnarray}
\mathrm{Re}\left\{ \omega _{n_{\varepsilon }+n}\left( A_{j}\right) \right\}
&=&\alpha _{n,j}^{(j)}\mathrm{Re}\left\{ \omega _{n_{\varepsilon }+j}\left(
A_{j}\right) \right\} +\sum_{k=j+1}^{n}\alpha _{n,k}^{(j)}\lambda _{k}%
\mathrm{Re}\left\{ \sigma _{k}\left( A_{j}\right) \right\}  \notag \\
&\leq &\lambda _{j}\left( 1-2^{-1}\sum_{k=j+1}^{n}\alpha _{n,k}^{(j)}\lambda
_{k}\right)  \label{inequality ddfdf}
\end{eqnarray}%
for any $n\geq j+1$, while, for any natural number $n\leq j-1$, 
\begin{equation*}
\mathrm{Re}\left\{ \omega _{n_{\varepsilon }+n}\left( A_{j}\right) \right\}
=0\ ,
\end{equation*}%
using (\ref{totototo}). Fix $j\in \mathbb{N}$ and let $\omega _{\infty }\in
K_{\infty }$ be a solution to the variational problem%
\begin{equation}
\max_{\sigma \in K_{\infty }}\mathrm{Re}\left\{ \sigma \left( A_{j}\right)
\right\} =\mathrm{Re}\left\{ \omega _{\infty }\left( A_{j}\right) \right\}
\geq \mathrm{Re}\left\{ \omega _{n_{\varepsilon }+j}\left( A_{j}\right)
\right\} =\lambda _{j}\ .  \label{maximum}
\end{equation}%
($K_{\infty }$ is weak$^{\ast }$-compact.) By weak$^{\ast }$-density of $%
\{\omega _{n}\}_{n\in \mathbb{N}}$ in $K_{\infty }$, there is a sequence $%
(\omega _{n_{\varepsilon }+n_{l}})_{l\in \mathbb{N}}$ converging to $\omega
_{\infty }$ in the weak$^{\ast }$ topology. Since $K_{j}$ is weak$^{\ast }$%
-compact and $\alpha _{n,k}^{(j)}\in \lbrack 0,1]$ for all $k\in \mathbb{N}%
_{0}$ and $n,j\in \mathbb{N}$, by using a standard argument with a so-called
diagonal subsequence, we can choose the sequence $(n_{l})_{l\in \mathbb{N}}$
such that $(\rho _{n_{l}}^{(j)})$ weak$^{\ast }$-converges to $\rho _{\infty
}^{(j)}\in K_{j-1}$, and $(\alpha _{n_{l},k}^{(j)})_{l\in \mathbb{N}}$ has a
limit for any fixed $k\in \mathbb{N}_{0}$ and $j\in \mathbb{N}$. Using (\ref%
{definition omegan}), (\ref{inequality ddfdf}) and the inequality 
\begin{equation*}
\sum_{k=j+1}^{n_{l}}\alpha _{n_{l},k}^{(j)}\lambda _{k}\leq
D^{-1}\varepsilon \sum_{k=j+1}^{\infty
}2^{-(k+1)}=2^{-(j+1)}D^{-1}\varepsilon
\end{equation*}%
together with Lebesgue's dominated convergence theorem, we thus obtain that 
\begin{equation*}
\mathrm{Re}\left\{ \omega _{\infty }\left( A_{j}\right) \right\}
=\lim_{l\rightarrow \infty }\mathrm{Re}\left\{ \omega _{n_{\varepsilon
}+n_{l}}\left( A_{j}\right) \right\} \leq \lambda _{j}\left(
1-2^{-1}\sum_{k=j+1}^{\infty }\lambda _{k}\lim_{l\rightarrow \infty }\alpha
_{n_{l},k}^{(j)}\right) \ .
\end{equation*}%
Because of (\ref{maximum}), it follows that%
\begin{equation*}
\lim_{l\rightarrow \infty }\alpha _{n_{l},k}^{(j)}=0\ ,\qquad k\in
\{j+1,\ldots ,\infty \}\ ,
\end{equation*}%
leading to $\omega _{\infty }\in K_{j}$, by (\ref{definition omegan}), (\ref%
{inequality ddfdfinequality ddfdf})\ and Lebesgue's dominated convergence
theorem. (Recall that $K_{j}$ is defined by (\ref{Kn}) for $n=j\in \mathbb{N}
$.) Since $\omega _{n_{\varepsilon }+j}$ is by construction the unique
maximizer of 
\begin{equation*}
\max_{\sigma \in K_{j}}\mathrm{Re}\left\{ \sigma \left( A_{j}\right)
\right\} =\mathrm{Re}\left\{ \omega _{n_{\varepsilon }+j}\left( A_{j}\right)
\right\}
\end{equation*}%
and (\ref{maximum}) holds true with $\omega _{\infty }\in K_{j}$, we deduce
that $\omega _{\infty }=\omega _{n_{\varepsilon }+j}$, which is thus an
exposed point of $K_{\infty }$ for any $j\in \mathbb{N}$.
\end{proof}

Our proof differs in several important aspects from the one of \cite[Theorem
4.3]{FonfLindenstrauss}, even if it has the same general structure, inspired
by Poulsen's construction \cite{Poulsen}, as already mentioned. To be more
precise, as compared to the proof of \cite[Theorem 4.3]{FonfLindenstrauss}, 
\emph{Step 0} is new and is a direct consequence of the compactness and
metrizability of $K$, a property not assumed in \cite[Theorem 4.3]%
{FonfLindenstrauss}. \emph{Step 1} to \emph{Step }$n\rightarrow \infty $ are
similar to what is done in \cite{FonfLindenstrauss}, but with the essential
difference that convex combinations are used to produce new (strongly)\
exposed points and the required bounds on $\{\lambda _{n},\sigma
_{n}\}_{n\in \mathbb{N}}$ are thus quite different. Compare Equations (\ref%
{definition omegan}) and (\ref{toto}) with the bounds on $\upsilon
_{1},\upsilon _{2},\upsilon _{3}$ given in \cite[p. 27-29]{FonfLindenstrauss}%
, at parameters $r_{1}(t),r_{2}(t),r_{3}(t)=1$. In particular, \cite[Lemma
4.2]{FonfLindenstrauss}, which is essential to prove that the Poulsen-type
construction leads to a dense set of (strongly) exposed points in \cite[%
Theorem 4.3]{FonfLindenstrauss}, is \emph{never} used here. Instead, we use
other direct estimates on convex combinations to deduce this property. This
corresponds to \emph{Step }$n=\infty $.

Note finally that \cite[Theorem 4.3]{FonfLindenstrauss} shows the density of
convex compact sets with dense set of \emph{strongly} exposed points. A
strongly exposed point $\sigma _{0}$ in some convex set $K\subseteq \mathcal{%
X}^{\ast }$ is an\emph{\ }exposed point for some $A\in \mathcal{X}$ with the
additional property that any minimizing net of the real part of $\hat{A}$
(cf. (\ref{sdfsdfkljsdlfkj})) has to converge to $\sigma _{0}$ in the weak$%
^{\ast }$ topology\footnote{%
One should not mistake the notion of strongly exposed points discussed here
for the notion of weak$^{\ast }$ strongly exposed points of \cite[Definition
5.8]{Phelps-Asplund} where a weak$^{\ast }$ strongly exposed point is a (weak%
$^{\ast }$) exposed point with the additional property that any minimizing
net of the real part of $\hat{A}$ has to converge to $\sigma _{0}$ \emph{in
the norm topology} of $\mathcal{X}^{\ast }$.}. Observe that the only weak$%
^{\ast }$ accumulation point of such a minimizing net is the exposed point $%
\sigma _{0}$, by weak$^{\ast }$ continuity of $\hat{A}$. If $K$ is weak$%
^{\ast }$-compact, this yields that any minimizing net converges to $\sigma
_{0}$ in the weak$^{\ast }$ topology. In other words, any exposed point is 
\emph{automatically} strongly exposed in all convex weak$^{\ast }$-compact
sets $K\in \mathbf{CK}(\mathcal{X}^{\ast })$.

\section{Technical Proofs\label{Well-posedness sect copy(1)}}

The aim of this section is to prove Theorems \ref{theorem sdfkjsdklfjsdklfj
copy(3)} and \ref{proposition dynamic classique II}. In fact, we prove here
stronger results than these theorems. The proof of Theorem \ref{theorem
sdfkjsdklfjsdklfj copy(3)} is done in six lemmata and two corollaries. The
proof of Theorem \ref{proposition dynamic classique II} is a direct
consequence of Corollary \ref{Corollary bije+cocylbaby copy(1)}.

We start with a useful estimate on the norm-continuous two-para%
\-%
meter family $(T_{t,s}^{\xi })_{s,t\in \mathbb{R}}$ of $\ast $-automorphisms
of $\mathcal{X}$ defined by the non-auto%
\-%
nomous evolution equations (\ref{flow baby1})-(\ref{flow baby1bis}).

\begin{lemma}[Continuity of quantum dynamics]
\label{Solution selfbaby copy(1)}\mbox{ }\newline
Let $\mathcal{X}$ be a unital $C^{\ast }$-algebra. For any $h\in C_{b}\left( 
\mathbb{R};\mathfrak{Y}\left( \mathbb{R}\right) \right) $, $\xi _{1},\xi
_{2}\in C\left( \mathbb{R};E\right) $ and $s_{1},s_{2},t_{1},t_{2}\in 
\mathbb{R}$,%
\begin{multline*}
\left\Vert T_{t_{2},s_{2}}^{\xi _{2}}-T_{t_{1},s_{1}}^{\xi _{1}}\right\Vert
_{\mathcal{B}\left( \mathcal{X}\right) }\leq 2\left( \left\vert
t_{2}-t_{1}\right\vert +\left\vert s_{2}-s_{1}\right\vert \right) \left\Vert
h\right\Vert _{C_{b}\left( \mathbb{R};\mathfrak{Y}\left( \mathbb{R}\right)
\right) } \\
+2\int_{s_{2}}^{t_{2}}\left\Vert \mathrm{D}h\left( \alpha ;\xi _{1}\left(
\alpha \right) \right) -\mathrm{D}h\left( \alpha ;\xi _{2}\left( \alpha
\right) \right) \right\Vert _{\mathcal{X}}\mathrm{d}\alpha \ .
\end{multline*}
\end{lemma}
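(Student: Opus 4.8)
The plan is to reduce the whole estimate to a single one-parameter Duhamel (variation-of-constants) computation, after telescoping the difference so that only one of the three data $t,s,\xi$ changes at a time. First I would record two elementary facts used throughout. Each $T_{t,s}^{\xi}$ is a $\ast$-automorphism of $\mathcal{X}$, hence an isometry with $\Vert T_{t,s}^{\xi}\Vert_{\mathcal{B}(\mathcal{X})}=1$. Moreover, by (\ref{bounded baby}) one has $\Vert X_{\alpha}^{\rho}\Vert_{\mathcal{B}(\mathcal{X})}\le 2\Vert h\Vert_{C_{b}(\mathbb{R};\mathfrak{Y}(\mathbb{R}))}$ uniformly in $\rho\in E$ and $\alpha\in\mathbb{R}$, while, since $X_{\alpha}^{\rho}-X_{\alpha}^{\tilde{\rho}}=i[\mathrm{D}h(\alpha;\rho)-\mathrm{D}h(\alpha;\tilde{\rho}),\cdot\,]$ and $\Vert[B,\cdot]\Vert_{\mathcal{B}(\mathcal{X})}\le 2\Vert B\Vert_{\mathcal{X}}$,
$$\Vert X_{\alpha}^{\xi_{1}(\alpha)}-X_{\alpha}^{\xi_{2}(\alpha)}\Vert_{\mathcal{B}(\mathcal{X})}\le 2\Vert \mathrm{D}h(\alpha;\xi_{1}(\alpha))-\mathrm{D}h(\alpha;\xi_{2}(\alpha))\Vert_{\mathcal{X}}\ .$$

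Next I would split the difference as
$$T_{t_{2},s_{2}}^{\xi_{2}}-T_{t_{1},s_{1}}^{\xi_{1}}=\big(T_{t_{2},s_{2}}^{\xi_{2}}-T_{t_{2},s_{2}}^{\xi_{1}}\big)+\big(T_{t_{2},s_{2}}^{\xi_{1}}-T_{t_{1},s_{2}}^{\xi_{1}}\big)+\big(T_{t_{1},s_{2}}^{\xi_{1}}-T_{t_{1},s_{1}}^{\xi_{1}}\big)\ ,$$
and estimate each summand by the fundamental theorem of calculus in the relevant variable, all derivatives existing in $\mathcal{B}(\mathcal{X})$ because the family is norm-continuously differentiable. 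For the second (upper-time) term I would use $\partial_{\alpha}T_{\alpha,s_{2}}^{\xi_{1}}=T_{\alpha,s_{2}}^{\xi_{1}}\circ X_{\alpha}^{\xi_{1}(\alpha)}$ from (\ref{flow baby1}), so that isometry of the propagators and the uniform generator bound give $\Vert T_{t_{2},s_{2}}^{\xi_{1}}-T_{t_{1},s_{2}}^{\xi_{1}}\Vert\le 2\lvert t_{2}-t_{1}\rvert\,\Vert h\Vert_{C_{b}(\mathbb{R};\mathfrak{Y}(\mathbb{R}))}$; for the third (lower-time) term I would use the companion equation $\partial_{\alpha}T_{t_{1},\alpha}^{\xi_{1}}=-X_{\alpha}^{\xi_{1}(\alpha)}\circ T_{t_{1},\alpha}^{\xi_{1}}$ from (\ref{flow baby1bis}) to obtain $\Vert T_{t_{1},s_{2}}^{\xi_{1}}-T_{t_{1},s_{1}}^{\xi_{1}}\Vert\le 2\lvert s_{2}-s_{1}\rvert\,\Vert h\Vert_{C_{b}(\mathbb{R};\mathfrak{Y}(\mathbb{R}))}$.

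The crux is the first summand, the dependence on $\xi$, and here the choice of Duhamel interpolant is the one genuinely delicate point. Because $(T_{t,s}^{\xi})$ obeys the \emph{reverse} cocycle property (\ref{babyreverse}), the naive bridge $T_{t_{2},\alpha}^{\xi_{1}}\circ T_{\alpha,s_{2}}^{\xi_{2}}$ leaves the two generator factors on \emph{opposite} sides and does not telescope. Instead I would differentiate
$$\Psi(\alpha)\doteq T_{\alpha,s_{2}}^{\xi_{1}}\circ T_{t_{2},\alpha}^{\xi_{2}}\ ,$$
which satisfies $\Psi(s_{2})=T_{t_{2},s_{2}}^{\xi_{2}}$ and $\Psi(t_{2})=T_{t_{2},s_{2}}^{\xi_{1}}$. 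Applying (\ref{flow baby1})--(\ref{flow baby1bis}) and the product rule, the two cross terms combine into
$$\Psi'(\alpha)=T_{\alpha,s_{2}}^{\xi_{1}}\circ\big(X_{\alpha}^{\xi_{1}(\alpha)}-X_{\alpha}^{\xi_{2}(\alpha)}\big)\circ T_{t_{2},\alpha}^{\xi_{2}}\ ,$$
so that the generator difference is sandwiched between two isometries. Integrating from $s_{2}$ to $t_{2}$ and taking norms yields
$$\Vert T_{t_{2},s_{2}}^{\xi_{2}}-T_{t_{2},s_{2}}^{\xi_{1}}\Vert\le 2\int_{s_{2}}^{t_{2}}\Vert \mathrm{D}h(\alpha;\xi_{1}(\alpha))-\mathrm{D}h(\alpha;\xi_{2}(\alpha))\Vert_{\mathcal{X}}\,\mathrm{d}\alpha\ .$$
Combining the three bounds by the triangle inequality gives the claim. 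I would finally remark that the estimate is written for the natural orientation $s_{2}\le t_{2}$; the general case follows verbatim upon reading the oriented integral as the integral over the interval with endpoints $s_{2}$ and $t_{2}$ (equivalently, replacing it by its absolute value), since the Duhamel identity $\Psi(t_{2})-\Psi(s_{2})=\int_{s_{2}}^{t_{2}}\Psi'$ is insensitive to the ordering of the endpoints.
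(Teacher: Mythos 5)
Your proposal is correct and follows essentially the same route as the paper: a telescoping of the difference into a $t$-change, an $s$-change and a $\xi$-change, each controlled via the evolution equations (\ref{flow baby1})--(\ref{flow baby1bis}), the isometry of the $\ast$-automorphisms, the uniform generator bound (\ref{bounded baby}), and a Duhamel interpolant of the form $T_{\alpha,s_{2}}^{\xi}\circ T_{t_{2},\alpha}^{\tilde{\xi}}$ for the $\xi$-dependence (the paper writes the same three contributions directly as the integral identity (\ref{T0}), with only an immaterial difference in which variable is frozen in the intermediate terms). Your observation about the correct ordering of the interpolant under the reverse cocycle convention, and the remark on the orientation of the integral, are both accurate.
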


\begin{proof}
Fix $h\in C_{b}\left( \mathbb{R};\mathfrak{Y}\left( \mathbb{R}\right)
\right) $, $\xi _{1},\xi _{2}\in C\left( \mathbb{R};E\right) $ and $%
s_{1},s_{2},t_{1},t_{2}\in \mathbb{R}$. Via (\ref{babyreverse}), observe that%
\begin{equation*}
T_{t_{2},s_{2}}^{\xi _{2}}-T_{t_{1},s_{1}}^{\xi _{1}}=T_{t_{1},s_{1}}^{\xi
_{1}}\circ (T_{t_{2},t_{1}}^{\xi _{1}}-\mathbf{1}_{\mathcal{X}%
})+(T_{s_{1},s_{2}}^{\xi _{1}}-\mathbf{1}_{\mathcal{X}})\circ
T_{t_{2},s_{1}}^{\xi _{1}}+T_{t_{2},s_{2}}^{\xi _{2}}-T_{t_{2},s_{2}}^{\xi
_{1}}\ .
\end{equation*}%
Using (\ref{flow baby1})-(\ref{flow baby1bis}) together with (\ref%
{babyreverse}), we thus obtain the equality 
\begin{equation}
T_{t_{2},s_{2}}^{\xi _{2}}-T_{t_{1},s_{1}}^{\xi
_{1}}=\int_{t_{1}}^{t_{2}}T_{\alpha ,s_{1}}^{\xi _{1}}\circ X_{\alpha }^{\xi
_{1}\left( \alpha \right) }\mathrm{d}\alpha +\int_{s_{2}}^{s_{1}}X_{\alpha
}^{\xi _{1}\left( \alpha \right) }\circ T_{t_{2},\alpha }^{\xi _{1}}\mathrm{d%
}\alpha +\int_{s_{2}}^{t_{2}}T_{\alpha ,s_{2}}^{\xi _{2}}\circ \left(
X_{\alpha }^{\xi _{2}\left( \alpha \right) }-X_{\alpha }^{\xi _{1}\left(
\alpha \right) }\right) \circ T_{t_{2},\alpha }^{\xi _{1}}\mathrm{d}\alpha \
.  \label{T0}
\end{equation}%
For any $\xi \in C\left( \mathbb{R};E\right) $, $(T_{t,s}^{\xi })_{s,t\in 
\mathbb{R}}$ is a two-para%
\-%
meter family of $\ast $-automorphisms of $\mathcal{X}$ and the generator $%
X_{t}^{\xi \left( t\right) }$ defined by (\ref{flow baby0}) has its operator
norm bounded by (\ref{bounded baby}). Therefore, the sum of the first two
terms in the right hand side of (\ref{T0}) is bounded by 
\begin{equation*}
2\left\vert t_{2}-t_{1}\right\vert \left\Vert h\right\Vert _{C_{b}\left( 
\mathbb{R};\mathfrak{Y}\left( \mathbb{R}\right) \right) }+2\left\vert
s_{2}-s_{1}\right\vert \left\Vert h\right\Vert _{C_{b}\left( \mathbb{R};%
\mathfrak{Y}\left( \mathbb{R}\right) \right) }\ ,
\end{equation*}%
while the last term in (\ref{T0}) is bounded by%
\begin{equation*}
2\int_{s_{2}}^{t_{2}}\left\Vert \mathrm{D}h\left( \alpha ;\xi _{1}\left(
\alpha \right) \right) -\mathrm{D}h\left( \alpha ;\xi _{2}\left( \alpha
\right) \right) \right\Vert _{\mathcal{X}}\mathrm{d}\alpha \ .
\end{equation*}
\end{proof}

We start now more specifically with the proof of Theorem \ref{theorem
sdfkjsdklfjsdklfj copy(3)}, by showing the existence and uniqueness of the
solution to the self-consistency equation. To this end, we basically use the
Banach fixed point theorem.

In contrast with Section \ref{Hausdorff Hypertopology}, note that, below,
the dual $\mathcal{X}^{\ast }$ of the unital $C^{\ast }$-algebra $\mathcal{X}
$ is always equipped with the usual norm for linear functionals on a normed
space. In particular, $\mathcal{X}^{\ast }$ is in this case a Banach space.
The set $E$ of states is a weak$^{\ast }$-compact subset of $\mathcal{X}%
^{\ast }$ in the weak$^{\ast }$ topology, but not in the norm topology,
unless $\mathcal{X}$ is finite-dimensional. This issue leads us to introduce
Conditions (a)-(b) of Theorem \ref{theorem sdfkjsdklfjsdklfj copy(3)}, that
is:

\begin{condition}
\label{Conditions (a)-(b)}\mbox{ }\newline
\emph{(a)} Let $\mathcal{X}$ be a unital $C^{\ast }$-algebra and $\mathfrak{B%
}$ a finite-dimensional real subspace of $\mathcal{X}^{\mathbb{R}}$%
.\smallskip \newline
\emph{(b)} Take $h\in C_{b}\left( \mathbb{R};\mathfrak{Y}\left( \mathbb{R}%
\right) \right) $ and a constant $D_{0}\in \mathbb{R}^{+}$ such that, for
all $t\in {\mathbb{R}}$,%
\begin{equation*}
\left\Vert \mathrm{D}h(t;\rho )-\mathrm{D}h(t;\tilde{\rho})\right\Vert _{%
\mathcal{X}}\leq D_{0}\sup_{B\in \mathfrak{B},\left\Vert B\right\Vert
=1}\left\vert \left( \rho -\tilde{\rho}\right) \left( B\right) \right\vert \
,\qquad \rho ,\tilde{\rho}\in E\ .
\end{equation*}
\end{condition}

\noindent We are now in a position to show the existence and uniqueness of
the solution to the self-consistency equation:

\begin{lemma}[Self-consistency equations]
\label{Solution selfbaby}\mbox{ }\newline
Under Condition \ref{Conditions (a)-(b)}, for any $s\in \mathbb{R}$ and $%
\rho \in E$, there is a unique solution $\varpi _{\rho ,s}\in C\left( 
\mathbb{R};E\right) $ to the following equation in $\xi \in C\left( \mathbb{R%
};E\right) $: 
\begin{equation}
\forall t\in {\mathbb{R}}:\qquad \xi \left( t\right) =\rho \circ
T_{t,s}^{\xi }\ .  \label{self consitence equation1baby}
\end{equation}%
Moreover, $\varpi _{\rho ,s}(t)=\varpi _{\varpi _{\rho ,s}(r),r}(t)$ for any 
$r,s,t\in {\mathbb{R}}$.
\end{lemma}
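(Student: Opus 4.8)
**The plan is to set up a fixed point argument on $C(\mathbb{R};E)$, endowed with suitable weighted sup-type seminorms, and apply the Banach fixed point theorem.**

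First I would fix $s\in\mathbb{R}$ and $\rho\in E$ and define the map $\Phi_{\rho,s}:C(\mathbb{R};E)\to C(\mathbb{R};E)$ by
\begin{equation*}
\left[\Phi_{\rho,s}(\xi)\right](t)\doteq \rho\circ T_{t,s}^{\xi}\ ,\qquad t\in\mathbb{R}\ .
\end{equation*}
A solution to \eqref{self consitence equation1baby} is precisely a fixed point of $\Phi_{\rho,s}$. I would first check that $\Phi_{\rho,s}$ is well-defined, i.e., that $\Phi_{\rho,s}(\xi)\in C(\mathbb{R};E)$: since $T_{t,s}^{\xi}$ is a $\ast$-automorphism of $\mathcal{X}$, the functional $\rho\circ T_{t,s}^{\xi}$ is again a state, so it lies in $E$; its continuity in $t$ follows directly from Lemma \ref{Solution selfbaby copy(1)} together with the continuity of $h$ and $\xi$.

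The heart of the argument is a contraction estimate. The natural quantity to control is the distance measured only through the finite-dimensional subspace $\mathfrak{B}$ appearing in Condition \ref{Conditions (a)-(b)}, namely $\sup_{B\in\mathfrak{B},\|B\|=1}|(\xi_1(t)-\xi_2(t))(B)|$, since Condition (b) bounds $\|\mathrm{D}h(t;\rho)-\mathrm{D}h(t;\tilde\rho)\|_{\mathcal{X}}$ in exactly these terms. For two functions $\xi_1,\xi_2\in C(\mathbb{R};E)$ I would estimate $|(\Phi_{\rho,s}(\xi_1)(t)-\Phi_{\rho,s}(\xi_2)(t))(B)|$ for $B\in\mathfrak{B}$ using Lemma \ref{Solution selfbaby copy(1)}, which reduces the difference of the automorphisms to an integral of $\|\mathrm{D}h(\alpha;\xi_1(\alpha))-\mathrm{D}h(\alpha;\xi_2(\alpha))\|_{\mathcal{X}}$; feeding in Condition (b) yields a bound of the form
\begin{equation*}
\sup_{B\in\mathfrak{B},\|B\|=1}\left|\left(\Phi_{\rho,s}(\xi_1)(t)-\Phi_{\rho,s}(\xi_2)(t)\right)(B)\right|\leq 2D_0\left|\int_{s}^{t}\sup_{B\in\mathfrak{B},\|B\|=1}\left|\left(\xi_1(\alpha)-\xi_2(\alpha)\right)(B)\right|\mathrm{d}\alpha\right|\ .
\end{equation*}
Iterating this inequality (a Picard/Gr\"onwall-type argument), the $n$-fold composite $\Phi_{\rho,s}^{n}$ contracts the weighted seminorm by a factor $(2D_0|t-s|)^{n}/n!\to 0$, so some power of $\Phi_{\rho,s}$ is a strict contraction with respect to this seminorm on each compact time interval. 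A standard exhaustion of $\mathbb{R}$ by compact intervals, plus the completeness of $C([-N,N];E)$ in the relevant metric, then gives a unique fixed point $\varpi_{\rho,s}$, which I would glue together across intervals by uniqueness.

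\textbf{The main obstacle} is that $E$ is only weak$^\ast$-compact, not norm-compact, so the seminorm $\sup_{B\in\mathfrak{B},\|B\|=1}|(\cdot)(B)|$ is genuinely only a seminorm and does not by itself generate a complete metric on $C(\mathbb{R};E)$ separating points of $E$. I would handle this by carrying out the contraction in the complete metric space $C(\mathbb{R};E)$ using the topology of uniform convergence on compacta inherited from the norm on $\mathcal{X}^{\ast}$ (so that completeness is clear), while using the $\mathfrak{B}$-seminorm estimate above only to drive the iteration; the finite-dimensionality of $\mathfrak{B}$ is exactly what lets Condition (b) close the loop, since on the finite-dimensional $\mathfrak{B}$ all norms are equivalent and the relevant functionals are genuinely controlled. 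Finally, the cocycle-type identity $\varpi_{\rho,s}(t)=\varpi_{\varpi_{\rho,s}(r),r}(t)$ follows from uniqueness: both sides solve the same self-consistency equation \eqref{self consitence equation1baby} with initial datum $\varpi_{\rho,s}(r)$ at time $r$, using the reverse cocycle property \eqref{babyreverse} of $(T_{t,s}^{\xi})_{s,t\in\mathbb{R}}$ to rewrite $T_{t,s}^{\varpi_{\rho,s}}=T_{r,s}^{\varpi_{\rho,s}}\circ T_{t,r}^{\varpi_{\rho,s}}$, so by the uniqueness part they must coincide.
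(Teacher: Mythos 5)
Your proposal is correct and follows essentially the same route as the paper: a Banach fixed-point argument driven by the estimate of Lemma \ref{Solution selfbaby copy(1)} combined with Condition \ref{Conditions (a)-(b)} (b), with the identity $\varpi _{\rho ,s}(t)=\varpi _{\varpi _{\rho ,s}(r),r}(t)$ deduced from uniqueness and the reverse cocycle property (\ref{babyreverse}), exactly as in the paper. The only cosmetic difference is that the paper contracts on a single short interval $[s-\epsilon ,s+\epsilon ]$ (with $\epsilon $ uniform in $s$ and $\rho $) and then continues the solution to all of $\mathbb{R}$, whereas you make some power of the map a contraction on each compact interval directly; the paper also spells out, as you do implicitly via your choice of metric, that any solution in $C(\mathbb{R};E)$ is automatically norm-continuous, so that uniqueness indeed holds among all weak$^{\ast }$-continuous solutions.
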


\begin{proof}
We prove the existence and uniqueness of a solution to (\ref{self consitence
equation1baby}) by using the Banach fixed point theorem, similar to the
Picard-Lindel\"{o}f theory for ODEs, keeping in mind that $E$ is endowed
with the weak$^{\ast }$ topology: Pick a function $h\in C_{b}\left( \mathbb{R%
};\mathfrak{Y}\left( \mathbb{R}\right) \right) $, an initial time $s\in 
\mathbb{R}$ and a state $\rho \in E$. For $\epsilon \in \mathbb{R}^{+}$,
define the map $\mathfrak{F}$ from 
\begin{equation*}
\mathcal{C}_{\epsilon ,s}\doteq C\left( [s-\epsilon ,s+\epsilon ];\mathcal{X}%
^{\ast }\right) \cap C\left( [s-\epsilon ,s+\epsilon ];E\right)
\end{equation*}%
to itself by 
\begin{equation}
\mathfrak{F}\left( \xi \right) \left( t\right) \doteq \rho \circ
T_{t,s}^{\xi }\ ,\qquad t\in \left[ s-\epsilon ,s+\epsilon \right] \ .
\label{labelbaby}
\end{equation}%
The continuity of $\mathfrak{F}\left( \xi \right) $ in the Banach space $%
C\left( [s-\epsilon ,s+\epsilon ];\mathcal{X}^{\ast }\right) $ can directly
be read from Lemma \ref{Solution selfbaby copy(1)} and Condition \ref%
{Conditions (a)-(b)} (b). The same also yields the contractivity of $%
\mathfrak{F}$ for sufficiently small $\epsilon \in \mathbb{R}^{+}$,
uniformly with respect to $s\in \mathbb{R}$ and $\rho \in E$. Using the
Banach fixed point theorem, there is a unique solution $\varpi _{\rho ,s}$
to $\mathfrak{F}\left( \xi \right) =\xi $ in $\mathcal{C}_{\epsilon ,s}$. By
exactly the same arguments, observe that, for each $r\in \left[ s-\epsilon
,s+\epsilon \right] $, the following self-consistency equation 
\begin{equation}
\forall t\in \lbrack r-\tilde{\epsilon},r+\tilde{\epsilon}]:\qquad \xi
\left( t\right) =\varpi _{\rho ,s}\left( r\right) \circ T_{t,r}^{\xi }\ ,
\label{solutionplus}
\end{equation}%
has also a unique solution $\varpi _{\varpi _{\rho ,s}(r),r}$ in $\mathcal{C}%
_{\tilde{\epsilon},r}$ for any $\tilde{\epsilon}\in (0,\epsilon ]$. By the
reverse cocycle property (\ref{babyreverse}), at fixed $s\in \mathbb{R}$ and 
$\rho \in E$, $\varpi _{\rho ,s}$ solves (\ref{solutionplus}) for any $r\in
(s-\epsilon ,s+\epsilon )$ and $t\in \lbrack s-\tilde{\epsilon},s+\tilde{%
\epsilon}]$ with $\tilde{\epsilon}=\epsilon -|s-r|\in \mathbb{R}^{+}$,
whence 
\begin{equation}
\varpi _{\rho ,s}(t)=\varpi _{\varpi _{\rho ,s}(r),r}(t)\ ,\qquad r\in
(s-\epsilon ,s+\epsilon ),\ t\in \lbrack s-\tilde{\epsilon},s+\tilde{\epsilon%
}]\ .  \label{causalite}
\end{equation}

Now, assume the existence and uniqueness of a solution $\varpi _{\rho ,s}$
to $\mathfrak{F}\left( \xi \right) =\xi $ in $\mathcal{C}_{\epsilon _{0},s}$
for some parameter $\epsilon _{0}\in \mathbb{R}^{+}$. Take $r\in (s-\epsilon
_{0},s-\epsilon _{0}+\epsilon )\cup (s+\epsilon _{0}-\epsilon ,s+\epsilon
_{0})$. By combining the existence and uniqueness of a solution $\varpi
_{\varpi _{\rho ,s}(r),r}$ to (\ref{solutionplus}) in $\mathcal{C}_{\tilde{%
\epsilon},r}$ together with the reverse cocycle property (\ref{babyreverse}%
), we deduce that 
\begin{equation*}
\varpi _{\rho ,s}(t)=\varpi _{\varpi _{\rho ,s}(r),r}(t)\ ,\qquad t\in
(s-\epsilon _{0},s+\epsilon _{0})\ ,
\end{equation*}%
as well as the existence of a unique solution $\varpi _{\rho ,s}$ to $%
\mathfrak{F}\left( \xi \right) =\xi $ in $\mathcal{C}_{\epsilon
_{0}+\epsilon ,s}$. As a consequence, one can infer from a contradiction
argument the existence and uniqueness of a solution in $C\left( \mathbb{R};%
\mathcal{X}^{\ast }\right) \cap C\left( \mathbb{R};E\right) $ to (\ref{self
consitence equation1baby}). Moreover, this solution must satisfy the
equality $\varpi _{\rho ,s}(t)=\varpi _{\varpi _{\rho ,s}(r),r}(t)$ for any $%
r,s,t\in {\mathbb{R}}$.

Finally, to prove uniqueness in $C\left( \mathbb{R};E\right) $, we observe
from Lemma \ref{Solution selfbaby copy(1)} and Condition \ref{Conditions
(a)-(b)} that any solution in $C\left( \mathbb{R};E\right) $ (i.e.,
continuous with respect to the weak$^{\ast }$ topology in $E$) to (\ref{self
consitence equation1baby}) is automatically in $C\left( \mathbb{R};\mathcal{X%
}^{\ast }\right) $ (i.e., continuous with respect to the norm topology in $%
\mathcal{X}^{\ast }$).
\end{proof}

\begin{corollary}[Bijectivity of the solution to the self-consistency
equation]
\label{bijectivitybaby}\mbox{ }\newline
Under Condition \ref{Conditions (a)-(b)}, for any $s,t\in \mathbb{R}$, $%
\varpi _{s}\left( t\right) \equiv (\varpi _{\rho ,s}\left( t\right) )_{\rho
\in E}$ is a bijective map from $E$ to itself.
\end{corollary}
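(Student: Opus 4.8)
The plan is to prove bijectivity by exhibiting an explicit inverse built from the same self-consistency machinery, running time backwards. Fix $s,t\in\mathbb{R}$. By Lemma \ref{Solution selfbaby}, for every $\rho\in E$ we have the solution $\varpi_{\rho,s}\in C(\mathbb{R};E)$ to (\ref{self consitence equation1baby}), so that $\varpi_{\rho,s}(t)=\rho\circ T^{\varpi_{\rho,s}}_{t,s}$. The key structural fact I would exploit is the identity $\varpi_{\rho,s}(t)=\varpi_{\varpi_{\rho,s}(r),r}(t)$ valid for all $r,s,t\in\mathbb{R}$, already established in Lemma \ref{Solution selfbaby}. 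Specializing this to $r=t$ gives a candidate inverse: the map $\varpi_t(s)\colon E\to E$ sending $\upsilon\mapsto\varpi_{\upsilon,t}(s)$ should undo $\varpi_s(t)$.

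First I would prove surjectivity and injectivity simultaneously via this cocycle-type relation. For injectivity, suppose $\varpi_{\rho_1,s}(t)=\varpi_{\rho_2,s}(t)=:\upsilon$ for two states $\rho_1,\rho_2\in E$. Applying the identity with the roles reorganized, namely reading off $\varpi_{\upsilon,t}(s)$ in two ways, I would use that $\varpi_{\varpi_{\rho_i,s}(t),t}(s)=\varpi_{\rho_i,s}(s)=\rho_i$, where the last equality holds because $\varpi_{\rho_i,s}(s)=\rho_i\circ T^{\varpi_{\rho_i,s}}_{s,s}=\rho_i\circ\mathbf{1}_{\mathcal X}=\rho_i$ by the initial condition $T^{\xi}_{s,s}=\mathbf{1}_{\mathcal X}$ in (\ref{flow baby1}). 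Since $\varpi_{\rho_1,s}(t)=\varpi_{\rho_2,s}(t)$ forces $\varpi_{\varpi_{\rho_1,s}(t),t}(s)=\varpi_{\varpi_{\rho_2,s}(t),t}(s)$, we conclude $\rho_1=\rho_2$. For surjectivity, given any target $\upsilon\in E$, I would set $\rho\doteq\varpi_{\upsilon,t}(s)$ and verify, again through the identity $\varpi_{\rho,s}(t)=\varpi_{\varpi_{\upsilon,t}(s),s}(t)=\varpi_{\upsilon,t}(t)=\upsilon$, that $\rho$ is a preimage; the final step uses $\varpi_{\upsilon,t}(t)=\upsilon$ by the same initial-condition argument.

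The main obstacle is making sure the instances of the composition identity are legitimate, since Lemma \ref{Solution selfbaby} states $\varpi_{\rho,s}(t)=\varpi_{\varpi_{\rho,s}(r),r}(t)$ for arbitrary $r,s,t$, and I must apply it with the base point $r$ chosen to equal either $s$ or $t$ rather than an intermediate time. I expect this is harmless because the lemma's conclusion is quantified over all real triples, but I would state explicitly which substitution is used at each step to avoid a circularity: the preimage $\rho=\varpi_{\upsilon,t}(s)$ is a well-defined state precisely because $\varpi_{\upsilon,t}\in C(\mathbb{R};E)$ takes values in $E$, so $\rho\in E$ and the construction stays inside the state space. A secondary point requiring care is that injectivity and surjectivity together yield bijectivity of $\varpi_s(t)$ as a set map; continuity of the inverse (hence that $\varpi_s(t)\in\mathrm{Aut}(E)$) is \emph{not} needed here, since Corollary \ref{bijectivitybaby} asserts only bijectivity, and the stronger membership in $\mathrm{Aut}(E)$ follows separately from weak$^\ast$-compactness of $E$ as noted before Theorem \ref{theorem sdfkjsdklfjsdklfj copy(3)}. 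Thus the proof reduces entirely to the two verifications above, each a one-line application of the Lemma \ref{Solution selfbaby} identity combined with the initial condition $\varpi_{\rho,s}(s)=\rho$.
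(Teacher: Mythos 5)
Your proof is correct and takes essentially the same approach as the paper, whose own argument is the one-line observation that bijectivity is a straightforward consequence of the identity $\varpi_{\rho,s}(t)=\varpi_{\varpi_{\rho,s}(r),r}(t)$ from Lemma \ref{Solution selfbaby}. You have simply made explicit the two substitutions (intermediate time $r=t$ for injectivity, and base point $(\upsilon,t)$ with $r=s$ for surjectivity) together with the initial condition $\varpi_{\rho,s}(s)=\rho$, which is exactly what the paper's terse proof implicitly relies on.
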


\begin{proof}
This is a straightforward consequence of Lemma \ref{Solution selfbaby}, in
particular the equality $\varpi _{\rho ,s}(t)=\varpi _{\varpi _{\rho
,s}(r),r}(t)$ for any $r,s,t\in {\mathbb{R}}$.
\end{proof}

\begin{lemma}[Differentiability of the solution -- I]
\label{Differentiability2baby}\mbox{ }\newline
Under Condition \ref{Conditions (a)-(b)}, for $s\in \mathbb{R}$ and $\rho
\in E$, $\varpi _{\rho ,s}\in C^{1}\left( \mathbb{R};\mathcal{X}^{\ast
}\right) $ with derivative given by%
\begin{equation*}
\partial _{t}\varpi _{\rho ,s}\left( t\right) =\rho \circ T_{t,s}^{\varpi
_{\rho ,s}}\circ X_{t}^{\varpi _{\rho ,s}\left( t\right) }\ ,\qquad t\in 
\mathbb{R}\ .
\end{equation*}
\end{lemma}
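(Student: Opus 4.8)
The plan is to differentiate the self-consistency solution $\varpi_{\rho,s}$ by relating it, via Lemma \ref{Solution selfbaby}, to the explicitly differentiable quantum flow $T_{t,s}^{\xi}$. The key observation is that $\varpi_{\rho,s}$ is, by definition, the unique fixed point satisfying $\varpi_{\rho,s}(t) = \rho \circ T_{t,s}^{\varpi_{\rho,s}}$, where the family $(T_{t,s}^{\xi})_{s,t\in\mathbb{R}}$ solves the non-autonomous evolution equation (\ref{flow baby1}) with bounded norm-continuous generator. So once we know $\varpi_{\rho,s}$ is a fixed \emph{continuous} function of $t$, we can treat the right-hand side as $\rho \circ T_{t,s}^{\xi}$ with $\xi = \varpi_{\rho,s}$ \emph{held fixed}, and exploit the smoothness of $t \mapsto T_{t,s}^{\xi}$ in operator norm.

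First I would record that, for any fixed $\xi \in C(\mathbb{R};E)$, the map $t \mapsto T_{t,s}^{\xi}$ is norm-differentiable with $\partial_t T_{t,s}^{\xi} = T_{t,s}^{\xi} \circ X_t^{\xi(t)}$; this is exactly (\ref{flow baby1}), valid because $(X_t^{\xi(t)})_{t\in\mathbb{R}}$ is a norm-continuous family of bounded operators (as noted after (\ref{bounded baby})), so the Dyson-series solution is $C^1$ in $t$. Then I would set $\xi = \varpi_{\rho,s}$, which is the actual solution guaranteed by Lemma \ref{Solution selfbaby} and lies in $C(\mathbb{R};E) \subseteq C(\mathbb{R};\mathcal{X}^{\ast})$. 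Since $\varpi_{\rho,s}(t) = \rho \circ T_{t,s}^{\varpi_{\rho,s}}$ for all $t$, and the right-hand side is the composition of the fixed bounded functional $\rho$ with a norm-$C^1$ family of automorphisms, the chain rule for Banach-space-valued maps gives
\begin{equation*}
\partial_t \varpi_{\rho,s}(t) = \rho \circ \partial_t T_{t,s}^{\varpi_{\rho,s}} = \rho \circ T_{t,s}^{\varpi_{\rho,s}} \circ X_t^{\varpi_{\rho,s}(t)}\ ,
\end{equation*}
which is the claimed formula. Continuity of this derivative in $t$ then follows from the joint continuity established in Lemma \ref{Solution selfbaby copy(1)} together with the norm-continuity of $t \mapsto X_t^{\varpi_{\rho,s}(t)}$, yielding $\varpi_{\rho,s} \in C^1(\mathbb{R};\mathcal{X}^{\ast})$.

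The subtle point I would be careful about is that the generator $X_t^{\xi(t)}$ depends on $\xi$, so naively one cannot differentiate $T_{t,s}^{\varpi_{\rho,s}}$ as if $\varpi_{\rho,s}$ were frozen. The resolution — and the step I expect to need the most care — is to note that in the expression $\rho \circ T_{t,s}^{\varpi_{\rho,s}}$ the superscript $\varpi_{\rho,s}$ is a \emph{fixed} element of $C(\mathbb{R};E)$ once the self-consistency problem is solved; we are differentiating the two-parameter family $T_{\cdot,s}^{\xi}$ at a fixed $\xi$, not re-solving the fixed-point equation at each $t$. Thus the $t$-derivative acts only on the explicit $t$-dependence of $(T_{t,s}^{\xi})_{t}$ through (\ref{flow baby1}), and the self-consistency simply pins down \emph{which} $\xi$ to use. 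Making this distinction precise — that the solution's own continuity legitimizes freezing it as the parameter while differentiating in the flow variable — is the heart of the argument, and everything else reduces to the norm estimates already available from Lemma \ref{Solution selfbaby copy(1)} and the uniform bound (\ref{bounded baby}).
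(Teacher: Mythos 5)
Your proposal is correct and follows exactly the paper's route: the paper's own proof is the one-line observation that the lemma ``is a direct consequence of Equation (\ref{flow baby1}) together with Lemma \ref{Solution selfbaby}'', i.e., freeze $\xi=\varpi_{\rho ,s}$ as a fixed element of $C(\mathbb{R};E)$, differentiate $t\mapsto T_{t,s}^{\xi }$ via the evolution equation, and compose with the bounded functional $\rho $. Your elaboration of why the superscript may legitimately be frozen is precisely the (implicit) content of the paper's argument, so there is nothing to add.
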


\begin{proof}
This is a direct consequence of Equation (\ref{flow baby1}) together with
Lemma \ref{Solution selfbaby}.
\end{proof}

\begin{lemma}[Continuity with respect to the initial condition]
\label{lemma contnuitybaby}\mbox{ }\newline
Under Condition \ref{Conditions (a)-(b)}, for any $s,t\in \mathbb{R}$, $%
\varpi _{s}\left( t\right) \equiv (\varpi _{\rho ,s}\left( t\right) )_{\rho
\in E}\in C\left( E;E\right) $.
\end{lemma}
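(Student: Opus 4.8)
The plan is to prove that for fixed times $s,t\in\mathbb{R}$, the map $\varpi_{s}(t)\equiv(\varpi_{\rho,s}(t))_{\rho\in E}$ is weak$^{\ast}$-continuous as a function of the initial state $\rho\in E$. The natural approach is to compare two solutions $\varpi_{\rho,s}$ and $\varpi_{\tilde\rho,s}$ of the self-consistency equation \eqref{self consitence equation1baby} for two initial states $\rho,\tilde\rho\in E$, and to show that their difference at time $t$ is controlled by the weak$^{\ast}$ distance between $\rho$ and $\tilde\rho$, measured through the finite-dimensional subspace $\mathfrak{B}$. First I would fix $A\in\mathcal{X}$ and write
\begin{equation*}
\left(\varpi_{\rho,s}(t)-\varpi_{\tilde\rho,s}(t)\right)(A)=\rho\circ T_{t,s}^{\varpi_{\rho,s}}(A)-\tilde\rho\circ T_{t,s}^{\varpi_{\tilde\rho,s}}(A)\ ,
\end{equation*}
and split this into $\left(\rho-\tilde\rho\right)\circ T_{t,s}^{\varpi_{\rho,s}}(A)$ plus $\tilde\rho\circ\left(T_{t,s}^{\varpi_{\rho,s}}-T_{t,s}^{\varpi_{\tilde\rho,s}}\right)(A)$.

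The first term is immediately small once $\rho$ is weak$^{\ast}$-close to $\tilde\rho$, provided one observes that $T_{t,s}^{\varpi_{\rho,s}}(A)$ lies in a norm-bounded (hence, up to $\varepsilon$, finite-dimensionally approximable) set; but a cleaner route is to control everything through $\mathfrak{B}$. For the second term I would invoke Lemma \ref{Solution selfbaby copy(1)}, which bounds $\|T_{t,s}^{\xi_{1}}-T_{t,s}^{\xi_{2}}\|_{\mathcal{B}(\mathcal{X})}$ by $2\int_{s}^{t}\|\mathrm{D}h(\alpha;\xi_{1}(\alpha))-\mathrm{D}h(\alpha;\xi_{2}(\alpha))\|_{\mathcal{X}}\,\mathrm{d}\alpha$ with $\xi_{1}=\varpi_{\rho,s}$ and $\xi_{2}=\varpi_{\tilde\rho,s}$. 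By Condition \ref{Conditions (a)-(b)}(b), the integrand is dominated by $D_{0}\sup_{B\in\mathfrak{B},\|B\|=1}|(\varpi_{\rho,s}(\alpha)-\varpi_{\tilde\rho,s}(\alpha))(B)|$. Thus, setting $\Phi(\alpha)\doteq\sup_{B\in\mathfrak{B},\|B\|=1}|(\varpi_{\rho,s}(\alpha)-\varpi_{\tilde\rho,s}(\alpha))(B)|$, the two estimates combine to give, for any test element $B\in\mathfrak{B}$ of unit norm,
\begin{equation*}
\left|\left(\varpi_{\rho,s}(t)-\varpi_{\tilde\rho,s}(t)\right)(B)\right|\leq\left|\left(\rho-\tilde\rho\right)\circ T_{t,s}^{\varpi_{\rho,s}}(B)\right|+2D_{0}\left|\int_{s}^{t}\Phi(\alpha)\,\mathrm{d}\alpha\right|\ .
\end{equation*}
Taking the supremum over such $B$ and using that $T_{t,s}^{\varpi_{\rho,s}}$ maps the (finite-dimensional) space $\mathfrak{B}$ into $\mathcal{X}$ with a norm-bound uniform in the parameters (from \eqref{bounded baby} and the Dyson-series bound), I would convert the first term into a quantity of the form $C\,\sup_{B'\in\mathfrak{B}',\|B'\|=1}|(\rho-\tilde\rho)(B')|$ for a suitable finite-dimensional $\mathfrak{B}'$; the key point is that this first term tends to $0$ as $\tilde\rho\to\rho$ weak$^{\ast}$, since convergence against finitely many fixed elements of $\mathcal{X}$ is exactly weak$^{\ast}$ convergence tested on a finite-dimensional subspace.

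At this stage I would apply Grönwall's inequality to $\Phi$: the displayed bound (together with its analogue for all intermediate times) shows $\Phi(t)\leq\Phi_{0}+2D_{0}|\int_{s}^{t}\Phi(\alpha)\,\mathrm{d}\alpha|$, where $\Phi_{0}$ is the contribution from the first ("initial-condition") term, uniformly small when $\tilde\rho$ is weak$^{\ast}$-close to $\rho$. Grönwall then yields $\Phi(t)\leq\Phi_{0}\,\mathrm{e}^{2D_{0}|t-s|}$, so $|(\varpi_{\rho,s}(t)-\varpi_{\tilde\rho,s}(t))(B)|$ is small for every $B\in\mathfrak{B}$; a final $\varepsilon$-approximation of a general $A\in\mathcal{X}$ by elements reachable through the finite-dimensional mechanism—or, more simply, repeating the first-term estimate directly for arbitrary fixed $A$—extends this to weak$^{\ast}$ continuity of $\rho\mapsto\varpi_{\rho,s}(t)$ on all of $\mathcal{X}$.

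The main obstacle I anticipate is the bookkeeping around the first term $\left(\rho-\tilde\rho\right)\circ T_{t,s}^{\varpi_{\rho,s}}(A)$: because $T_{t,s}^{\varpi_{\rho,s}}$ itself depends on $\rho$, one must be careful that the weak$^{\ast}$ smallness is genuinely uniform and does not secretly require norm-closeness of $\rho$ and $\tilde\rho$ (which would fail on infinite-dimensional $\mathcal{X}$). The resolution is that at fixed $\rho$ the single operator $T_{t,s}^{\varpi_{\rho,s}}$ is a \emph{fixed} bounded map, so $A\mapsto T_{t,s}^{\varpi_{\rho,s}}(A)$ is a fixed element of $\mathcal{X}$ and $(\rho-\tilde\rho)$ is tested against it in the weak$^{\ast}$ sense; only the \emph{coupling} term (the second one) feeds back through $\mathfrak{B}$, and that is precisely where Condition \ref{Conditions (a)-(b)}(b) and the finite dimensionality of $\mathfrak{B}$ are used to close the Grönwall loop. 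Everything else is routine application of Lemma \ref{Solution selfbaby copy(1)}, the uniform bound \eqref{bounded baby}, and Lemma \ref{Solution selfbaby}.
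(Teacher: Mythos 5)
Your proposal is correct and follows essentially the same route as the paper's proof: the same decomposition of $\varpi _{\rho ,s}\left( t\right) -\varpi _{\tilde{\rho},s}\left( t\right) $ into an initial-condition term and a coupling term, the same use of Lemma \ref{Solution selfbaby copy(1)} together with Condition \ref{Conditions (a)-(b)} (b) to feed the coupling term back through the unit sphere of the finite-dimensional space $\mathfrak{B}$, and the same observation that the initial-condition term is tested weak$^{\ast }$ against fixed elements of $\mathcal{X}$. The only difference is in how the self-referential estimate is closed: you apply Gr\"{o}nwall's inequality to obtain the global-in-time bound $\Phi \left( t\right) \leq \Phi _{0}\mathrm{e}^{2D_{0}\left\vert t-s\right\vert }$ in one step, whereas the paper absorbs the term $2D_{0}\epsilon \mathbf{X}\left( \epsilon \right) $ into the left-hand side for sufficiently small $\epsilon $ and then extends the resulting local continuity to all times via the cocycle identity $\varpi _{\rho ,s}(t)=\varpi _{\varpi _{\rho ,s}(r),r}(t)$ of Lemma \ref{Solution selfbaby} -- two interchangeable ways of doing the same thing.
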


\begin{proof}
Take $s\in \mathbb{R}$ and two states $\rho _{1},\rho _{2}\in E$. Then,
define the quantity%
\begin{equation*}
\mathbf{X}\left( \epsilon \right) \doteq \max_{t\in \left[ s-\epsilon
,s+\epsilon \right] }\max_{B\in \mathfrak{B},\left\Vert B\right\Vert
=1}\left\vert \left( \varpi _{\rho _{1},s}\left( t\right) -\varpi _{\rho
_{2},s}\left( t\right) \right) \left( B\right) \right\vert \ ,\qquad
\epsilon \in \mathbb{R}^{+}\ .
\end{equation*}%
Because $\varpi _{\rho ,s}\left( t\right) =\rho \circ T_{t,s}^{\varpi _{\rho
,s}}$ (Lemma \ref{Solution selfbaby}) with $(T_{t,s}^{\xi })_{s,t\in \mathbb{%
R}}$ being a family of $\ast $-automorphisms of $\mathcal{X}$ for any $\xi
\in C\left( \mathbb{R};E\right) $, this positive number is bounded by 
\begin{equation}
\mathbf{X}\left( \epsilon \right) \leq \max_{B\in \mathfrak{B},\left\Vert
B\right\Vert =1}\left\vert \left( \rho _{1}-\rho _{2}\right) \circ
T_{t,s}^{\varpi _{\rho _{1},s}}\left( B\right) \right\vert +\mathbf{Y}\left(
\epsilon \right) \ ,  \label{X0}
\end{equation}%
where 
\begin{equation}
\mathbf{Y}\left( \epsilon \right) \doteq \max_{t\in \left[ s-\epsilon
,s+\epsilon \right] }\left\Vert T_{t,s}^{\varpi _{\rho
_{1},s}}-T_{t,s}^{\varpi _{\rho _{2},s}}\right\Vert _{\mathcal{B}\left( 
\mathcal{X}\right) }\ .  \label{Y0}
\end{equation}%
By Lemma \ref{Solution selfbaby copy(1)}, the last quantity is bounded by%
\begin{equation}
\mathbf{Y}\left( \epsilon \right) \leq 2\max_{t\in \left[ s-\epsilon
,s+\epsilon \right] }\left\{ \int_{s}^{t}\left\Vert \mathrm{D}h\left( \alpha
;\varpi _{\rho _{1},s}\left( \alpha \right) \right) -\mathrm{D}h\left(
\alpha ;\varpi _{\rho _{2},s}\left( \alpha \right) \right) \right\Vert _{%
\mathcal{X}}\mathrm{d}\alpha \right\} \ ,  \label{X}
\end{equation}%
which, together with Condition \ref{Conditions (a)-(b)} (b), leads to 
\begin{equation}
\mathbf{Y}\left( \epsilon \right) \leq 2D_{0}\epsilon \mathbf{X}\left(
\epsilon \right) \ ,\qquad \epsilon \in \mathbb{R}^{+}\ .
\label{gronwalbaby0}
\end{equation}%
By Inequality (\ref{X0}), it follows that 
\begin{equation}
\left( 1-2D_{0}\epsilon \right) \mathbf{X}\left( \epsilon \right) \leq
\max_{B\in \mathfrak{B},\left\Vert B\right\Vert =1}\left\vert \left( \rho
_{1}-\rho _{2}\right) \circ T_{t,s}^{\varpi _{\rho _{1},s}}\left( B\right)
\right\vert \ ,\qquad \epsilon \in \mathbb{R}^{+}\ .  \label{gronwalbaby}
\end{equation}%
Now, we combine $\varpi _{\rho ,s}\left( t\right) =\rho \circ
T_{t,s}^{\varpi _{\rho ,s}}$ with (\ref{Y0}) and (\ref{gronwalbaby0})-(\ref%
{gronwalbaby}) to get the inequality%
\begin{eqnarray}
\left\vert \varpi _{\rho _{1},s}\left( t\right) \left( A\right) -\varpi
_{\rho _{2},s}\left( t\right) \left( A\right) \right\vert &\leq &\left\vert
\left( \rho _{1}-\rho _{2}\right) \circ T_{t,s}^{\varpi _{\rho
_{1},s}}\left( A\right) \right\vert  \label{gronwalbabyfinal} \\
&&+\frac{2D_{0}\epsilon \left\Vert A\right\Vert _{\mathcal{X}}}{\left(
1-2D_{0}\epsilon \right) }\max_{B\in \mathfrak{B},\left\Vert B\right\Vert
=1}\left\vert \left( \rho _{1}-\rho _{2}\right) \circ T_{t,s}^{\varpi _{\rho
_{1},s}}\left( B\right) \right\vert  \notag
\end{eqnarray}%
for any $s\in \mathbb{R}$, $\rho _{1},\rho _{2}\in E$, $A\in \mathcal{X}$, $%
\epsilon \in \left( 0,D_{0}/2\right) $ and $t\in \left[ s-\epsilon
,s+\epsilon \right] $. By finite dimensionality of $\mathfrak{B}$ (Condition %
\ref{Conditions (a)-(b)} (a)), the norm and weak$^{\ast }$ topologies of $%
\mathfrak{B}^{\ast }$ are the same and the weak$^{\ast }$ continuity
property of $\varpi _{s}\left( t\right) $ follows from (\ref%
{gronwalbabyfinal}) for any times $s\in \mathbb{R}$ and $t\in \left[
s-\epsilon ,s+\epsilon \right] $, provided $\epsilon <D_{0}/2$. Using now
the equality $\varpi _{\rho ,s}(t)=\varpi _{\varpi _{\rho ,s}(r),r}(t)$ for
any $r,s,t\in {\mathbb{R}}$ (Lemma \ref{Solution selfbaby}), we thus deduce
the weak$^{\ast }$ continuity of $\varpi _{s}\left( t\right) $ for all times 
$s,t\in \mathbb{R}$.
\end{proof}

\begin{corollary}[Solution to the self-consistency equation as homeomorphism
family]
\label{Corollary bije+cocylbaby}\mbox{ }\newline
Under Condition \ref{Conditions (a)-(b)}, at any fixed times $s,t\in \mathbb{%
R}$, $\varpi _{s}\left( t\right) \equiv (\varpi _{\rho ,s}\left( t\right)
)_{\rho \in E}\in \mathrm{Aut}\left( E\right) $, i.e., $\varpi _{s}\left(
t\right) $ is a automorphism of the state space $E$. Moreover, it satisfies
a cocycle property:%
\begin{equation}
\forall s,r,t\in \mathbb{R}:\qquad \varpi _{s}\left( t\right) =\varpi
_{r}\left( t\right) \circ \varpi _{s}\left( r\right) \ .
\label{cocycle propertybaby}
\end{equation}
\end{corollary}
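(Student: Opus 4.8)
The plan is to obtain both assertions directly from the self-consistency structure already established in Lemma \ref{Solution selfbaby}, Corollary \ref{bijectivitybaby} and Lemma \ref{lemma contnuitybaby}; essentially no new estimate is required, since all the analytic work has been carried out in those results. I would first prove the cocycle property (\ref{cocycle propertybaby}) by evaluating both sides at an arbitrary state, and then deduce the automorphism property as a formal consequence of this cocycle property together with the weak$^{\ast}$-continuity of $\varpi_{s}(t)$.

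For the cocycle property, fix $s,r,t\in \mathbb{R}$ and $\rho \in E$. Unfolding the definition $\varpi_{s}(t)\equiv (\varpi_{\rho,s}(t))_{\rho\in E}$, the right-hand side of (\ref{cocycle propertybaby}) evaluated at $\rho$ reads
\[
[\varpi_{r}(t)\circ \varpi_{s}(r)](\rho)=[\varpi_{r}(t)](\varpi_{\rho,s}(r))=\varpi_{\varpi_{\rho,s}(r),r}(t),
\]
while the left-hand side is simply $\varpi_{\rho,s}(t)$. These two quantities coincide precisely by the identity $\varpi_{\rho,s}(t)=\varpi_{\varpi_{\rho,s}(r),r}(t)$ proven in Lemma \ref{Solution selfbaby}. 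As $\rho\in E$ is arbitrary, (\ref{cocycle propertybaby}) follows.

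To show $\varpi_{s}(t)\in \mathrm{Aut}(E)$, I would first record that $\varpi_{s}(s)=\mathrm{id}_{E}$: indeed, the self-consistency equation (\ref{self consitence equation1baby}) at $t=s$ gives $\varpi_{\rho,s}(s)=\rho\circ T_{s,s}^{\varpi_{\rho,s}}=\rho$, using $T_{s,s}^{\xi}=\mathbf{1}_{\mathcal{X}}$ from (\ref{flow baby1}). Specializing (\ref{cocycle propertybaby}) to the base--intermediate--final choices $(s,r,s)$ and $(r,s,r)$ then yields
\[
\varpi_{r}(s)\circ \varpi_{s}(r)=\varpi_{s}(s)=\mathrm{id}_{E},\qquad \varpi_{s}(r)\circ \varpi_{r}(s)=\varpi_{r}(r)=\mathrm{id}_{E}.
\]
Hence $\varpi_{s}(t)$ is a bijection of $E$ with inverse $\varpi_{t}(s)$. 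Since both $\varpi_{s}(t)$ and $\varpi_{t}(s)$ belong to $C(E;E)$ by Lemma \ref{lemma contnuitybaby}, the map $\varpi_{s}(t)$ is a weak$^{\ast}$-bicontinuous bijection, that is, $\varpi_{s}(t)\in \mathrm{Aut}(E)$. (Alternatively, bijectivity is exactly Corollary \ref{bijectivitybaby}, and continuity of the inverse is automatic because $E$ is weak$^{\ast}$-compact Hausdorff and a continuous bijection of a compact space onto a Hausdorff space is a homeomorphism.)

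I do not expect a genuine obstacle here: every nontrivial ingredient---the semigroup identity $\varpi_{\rho,s}(t)=\varpi_{\varpi_{\rho,s}(r),r}(t)$, the bijectivity, and the weak$^{\ast}$-continuity of $\rho\mapsto \varpi_{\rho,s}(t)$---has been secured in the preceding lemmata under Condition \ref{Conditions (a)-(b)}. The only point that warrants a little care is the continuity of the inverse map; I would handle it cleanly by exhibiting the inverse \emph{explicitly} as $\varpi_{t}(s)$, which is continuous by Lemma \ref{lemma contnuitybaby}, rather than relying on the abstract compactness argument, although the latter is equally available.
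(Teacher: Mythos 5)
Your proof is correct and follows essentially the same route as the paper: the cocycle property is obtained, exactly as in the paper, by observing that (\ref{cocycle propertybaby}) is just a pointwise rewriting of the identity $\varpi_{\rho,s}(t)=\varpi_{\varpi_{\rho,s}(r),r}(t)$ from Lemma \ref{Solution selfbaby}, and the automorphism property rests on the bijectivity and weak$^{\ast}$-continuity already secured in Corollary \ref{bijectivitybaby} and Lemma \ref{lemma contnuitybaby}. The one (minor, equally valid) divergence is the continuity of the inverse: the paper invokes the abstract fact that a continuous bijection of a weak$^{\ast}$-compact space onto a Hausdorff space is a homeomorphism, whereas your primary argument exhibits the inverse explicitly as $\varpi_{t}(s)$ via $\varpi_{s}(s)=\mathrm{id}_{E}$ and the cocycle identity, and then applies Lemma \ref{lemma contnuitybaby} to it directly; this has the small added benefit of identifying the inverse concretely, and you correctly note the compactness argument as the alternative the paper actually uses.
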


\begin{proof}
By Corollary \ref{bijectivitybaby} and Lemma \ref{lemma contnuitybaby}, for
any $s,t\in \mathbb{R}$, $\varpi _{s}\left( t\right) $ is a weak$^{\ast }$%
-continuous bijective map from $E$ to itself. Recall that $E$ is the
(Hausdorff) topological space of all states on $\mathcal{X}$ with the weak$%
^{\ast }$ topology. It is weak$^{\ast }$-compact. Therefore, the inverse of $%
\varpi _{s}\left( t\right) $ is also weak$^{\ast }$-continuous. Equation (%
\ref{cocycle propertybaby}) is only another way to write the equality $%
\varpi _{\rho ,s}(t)=\varpi _{\varpi _{\rho ,s}(r),r}(t)$ of Lemma \ref%
{Solution selfbaby}.
\end{proof}

Recall that the set $\mathrm{Aut}\left( E\right) $ of all automorphisms (or
self-homeomorphisms) of $E$ is endowed with the topology of uniform
convergence of weak$^{\ast }$-continuous functions from $E$ to itself. See (%
\ref{uniform convergence weak*}). Having this in mind, we obtain now the
following lemma:

\begin{lemma}[Well-posedness of the self-consistency equation]
\label{lemma wellbaby}\mbox{ }\newline
Under Condition \ref{Conditions (a)-(b)}, for any $s\in \mathbb{R}$,%
\begin{equation*}
\mathbf{\varpi }_{s}\equiv (\varpi _{s}\left( t\right) )_{t\in \mathbb{R}%
}\equiv ((\varpi _{\rho ,s}\left( t\right) )_{\rho \in E})_{t\in \mathbb{R}%
}\in C\left( \mathbb{R};\mathrm{Aut}\left( E\right) \right) \ .
\end{equation*}
\end{lemma}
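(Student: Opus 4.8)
The plan is to reduce the statement to a single continuity estimate in the time variable $t$, since Corollary \ref{Corollary bije+cocylbaby} already guarantees that $\varpi _{s}(t)\in \mathrm{Aut}(E)$ for each fixed pair $s,t\in \mathbb{R}$. Recalling that $\mathrm{Aut}\left( E\right) \varsubsetneq C(E;E)$ carries the subspace topology of uniform convergence described by (\ref{uniform convergence weak*}), it remains only to prove that, for each fixed $s\in \mathbb{R}$, each $A\in \mathcal{X}$ and each $t_{0}\in \mathbb{R}$,
\[
\lim_{t\rightarrow t_{0}}\max_{\rho \in E}\left\vert \varpi _{\rho ,s}\left( t\right) \left( A\right) -\varpi _{\rho ,s}\left( t_{0}\right) \left( A\right) \right\vert =0\ .
\]

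First I would use the self-consistency identity $\varpi _{\rho ,s}(t)=\rho \circ T_{t,s}^{\varpi _{\rho ,s}}$ from Lemma \ref{Solution selfbaby} to write, for every state $\rho \in E$ and using $\Vert \rho \Vert _{\mathcal{X}^{\ast }}=1$,
\[
\left\vert \varpi _{\rho ,s}\left( t\right) \left( A\right) -\varpi _{\rho ,s}\left( t_{0}\right) \left( A\right) \right\vert =\left\vert \rho \left( T_{t,s}^{\varpi _{\rho ,s}}\left( A\right) -T_{t_{0},s}^{\varpi _{\rho ,s}}\left( A\right) \right) \right\vert \leq \left\Vert T_{t,s}^{\varpi _{\rho ,s}}-T_{t_{0},s}^{\varpi _{\rho ,s}}\right\Vert _{\mathcal{B}\left( \mathcal{X}\right) }\left\Vert A\right\Vert _{\mathcal{X}}\ .
\]
The crucial point is that the difference of propagators appearing here involves one and the same state-valued function $\varpi _{\rho ,s}$ in both slots. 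I would then invoke Lemma \ref{Solution selfbaby copy(1)} with $\xi _{1}=\xi _{2}=\varpi _{\rho ,s}$, $s_{1}=s_{2}=s$, $t_{1}=t_{0}$ and $t_{2}=t$; because the two inner functions coincide, the integral term $\int_{s}^{t}\Vert \mathrm{D}h(\alpha ;\xi _{1}(\alpha ))-\mathrm{D}h(\alpha ;\xi _{2}(\alpha ))\Vert _{\mathcal{X}}\,\mathrm{d}\alpha $ vanishes identically, leaving
\[
\left\Vert T_{t,s}^{\varpi _{\rho ,s}}-T_{t_{0},s}^{\varpi _{\rho ,s}}\right\Vert _{\mathcal{B}\left( \mathcal{X}\right) }\leq 2\left\vert t-t_{0}\right\vert \left\Vert h\right\Vert _{C_{b}\left( \mathbb{R};\mathfrak{Y}\left( \mathbb{R}\right) \right) }\ .
\]
Combining the two displays yields $\max_{\rho \in E}|\varpi _{\rho ,s}(t)(A)-\varpi _{\rho ,s}(t_{0})(A)|\leq 2|t-t_{0}|\,\Vert h\Vert _{C_{b}(\mathbb{R};\mathfrak{Y}(\mathbb{R}))}\Vert A\Vert _{\mathcal{X}}$, which tends to $0$ as $t\rightarrow t_{0}$, giving the desired continuity in the uniform topology (\ref{uniform convergence weak*}).

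The point worth emphasizing — and what one might at first mistake for a difficulty — is the \emph{uniformity in} $\rho $: a priori the propagator $T_{t,s}^{\varpi _{\rho ,s}}$ depends on the initial state $\rho $ through the self-consistent solution $\varpi _{\rho ,s}$, so one could fear a $\rho $-dependent modulus of continuity that would spoil the uniform-convergence estimate. This fear is dispelled precisely because, with equal inner functions, the bound produced by Lemma \ref{Solution selfbaby copy(1)} depends only on $|t-t_{0}|$ and on $\Vert h\Vert _{C_{b}(\mathbb{R};\mathfrak{Y}(\mathbb{R}))}$, and not at all on the particular $\xi $ used, hence not on $\rho $. Having thus established continuity of $t\mapsto \varpi _{s}(t)$ in the topology of uniform convergence, and knowing by Corollary \ref{Corollary bije+cocylbaby} that $\varpi _{s}(t)\in \mathrm{Aut}(E)$ for every $t$, I would conclude that $\mathbf{\varpi }_{s}\in C(\mathbb{R};\mathrm{Aut}(E))$, as claimed.
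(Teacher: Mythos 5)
Your proof is correct, and it takes a genuinely different and more economical route than the paper's. The paper argues by contradiction: it assumes failure of uniform convergence, extracts a net of witnessing states $(\rho _{j})_{j\in J}$, passes to a weak$^{\ast }$-convergent subnet using compactness of $E$, replaces $T_{t_{j_{i}},s}^{\varpi _{\rho _{j_{i}},s}}$ by $T_{t_{j_{i}},s}^{\varpi _{\rho ,s}}$ via the continuity-in-initial-data estimates of Lemma \ref{lemma contnuitybaby} (i.e., (\ref{Y0}) and (\ref{gronwalbaby0})--(\ref{gronwalbaby})), and only then invokes the norm-continuity of the single two-parameter family $(T_{t,s}^{\varpi _{\rho ,s}})_{s,t\in \mathbb{R}}$. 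You instead observe that, since only the final time varies, both propagators in the difference $\varpi _{\rho ,s}(t)-\varpi _{\rho ,s}(t_{0})=\rho \circ (T_{t,s}^{\varpi _{\rho ,s}}-T_{t_{0},s}^{\varpi _{\rho ,s}})$ are driven by the \emph{same} trajectory $\xi =\varpi _{\rho ,s}$, so the integral term in Lemma \ref{Solution selfbaby copy(1)} vanishes and the remaining bound $2|t-t_{0}|\Vert h\Vert _{C_{b}(\mathbb{R};\mathfrak{Y}(\mathbb{R}))}$ is manifestly independent of $\rho $. This yields a quantitative, uniform Lipschitz modulus and dispenses entirely with the compactness of $E$ and with Lemma \ref{lemma contnuitybaby} at this step. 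The trade-off is that your shortcut is specific to the situation where only the final time moves; once the initial time or the initial state also varies (as in Lemma \ref{lemma wellbabybaby}, or in the paper's comparison of $T^{\varpi _{\rho _{j_{i}},s}}$ with $T^{\varpi _{\rho ,s}}$), the two trajectories differ, the integral term no longer cancels, and the heavier net-compactness machinery becomes necessary. For the statement actually being proved, your argument is complete and arguably preferable.
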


\begin{proof}
Take any net $(t_{j})_{j\in J}\subseteq \mathbb{R}$ converging to some
arbitrary time $t\in \mathbb{R}$. Assume that $\varpi _{s}\left(
t_{j}\right) $ does not converge to $\varpi _{s}\left( t\right) $, in the
topology of uniform convergence of weak$^{\ast }$-continuous functions. In
this case, by (\ref{uniform convergence weak*}), there is a net $(\rho
_{j})_{j\in J}\subseteq E$ of states, $A\in \mathcal{X}$ and $\varepsilon
\in \mathbb{R}^{+}$ such that 
\begin{equation}
\liminf_{j\in J}\left\vert \left[ \varpi _{\rho _{j},s}\left( t_{j}\right)
-\varpi _{\rho _{j},s}\left( t\right) \right] \left( A\right) \right\vert
\geq \varepsilon >0\ .  \label{machin2baby}
\end{equation}%
By weak$^{\ast }$ compactness of $E$, there is a subnet $(\rho
_{j_{i}})_{i\in I}$ weak$^{\ast }$-converging to some $\rho \in E$. By
Lemmata \ref{Solution selfbaby}, \ref{lemma contnuitybaby} and Inequality (%
\ref{machin2baby}), it follows that 
\begin{equation*}
\liminf_{i\in I}\left\vert \left[ \rho _{j_{i}}\circ T_{t_{j_{i}},s}^{\varpi
_{\rho _{j_{i}},s}}-\varpi _{\rho ,s}\left( t\right) \right] \left( A\right)
\right\vert \geq \varepsilon >0\ .
\end{equation*}%
Using (\ref{Y0}) and (\ref{gronwalbaby0})-(\ref{gronwalbaby}) together with
the reverse cocycle property (\ref{babyreverse}) and the fact that $%
(T_{t,s}^{\xi })_{s,t\in \mathbb{R}}$ is a family of $\ast $-automorphisms
of $\mathcal{X}$ for any $\xi \in C\left( \mathbb{R};E\right) $, we thus
deduce from the last inequality that 
\begin{equation}
\liminf_{i\in I}\left\vert \left[ \rho _{j_{i}}\circ T_{t_{j_{i}},s}^{\varpi
_{\rho ,s}}-\varpi _{\rho ,s}\left( t\right) \right] \left( A\right)
\right\vert \geq \varepsilon >0\ .  \label{estimate}
\end{equation}%
This is a contradiction because $(T_{t,s}^{\varpi _{\rho ,s}})_{s,t\in {%
\mathbb{R}}}$ is a norm-continuous two-para%
\-%
meter family. Hence, for any $A\in \mathcal{X}$,%
\begin{equation*}
\lim_{i\in I}\rho _{j_{i}}\circ T_{t_{j_{i}},s}^{\varpi _{\rho ,s}}\left(
A\right) =\rho \circ T_{t,s}^{\varpi _{\rho ,s}}\left( A\right) =\left[
\varpi _{\rho ,s}\left( t\right) \right] \left( A\right) \ .
\end{equation*}
\end{proof}

\begin{lemma}[Joint continuity with respect to initial and final times]
\label{lemma wellbabybaby}\mbox{ }\newline
Under Condition \ref{Conditions (a)-(b)}, the solution to the
self-consistency equation is jointly continuous with respect to initial and
final times: 
\begin{equation*}
\mathbf{\varpi }\equiv (\mathbf{\varpi }_{s})_{s\in \mathbb{R}}\equiv
(\varpi _{s}\left( t\right) )_{s,t\in \mathbb{R}}\equiv ((\varpi _{\rho
,s}\left( t\right) )_{\rho \in E})_{s,t\in \mathbb{R}}\in C\left( \mathbb{R}%
^{2};\mathrm{Aut}\left( E\right) \right) \ .
\end{equation*}
\end{lemma}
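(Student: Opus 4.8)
The plan is to exploit the cocycle property of Corollary \ref{Corollary bije+cocylbaby} to reduce the joint continuity of $(s,t)\mapsto \varpi_s(t)$ to the two continuity statements already established, namely continuity in the final time $t$ at fixed initial time $s$ (Lemma \ref{lemma wellbaby}) and continuity in the initial condition $\rho$ (Lemma \ref{lemma contnuitybaby}), supplemented by a short direct control of the initial-time dependence drawn from the quantum-dynamics estimate of Lemma \ref{Solution selfbaby copy(1)}. Fix $s,t\in\mathbb{R}$ and an arbitrary net $(s_j,t_j)_{j\in J}$ converging to $(s,t)$; since $\mathrm{Aut}(E)$ carries the topology of uniform convergence (see (\ref{uniform convergence weak*})), the goal is to prove that $\sup_{\rho\in E}\lvert \varpi_{\rho,s_j}(t_j)(A)-\varpi_{\rho,s}(t)(A)\rvert\to 0$ for every $A\in\mathcal{X}$.

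The first step is to show that $\varpi_{s_j}(s)\to \mathbf{1}_E$ \emph{uniformly} as $s_j\to s$. Because $\varpi_{\rho,s_j}(s)=\rho\circ T_{s,s_j}^{\varpi_{\rho,s_j}}$ by Lemma \ref{Solution selfbaby} and $T_{s,s}^{\xi}=\mathbf{1}_{\mathcal{X}}$, applying Lemma \ref{Solution selfbaby copy(1)} with $t_1=t_2=s_2=s$, $s_1=s_j$ and $\xi_1=\xi_2=\varpi_{\rho,s_j}$ gives $\Vert T_{s,s_j}^{\varpi_{\rho,s_j}}-\mathbf{1}_{\mathcal{X}}\Vert_{\mathcal{B}(\mathcal{X})}\le 2\lvert s-s_j\rvert\,\Vert h\Vert_{C_b(\mathbb{R};\mathfrak{Y}(\mathbb{R}))}$, a bound uniform in $\rho\in E$. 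Hence $\sup_{\rho}\lvert \varpi_{\rho,s_j}(s)(A)-\rho(A)\rvert\le 2\lvert s-s_j\rvert\,\Vert h\Vert_{C_b(\mathbb{R};\mathfrak{Y}(\mathbb{R}))}\,\Vert A\Vert_{\mathcal{X}}\to 0$. Next, Corollary \ref{Corollary bije+cocylbaby} yields the factorisation $\varpi_{s_j}(t_j)=\varpi_s(t_j)\circ\varpi_{s_j}(s)$, while Lemma \ref{lemma wellbaby} gives $\varpi_s(t_j)\to\varpi_s(t)$ uniformly. It thus remains only to pass to the limit in the composition of the two uniformly convergent families $\varpi_s(t_j)\to\varpi_s(t)$ and $\varpi_{s_j}(s)\to\mathbf{1}_E$.

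Writing $f_j=\varpi_s(t_j)$, $f=\varpi_s(t)$ and $g_j=\varpi_{s_j}(s)$, I would split, for fixed $A\in\mathcal{X}$ and $\rho\in E$,
\begin{equation*}
[f_j\circ g_j](\rho)(A)-f(\rho)(A)=\big(f_j(g_j\rho)-f(g_j\rho)\big)(A)+\big(f(g_j\rho)-f(\rho)\big)(A)\ .
\end{equation*}
The first summand is bounded in absolute value by $\sup_{\omega\in E}\lvert f_j(\omega)(A)-f(\omega)(A)\rvert$, which tends to $0$ uniformly in $\rho$ by Lemma \ref{lemma wellbaby}. The second summand is the delicate one, and I expect it to be the main obstacle: since $\mathrm{Aut}(E)$ is merely a space of weak$^{\ast}$-continuous self-maps of $E$, composition is \emph{not} jointly continuous for the uniform topology in general, so one cannot simply invoke the continuity of $f$. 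Instead I would argue by contradiction and weak$^{\ast}$ compactness, in the style of the proofs of Lemmata \ref{lemma contnuitybaby} and \ref{lemma wellbaby}: were $\sup_{\rho}\lvert (f(g_j\rho)-f(\rho))(A)\rvert$ not to vanish, there would be $\varepsilon>0$ and states $\rho_j$ with $\lvert f(g_j\rho_j)(A)-f(\rho_j)(A)\rvert\ge\varepsilon$; extracting a weak$^{\ast}$-convergent subnet $\rho_{j_i}\to\rho_\infty$ and using Step 1 to get $g_{j_i}\rho_{j_i}\to\rho_\infty$ in the weak$^{\ast}$ topology, the weak$^{\ast}$ continuity of $f=\varpi_s(t)\in C(E;E)$ and of $\hat A$ forces both $f(g_{j_i}\rho_{j_i})(A)$ and $f(\rho_{j_i})(A)$ to converge to $f(\rho_\infty)(A)$, a contradiction. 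Combining the three steps yields $\sup_{\rho}\lvert \varpi_{\rho,s_j}(t_j)(A)-\varpi_{\rho,s}(t)(A)\rvert\to 0$ for every $A\in\mathcal{X}$, that is, $\varpi_{s_j}(t_j)\to\varpi_s(t)$ in $\mathrm{Aut}(E)$, which is the asserted joint continuity $\mathbf{\varpi}\in C(\mathbb{R}^{2};\mathrm{Aut}(E))$.
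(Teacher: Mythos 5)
Your proof is correct, but it takes a genuinely different route from the paper's. The paper reproves joint continuity by running the Banach fixed-point argument a second time, on the two-parameter function space $C([s-\epsilon,s+\epsilon]^{2};\mathcal{X}^{\ast })\cap C([s-\epsilon,s+\epsilon]^{2};E)$ with the contraction $\mathfrak{F}(\zeta )(r,t)\doteq \rho \circ T_{t,r}^{\zeta (r,\cdot )}$; uniqueness then identifies the fixed point with $(r,t)\mapsto \varpi _{\rho ,r}(t)$, giving joint (norm-) continuity for each fixed $\rho $, and the uniformity in $\rho $ is recovered by the same compactness argument as in Lemma \ref{lemma wellbaby}. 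You instead avoid any new fixed-point argument: the cocycle identity of Corollary \ref{Corollary bije+cocylbaby} factorizes $\varpi _{s_{j}}(t_{j})=\varpi _{s}(t_{j})\circ \varpi _{s_{j}}(s)$, the first factor is controlled by Lemma \ref{lemma wellbaby}, and the second is shown to converge uniformly to $\mathbf{1}_{E}$ by the quantitative bound $\Vert T_{s,s_{j}}^{\varpi _{\rho ,s_{j}}}-\mathbf{1}_{\mathcal{X}}\Vert _{\mathcal{B}(\mathcal{X})}\leq 2|s-s_{j}|\,\Vert h\Vert _{C_{b}(\mathbb{R};\mathfrak{Y}(\mathbb{R}))}$ from Lemma \ref{Solution selfbaby copy(1)} (the integral term vanishes since $\xi _{1}=\xi _{2}$), which is uniform in $\rho $. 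Your handling of the composition is also sound: you correctly identify that composition is not jointly continuous for the uniform topology on $C(E;E)$ in general, and your contradiction-plus-weak$^{\ast }$-compactness argument for the term $f(g_{j}\rho )-f(\rho )$ is exactly in the style of the paper's Lemma \ref{lemma wellbaby} (alternatively, one could invoke the uniform continuity of $f=\varpi _{s}(t)$ with respect to the unique uniformity of the compact Hausdorff space $E$). What the paper's route buys is, as a by-product, the joint norm-continuity of $(r,t)\mapsto \varpi _{\rho ,r}(t)$ in $\mathcal{X}^{\ast }$ at fixed $\rho $; what yours buys is economy, since it reuses the already established one-parameter results and a one-line estimate rather than setting up a second contraction.
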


\begin{proof}
We use again the Banach fixed point theorem: Fix $s\in \mathbb{R}$ and $%
\epsilon \in \mathbb{R}^{+}$. Similar to (\ref{labelbaby}), we define the
map $\mathfrak{F}$ from 
\begin{equation*}
C\left( [s-\epsilon ,s+\epsilon ]^{2};\mathcal{X}^{\ast }\right) \cap
C\left( [s-\epsilon ,s+\epsilon ]^{2};E\right)
\end{equation*}%
to itself by%
\begin{equation*}
\mathfrak{F}\left( \zeta \right) \left( r,t\right) \doteq \rho \circ
T_{t,r}^{\zeta \left( r,\cdot \right) },\qquad r,t\in \left[ s-\epsilon
,s+\epsilon \right] \ ,
\end{equation*}%
where 
\begin{equation*}
\zeta \left( r,\cdot \right) \in C\left( [s-\epsilon ,s+\epsilon ];\mathcal{X%
}^{\ast }\right) \cap C\left( [s-\epsilon ,s+\epsilon ];E\right)
\end{equation*}%
is the function defined, at fixed $r\in \lbrack s-\epsilon ,s+\epsilon ]$,
by $\zeta \left( r,t\right) $ for any $t\in \lbrack s-\epsilon ,s+\epsilon ]$%
. By Lemma \ref{Solution selfbaby copy(1)} and\ Condition \ref{Conditions
(a)-(b)} (b), $\mathfrak{F}$ is a contraction for sufficiently small times $%
\epsilon \in \mathbb{R}^{+}$ and we use similar arguments as in the proof of
Lemma \ref{Solution selfbaby} to show the existence of a unique solution $%
\gimel $ to the following equation in $\zeta \in C\left( [s-\epsilon
,s+\epsilon ]^{2};E\right) $:%
\begin{equation*}
\forall r,t\in \left[ s-\epsilon ,s+\epsilon \right] :\qquad \zeta \left(
r,t\right) =\rho \circ T_{t,r}^{\zeta \left( r,\cdot \right) }\ .
\end{equation*}%
By uniqueness of the solution to (\ref{self consitence equation1baby}) in $%
C\left( \mathbb{R};E\right) $ at any fixed $s\in \mathbb{R}$, $\varpi _{\rho
,r}\left( t\right) =\gimel \left( r,t\right) $ for any $r,t\in \lbrack
s-\epsilon ,s+\epsilon ]$. By Corollary \ref{Corollary bije+cocylbaby} and
Lemma \ref{lemma wellbaby}, it follows that%
\begin{equation*}
(\varpi _{\rho ,s}\left( t\right) )_{s,t\in \mathbb{R}}\in C\left( \mathbb{R}%
^{2};E\right) \ ,\qquad \rho \in E\ .
\end{equation*}%
Finally, by similar compactness arguments as in the proof of Lemma \ref%
{lemma wellbaby}, we deduce the assertion.
\end{proof}

\begin{lemma}[Differentiability of the solution -- II]
\label{lemma contnuity copy(1)}\mbox{ }\newline
Fix $n\in \mathbb{N}$, $g\in C_{b}\left( \mathbb{R};C_{b}^{3}\left( \mathbb{R%
}^{n},\mathbb{R}\right) \right) $, $\left\{ B_{j}\right\}
_{j=1}^{n}\subseteq \mathcal{X}^{\mathbb{R}}$ and 
\begin{equation}
h\left( t;\rho \right) \doteq g\left( t;\rho \left( B_{1}\right) ,\ldots
,\rho \left( B_{n}\right) \right) \ ,\qquad t\in \mathbb{R},\ \rho \in E\ .
\label{h assumtions}
\end{equation}%
Then, for any $s,t\in \mathbb{R}$ and $A\in \mathcal{X}$, 
\begin{equation}
(\varpi _{\rho ,s}\left( t\right) (A))_{\rho \in E}\equiv (\varpi _{\rho
,s}\left( t,A\right) )_{\rho \in E}\in C^{1}\left( E;\mathbb{C}\right)
\label{eqsdfkljsdklfj}
\end{equation}%
and, for any $\upsilon \in E$,%
\begin{equation}
\left[ \mathrm{d}\varpi _{\rho ,s}\left( t,A\right) \right] \left( \upsilon
\right) =\upsilon \left( \mathrm{D}\varpi _{\rho ,s}\left( t,A\right)
\right) =\left( \upsilon -\rho \right) \circ T_{t,s}^{\varpi _{\rho
,s}}\left( A\right) +\mathfrak{\maltese }_{A}\left[ \mathrm{d}\varpi _{\rho
,s}\left( \cdot ,\cdot \right) \left( \upsilon \right) \right]  \notag
\end{equation}%
where, for any continuous function $\xi :\mathbb{R}\times \mathcal{X}%
\rightarrow \mathbb{C}$, 
\begin{eqnarray}
\mathfrak{\maltese }_{A}\left[ \xi \right] &\doteq
&\sum_{j,k=1}^{n}\int_{s}^{t}\mathrm{d}\alpha \ \xi \left( \alpha
,B_{k}\right) \rho \circ T_{\alpha ,s}^{\varpi _{\rho ,s}}\left( i\left[
B_{j},T_{t,\alpha }^{\varpi _{\rho ,s}}\left( A\right) \right] \right)
\label{def dysonsbaby} \\
&&\times \partial _{x_{k}}\partial _{x_{j}}g\left( \alpha ;\varpi _{\rho
,s}\left( \alpha ,B_{1}\right) ,\ldots ,\varpi _{\rho ,s}\left( \alpha
,B_{n}\right) \right) \ .  \notag
\end{eqnarray}%
Moreover, for all $s\in \mathbb{R}$ and $\rho \in E$, the map $(t,A)\mapsto 
\mathrm{D}\varpi _{\rho ,s}\left( t,A\right) $ from $\mathbb{R}\times 
\mathcal{X}$ to $\mathcal{X}$ is continuous.
\end{lemma}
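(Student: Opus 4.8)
The plan is to prove the differentiability claim \eqref{eqsdfkljsdklfj} by differentiating the self-consistency fixed-point equation $\varpi_{\rho,s}(t) = \rho \circ T_{t,s}^{\varpi_{\rho,s}}$ directly, and to establish \eqref{def dysonsbaby} by setting up a linear integral (Volterra) equation for the convex weak$^{\ast}$ Gateaux derivative $\mathrm{d}\varpi_{\rho,s}(t,A)$ and solving it via the same Banach-fixed-point machinery used in Lemma~\ref{Solution selfbaby} and Lemma~\ref{lemma wellbabybaby}. First I would fix $n$, $g$, $\{B_j\}$ and the special form \eqref{h assumtions} of $h$; this form is what makes $\mathrm{D}h$ genuinely smooth in $\rho$, since by \eqref{Ynbis} one has $\mathrm{D}h(t;\rho) = \sum_{j=1}^n (B_j - \rho(B_j)\mathfrak{1})\,\partial_{x_j} g(t;\rho(B_1),\dots,\rho(B_n))$, and the second derivatives of $g$ (available because $g \in C_b^3$) will produce the $\partial_{x_k}\partial_{x_j} g$ factor appearing in $\mathfrak{\maltese}_A$. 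The cocycle identity $\varpi_{\rho,s}(t) = \varpi_{\varpi_{\rho,s}(r),r}(t)$ from Lemma~\ref{Solution selfbaby} lets me localize everything to a short time interval $[s-\epsilon, s+\epsilon]$ and then propagate by composition, exactly as in the earlier lemmata.

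Next I would differentiate. Writing the Dyson-series representation of $T_{t,s}^{\varpi_{\rho,s}}$, and recalling from \eqref{flow baby0} that the generator $X_\alpha^{\varpi_{\rho,s}(\alpha)}(A) = i[\mathrm{D}h(\alpha;\varpi_{\rho,s}(\alpha)),A]$ depends on $\rho$ both explicitly (through the evaluation $\varpi_{\rho,s}(\alpha)$) and implicitly (through the family itself), I would compute the directional derivative along the segment $\rho \mapsto (1-\lambda)\rho + \lambda\upsilon$. The leading term comes from differentiating the outer $\rho$ in $\rho \circ T_{t,s}^{\varpi_{\rho,s}}$, giving $(\upsilon - \rho)\circ T_{t,s}^{\varpi_{\rho,s}}(A)$; the remaining term comes from differentiating $\mathrm{D}h(\alpha;\varpi_{\rho,s}(\alpha))$ under the Dyson integral. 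Using the chain rule on $g$ together with \eqref{Ynbis}, the derivative of $\mathrm{D}h$ in the direction $\upsilon$ at the point $\varpi_{\rho,s}(\alpha)$ produces precisely the factor $[\mathrm{d}\varpi_{\rho,s}(\alpha,B_k)](\upsilon)\,\partial_{x_k}\partial_{x_j}g(\alpha;\dots)$, and the commutator structure $i[B_j, T_{t,\alpha}^{\varpi_{\rho,s}}(A)]$ with the conjugation by $\rho\circ T_{\alpha,s}^{\varpi_{\rho,s}}$ falls out of the standard variation-of-constants formula for $\partial_t T_{t,s} = T_{t,s}\circ X_t$. This yields the self-referential identity displayed in the lemma, where $\mathrm{d}\varpi_{\rho,s}(t,A)$ appears on both sides through $\mathfrak{\maltese}_A$.

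The core step is then to read this identity as a linear fixed-point equation for the map $(t,A)\mapsto [\mathrm{d}\varpi_{\rho,s}(t,A)](\upsilon)$ and to verify it has a unique solution. Because $\mathfrak{\maltese}_A$ involves an integral $\int_s^t$ of the unknown against bounded continuous kernels (the $C_b^3$-bound on $g$, the $\ast$-automorphism norm bound $\|T^{\varpi}_{t,\alpha}\|=1$, and finiteness of $\mathfrak{B}$ all give uniform control), the operator $\xi \mapsto \mathfrak{\maltese}_A[\xi]$ is a Volterra-type contraction on $C([s-\epsilon,s+\epsilon]\times\mathcal{X})$ for $\epsilon$ small, uniformly in $\rho$. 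A Banach-fixed-point / Grönwall argument, mirroring the contractivity estimates \eqref{gronwalbaby0}--\eqref{gronwalbaby} of Lemma~\ref{lemma contnuitybaby}, then yields existence, uniqueness, and continuity of the solution; globalization in time uses the cocycle property again. Finally, to upgrade the \emph{directional} derivative to the genuine convex weak$^{\ast}$ Gateaux derivative of Definition~\ref{convex Frechet derivative}, I must check the limit is linear and weak$^{\ast}$-continuous in $\upsilon$ (hence represented by some $\mathrm{D}\varpi_{\rho,s}(t,A)\in\mathcal{X}$ via \eqref{soigjosdfj} and Proposition~\ref{affine real valued functions}) and uniform as $\lambda\to 0^+$, so that membership in $C^1(E;\mathbb{C})$ of \eqref{C1} actually holds. \textbf{The hard part will be} precisely this last uniformity-and-continuity bookkeeping: controlling the error term in the difference quotient uniformly over $\rho,\upsilon\in E$ (a set that is \emph{not} norm-compact in infinite dimension), which is where the finite-dimensionality of $\mathfrak{B}$ in Condition~\ref{Conditions (a)-(b)}(a) and the metrizability/boundedness estimates are indispensable, and then proving joint continuity of $(t,A)\mapsto\mathrm{D}\varpi_{\rho,s}(t,A)$ by combining the integral representation with the already-established continuity of $\varpi$ from Lemma~\ref{lemma wellbabybaby} and Lebesgue's dominated convergence theorem.
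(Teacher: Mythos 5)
Your proposal is correct and follows essentially the same route as the paper: differentiate the self-consistency identity along the convex segment, obtain the self-referential Volterra equation with kernel $\mathfrak{\maltese}_{A}$ via the variation-of-constants formula and the second derivatives of $g$, solve it by iteration, and identify the convex weak$^{\ast}$ Gateaux derivative through the $\mathcal{X}$-valued version of the same integral equation. The only cosmetic differences are that the paper solves the linear Volterra equation by a globally absolutely summable Dyson-type series (rather than short-time contraction plus cocycle globalization) and controls the nonlinearity of $g$ by an explicit Taylor remainder, passing to the limit $\lambda\rightarrow 0^{+}$ with Lebesgue's dominated convergence theorem.
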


\begin{proof}
Fix all parameters of the lemma. Observe first that Taylor's theorem applied
to $\partial _{x_{j}}g\left( t\right) $ for each $t\in \mathbb{R}$ and $j\in
\{1,\ldots ,n\}$ yields that, for all $x,y\in \mathbb{R}^{n}$,%
\begin{equation}
\partial _{x_{j}}g\left( t;y\right) =\partial _{x_{j}}g\left( t;x\right)
+\sum_{k=1}^{n}\left( y_{k}-x_{k}\right) \left( \partial _{x_{k}}\partial
_{x_{j}}g\left( t;x_{1},\ldots ,x_{n}\right) +r_{k}\left( t,x,y\right)
\right)  \label{taylor1}
\end{equation}%
where, for any $k\in \{1,\ldots ,n\}$, $r_{k}\left( \cdot ,\cdot ,\cdot
\right) $ is a continuous real-valued function on $\mathbb{R}\times \mathbb{R%
}^{n}\times \mathbb{R}^{n}$ such that 
\begin{equation}
\lim_{y\rightarrow x}r_{k}\left( t,x,y\right) =0\ ,  \label{taylor2}
\end{equation}%
uniformly for $t$ and $x$ in a compact set. Note additionally that the
function $h$, as defined by (\ref{h assumtions}), satisfies Condition \ref%
{Conditions (a)-(b)}.

For any $s,t\in \mathbb{R}$, $\rho ,\upsilon \in E$, $\lambda \in (0,1]$ and 
$A\in \mathcal{X}$, we infer from Lemma \ref{Solution selfbaby} that%
\begin{eqnarray*}
\beth \left( \lambda ,t,A;\upsilon \right) &\doteq &\lambda ^{-1}\left(
\varpi _{\left( 1-\lambda \right) \rho +\lambda \upsilon ,s}\left(
t,A\right) -\varpi _{\rho ,s}\left( t,A\right) \right) \\
&=&\left( \upsilon -\rho \right) \circ T_{t,s}^{\varpi _{\rho ,s}}\left(
A\right) +\lambda ^{-1}\left( \left( 1-\lambda \right) \rho +\lambda
\upsilon \right) \circ \left( T_{t,s}^{\varpi _{\left( 1-\lambda \right)
\rho +\lambda \upsilon ,s}}-T_{t,s}^{\varpi _{\rho ,s}}\right) \left(
A\right) \ .
\end{eqnarray*}%
Through Equations (\ref{Ynbis}), (\ref{flow baby0}), (\ref{T0}), (\ref{h
assumtions}) and (\ref{taylor1}), we deduce that 
\begin{align}
\beth \left( \lambda ,t,A;\upsilon \right) & =\left( \upsilon -\rho \right)
\circ T_{t,s}^{\varpi _{\rho ,s}}\left( A\right)
+\sum_{j,k=1}^{n}\int_{s}^{t}\mathrm{d}\alpha \ \beth \left( \lambda ,\alpha
,B_{k};\upsilon \right)  \label{truc a la con chat} \\
& \qquad \qquad \qquad \times \left( \left( 1-\lambda \right) \rho +\lambda
\upsilon \right) \circ T_{\alpha ,s}^{\varpi _{\left( 1-\lambda \right) \rho
+\lambda \upsilon ,s}}\left( i\left[ B_{j},T_{t,\alpha }^{\varpi _{\rho
,s}}\left( A\right) \right] \right)  \notag \\
& \qquad \qquad \qquad \qquad \times \left( \partial _{x_{k}}\partial
_{x_{j}}g\left( \alpha ;\mathbf{x}\left( 0,\alpha \right) \right)
+r_{k}\left( \alpha ;\mathbf{x}\left( 0,\alpha \right) ,\mathbf{x}\left(
\lambda ,\alpha \right) \right) \right) \ ,  \notag
\end{align}%
where%
\begin{equation*}
\mathbf{x}\left( \lambda ,\alpha \right) \doteq \left( \varpi _{\left(
1-\lambda \right) \rho +\lambda \upsilon ,s}\left( \alpha ,B_{1}\right)
,\ldots ,\varpi _{\left( 1-\lambda \right) \rho +\lambda \upsilon ,s}\left(
\alpha ,B_{n}\right) \right) \in \mathbb{R}^{n}\ .
\end{equation*}%
From Equation (\ref{truc a la con chat}), one sees that $\beth \left(
\lambda ,t,A;\upsilon \right) $ is given by a Dyson-type series which is
absolutely summable, uniformly with respect to $\lambda \in (0,1]$, because $%
(T_{t,s}^{\xi })_{s,t\in \mathbb{R}}$ is a family of $\ast $-automorphisms
of $\mathcal{X}$ for any $\xi \in C\left( \mathbb{R};E\right) $. By Lemmata %
\ref{Solution selfbaby copy(1)} and \ref{lemma wellbaby} together with
Condition \ref{Conditions (a)-(b)} (b), 
\begin{equation*}
\lim_{\lambda \rightarrow 0^{+}}\left( \left( 1-\lambda \right) \rho
+\lambda \upsilon \right) \circ T_{\alpha ,s}^{\varpi _{\left( 1-\lambda
\right) \rho +\lambda \upsilon ,s}}\left( i\left[ B_{j},T_{t,\alpha
}^{\varpi _{\rho ,s}}\left( A\right) \right] \right) =\rho \circ T_{\alpha
,s}^{\varpi _{\rho ,s}}\left( i\left[ B_{j},T_{t,\alpha }^{\varpi _{\rho
,s}}\left( A\right) \right] \right)
\end{equation*}%
while 
\begin{equation*}
\lim_{\lambda \rightarrow 0^{+}}r_{k}\left( \alpha ,\mathbf{x}\left(
0,\alpha \right) ,\mathbf{x}\left( \lambda ,\alpha \right) \right) =0\ ,
\end{equation*}%
using (\ref{taylor2}). (Both limits are uniform for $\alpha $ in a compact
set.) Hence, we deduce from Lebesgue's dominated convergence theorem that 
\begin{equation}
\beth \left( 0,t,A;\upsilon \right) \doteq \lim_{\lambda \rightarrow
0^{+}}\beth \left( \lambda ,t,A;\upsilon \right) =\lim_{\lambda \rightarrow
0^{+}}\lambda ^{-1}\left( \varpi _{\left( 1-\lambda \right) \rho +\lambda
\upsilon ,s}\left( t,A\right) -\varpi _{\rho ,s}\left( t,A\right) \right)
\label{lebesgue chat}
\end{equation}%
exists for all $s,t\in \mathbb{R}$, $\rho ,\upsilon \in E$ and $A\in 
\mathcal{X}$, as given by a Dyson-type series. In particular, for any $%
\upsilon \in E$, the complex-valued function $(t,A)\mapsto \beth \left(
0,t,A,\upsilon \right) $ on $\mathbb{R}\times \mathcal{X}$ is the unique
solution in $\xi \in C\left( \mathbb{R}\times \mathcal{X};\mathbb{C}\right) $
to the equation 
\begin{equation}
\xi \left( t,A\right) =\left( \upsilon -\rho \right) \circ T_{t,s}^{\varpi
_{\rho ,s}}\left( A\right) +\mathfrak{\maltese }_{A}\left[ \xi \right]
\label{integral equation 0}
\end{equation}%
with $\mathfrak{\maltese }_{A}$ defined by (\ref{def dysonsbaby}). Compare
with (\ref{truc a la con chat}) taken at $\lambda =0$. Note that the
integral equation 
\begin{align}
\mathfrak{D}\left( t,A\right) & =T_{t,s}^{\varpi _{\rho ,s}}\left( A\right)
-\rho \circ T_{t,s}^{\varpi _{\rho ,s}}\left( A\right) \mathfrak{1}%
+\sum_{j,k=1}^{n}\int_{s}^{t}\mathrm{d}\alpha \ \mathfrak{D}\left( \alpha
,B_{k}\right)  \label{chat0} \\
& \qquad \qquad \qquad \times \rho \circ T_{\alpha ,s}^{\varpi _{\rho
,s}}\left( i\left[ B_{j},T_{t,\alpha }^{\varpi _{\rho ,s}}\left( A\right) %
\right] \right) \partial _{x_{k}}\partial _{x_{j}}g\left( \alpha ;\mathbf{x}%
\left( 0,\alpha \right) \right)  \notag
\end{align}%
uniquely determines, by absolutely summable (in $\mathcal{X}$) Dyson-type
series, a continuous map $(t,A)\mapsto \mathfrak{D}\left( t,A\right) $ from $%
\mathbb{R}\times \mathcal{X}$ to $\mathcal{X}$, which, by (\ref{integral
equation 0}), satisfies 
\begin{equation}
\upsilon \left( \mathfrak{D}\left( t,A\right) \right) =\beth \left(
0,t,A;\upsilon \right) \doteq \lim_{\lambda \rightarrow 0^{+}}\lambda
^{-1}\left( \varpi _{\left( 1-\lambda \right) \rho +\lambda \upsilon
,s}\left( t,A\right) -\varpi _{\rho ,s}\left( t,A\right) \right)
\label{chat1}
\end{equation}%
for all $s,t\in \mathbb{R}$, $\rho ,\upsilon \in E$ and $A\in \mathcal{X}$.
By Definition \ref{convex Frechet derivative}, the assertion follows.
\end{proof}

\begin{lemma}[Differentiability of the solution -- III]
\label{Differentiability copy(1)}\mbox{ }\newline
Under the assumptions of Lemma \ref{lemma contnuity copy(1)}, for any $t\in 
\mathbb{R}$, $\rho \in E$ and $A\in \mathcal{X}$, 
\begin{equation*}
(\varpi _{\rho ,s}\left( t\right) (A))_{s\in \mathbb{R}}\equiv (\varpi
_{\rho ,s}\left( t,A\right) )_{s\in \mathbb{R}}\in C^{1}\left( \mathbb{R};%
\mathbb{C}\right)
\end{equation*}%
with derivative given, for any $A\in \mathcal{X}$, by%
\begin{equation}
\partial _{s}\varpi _{\rho ,s}\left( t,A\right) =-\rho \circ X_{s}^{\rho
}\circ T_{t,s}^{\varpi _{\rho ,s}}\left( A\right) +\mathfrak{\maltese }_{A}%
\left[ \partial _{s}\varpi _{\rho ,s}\right] \ .  \label{equation integral}
\end{equation}%
Here, $\mathfrak{\maltese }_{A}$ is defined by (\ref{def dysonsbaby}) and $%
(t,A)\mapsto \partial _{s}\varpi _{\rho ,s}(t,A)$ is a continuous function
on $\mathbb{R}\times \mathcal{X}$.
\end{lemma}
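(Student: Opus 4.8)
The plan is to establish differentiability in the initial time $s$ by reproducing, \emph{mutatis mutandis}, the Dyson-series scheme of Lemma \ref{lemma contnuity copy(1)}, with the convex difference quotient in $\rho$ replaced by a difference quotient in $s$. Fix all parameters as in Lemma \ref{lemma contnuity copy(1)} and, for $\epsilon \neq 0$, set
\[
q_{\epsilon}(t,A) \doteq \epsilon^{-1}\left( \varpi_{\rho,s+\epsilon}(t,A) - \varpi_{\rho,s}(t,A) \right) .
\]
Since $\varpi_{\rho,s'}(t) = \rho \circ T_{t,s'}^{\varpi_{\rho,s'}}$ for every $s'$ by Lemma \ref{Solution selfbaby}, I would write $q_{\epsilon}(t,A) = \epsilon^{-1}\rho \circ (T_{t,s+\epsilon}^{\varpi_{\rho,s+\epsilon}} - T_{t,s}^{\varpi_{\rho,s}})(A)$ and insert the variation formula (\ref{T0}) taken with common final time $t$, initial times $s$ and $s+\epsilon$, and the two dynamics $\varpi_{\rho,s}$, $\varpi_{\rho,s+\epsilon}$. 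The first integral of (\ref{T0}) drops out (equal final times), leaving a boundary integral over $[s,s+\epsilon]$ and a Duhamel integral over $[s+\epsilon,t]$ containing the generator difference $X_{\alpha}^{\varpi_{\rho,s+\epsilon}(\alpha)} - X_{\alpha}^{\varpi_{\rho,s}(\alpha)}$.

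I would then take the two limits $\epsilon \to 0^{\pm}$ term by term. By the fundamental theorem of calculus and the joint continuity of $(\alpha,A) \mapsto \rho \circ X_{\alpha}^{\varpi_{\rho,s}(\alpha)} \circ T_{t,\alpha}^{\varpi_{\rho,s}}(A)$ (resting on Lemmata \ref{Solution selfbaby copy(1)} and \ref{lemma wellbabybaby}), the boundary integral divided by $\epsilon$ converges to the inhomogeneous term $-\rho \circ X_{s}^{\rho} \circ T_{t,s}^{\varpi_{\rho,s}}(A)$, using $\varpi_{\rho,s}(s) = \rho$ and the definition (\ref{flow baby0}). For the Duhamel integral I would expand, via (\ref{flow baby0}) and (\ref{Ynbis}), $X_{\alpha}^{\varpi_{\rho,s+\epsilon}(\alpha)} - X_{\alpha}^{\varpi_{\rho,s}(\alpha)} = i[\mathrm{D}h(\alpha;\varpi_{\rho,s+\epsilon}(\alpha)) - \mathrm{D}h(\alpha;\varpi_{\rho,s}(\alpha)),\,\cdot\,]$, where the scalar-multiple-of-$\mathfrak{1}$ part drops out of the commutator, and Taylor-expand $\partial_{x_j}g$ exactly as in (\ref{taylor1})--(\ref{taylor2}). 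Since $\varpi_{\rho,s+\epsilon}(\alpha,B_k) - \varpi_{\rho,s}(\alpha,B_k) = \epsilon\, q_{\epsilon}(\alpha,B_k)$, the factor $\epsilon$ cancels, the leading contribution factorizes through $q_{\epsilon}(\alpha,B_k)$ against precisely the kernel defining $\mathfrak{\maltese}_{A}$ in (\ref{def dysonsbaby}), and the remainders vanish uniformly on compacta. Thus $q_{\epsilon}$ satisfies an $\epsilon$-dependent integral equation $q_{\epsilon} = (\text{inhom.}_{\epsilon}) + \mathfrak{\maltese}_{A}^{(\epsilon)}[q_{\epsilon}]$ whose data and kernel converge to those of (\ref{equation integral}).

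The existence of $\lim_{\epsilon\to 0} q_{\epsilon}$ and the validity of (\ref{equation integral}) would then follow by Lebesgue's dominated convergence, exactly as in the passage (\ref{truc a la con chat})--(\ref{lebesgue chat}): because every $T_{t,s}^{\xi}$ is a $\ast$-automorphism and hence norm-preserving, and $g \in C_{b}^{3}$, the Dyson-type series representing $q_{\epsilon}$ is absolutely summable \emph{uniformly in} $\epsilon$, so one may pass the limit inside. Finally, the limiting equation (\ref{equation integral}) is of the same type as (\ref{chat0}); its solution is uniquely determined by an absolutely convergent $\mathcal{X}$-valued Dyson series, which yields a continuous map $(t,A) \mapsto \partial_{s}\varpi_{\rho,s}(t,A)$ on $\mathbb{R} \times \mathcal{X}$ and identifies it with the derivative, whence $s \mapsto \varpi_{\rho,s}(t,A) \in C^{1}(\mathbb{R};\mathbb{C})$.

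The main obstacle is the uniform-in-$\epsilon$ control needed to interchange the limit $\epsilon \to 0$ with the infinite Duhamel expansion: one must dominate the $\epsilon$-series by a single summable majorant independent of $\epsilon$ and show that the Taylor remainders $r_k$ tend to $0$ uniformly for $\alpha$ in compacta, which in turn uses the convergence $\varpi_{\rho,s+\epsilon} \to \varpi_{\rho,s}$ supplied by Lemma \ref{lemma wellbabybaby} together with the Lipschitz-type estimate of Lemma \ref{Solution selfbaby copy(1)}, so that the prefactors $\rho \circ T_{\alpha,s+\epsilon}^{\varpi_{\rho,s+\epsilon}}$ converge to those of $\mathfrak{\maltese}_{A}$. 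The two-sidedness of the derivative is automatic here, since (\ref{T0}) is valid for both orientations of the time integrals.
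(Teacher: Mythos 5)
Your proposal is correct and follows essentially the same route as the paper's proof: the same difference quotient in $s$, the same use of the variation formula (\ref{T0}) to split it into a boundary term (converging to $-\rho \circ X_{s}^{\rho}\circ T_{t,s}^{\varpi _{\rho ,s}}(A)$ via $\varpi _{\rho ,s}(s)=\rho$) and a Duhamel term, the same Taylor expansion (\ref{taylor1})--(\ref{taylor2}) producing the kernel $\mathfrak{\maltese }_{A}$ after the factor $\epsilon$ cancels, and the same uniformly summable Dyson-series plus dominated-convergence argument to pass to the limit and identify the unique continuous solution of (\ref{equation integral}). No substantive difference from the paper's argument.
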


\begin{proof}
By Lemma \ref{Solution selfbaby}, for any $\rho \in E$, $s,t\in \mathbb{R}$, 
$A\in \mathcal{X}$ and $\varepsilon \in \mathbb{R}\backslash \{0\}$, 
\begin{eqnarray*}
\tilde{\beth}\left( \varepsilon ,t,A\right) &\doteq &\varepsilon ^{-1}\left(
\varpi _{\rho ,s+\varepsilon }\left( t,A\right) -\varpi _{\rho ,s}\left(
t,A\right) \right) \\
&=&\varepsilon ^{-1}\rho \circ (T_{t,s+\varepsilon }^{\varpi _{\rho
,s}}-T_{t,s}^{\varpi _{\rho ,s}})\left( A\right) +\varepsilon ^{-1}\rho
\circ (T_{t,s+\varepsilon }^{\varpi _{\rho ,s+\varepsilon
}}-T_{t,s+\varepsilon }^{\varpi _{\rho ,s}})\left( A\right) \ .
\end{eqnarray*}%
Similar to (\ref{truc a la con chat}), via Equations (\ref{Ynbis}), (\ref%
{flow baby0}), (\ref{T0}), (\ref{h assumtions}) and (\ref{taylor1}) we
deduce that%
\begin{align}
\tilde{\beth}\left( \varepsilon ,t,A\right) & =\varepsilon ^{-1}\rho \circ
(T_{t,s+\varepsilon }^{\varpi _{\rho ,s}}-T_{t,s}^{\varpi _{\rho ,s}})\left(
A\right) +\sum_{j,k=1}^{n}\int_{s+\varepsilon }^{t}\mathrm{d}\alpha \ \tilde{%
\beth}\left( \varepsilon ,\alpha ,B_{k}\right)  \label{sdkljsdlkjf} \\
& \times \rho \circ T_{\alpha ,s}^{\varpi _{\rho ,s+\varepsilon }}\left( i 
\left[ B_{j},T_{t,\alpha }^{\varpi _{\rho ,s}}\left( A\right) \right]
\right) \left( \partial _{x_{k}}\partial _{x_{j}}g\left( \alpha ;\mathbf{y}%
\left( 0,\alpha \right) \right) +r_{k}\left( \alpha ;\mathbf{y}\left(
0,\alpha \right) ,\mathbf{y}\left( \varepsilon ,\alpha \right) \right)
\right)  \notag
\end{align}%
with 
\begin{equation*}
\mathbf{y}\left( \varepsilon ,\alpha \right) \doteq \left( \varpi _{\rho
,s+\varepsilon }\left( \alpha ,B_{1}\right) ,\ldots ,\varpi _{\rho
,s+\varepsilon }\left( \alpha ,B_{n}\right) \right) \in \mathbb{R}^{n}\ .
\end{equation*}%
Again, one sees from Equation (\ref{sdkljsdlkjf}) that $\tilde{\beth}\left(
\varepsilon ,t,A\right) $ is given by a Dyson-type series which is
absolutely summable, uniformly with respect to $\varepsilon $ in a bounded
set. Recall that $(T_{t,s}^{\xi })_{s,t\in \mathbb{R}}$ is a norm-continuous
two-parameter family of $\ast $-automorphisms of $\mathcal{X}$ satisfying in 
$\mathcal{B}(\mathcal{X})$ the non-autonomous evolution equation (\ref{flow
baby1bis}) for any fixed $\xi \in C\left( \mathbb{R};E\right) $. Therefore,
similar to (\ref{lebesgue chat}), by Equation (\ref{taylor2}), Lemmata \ref%
{Solution selfbaby copy(1)} and \ref{lemma wellbabybaby} together with
Condition \ref{Conditions (a)-(b)} (b) and Lebesgue's dominated convergence
theorem, we deduce that%
\begin{equation*}
\partial _{s}\varpi _{\rho ,s}\left( t,A\right) \doteq \lim_{\varepsilon
\rightarrow 0}\tilde{\beth}\left( \varepsilon ,t,A\right) =\lim_{\varepsilon
\rightarrow 0}\varepsilon ^{-1}\left( \varpi _{\rho ,s+\varepsilon }\left(
t,A\right) -\varpi _{\rho ,s}\left( t,A\right) \right)
\end{equation*}%
exists for all $s,t\in \mathbb{R}$, $\rho \in E$ and $A\in \mathcal{X}$, as
given by a Dyson-type series. In particular, the complex-valued function $%
(t,A)\mapsto \partial _{s}\varpi _{\rho ,s}\left( t,A\right) $ on $\mathbb{R}%
\times \mathcal{X}$ is the unique solution in $\xi \in C\left( \mathbb{R}%
\times \mathcal{X};\mathbb{C}\right) $ to the equation 
\begin{equation}
\xi \left( t,A\right) =-\rho \circ X_{s}^{\rho }\circ T_{t,s}^{\varpi _{\rho
,s}}\left( A\right) +\mathfrak{\maltese }_{A}\left[ \xi \right]
\label{toto final}
\end{equation}%
with $\mathfrak{\maltese }_{A}$ defined by (\ref{def dysonsbaby}). Compare
with (\ref{sdkljsdlkjf}) taken at $\varepsilon =0$.
\end{proof}

We conclude this section with the derivation of Liouville's equation for the
time-evolution of (elementary) continuous and affine functions defined by (%
\ref{fA}), from which Theorem \ref{proposition dynamic classique II} is
deduced.

\begin{corollary}[Liouville's equation for affine functions]
\label{Corollary bije+cocylbaby copy(1)}\mbox{ }\newline
Under the assumptions of Lemma \ref{lemma contnuity copy(1)}, 
\begin{equation*}
\partial _{s}V_{t,s}^{h}(\hat{A})=-\{h\left( s\right) ,V_{t,s}^{h}(\hat{A}%
)\}\ ,\qquad s,t\in \mathbb{R},\ A\in \mathcal{X},
\end{equation*}%
with $\hat{A}\in \mathfrak{C}$ being the elementary continuous and affine
function defined by (\ref{fA}). In particular, both side of the equation are
well-defined functions in $\mathfrak{C}$.
\end{corollary}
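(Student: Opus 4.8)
The plan is to reduce the whole identity to the two Dyson-type integral equations already produced in Lemmata \ref{lemma contnuity copy(1)} and \ref{Differentiability copy(1)}, and then to match their right-hand sides by invoking the uniqueness of their solutions. First I would unwind the definitions. By (\ref{fA}) and (\ref{classical evolution familybaby}), for every $\rho\in E$ one has $V_{t,s}^{h}(\hat{A})(\rho)=\hat{A}(\mathbf{\varpi}^{h}(s,t)(\rho))=\varpi_{\rho,s}(t,A)$, so the left-hand side is exactly $\partial_{s}\varpi_{\rho,s}(t,A)$, whose existence and integral characterization (\ref{equation integral})--(\ref{toto final}) are supplied by Lemma \ref{Differentiability copy(1)}. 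Dually, Lemma \ref{lemma contnuity copy(1)} (see (\ref{eqsdfkljsdklfj})) gives $V_{t,s}^{h}(\hat{A})\in C^{1}(E;\mathbb{C})$ with convex weak$^{\ast}$ Gateaux derivative $\mathrm{D}V_{t,s}^{h}(\hat{A})(\rho)=\mathfrak{D}(t,A)$, the element of $\mathcal{X}$ determined by the integral equation (\ref{chat0}). Since $h(s)\in\mathfrak{Y}(\mathbb{R})$, the bracket of Definition \ref{convex Frechet derivative copy(1)}, extended via (\ref{extension}), may then be legitimately evaluated, so the right-hand side $-\{h(s),V_{t,s}^{h}(\hat{A})\}$ is a well-defined element of $\mathfrak{C}$; well-definedness of the left-hand side in $\mathfrak{C}$ will follow a posteriori once the claimed equality is proved, the right-hand side being continuous in $\rho$.

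The core computation is to apply the fixed real-linear functional $\Phi\doteq\rho\circ X_{s}^{\rho}=\rho(i[\mathrm{D}h(s;\rho),\,\cdot\,])$ to the integral equation (\ref{chat0}) for $\mathfrak{D}(t,A)$. Because $\Phi$ is linear while the scalar coefficients $\rho\circ T_{\alpha,s}^{\varpi_{\rho,s}}(i[B_{j},T_{t,\alpha}^{\varpi_{\rho,s}}(A)])\,\partial_{x_{k}}\partial_{x_{j}}g(\cdots)$ factor out of it, and because $\Phi(\mathfrak{1})=\rho(i[\mathrm{D}h(s;\rho),\mathfrak{1}])=0$ (the unit being central), the function $\eta(t,A)\doteq-\Phi(\mathfrak{D}(t,A))$ satisfies
$$\eta(t,A)=-\rho\circ X_{s}^{\rho}\circ T_{t,s}^{\varpi_{\rho,s}}(A)+\mathfrak{\maltese}_{A}[\eta],$$
where I used $\Phi(\mathfrak{D}(\alpha,B_{k}))=-\eta(\alpha,B_{k})$ to convert the linear term of (\ref{chat0}) into the operator $\mathfrak{\maltese}_{A}$ of (\ref{def dysonsbaby}), and $\Phi(T_{t,s}^{\varpi_{\rho,s}}(A))=\rho\circ X_{s}^{\rho}\circ T_{t,s}^{\varpi_{\rho,s}}(A)$ via (\ref{flow baby0}). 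This is verbatim the equation (\ref{toto final}) obeyed by $\partial_{s}\varpi_{\rho,s}(t,A)$. Hence, by the uniqueness of the solution in $C(\mathbb{R}\times\mathcal{X};\mathbb{C})$ established inside the proof of Lemma \ref{Differentiability copy(1)}, I conclude
$$\partial_{s}\varpi_{\rho,s}(t,A)=\eta(t,A)=-\rho(i[\mathrm{D}h(s;\rho),\mathfrak{D}(t,A)])=-\{h(s),V_{t,s}^{h}(\hat{A})\}(\rho),$$
which is the asserted equality, the last step reading off the $\rho$-dependence through Definition \ref{convex Frechet derivative copy(1)}.

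Two points need care rather than ingenuity. The interchange of $\Phi$ with the time integral and the finite sum is justified by the absolute summability, uniform on compacta, of the Dyson-type series representing $\mathfrak{D}$ noted in the proof of Lemma \ref{lemma contnuity copy(1)}; this lets me treat (\ref{chat0}) as a genuine Volterra equation to which a bounded functional is applied termwise. The main obstacle I anticipate is the bookkeeping required to match $\mathfrak{\maltese}_{A}[\eta]$ against the image under $\Phi$ of the integral term of (\ref{chat0}) --- in particular tracking the sign introduced by $\Phi(\mathfrak{D})=-\eta$ and checking that the kernel $\rho\circ T_{\alpha,s}^{\varpi_{\rho,s}}(i[B_{j},T_{t,\alpha}^{\varpi_{\rho,s}}(A)])\,\partial_{x_{k}}\partial_{x_{j}}g$ is literally the same in both equations, so that the appeal to uniqueness is sound. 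Since $A\mapsto\mathfrak{D}(t,A)$ and all operators involved are $\mathbb{C}$-linear in $A$, no separate treatment of non-self-adjoint $A$ is needed; Theorem \ref{proposition dynamic classique II} then follows from this corollary by the (bi)linearity and Leibniz's rule of the derivatives and the bracket, exactly as indicated there.
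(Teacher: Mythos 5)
Your proposal is correct and follows essentially the same route as the paper's own proof: identify $V_{t,s}^{h}(\hat{A})(\rho)=\varpi_{\rho,s}(t,A)$, observe that $-\rho\bigl(i\bigl[\mathrm{D}h(s;\rho),\mathfrak{D}(t,A)\bigr]\bigr)$ solves the integral equation (\ref{toto final}) by applying the functional $\rho\circ X_{s}^{\rho}$ to (\ref{chat0}), and conclude by the uniqueness of its solution together with Lemma \ref{Differentiability copy(1)}. Your write-up merely makes explicit the computation (including $\Phi(\mathfrak{1})=0$ and the sign bookkeeping) that the paper compresses into the phrase ``By (\ref{chat0}), it solves Equation (\ref{toto final})''.
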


\begin{proof}
Fix $s\in \mathbb{R}$ and $\rho \in E$. By (\ref{fA}) and (\ref{classical
evolution familybaby}), note that 
\begin{equation*}
V_{t,s}^{h}(\hat{A})=\varpi _{\rho ,s}\left( t\right) \left( A\right) \equiv
\varpi _{\rho ,s}\left( t,A\right) \ ,\qquad t\in \mathbb{R},\ A\in \mathcal{%
X}\ .
\end{equation*}%
By Lemma \ref{lemma contnuity copy(1)}, the map 
\begin{equation*}
(t,A)\mapsto \mathrm{D}V_{t,s}^{h}(\hat{A})\left( \rho \right) =\mathrm{D}%
\varpi _{\rho ,s}\left( t;A\right) =\mathfrak{D}\left( t,A\right)
\end{equation*}%
from $\mathbb{R}\times \mathcal{X}$ to $\mathcal{X}$ is continuous. See also
Definition \ref{convex Frechet derivative} and Equation (\ref{clear2}).
Therefore, the map 
\begin{equation*}
\left( t,A\right) \mapsto -\{h\left( s\right) ,V_{t,s}^{h}(\hat{A})\}\left(
\rho \right) \doteq -\rho \left( i\left[ \mathrm{D}h\left( s;\rho \right) ,%
\mathrm{D}V_{t,s}^{h}(\hat{A})\left( \rho \right) \right] \right)
\end{equation*}%
from $\mathbb{R}\times \mathcal{X}$ to $\mathbb{C}$ is a well-defined
continuous function. See Definition \ref{convex Frechet derivative copy(1)}
and (\ref{extension}). By (\ref{chat0}), it solves Equation (\ref{toto final}%
), like the well-defined continuous map 
\begin{equation*}
(t,A)\mapsto \partial _{s}V_{t,s}^{h}(\hat{A})\left( \rho \right) =\partial
_{s}\varpi _{\rho ,s}\left( t\right) \left( A\right) \equiv \partial
_{s}\varpi _{\rho ,s}\left( t,A\right)
\end{equation*}%
from $\mathbb{R}\times \mathcal{X}$ to $\mathbb{C}$ (Lemma \ref%
{Differentiability copy(1)}). By uniqueness of the solution to (\ref{toto
final}), the assertion follows.
\end{proof}

\section{Appendix: Liminal, Postliminal and Antiliminal $C^{\ast }$-Algebras 
\label{Liminal appendix}}

\noindent \textit{La m\^{e}me structure qui, si vous montez, comporte une
distance, si vous descendez, n'en comporte pas.\footnote{%
Engl.: \textit{The same structure which, if you go up, contains a distance,
if you go down, does not contain it.} See \cite[p. 38]{Libera}. This
citation refers to the highly political and theological issues of
hierarchies in Christianity, as discussed in the Late Middle Ages. Indeed,
for some theologians of the XIIIe-XIVe centuries like Giles of Rome, the
increasing hierarchy refers to the existence of an order, implying in
particular a distance (cf. \textit{\textquotedblleft potentia dei
ordinaria\textquotedblright }). From the top down, the relation can be
direct, immediate, without distance (cf. \textquotedblleft \textit{potentia
dei absoluta}\textquotedblright ). In the mathematical context, from the
bottom up, we have in mind the ordering of measures having same barycenter $%
\rho $ in a compact convex space $K$ to arrive, by \textquotedblleft
removing\textquotedblright\ the mass farther away from $\rho $, at a
(maximal) measure only supported by extreme points, as proved and stated in
the Choquet(-Bishop-de Leeuw) Theorem. See, e.g., (\ref{affine decomposition}%
). From the top down, we have in mind the disconcerting property that, for
some $K$, extreme points are meanwhile dense, i.e., any $x\in K$ is
arbitrarily close to an extreme point. }}\smallskip

\hfill A. de Libera, 2015\bigskip

As explained in \cite[p. 99]{Dixmier}, the notion of \emph{liminal} $C^{\ast
}$-algebras was first introduced in 1951 by Kaplansky under the name of 
\emph{CCR}-algebras. Remark that, in this context, \emph{CCR} does not mean
\textquotedblleft Canonical Commutation Relations\textquotedblright\ but
\textquotedblleft Completely Continuous Representations\textquotedblright ,
\textquotedblleft completely continuous\textquotedblright\ standing for
\textquotedblleft compact\textquotedblright . \emph{CCR} usually means
nowadays \textquotedblleft Canonical Commutation
Relations\textquotedblright\ and thus, like Dixmier in his textbook \cite%
{Dixmier} on $C^{\ast }$-algebras, we rather prefer the terminology
\textquotedblleft liminal\textquotedblright . This concept is strongly
related to the $C^{\ast }$-algebra $\mathcal{K}(\mathcal{H})$ of compact
operators acting on a Hilbert space $\mathcal{H}$ via the concept of $%
C^{\ast }$-algebra representations. See also \cite{Blackadar} for a recent
compendium on operator algebras.

Recall that a \emph{representation} on the Hilbert space $\mathcal{H}$ of a $%
C^{\ast }$-algebra $\mathcal{X}$ is, by definition \cite[Definition 2.3.2]%
{BrattelliRobinsonI}, a $\ast $-homomorphism $\mathbf{\pi }$ from $\mathcal{X%
}$ to the unital $C^{\ast }$-algebra $\mathcal{B}(\mathcal{H})$ of all
bounded linear operators acting on $\mathcal{H}$. Injective representations
are called \emph{faithful}. The representation of a $C^{\ast }$-algebra $%
\mathcal{X}$ is not unique:\ For any representation $\mathbf{\pi }:\mathcal{X%
}\rightarrow \mathcal{B}(\mathcal{H})$, we can construct another one by
doubling the Hilbert space $\mathcal{H}$ and the map $\mathbf{\pi }$, via a
direct sum $\mathcal{H}_{1}\oplus \mathcal{H}_{2}$ of two copies $\mathcal{H}%
_{1},\mathcal{H}_{2}$ of $\mathcal{H}$. Thus, recall also the notion of
\textquotedblleft minimal\textquotedblright\ representations of $C^{\ast }$%
-algebras: If $\mathbf{\pi }:\mathcal{X}\rightarrow \mathcal{B}(\mathcal{H})$
is a representation of a $C^{\ast }$-algebra $\mathcal{X}$ on the Hilbert
space $\mathcal{H}$, we say that it is \emph{irreducible}, whenever $\{0\}$
and $\mathcal{H}$ are the only closed subspaces of $\mathcal{H}$ which are
invariant with respect to any operator of $\mathbf{\pi }(\mathcal{X}%
)\subseteq \mathcal{B}(\mathcal{H})$.

Every $C^{\ast }$-algebra which is isomorphic to the $C^{\ast }$-algebra $%
\mathcal{K}(\mathcal{H})$ of all compact operators acting on some Hilbert
space $\mathcal{H}$ is said to be \emph{elementary}. The concept of \emph{%
liminal} $C^{\ast }$-algebras generalizes this notion (see \cite[Definition
4.2.1]{Dixmier} or \cite[Section IV.1.3.1]{Blackadar}):

\begin{definition}[Liminal $C^{\ast }$-algebras]
\label{liminal}\mbox{ }\newline
A $C^{\ast }$-algebra $\mathcal{X}$ is called liminal if, for every
irreducible representation $\pi $ of $\mathcal{X}$\ and each $A\in \mathcal{X%
}$, $\pi (A)$ is compact.
\end{definition}

\noindent All finite-dimensional $C^{\ast }$-algebras are of course liminal.
All commutative $C^{\ast }$-algebras are also liminal. See \cite[4.2.1-4.2.2]%
{Dixmier} or \cite[Examples IV.1.3.3]{Blackadar}. Note that the set of
elements of a $C^{\ast }$-algebra $\mathcal{X}$ whose images under any
irreducible representation are compact operators is the largest liminal
closed two-sided ideal of $\mathcal{X}$, by \cite[Proposition 4.2.6]{Dixmier}%
.

Later, Kaplansky and Glimm also introduced the term \emph{GCR}\footnote{%
The definition of \emph{GCR} given in \cite[Section IV.1.3.1]{Blackadar} is
different from the original one.} for a generalization of Definition \ref%
{liminal}, much later replaced by \emph{postliminal} (see \cite[Section 4.3.1%
]{Dixmier} or \cite[Section IV.1.3.1]{Blackadar}). On the one hand, observe
that the $C^{\ast }$-algebra $\mathcal{K}(\mathcal{H})$ of compact operators
acting on a Hilbert space $\mathcal{H}$ is a closed two-sided ideal of the $%
C^{\ast }$-algebra $\mathcal{B}(\mathcal{H})$ of bounded operators acting on 
$\mathcal{H}$. On the other hand, from a closed, self-adjoint two-sided
ideal $\mathcal{I}$ of a $C^{\ast }$-algebra $\mathcal{X}$ and the quotient $%
\mathcal{X}/\mathcal{I}$, we can construct a $C^{\ast }$-algebra. Keeping
this information in mind, the notion of postliminal $C^{\ast }$-algebras are
defined as follows \cite[Section 4.3.1]{Dixmier}:

\begin{definition}[Postliminal $C^{\ast }$-algebras]
\label{liminal copy(1)}\mbox{ }\newline
A $C^{\ast }$-algebra $\mathcal{X}$ is postliminal if every non-zero
quotient $C^{\ast }$-algebra of $\mathcal{X}$ possesses a non-zero liminal
closed two-sided ideal.
\end{definition}

\noindent All liminal $C^{\ast }$-algebras are postliminal, by \cite[%
Proposition 4.2.4]{Dixmier}, but the converse is false.

Kaplansky and Glimm named important $C^{\ast }$-algebras that are not \emph{%
GCR}\ (postliminal), \emph{NGCR} $C^{\ast }$-algebras. Such algebras were
later called \emph{antiliminal} \cite[Section 4.3.1]{Dixmier} (see also \cite%
[Section IV.1.3.1]{Blackadar}):

\begin{definition}[Antiliminal $C^{\ast }$-algebras]
\label{liminal copy(2)}\mbox{ }\newline
A $C^{\ast }$-algebra $\mathcal{X}$ is antiliminal if the zero ideal is its
only liminal closed two-sided ideal.
\end{definition}

\noindent Remark that a quotient $C^{\ast }$-algebra of an antiliminal $%
C^{\ast }$-algebra is not antiliminal, in general.

If the image of $\mathcal{X}$ by an \emph{irreducible} representation $\pi $
would intersect the set of compact operators, then the set of compact
operators would automatically be included in $\pi \left( \mathcal{X}\right) $%
, by \cite[Corollary 4.1.10]{Dixmier}. In other words, the image of a $%
C^{\ast }$-algebra by an irreducible representation either contains the set
of compact operators or does not intersect it. Antiliminal and separable $%
C^{\ast }$-algebras are related to the second situation \cite[Theorem 1 (b)]%
{Glimm61}:

\begin{theorem}[Glimm]
\label{Glimm thm}\mbox{ }\newline
Let $\mathcal{X}$ be a separable\footnote{%
In the non-separable situation, any of the assertions (ii)-(iv) yields (i),
by \cite[Theorem 1 (c)]{Glimm61}.} $C^{\ast }$-algebra. Then, the following
conditions are equivalent: \newline
\emph{(i)} $\mathcal{X}$ is antiliminal.\newline
\emph{(ii)} $\mathcal{X}$ has a faithful type II representation. \newline
\emph{(iii)} $\mathcal{X}$ has a faithful type III representation. \newline
\emph{(iv)} $\mathcal{X}$ has a faithful representation which is a direct
sum of a family of representations of $\mathcal{X}$ whose range does not
contain the compact operators.
\end{theorem}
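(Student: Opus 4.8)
The plan is to establish the four equivalences through the easy implications $(ii)\Rightarrow(iv)$, $(iii)\Rightarrow(iv)$ and $(iv)\Rightarrow(i)$, closed up by the two genuinely hard implications $(i)\Rightarrow(ii)$ and $(i)\Rightarrow(iii)$, which together form a cycle making all four statements equivalent. (Note that $(ii)$--$(iv)$ each imply $(i)$ without any separability assumption, as recorded in the footnote; separability is needed only for the implications out of $(i)$.) The whole easy half rests on the dichotomy recalled just before the theorem: by \cite[Corollary 4.1.10]{Dixmier}, the range of an \emph{irreducible} representation $\pi$ either contains all of $\mathcal{K}(\mathcal{H}_{\pi})$ or does not intersect it.

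For $(ii)\Rightarrow(iv)$ and $(iii)\Rightarrow(iv)$, I would use that a faithful type II (resp. type III) representation generates a von Neumann algebra \emph{without minimal projections}. If a nonzero compact operator $k$ lay in the range, then a spectral projection of $|k|$ corresponding to an eigenvalue bounded away from $0$ would be a nonzero finite-rank projection $p$ in the generated von Neumann algebra $M$; since $pMp$ is finite-dimensional it would contain a projection minimal in $M$, contradicting the type II/III hypothesis. Hence the range contains no nonzero compact operator, and the representation, viewed as a one-term direct sum, already witnesses $(iv)$. For $(iv)\Rightarrow(i)$ I argue by contraposition: if $\mathcal{X}$ is not antiliminal (Definition \ref{liminal copy(2)}) it has a nonzero liminal closed two-sided ideal $\mathcal{J}$, so there is an irreducible $\pi$ with $\pi(\mathcal{J})\neq\{0\}$, and liminality of $\mathcal{J}$ forces $\pi(\mathcal{J})=\mathcal{K}(\mathcal{H}_{\pi})$. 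In any faithful $\Pi=\bigoplus_{\alpha}\pi_{\alpha}$ as in $(iv)$, faithfulness on $\mathcal{J}$ makes some summand $\pi_{\alpha}$ nondegenerate on $\mathcal{J}$; decomposing $\pi_{\alpha}|_{\mathcal{J}}$ into irreducibles, each of which contains the compacts by the same dichotomy, shows the range of $\pi_{\alpha}$ meets $\mathcal{K}$, contradicting $(iv)$.

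The hard part is $(i)\Rightarrow(ii)$ and $(i)\Rightarrow(iii)$, where I would reproduce the core of Glimm's dichotomy \cite{Glimm61}. The key step is an inductive construction of an increasing system of approximate matrix units inside a separable antiliminal $\mathcal{X}$: antiliminality is invoked at each stage to find elements whose spectrum is sufficiently ``spread out'' that they cannot have compact images, and a diagonalization over a countable dense subset of $\mathcal{X}$ produces a sub-$C^{\ast}$-algebra $\mathcal{A}\subseteq\mathcal{X}$ and a closed ideal $\mathcal{I}$ with $\mathcal{A}/(\mathcal{A}\cap\mathcal{I})$ $\ast$-isomorphic to the CAR (UHF $2^{\infty}$) algebra. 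That algebra carries a faithful type II$_1$ factor representation, obtained from its unique tracial state via the GNS construction, and faithful type III factor representations, obtained from suitable non-tracial infinite product states. Pulling these back along the quotient map, and forming a countable direct sum with enough irreducible representations indexed by the dense sequence to guarantee injectivity on all of $\mathcal{X}$, yields the desired faithful type II and type III representations.

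The main obstacle is exactly this matrix-unit construction together with the preservation of faithfulness under the direct sum: ensuring that the approximate relations close up, in the separable limit, to a genuine UHF quotient, and that the resulting representation separates the points of the whole algebra, is delicate and constitutes the bulk of Glimm's original work. I therefore do not expect to shortcut it, and this is precisely why the statement is imported by citation (see also \cite[Chapter 9]{Dixmier} and \cite[Section IV.1]{Blackadar}) rather than proved in the appendix.
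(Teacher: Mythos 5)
Your instinct about the overall architecture is right: the paper itself proves nothing here beyond a citation. Its entire proof is the remark that, by \cite[Proposition 1.8.5]{Dixmier}, an antiliminal algebra has no nonzero \emph{postliminal} closed two-sided ideal, so that condition (i) coincides with the hypothesis ``no nonzero GCR ideal'' under which \cite[Theorem 1 (b)]{Glimm61} is stated, and then everything is delegated to Glimm. That translation step (liminal versus postliminal ideals) is the one piece of actual content in the paper's proof and is missing from yours; without it, condition (i) does not literally match Glimm's hypothesis. Your implications $(ii)\Rightarrow(iv)$ and $(iii)\Rightarrow(iv)$, via the absence of minimal (hence of nonzero finite-rank) projections in a type II or III von Neumann algebra, are correct, and your decision to defer $(i)\Rightarrow(ii),(iii)$ to Glimm's matrix-unit construction is consistent with what the paper does.

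The genuine gap is in $(iv)\Rightarrow(i)$, and it is load-bearing since both of your cycles close through it. The step ``decomposing $\pi_{\alpha}|_{\mathcal{J}}$ into irreducibles, each of which contains the compacts, shows the range of $\pi_{\alpha}$ meets $\mathcal{K}$'' fails twice over: a representation of a liminal ideal need not decompose into irreducibles at all (take $\mathcal{J}$ commutative and the multiplication representation on $L^{2}$), and even when it does, each irreducible piece produces compact operators only on \emph{its own} subspace $\mathcal{H}_{\beta}\subseteq\mathcal{H}_{\alpha}$, and an operator assembled diagonally from infinitely many such pieces is not compact on $\mathcal{H}_{\alpha}$. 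Concretely, $\mathcal{X}=\mathcal{K}(\mathcal{H})$ is liminal, yet $\pi=\bigoplus_{n\in\mathbb{N}}\mathrm{id}$ is faithful and its range contains no nonzero compact operator; likewise $C[0,1]$ acting by multiplication on $L^{2}[0,1]$. Hence condition (iv), read literally as in the statement (with no irreducibility imposed on the summands), does \emph{not} imply (i), and no argument can repair this. The implication is true in the form Glimm actually proves it, where the summands are irreducible, so that the dichotomy \cite[Corollary 4.1.10]{Dixmier} applies to each summand on its own representation space. As written, your proof of $(iv)\Rightarrow(i)$ is therefore not salvageable without strengthening (iv), and the equivalence is not established.
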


\begin{proof}
By \cite[Proposition 1.8.5]{Dixmier}, an antiliminal $C^{\ast }$-algebra
does not possess any postliminal closed two-sided ideal, apart from the zero
ideal. Therefore, the theorem is a direct consequence of \cite[Theorem 1 (b)]%
{Glimm61}, keeping in mind that \textquotedblleft completely
continuous\textquotedblright\ in \cite{Glimm61}\ is a synonym of
\textquotedblleft compact\textquotedblright .
\end{proof}

\noindent In other words, no irreducible representation of an antiliminal
separable $C^{\ast }$-algebra $\mathcal{X}$ contains the compact operators
on the representation space. Additionally, antiliminal $C^{\ast }$-algebras
are directly related with von Neumann algebras of type II and III, while
postliminal $C^{\ast }$-algebras are directly associated with von Neumann
algebras of type I, by \cite[Theorem 1 (a), (c)]{Glimm61}.

Antiliminal (unital) $C^{\ast }$-algebras $\mathcal{X}$ have a set $E$ of
states with fairly complicated geometrical structure, similar to the Poulsen
simplex \cite{Poulsen}. Recall that $E$ is a weak$^{\ast }$-compact convex
subset of $\mathcal{X}^{\ast }$ with (nonempty) set of extreme points
denoted by $\mathcal{E}(E)$. See (\ref{closure of the convex hull}). Then,
one has the following result \cite[Lemma 11.2.4]{Dixmier}:

\begin{lemma}[Weak$^{\ast }$ density of the set of extremes states]
\label{liminal copy(3)}\mbox{ }\newline
Let $\mathcal{X}$ be a antiliminal unital $C^{\ast }$-algebra. Assume that
any two (different) non-zero closed two-sided ideals of $\mathcal{X}$ always
have a non-zero intersection. Then $E=\overline{\mathcal{E}(E)}$, in the weak%
$^{\ast }$ topology.
\end{lemma}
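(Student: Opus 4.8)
The plan is to unfold the assertion $E=\overline{\mathcal{E}(E)}$ into a purely local approximation statement and then to exhibit, in every weak$^{\ast}$ neighbourhood of an arbitrary state, an explicit pure (vector) state. Since weak$^{\ast}$ neighbourhoods of a state are generated by the seminorms $\sigma\mapsto|\sigma(A)|$ and every state is hermitian, it suffices to prove: for each $\rho\in E$, each finite family $A_{1},\dots ,A_{n}\in\mathcal{X}^{\mathbb{R}}$, and each $\varepsilon>0$, there is a pure state $\omega\in\mathcal{E}(E)$ with $\max_{j}|\omega(A_{j})-\rho(A_{j})|<\varepsilon$. I would first record that the standing hypothesis --- any two different non-zero closed two-sided ideals of $\mathcal{X}$ have non-zero intersection --- says exactly that $\mathcal{X}$ is \emph{prime}, since for $C^{\ast}$-algebras $I\cap J=\{0\}$ is equivalent to $IJ=\{0\}$.

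The next step is to manufacture one irreducible representation rich enough to detect every state. Using primeness together with antiliminality (Definition \ref{liminal copy(2)}) and Theorem \ref{Glimm thm}, I would produce a \emph{faithful irreducible} representation $\pi$ of $\mathcal{X}$ on a Hilbert space $\mathcal{H}$ whose image contains no non-zero compact operator, i.e. $\pi(\mathcal{X})\cap\mathcal{K}(\mathcal{H})=\{0\}$. Indeed, primeness yields a faithful irreducible $\pi$ (a prime $C^{\ast}$-algebra is primitive, at least in the separable case relevant to the applications), and antiliminality forbids compacts in the range: if $\pi(\mathcal{X})$ met $\mathcal{K}(\mathcal{H})$ non-trivially it would contain all of $\mathcal{K}(\mathcal{H})$ by \cite[Corollary 4.1.10]{Dixmier}, so that $\pi^{-1}(\mathcal{K}(\mathcal{H}))\cong\mathcal{K}(\mathcal{H})$ would be a non-zero liminal closed two-sided ideal, contradicting Definition \ref{liminal copy(2)}. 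Because $\pi$ is faithful, every state of $\mathcal{X}$ is weakly contained in $\pi$; hence the weak$^{\ast}$-closed convex hull of the vector states $\omega_{\xi}\doteq\langle\xi,\pi(\cdot)\xi\rangle_{\mathcal{H}}$ ($\|\xi\|_{\mathcal{H}}=1$) is all of $E$, and I would use this to approximate $\rho$ on $A_{1},\dots ,A_{n}$ within $\varepsilon/2$ by a finite convex combination $\rho'=\sum_{k=1}^{m}\lambda_{k}\,\omega_{\xi_{k}}$ of such vector states.

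The crux is then to collapse this mixture into a single vector. Invoking Glimm's Lemma --- whose hypothesis is precisely the absence of compacts in $\pi(\mathcal{X})$ --- I would choose, for a small parameter $\delta>0$, unit vectors $\zeta_{1},\dots ,\zeta_{m}\in\mathcal{H}$ that are nearly orthonormal and on which the fixed operators $\pi(A_{j})$ are nearly decoupled, meaning $\langle\zeta_{k},\zeta_{l}\rangle_{\mathcal{H}}\approx\delta_{kl}$, $\langle\zeta_{k},\pi(A_{j})\zeta_{k}\rangle_{\mathcal{H}}\approx\omega_{\xi_{k}}(A_{j})$, and $\langle\zeta_{k},\pi(A_{j})\zeta_{l}\rangle_{\mathcal{H}}\approx 0$ for $k\neq l$. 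Setting $\eta\doteq\sum_{k}\sqrt{\lambda_{k}}\,\zeta_{k}$ and normalizing, the cross terms cancel and $\omega_{\eta}(A_{j})\approx\sum_{k}\lambda_{k}\,\omega_{\xi_{k}}(A_{j})=\rho'(A_{j})$ for all $j$; choosing $\delta$ small enough makes this within $\varepsilon/2$. Since $\pi$ is irreducible, $\omega_{\eta}$ is pure, so $\omega\doteq\omega_{\eta}\in\mathcal{E}(E)$ satisfies $\max_{j}|\omega(A_{j})-\rho(A_{j})|<\varepsilon$, which is exactly the required approximation and hence gives $E=\overline{\mathcal{E}(E)}$.

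I expect the main obstacle to be this last step: Glimm's Lemma is the technical heart, and it is where antiliminality genuinely enters, the non-compactness of $\pi(\mathcal{X})$ furnishing the infinitely many asymptotically decoupled directions that allow a convex average of vector states to be realized by a single pure vector. A secondary subtlety is the non-separable case, in which \emph{prime} need not imply \emph{primitive}; there one must either work directly with approximate faithful irreducible data or invoke the full strength of the cited result \cite[Lemma 11.2.4]{Dixmier}.
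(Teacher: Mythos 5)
The paper does not actually prove this lemma: the statement is introduced by ``one has the following result \cite[Lemma 11.2.4]{Dixmier}'' and no proof is given, so there is no internal argument to compare against. What you have written is a reconstruction of the standard argument behind the cited result, and its skeleton is sound. The hypothesis is indeed primeness (for closed two-sided ideals of a $C^{\ast }$-algebra, $I\cap J=\{0\}$ iff $IJ=\{0\}$). Given a \emph{faithful irreducible} representation $\pi $, antiliminality forces $\pi (\mathcal{X})\cap \mathcal{K}(\mathcal{H})=\{0\}$ exactly as you say, via \cite[Corollary 4.1.10]{Dixmier}. Faithfulness gives that the weak$^{\ast }$-closed convex hull of the vector states of $\pi $ is all of $E$ (Hahn--Banach separation plus $\sup_{\xi }\langle \xi ,\pi (a)\xi \rangle =\max \sigma (a)$ for $a=a^{\ast }$), and Glimm's lemma --- whose hypothesis is precisely $\pi (\mathcal{X})\cap \mathcal{K}(\mathcal{H})=\{0\}$ --- collapses a finite convex combination of vector states into a single vector state, which is pure by irreducibility. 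Each step is correct, although the ``nearly orthonormal vectors with decoupled cross terms'' paragraph is the entire content of Glimm's lemma and should either be cited precisely (it is \cite[Lemma 11.2.1]{Dixmier}) or have its inductive construction spelled out, since that is the only place where the absence of compact operators in $\pi (\mathcal{X})$ is genuinely used.

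The one genuine gap is the one you flag yourself: the passage from \emph{prime} to \emph{primitive}. The lemma is stated for an arbitrary antiliminal unital $C^{\ast }$-algebra satisfying the ideal condition, with no separability hypothesis, and prime does not imply primitive in general (Weaver's 2003 counterexample), so the existence of a single faithful irreducible representation is only justified when $\mathcal{X}$ is separable. For the paper's intended applications (separable AF, CAR and quantum-spin algebras) this is harmless, but as a proof of the lemma as stated your argument is incomplete; the non-separable case cannot be repaired by the obvious localization (an irreducible representation that is nearly isometric on finitely many elements need not weakly contain the given state $\rho $), and one must either rework the approximation scheme or fall back on \cite[Lemma 11.2.4]{Dixmier} itself, as the paper does. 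If you restrict the statement to separable $\mathcal{X}$, your proof is complete modulo the citation of Glimm's lemma.
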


$C^{\ast }$-algebras $\mathcal{X}$ relevant for mathematical physics often
have a faithful type III representation. See, e.g., \cite[Section 4]{Powers}
for UHF (uniformly hyperfinite) algebras. Note that every $\ast $%
-representation of a UHF algebra, like for instance a CAR algebra, is
faithful. For key statements on representations of CAR $C^{\ast }$-algebras,
see, e.g., \cite[Theorem 2.4]{Murakami} or \cite[Theorem 12.3.8]{Kadison}
and references therein. Hence, by Theorem \ref{Glimm thm}, many $C^{\ast }$%
-algebras $\mathcal{X}$ with physical applications are antiliminal. Note
also that they are generally separable and \emph{simple}:

\begin{definition}[Simple $C^{\ast }$-algebras]
\label{simple def}\mbox{ }\newline
A $C^{\ast }$-algebra $\mathcal{X}$ is simple if the only closed two-sided
ideals of $\mathcal{X}$ are the trivial sets $\{0\}$ and $\mathcal{X}$.
\end{definition}

\noindent $C^{\ast }$-algebras $\mathcal{B}(\mathcal{H})$ of all (bounded)\
linear operators acting on some finite-dimensional Hilbert space $\mathcal{H}
$ are of course simple. See, e.g., \cite[Corollary 4.1.7]{Dixmier}. However,
finite-dimensional $C^{\ast }$-algebras are not generally simple, but
semisimple only, as direct sums of simple algebras.

In mathematical physics, unital $C^{\ast }$-algebras of infinitely extended
(quantum) systems are usually built from a family of local
finite-dimensional $C^{\ast }$-subalgebras. It refers to approximately
finite-dimensional (AF) $C^{\ast }$-algebras, originally introduced in 1972
by Bratteli \cite{Bratteliquasi}. See also the quasi-local algebras \cite[%
Definition 2.6.3]{BrattelliRobinsonI}. AF $C^{\ast }$-algebras used in
physics are usually simple, by \cite[Corollary 2.6.19]{BrattelliRobinsonI},
because they are generally constructed from simple local algebras (typically
as some inductive limit, with respect to boxes $\Lambda $, of a family of $%
C^{\ast }$-algebras $\mathcal{B}(\mathcal{H}_{\Lambda })$, with $\mathrm{dim}%
\mathcal{H}_{\Lambda }\mathcal{<\infty }$, like, for instance, Cuntz,
lattice CAR or quantum-spin $C^{\ast }$-algebras). Therefore, by Lemma \ref%
{liminal copy(3)}, for infinitely extended quantum systems, like fermions on
the lattice or quantum-spin systems, the corresponding set $E$ of states has
a \emph{dense} subset of extreme points. This fact is well-known and already
discussed in \cite[p. 226]{Bratteliquasi}. See also \cite[Example 4.1.31]%
{BrattelliRobinsonI} for a direct proof in the context of the so-called UHF
(uniformly hyperfinite) $C^{\ast }$-algebras \cite[Examples 2.6.12]%
{BrattelliRobinsonI}.\bigskip

\noindent \textit{Acknowledgments:} This work is supported by CNPq
(308337/2017-4), FAPESP (2017/22340-9), as well as by the Basque Government
through the grant IT641-13 and the BERC 2018-2021 program, and by the
Spanish Ministry of Science, Innovation and Universities: BCAM Severo Ochoa
accreditation SEV-2017-0718, MTM2017-82160-C2-2-P. We thank S. Rodrigues for
pointing out the reference \cite{extra-ref}. Finally, special thanks to S.
Breteaux for having pointed out many typos and who meanwhile suggested
various improvements during his very detailed reading of the first version
of this paper.

\noindent \textbf{Jean-Bernard Bru} \newline
Departamento de Matem\'{a}ticas\newline
Facultad de Ciencia y Tecnolog\'{\i}a\newline
Universidad del Pa\'{\i}s Vasco\newline
Apartado 644, 48080 Bilbao \medskip \newline
BCAM - Basque Center for Applied Mathematics\newline
Mazarredo, 14. \newline
48009 Bilbao\medskip \newline
IKERBASQUE, Basque Foundation for Science\newline
48011, Bilbao\bigskip

\noindent \textbf{Walter de Siqueira Pedra} \newline
Departamento de F\'{\i}sica Matem\'{a}tica\newline
Instituto de F\'{\i}sica,\newline
Universidade de S\~{a}o Paulo\newline
Rua do Mat\~{a}o 1371\newline
CEP 05508-090 S\~{a}o Paulo, SP Brasil

\end{document}